\newtheorem{convention}{Convention}[section]
\newtheorem{theorem}{Theorem}[section]
\newtheorem{corollary}[theorem]{Corollary}
\newtheorem{lemma}[theorem]{Lemma}
\newcommand\SUCC{\checkmark}
\newcommand\FAIL{\times}
\newcommand{\Pos}{\mathcal{P}\mathsf{os}}
\newcommand{\Fun}{\mathcal{F}\mathsf{un}}
\newcommand{\Var}{\mathcal{V}\mathsf{ar}}
\def\Vtop#1{\setbox0=\hbox{#1}\lower\ht0\hbox{\raise 2ex\box0}}
\def\lud#1#2#3{%
 \def\next##1##2{%
  \setbox0=\hbox{$##1\vphantom{#3}\@ifempty{#1}{}{_{\vphantom{#1}}}\@ifempty{#2}{}{^{#2}}$}%
  \setbox1=\hbox{$##1\vphantom{#3}\@ifempty{#1}{}{_{#1}}\@ifempty{#2}{}{^{\vphantom{#2}}}$}%
  \setbox2=\vbox{\hbox to\wd0{}\hbox to\wd1{}}%
  \mathrel{\hskip\wd2\hskip-\wd0\box0\hskip-\wd1\box1{#3}}%
 }%
 \mathpalette\next{}%
}
\def\rud#1#2#3{#3\@ifempty{#1}{}{_{#1}}\@ifempty{#2}{}{^{#2}}}
\colorlet{myblue}{blue!25!lightgray}
\colorlet{myred}{red!25!white}
\newcommand\CSI{\textsf{CSI}\xspace}
\newcommand{\seq}[2][n]{{#2_1},\dots,{#2_{#1}}}
\newcommand{\ari}{\mathop{\mathrm{ari}}}
\renewcommand{\ari}{\mathop{\mathrm{arity}}}
\newcommand{\dom}{\mathop{\mathrm{dom}}}
\newcommand{\rt}{\mathop{\mathrm{root}}}
\newcommand{\rank}{\mathop{\mathrm{rank}}}
\newcommand{\RRR}[2][]{\rud{#1}{#2}%
{\mathrel{\blacktriangleright\kern-.2em\blacktriangleright}}}
\newcommand{\LLL}[2][]{\lud{#1}{#2}%
{\mathrel{\blacktriangleleft\kern-.2em\blacktriangleleft}}}
\newcommand{\rrr}[2][]{\rud{#1}{#2}%
{\mathrel{\vartriangleright\kern-.2em\vartriangleright}}}
\renewcommand{\lll}[2][]{\lud{#1}{#2}%
{\mathrel{\vartriangleleft\kern-.2em\vartriangleleft}}}
\newcommand{\RRV}[2][]{\rud{#1}{#2}%
{\mathrel{\blacktriangleright\kern-.2em\vartriangleright}}}
\newcommand{\LLV}[2][]{\lud{#1}{#2}%
{\mathrel{\vartriangleleft\kern-.2em\blacktriangleleft}}}
\newcommand{\x}[1]{\mathcal{#1}}
\newcommand{\m}[1]{\mathsf{#1}}
\renewcommand{\AA}{\x{A}}
\newcommand{\BB}{\x{B}}
\newcommand{\CC}{\x{C}}
\newcommand{\DD}{\x{D}}
\newcommand{\FF}{\x{F}}
\newcommand{\VV}{\x{V}}
\newcommand{\RR}{\x{R}}
\newcommand{\UU}{\x{U}}
\newcommand{\TT}{\x{T}}
\newcommand{\OO}{\x{O}}
\renewcommand{\SS}{\x{S}}
\newcommand{\bL}{\mathbb{L}}
\newcommand{\bLT}[1][]{\bL_{#1} \cap \TT(\FF,\VV)}
\newcommand{\vL}{\bL'}
\newcommand{\vLT}{\vL \cap \TT(\FF,\VV)}
\renewcommand{\vL}{{\mathbb{V}}}
\newcommand{\bLv}{\bL_\vL}
\newcommand{\bLvT}{\bLv \cap \TT(\FF,\VV)}
\newcommand{\Nat}{\mathbb{N}}
\renewcommand{\merge}{\sqcup}
\let\from\leftarrow
\newcommand{\Cleq}{\sqsubseteq}
\newcommand{\Clt}{\sqsubset}
\newcommand{\myvec}{\mathbf}
\newcommand{\lst}[1]{\hat{#1}}
\newcommand{\PP}{\m{PP}}
\newcommand{\Cu}{\m{Cu}}
\newcommand{\rnf}[2][]{#2{\downarrow}_{#1}}
\newcommand{\VC}{\VV_\square}
\renewcommand{\leq}{\leqslant}
\renewcommand{\geq}{\geqslant}
\renewcommand{\succeq}{\geqslant}
\renewcommand{\succ}{>}
\renewcommand{\curlyvee}{\vee}
\newcommand{\matches}{%
\def\next##1##2{\mathrel{{\leqslant}\raise.17ex\hbox to 0pt%
{$##1\hss\cdot$}}}%
\mathpalette\next{}}
\newcommand{\II}{\mathrel{\setbox1=\hbox{$\leftrightarrow$}%
\lower.25ex\hbox to\wd1{$\vdash\hss\dashv$}}}
\newcommand{\ctop}{(L1)\xspace}
\renewcommand{\ctop}{(L$_1$)\xspace}
\newcommand{\cvar}{(L2)\xspace}
\renewcommand{\cvar}{(L$_2$)\xspace}
\newcommand{\cvarp}{(L2$'$)\xspace}
\renewcommand{\cvarp}{(L$_2'$)\xspace}
\newcommand{\cpartial}{(L3)\xspace}
\renewcommand{\cpartial}{(L$_3$)\xspace}
\newcommand{\crewrite}{(W)\xspace}
\newcommand{\cconsistent}{(C1)\xspace}
\renewcommand{\cconsistent}{(C$_1$)\xspace}
\newcommand{\cstepfusion}{(C2)\xspace}
\renewcommand{\cstepfusion}{(C$_2$)\xspace}
\newcommand{\compatat}{compatible$^\star$\xspace}
\title{Layer Systems for Proving Confluence}
\author{%
BERTRAM FELGENHAUER and AART MIDDELDORP and HARALD ZANKL \\
University of Innsbruck
\and
VINCENT VAN OOSTROM \\
Utrecht University
}
\begin{abstract}
We introduce layer systems for proving generalizations of the modularity
of confluence for first-order rewrite systems. Layer systems specify how
terms can be divided into layers. We establish structural conditions on
those
systems that imply confluence. Our abstract framework covers known results
like modularity, many-sorted persistence, layer-preservation and currying.
We present a counterexample to an extension of persistence to order-sorted 
rewriting and derive new sufficient conditions for the extension to hold.
All our proofs are constructive.
\end{abstract}
\keywords{Term rewriting, confluence, modularity, persistence}
\begin{document}

\makeatletter
\renewcommand\permission{
Permission to make digital or hard copies of part or all of this work for 
personal or classroom use is granted without fee provided that copies are not 
made or distributed for profit or commercial advantage and that copies bear this 
notice and the full citation on the first page. Copyrights for third-party 
components of this work must be honored. For all other uses, contact the 
Authors.\par
\copyright\ 20\@acmYear\ 
the authors.

\url{http://dx.doi.org/10.1145/2710017}
}
\makeatother

\begin{bottomstuff}
The research described in this paper is supported by
FWF (Austrian Science Fund) project P22467.
\end{bottomstuff}

\maketitle


\section{Introduction}
\label{sec-introduction}

We revisit the celebrated modularity result of confluence,
due to Toyama~\cite{T87}. It states that
the union of two confluent rewrite systems is confluent, provided 
the participating rewrite systems do not share function symbols.
This result has been reproved several times, using category
theory~\cite{L96}, ordered completion~\cite{JT08},
and decreasing diagrams~\cite{vO08b}.
While confluence is also modular for rewriting modulo~\cite{JT08,JL12},
the situation is different for higher-order rewriting~\cite{AvOS10}.
In practice,
modularity is of limited use. More useful techniques,
in the sense that rewrite systems can be
decomposed into smaller systems that share function symbols and rules, are
based on type introduction~\cite{AT97},
layer-preservation~\cite{O94},
and commutativity~\cite{R73}.

Type introduction~\cite{Z94} restricts the set of terms
that have to be considered to the well-typed terms according to some
many-sorted type discipline which is compatible with the rewrite system
under consideration. 
A property of (many-sorted) rewrite systems 
which is preserved and reflected under type removal
is called persistent and Aoto and Toyama~\cite{AT97}
showed that confluence is persistent. In~\cite{AT96}
they extended the latter result by considering an order-sorted type
discipline. However, we show that the
conditions imposed in \cite{AT96} are not sufficient for 
confluence.

The proofs in \cite{O94} and \cite{AT96,AT97} are adaptations of the proof
of Toyama's modularity result by \cite{KMTV94}.
A more complicated proof using concepts from \cite{KMTV94}
has been given by Kahrs, who showed in \cite{K95} that
confluence is preserved under currying~\cite{KKSV96}.
In this article we introduce \emph{layer systems} as a common framework
to capture the results of \cite{AT97,K95,O94,T87} and to identify
appropriate conditions to restore the persistence of confluence for
order-sorted rewriting~\cite{AT96}.
Layer systems identify the parts that are available when decomposing
terms. The key proof idea remains the same. We treat each such layer
independently
from the others where possible, and deal with interactions between
layers separately. The main advantage of and motivation for our proof
is that the result becomes reusable; rather than checking every
detail of a complex proof, we have to check a couple of comparatively
simple, structural conditions on layer systems instead.
Such a common framework also facilitates a formalization of these
results in a theorem prover like Isabelle or Coq.

Besides the theoretical results of this paper we stress
practical implications: Due to an implementation of Theorem~\ref{thm-order}
in our confluence tool \CSI~\cite{ZFM11} it supports a decomposition 
result based on ordered sorts, exceeding the criteria available in 
other tools.
A second result of practical importance is preservation and
reflection of confluence under currying~\cite{K95}, which is used as a
preprocessing step when deciding confluence of ground TRSs~\cite{F12}.

The remainder of this paper is organized as follows. In the next section
we recall preliminaries.
Section~\ref{sec-layer-def} introduces layer systems and establishes 
results how rewriting interacts with layers. 
The main (abstract) results for confluence via layer systems are presented
in Section~\ref{sec-main} and instantiated in
Section~\ref{sec-applications}
to obtain various known results. The new result on order-sorted
persistence is covered in Section~\ref{app-order-sorted2}.
Differences to related work are discussed in Section~\ref{sec-related},
which might be consulted in advance by readers familiar with the
literature. We conclude in Section~\ref{sec-conclusion}.

This article is an extended and significantly revised version of
\cite{FZM11}.
Since here we build upon~\cite{vO08b}, all our proofs are constructive.
Furthermore this 
work is based on a more intuitive definition of layer systems.
The result for non-duplicating systems has been generalized to the strictly
larger class of \emph{bounded duplicating} systems. The application of
quasi-ground systems (Section~\ref{app-quasi-ground}) is new.
Moreover,
all important concepts are demonstrated by examples, and
detailed proofs are provided.

\section{Preliminaries}

We assume familiarity with rewriting~\cite{BN98,TeReSe} and the decreasing
diagrams technique~\cite{vO94}.

Let $\VV$ be a countably infinite set of variables and $\FF$
a signature, i.e., a set of function symbols $f \in \FF$, each associated
with a fixed arity, denoted by $\ari(f)$.
The set of terms over $\FF$ and $\VV$ is denoted by $\TT(\FF,\VV)$. 
The sets of variables and function symbols occurring in a term $t$ are
referred to by $\Var(t)$ and $\Fun(t)$, respectively. 
A term is ground if it does not contain variables. The set of ground
terms over $\FF$ is denoted by $\TT(\FF)$. A term is linear if every
variable occurs at most once.

Let $\square \notin \FF \cup \VV$
be a constant (i.e., a function symbol of arity $0$) called hole
and abbreviate 
$\TT(\FF \cup \{ \square \},\VV)$ by $\CC(\FF,\VV)$.
We write $\VC$ for the set of symbols $\VV \cup \{ \square \}$.
Contexts are terms from $\CC(\FF, \VV)$ containing an arbitrary 
number of holes.
They are partially ordered by $\Cleq$, defined
as the smallest reflexive and transitive relation that is
monotone and satisfies
$\square \Cleq C$ for all $C \in \CC(\FF,\VV)$.
There is a corresponding partial supremum operation,
$\merge$\,, which merges contexts.
The strict order $\Clt$ is defined by $C \Clt D$ if $C \Cleq D$ and
$C \neq D$.
The minimum context with respect to $\Cleq$ is the empty context
$\square$.
By $C[\seq{t}]$ we denote the result of replacing holes in~$C$ by the
terms $\seq{t}$ from left to right. 

The size of a term $t$ is denoted by $|t|$, and
$|t|_W$ for a subset $W \subseteq \FF \cup \VV_\square$
denotes the number of occurrences of function symbols and
variables from $W$ in $t$. We write $|t|_w$ for $|t|_{\{ w \}}$.
Positions of a term $t$ are strings of positive natural numbers,
$\epsilon$ for the
root, and $ip$ if $t = f(t_1,\dots,t_i,\dots,t_n)$ and $p$ is a position
of $t_i$. Then $\Pos(t)$ is the set of all positions of $t$.
Two positions $p$, $q$ are parallel if neither $p$ is a prefix
of $q$ nor $q$ is a prefix of $p$.
Given terms $t$ and $s$,
$t|_p$ is the subterm at position $p$ of $t$
and $t[s]_p$ denotes the result
of replacing $t|_p$ by $s$ in $t$. This operation is extended to sets of
pairwise parallel
positions $P$, resulting in the notation
$t[s_p]_{p \in P}$
By $\rt(t)$ we denote the root symbol of $t$.
For $W \subseteq \FF \cup \VC$ and
$w \in \FF \cup \VC$ we let
$\Pos_W(t) = \{ p \in \Pos(t) \mid \rt(t|_p) \in W \}$ and
$\Pos_w(t) = \Pos_{\{ w \}}(t)$.
A substitution is a map $\sigma\colon \VV \to \TT(\FF,\VV)$
which extends homomorphically to terms.
For terms~$s$ and~$t$ we write $s \matches t$ if there exists a 
substitution~$\sigma$ such that $s\sigma = t$.

A rewrite rule is a pair of terms $(\ell,r) \in \TT(\FF,\VV)^2$,
written $\ell \to r$ such that $\ell \notin \VV$ and
$\Var(\ell) \supseteq \Var(r)$.
A rewrite rule $\ell \to r$ is left-linear if $\ell$ is linear, 
duplicating if there is a variable $x \in \VV$ with $|\ell|_x < |r|_x$, 
and collapsing if its right-hand side is a variable.
A term rewrite system (TRS) consists of a signature and a set of rewrite
rules. If we do not specify differently a TRS will always be over the
signature $\FF$ and variables $\VV$.
The rewrite relation induced by a TRS $\RR$ is denoted $\to_\RR$.
We write $s \to_{p,\ell \to r} t$ if $s \to_\RR t$ using a rule
$\ell \to r \in \RR$ at position $p$.
Two rewrite steps $s \to_\RR t$, $s' \to_\RR t'$ \emph{mirror} each other
if both steps use the same rule at the same position.
This notion is extended to rewrite sequences.
We write $\from$, $\to^=$, $\to^+$ and $\to^*$ to 
denote the inverse, the reflexive closure, the transitive closure, and
the reflexive and transitive closure of a relation~$\to$, respectively.
A relation $\to$ is terminating if $\to^+$ is well-founded and confluent if
${\lud{}{*}\from} \cdot {\to^*} \subseteq {\to^*} \cdot {\lud{}{*}\from}$.
We say that $\to$ is \emph{confluent on} a set $S$ of terms 
if $S$ is closed
under $\to$ and ${\to} \cap ({S \times S})$ is confluent.
A TRS $\RR$ inherits these properties from $\to_\RR$.
A relative TRS~$\RR/\SS$ is a pair of TRSs~$\RR$ and~$\SS$ with the 
induced rewrite relation 
${\to_{\RR/\SS}^{}} =
{\to_\SS^*} \cdot {\to_\RR^{}} \cdot {\to_\SS^*}$. It is
terminating if \smash{$\to_{\RR/\SS}^+$} is well-founded.

Let $\succ$ be a well-founded order on an index set~$I$ and
$\to$ the union of $\to_\alpha$ for all ${\alpha \in I}$.
We write $\rud{\curlyvee\alpha_1 \dots\,\alpha_n}{}\to$ for the union of
$\to_\beta$ where $\alpha_i \succ \beta$ for some
$1 \leqslant i \leqslant n$.
A local peak $t \lud{\alpha}{}\from s \rud{\beta}{}\to u$ is
said to be \emph{decreasing} if 
\[
t \rud{\curlyvee\alpha}{*}\to \cdot \rud{\beta}{=}\to \cdot
\rud{\curlyvee\alpha\beta}{*}\to \cdot
\lud{\curlyvee\alpha\beta}{*}\from \cdot
\lud{\alpha}{=}\from \cdot \lud{\curlyvee\beta}{*}\from u
\]
Furthermore, $\to$ is \emph{locally decreasing} if 
for all $\alpha, \beta \in I$ every local peak
${\lud{\alpha}{}\from} \cdot {\rud{\beta}{}\to}$
is decreasing.
Van Oostrom~\cite{vO94} established the following result.

\begin{theorem}
\label{thm-dd}
Every locally decreasing relation is confluent.
\qed
\end{theorem}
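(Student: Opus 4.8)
The plan is to prove the equivalent Church--Rosser property: every conversion $a \leftrightarrow^* b$ can be transformed into a valley $a \to^* \cdot \,{}^*\!\from b$, with confluence being the special case in which the initial conversion is a single peak $a \,{}^*\!\from \cdot \to^* b$. I would organise this as a \emph{completion} process on conversions: as long as a conversion contains a peak, i.e.\ an occurrence $\cdot \,{}_\alpha\!\from m \to_\beta \cdot$, I replace that elementary peak by the valley between its two neighbours that is supplied by the decreasing-diagram hypothesis, and I repeat until no peak remains. This is the only place the hypothesis enters: local decreasingness applies precisely to such a peak $t \,{}_\alpha\!\from s \to_\beta u$ and produces a reduction from $t$ using steps below $\alpha$, an optional $\beta$-step, and steps below $\alpha$ or $\beta$, meeting a symmetric reduction from $u$. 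Everything then reduces to showing that this completion process terminates.

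First I would fix the termination order. Since $\succ$ is well-founded on $I$, its multiset extension $\succ_{\mathrm{mul}}$ on finite multisets over $I$ is well-founded as well, and this is the scaffolding for the induction. The routine part is the pasting: given $\cdots \leftrightarrow t \,{}_\alpha\!\from s \to_\beta u \leftrightarrow \cdots$, splicing in the decreasing completion yields $\cdots \leftrightarrow t \to^* v \,{}^*\!\from u \leftrightarrow \cdots$, a genuine conversion that no longer passes through~$s$. Once no peak is left the conversion is a valley, so it remains only to exhibit a measure that strictly decreases under each such splice.

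The hard part will be this measure. The naive candidate, the multiset of all labels occurring in the conversion, \emph{fails}: the completion reinstates up to one $\alpha$-step and one $\beta$-step and additionally introduces steps labelled below $\alpha$ or below $\beta$, so the flat label multiset can only grow under the replacement. Overcoming this is the crux, and I would adopt the refined measure of~\cite{vO08b} so as to keep the argument constructive. Instead of counting labels flatly, one weighs a conversion by a nested multiset that records, for each facing peak, the two peak labels separately from the surrounding reduction, the aim being that eliminating the peak $\{\alpha,\beta\}$ is paid for by the fact that every newly introduced step other than the two reinstated ones is strictly below $\alpha$ or $\beta$. The delicate point, and the step I expect to be the real obstacle, is verifying that this weight strictly decreases under a single completion once the new peaks created at the junctions with the ambient conversion are taken into account; this is exactly where a purely local analysis is insufficient and the measure must be chosen with care. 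Granting this decrease, well-foundedness of $\succ_{\mathrm{mul}}$ forces the completion process to terminate in a valley, which establishes the Church--Rosser property and hence confluence.
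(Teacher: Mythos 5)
The paper contains no proof of this theorem to compare against: Theorem~\ref{thm-dd} is van Oostrom's decreasing diagrams theorem, imported with a bare citation to \cite{vO94}, and the constructive, conversion-based proof that the rest of the paper leans on is the one in \cite{vO08b}. Measured against that background, your architecture is exactly the canonical one: reformulate confluence as the Church--Rosser property, repeatedly splice the decreasing valley in place of a local peak $t \lud{\alpha}{}\from s \rud{\beta}{}\to u$, and prove termination of this completion process by a well-founded measure on whole conversions. Your two concrete observations are also sound: the multiset extension of the well-founded label order is well-founded, and the flat multiset of step labels fails as a measure, since the replacement may reinstate one $\alpha$-step and one $\beta$-step while adding steps with strictly smaller labels (strictly speaking it does not \emph{always} grow --- if the reinstated steps are absent the flat measure drops --- but it fails to decrease in general, which is what matters).

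The genuine gap is that the measure, which is the entire mathematical content of the theorem, is never supplied. You name the obstacle accurately and then resolve it by writing ``adopt the refined measure of \cite{vO08b}'', which turns the proof into a citation --- legitimate, and indeed what the paper itself does, but then the surrounding scaffolding proves nothing that the citation does not already give. Moreover, your one-sentence gloss of that measure (``a nested multiset that records, for each facing peak, the two peak labels separately from the surrounding reduction'') does not describe van Oostrom's construction and would not obviously work: peaks are the wrong carrier for the weight, because splicing creates new peaks at the junctions with the ambient conversion, as you yourself note. In \cite{vO08b} the measure is a function of the \emph{entire} oriented, labelled string of steps --- each step is weighted in a way that depends on the larger labels flanking it, and the resulting (nested) multisets are compared under the multiset extension --- and the key lemma is that replacing any local peak by a decreasing conversion strictly shrinks this global measure, so junction peaks are absorbed automatically rather than needing separate accounting. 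Until such a measure is defined and its strict decrease under a single splice is verified, the proposal is a correct road map with the crux missing, not a proof.
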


\section{Layer Systems}
\label{sec-layer-def}

In this section we introduce layer systems, which are sets of contexts
satisfying special properties. The top-down decomposition of a term into
maximal layers
admits the notion of the rank of a term. Since for suitable layer systems 
rewriting does not increase the rank this is a valid measure for proofs by
induction.

\begin{definition}
\label{def-top}
Let $\bL \subseteq \CC(\FF,\VV)$ be a set of contexts. Then $L \in \bL$
is called a \emph{top} of a context $C \in \CC(\FF,\VV)$
(according to $\bL$) if $L \Cleq C$.
A top is a \emph{max-top} of $C$ if it is maximal with respect to
$\Cleq$ among the tops of $C$. 
\end{definition}

Note that terms are contexts without holes, so they have tops and
max-tops as well.
In the sequel we use subsets $\bL \subseteq \CC(\FF,\VV)$ to layer terms.
The process is top-down, taking the max-top of a term as layer and
proceeding recursively.

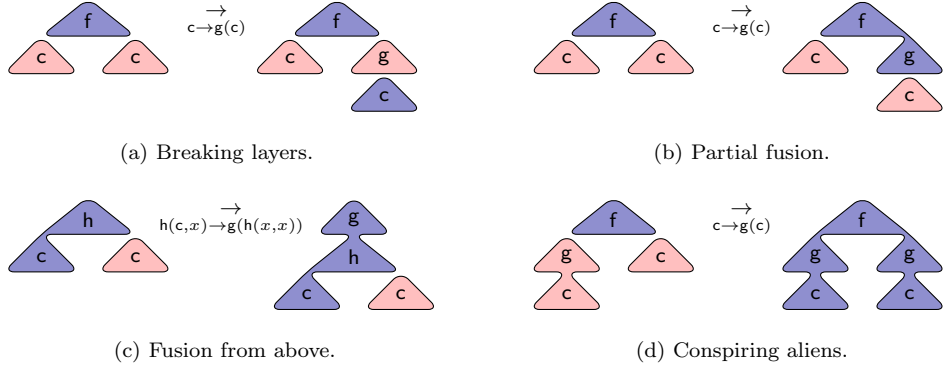
\begin{figure}[t]
\subfloat[\label{fig-problems:bl}Breaking layers.]{%
$
\Vtop{\begin{tikzpicture}[scale=.5]
  \coordinate (r) at (0.0,0.0);
  \coordinate (rl) at (-1.25,-1.0);
  \coordinate (rr) at (1.25,-1.0);
  \coordinate (ra) at (-1.25,-1.0);
  \coordinate (ral) at (-2.25,-2.0);
  \coordinate (rar) at (-0.25,-2.0);
  \coordinate (rb) at (1.25,-1.0);
  \coordinate (rbl) at (0.25,-2.0);
  \coordinate (rbr) at (2.25,-2.0);
  \draw [rounded corners, fill=myblue] (rl) -- (rr) -- (r) -- cycle;
  \draw [rounded corners, fill=myred] (ral) -- (rar) -- (ra) --
        cycle;
  \draw [rounded corners, fill=myred] (rbl) -- (rbr) -- (rb) --
        cycle;
  \node at ($(r)-(0,.6)$) {$\m{f}$};
  \node at ($(ra)-(0,.6)$) {$\m{c}$};
  \node at ($(rb)-(0,.6)$) {$\m{c}$};
\end{tikzpicture}}
\underset{\m c \to \m g(\m c)}{\to}
\Vtop{\begin{tikzpicture}[scale=.5]
  \coordinate (r) at (0.0,0.0);
  \coordinate (rl) at (-1.25,-1.0);
  \coordinate (rr) at (1.25,-1.0);
  \coordinate (ra) at (-1.25,-1.0);
  \coordinate (ral) at (-2.25,-2.0);
  \coordinate (rar) at (-0.25,-2.0);
  \coordinate (rb) at (1.25,-1.0);
  \coordinate (rbl) at (0.25,-2.0);
  \coordinate (rbr) at (2.25,-2.0);
  \coordinate (rba) at (1.25,-2.0);
  \coordinate (rbal) at (0.25,-3.0);
  \coordinate (rbar) at (2.25,-3.0);
  \draw [rounded corners, fill=myblue] (rl) -- (rr) -- (r) -- cycle;
  \draw [rounded corners, fill=myred] (ral) -- (rar) -- (ra) --
        cycle;
  \draw [rounded corners, fill=myred] (rbl) -- (rbr) -- (rb) --
        cycle;
  \draw [rounded corners, fill=myblue] (rbal) -- (rbar) -- (rba) --
        cycle;
  \node at ($(r)-(0,.6)$) {$\m{f}$};
  \node at ($(ra)-(0,.6)$) {$\m{c}$};
  \node at ($(rb)-(0,.6)$) {$\m{g}$};
  \node at ($(rba)-(0,.6)$) {$\m{c}$};
\end{tikzpicture}}
$}
\hfill
\subfloat[\label{fig-problems:fp}Partial fusion.]{%
$
\Vtop{\begin{tikzpicture}[scale=.5]
  \coordinate (r) at (0.0,0.0);
  \coordinate (rl) at (-1.25,-1.0);
  \coordinate (rr) at (1.25,-1.0);
  \coordinate (ra) at (-1.25,-1.0);
  \coordinate (ral) at (-2.25,-2.0);
  \coordinate (rar) at (-0.25,-2.0);
  \coordinate (rb) at (1.25,-1.0);
  \coordinate (rbl) at (0.25,-2.0);
  \coordinate (rbr) at (2.25,-2.0);
  \draw [rounded corners, fill=myblue] (rl) -- (rr) -- (r) -- cycle;
  \draw [rounded corners, fill=myred] (ral) -- (rar) -- (ra) --
        cycle;
  \draw [rounded corners, fill=myred] (rbl) -- (rbr) -- (rb) --
        cycle;
  \node at ($(r)-(0,.6)$) {$\m{f}$};
  \node at ($(ra)-(0,.6)$) {$\m{c}$};
  \node at ($(rb)-(0,.6)$) {$\m{c}$};
\end{tikzpicture}}
\underset{\m c \to \m g(\m c)}{\to}
\Vtop{\begin{tikzpicture}[scale=.5]
  \coordinate (r) at (0.0,0.0);
  \coordinate (rl) at (-1.25,-1.0);
  \coordinate (rr) at (1.25,-1.0);
  \coordinate (ra) at (-1.25,-1.0);
  \coordinate (ral) at (-2.25,-2.0);
  \coordinate (rar) at (-0.25,-2.0);
  \coordinate (rb) at (1.25,-1.0);
  \coordinate (rbl) at (0.25,-2.0);
  \coordinate (rbr) at (2.25,-2.0);
  \coordinate (rba) at (1.25,-2.0);
  \coordinate (rbal) at (0.25,-3.0);
  \coordinate (rbar) at (2.25,-3.0);
  \draw [rounded corners, fill=myblue] (rl) -- (rb) -- (rbl) --
        (rbr) -- (rb) -- (r) -- cycle;
  \draw [rounded corners, fill=myred] (ral) -- (rar) -- (ra) --
        cycle;
  \draw [rounded corners, fill=myred] (rbal) -- (rbar) -- (rba) --
        cycle;
  \node at ($(r)-(0,.6)$) {$\m{f}$};
  \node at ($(ra)-(0,.6)$) {$\m{c}$};
  \node at ($(rb)-(0,.6)$) {$\m{g}$};
  \node at ($(rba)-(0,.6)$) {$\m{c}$};
\end{tikzpicture}}
$}\\
\subfloat[\label{fig-problems:fa}Fusion from above.]{%
$
\Vtop{\begin{tikzpicture}[scale=.5]
  \coordinate (r) at (0.0,0.0);
  \coordinate (rl) at (-1.25,-1.0);
  \coordinate (rr) at (1.25,-1.0);
  \coordinate (ra) at (-1.25,-1.0);
  \coordinate (ral) at (-2.25,-2.0);
  \coordinate (rar) at (-0.25,-2.0);
  \coordinate (rb) at (1.25,-1.0);
  \coordinate (rbl) at (0.25,-2.0);
  \coordinate (rbr) at (2.25,-2.0);
  \draw [rounded corners, fill=myblue] (ra) -- (ral) -- (rar) --
        (ra) -- (rr) -- (r) -- cycle;
  \draw [rounded corners, fill=myred] (rbl) -- (rbr) -- (rb) --
        cycle;
  \node at ($(r)-(0,.6)$) {$\m{h}$};
  \node at ($(ra)-(0,.6)$) {$\m{c}$};
  \node at ($(rb)-(0,.6)$) {$\m{c}$};
\end{tikzpicture}}
\;\;
\makebox[10mm]{$\underset{\scriptstyle\m h(\m c,x) \to
\m g(\m h(x,x))}{\to}$}
\Vtop{\begin{tikzpicture}[scale=.5]
  \coordinate (r) at (0.0,0.0);
  \coordinate (rl) at (-1.0,-1.0);
  \coordinate (rr) at (1.0,-1.0);
  \coordinate (ra) at (0.0,-1.0);
  \coordinate (ral) at (-1.25,-2.0);
  \coordinate (rar) at (1.25,-2.0);
  \coordinate (raa) at (-1.25,-2.0);
  \coordinate (raal) at (-2.25,-3.0);
  \coordinate (raar) at (-0.25,-3.0);
  \coordinate (rab) at (1.25,-2.0);
  \coordinate (rabl) at (0.25,-3.0);
  \coordinate (rabr) at (2.25,-3.0);
  \draw [rounded corners, fill=myblue] (rl) -- (ra) -- (raa) --
        (raal) -- (raar) -- (raa) -- (rar) -- (ra) -- (rr) -- (r) -- cycle;
  \draw [rounded corners, fill=myred] (rabl) -- (rabr) -- (rab) --
        cycle;
  \node at ($(r)-(0,.6)$) {$\m{g}$};
  \node at ($(ra)-(0,.6)$) {$\m{h}$};
  \node at ($(raa)-(0,.6)$) {$\m{c}$};
  \node at ($(rab)-(0,.6)$) {$\m{c}$};
\end{tikzpicture}}
$}
\hfill
\subfloat[\label{fig-problems:ca}Conspiring aliens.]{%
$
\Vtop{\begin{tikzpicture}[scale=.5]
  \coordinate (r) at (0.0,0.0);
  \coordinate (rl) at (-1.25,-1.0);
  \coordinate (rr) at (1.25,-1.0);
  \coordinate (ra) at (-1.25,-1.0);
  \coordinate (ral) at (-2.25,-2.0);
  \coordinate (rar) at (-0.25,-2.0);
  \coordinate (raa) at (-1.25,-2.0);
  \coordinate (raal) at (-2.25,-3.0);
  \coordinate (raar) at (-0.25,-3.0);
  \coordinate (rb) at (1.25,-1.0);
  \coordinate (rbl) at (0.25,-2.0);
  \coordinate (rbr) at (2.25,-2.0);
  \draw [rounded corners, fill=myblue] (rl) -- (rr) -- (r) -- cycle;
  \draw [rounded corners, fill=myred] (ral) -- (raa) -- (raal) --
        (raar) -- (raa) -- (rar) -- (ra) -- cycle;
  \draw [rounded corners, fill=myred] (rbl) -- (rbr) -- (rb) --
        cycle;
  \node at ($(r)-(0,.6)$) {$\m{f}$};
  \node at ($(ra)-(0,.6)$) {$\m{g}$};
  \node at ($(raa)-(0,.6)$) {$\m{c}$};
  \node at ($(rb)-(0,.6)$) {$\m{c}$};
\end{tikzpicture}}
\underset{\m c \to \m g(\m c)}{\to}
\Vtop{\begin{tikzpicture}[scale=.5]
  \coordinate (r) at (0.0,0.0);
  \coordinate (rl) at (-1.25,-1.0);
  \coordinate (rr) at (1.25,-1.0);
  \coordinate (ra) at (-1.25,-1.0);
  \coordinate (ral) at (-2.25,-2.0);
  \coordinate (rar) at (-0.25,-2.0);
  \coordinate (raa) at (-1.25,-2.0);
  \coordinate (raal) at (-2.25,-3.0);
  \coordinate (raar) at (-0.25,-3.0);
  \coordinate (rb) at (1.25,-1.0);
  \coordinate (rbl) at (0.25,-2.0);
  \coordinate (rbr) at (2.25,-2.0);
  \coordinate (rba) at (1.25,-2.0);
  \coordinate (rbal) at (0.25,-3.0);
  \coordinate (rbar) at (2.25,-3.0);
  \draw [rounded corners, fill=myblue] (ra) -- (ral) -- (raa) --
        (raal) -- (raar) -- (raa) -- (rar) -- (ra) -- (rb) -- (rbl) --
        (rba) -- (rbal) -- (rbar) -- (rba) -- (rbr) -- (rb) -- (r) --
        cycle;
  \node at ($(r)-(0,.6)$) {$\m{f}$};
  \node at ($(ra)-(0,.6)$) {$\m{g}$};
  \node at ($(raa)-(0,.6)$) {$\m{c}$};
  \node at ($(rb)-(0,.6)$) {$\m{g}$};
  \node at ($(rba)-(0,.6)$) {$\m{c}$};
\end{tikzpicture}}
$}
\caption{Undesired behavior on layers.}
\label{fig-problems}
\end{figure}

\begin{example}
\label{ex-laysys-new}
Let $\FF$ consist of a binary function symbol $\m f$, a unary
function symbol $\m g$, and constants $\m a$, $\m b$, and $\m c$.
We consider the following candidates for $\bL$:
\begin{align*}
\bL_0 &= \varnothing\\
\bL_1 &= \{ \m f(v,w), \m g(v), \m a, \m b, \m c, v \mid v,w \in \VC \}\\
\bL_2 &= \{ \m f(\m g^n(v),\m g^m(w)), \m g^n(v),\m g^n(\m c),\m a,\m b
\mid v,w \in \VC,\, n,m \in \Nat \}\\
\bL_3 &= \{ \m f(\m g^n(v),\m g^m(w)), \m g^n(v),\m a,\m b
\mid v,w \in \VC \cup \{ \m c \},\, n,m \in \Nat \}
\end{align*}
Regard the terms $s = \m f(\m c,\m c)$ and
$t = \m f(\m c, \m g(\m c))$.
According to $\bL_0$, neither $s$ nor $t$ have any tops.
According to $\bL_1$, the tops of both $s$ and $t$ are $\square$
and $\m f(\square,\square)$, and the latter is the max-top of $s$ and $t$.
According to $\bL_2$, $\square$ and $\m f(\square,\square)$ are tops
of $s$ and $t$, and $\m f(\square,\m g(\square))$ is a top of $t$
but not of $s$. The max-tops of $s$ and $t$ are $\m f(\square,\square)$
and $\m f(\square,\m g(\square))$, respectively. Finally, the max-tops of
$s$ and $t$ according to $\bL_3$ are $s$ and $t$ themselves.
\end{example}

Our goal is to deduce confluence of $\RR$ 
when rewriting is restricted to $\bLT$.
To this end, we need to impose restrictions on $\bL$.
This leads to the central definition of the paper.

\begin{definition}
\label{def-laysys}
Let $\FF$ be a signature.
A set
$\bL \subseteq \CC(\FF,\VV)$ of contexts is called a \emph{layer system}
if it satisfies properties \ctop, \cvar, and \cpartial.
The elements of $\bL$ are called \emph{layers}.
A TRS $\RR$ over $\FF$ is
\emph{weakly layered (according to a layer system~$\bL$)} if 
condition~\crewrite is satisfied for each $\ell \to r \in \RR$.
It is \emph{layered (according to a layer system~$\bL$)} if 
conditions~\crewrite, \cconsistent, and \cstepfusion are satisfied.
The conditions are as follows:
\begin{itemize}
\item[\ctop]
Each term in $\TT(\FF,\VV)$ has a non-empty top.
\item[\cvar]
If $x \in \VV$  and $C \in \CC(\FF,\VV)$ then
$C[x]_p \in \bL$ if and only if $C[\square]_p \in \bL$.
\item[\cpartial]
If $L, N \in \bL$, $p \in \Pos_\FF(L)$, and $L|_p \merge N$ is defined
then $L[L|_p \merge N]_p \in \bL$.
\item[\crewrite]
If $M$ is a max-top of $s$, $p \in \Pos_\FF(M)$, and
$s \to_{p,\ell \to r} t$ then
$M \to_{p,\ell \to r} L$ for some ${L \in \bL}$.
\item[\cconsistent]
In \crewrite either $L$ is a max-top of $t$ or $L = \square$.
\item[\cstepfusion]
If $L, N \in \bL$ and $L \Cleq N$ then $L[N|_p]_p \in \bL$
for any $p \in \Pos_\square(L)$.
\end{itemize}
\end{definition}

\begin{example}[(Example~\ref{ex-laysys-new} revisited)]
Consider the TRS $\RR_1$ consisting of the rewrite rules
\begin{xalignat*}{3}
\m{f}(x,x) &\to \m{a} &
\m{f}(x,\m{g}(x)) &\to \m{b} &
\m{c} \to \m{g}(\m{c})
\end{xalignat*}
from \cite{H80}.
It is non-confluent because
$\m a \lud{\RR_1}{}\from \m f(\m c,\m c) \to_{\RR_1}
\m f(\m c,\m g(\m c)) \to_{\RR_1} \m b$,
and $\m a$, $\m b$ are in normal form.
However, $\RR_1$ is confluent 
on $\bLT[0]$ and $\bLT[2]$,
and $\RR_1$ is confluent if rewriting is restricted to terms of $\bLT[1]$
(which rules out the 
rewrite
step $\m c \to \m g(\m c)$)
but $\RR_1$ is not confluent on $\bLT[3]$,
because $\m a,\m f(\m c,\m c),\m f(\m c,\m g(\m c)),\m b \in \bLT[3]$.
Clearly $\bL_0$ violates \ctop,
and therefore any attempt of proving confluence of $\RR_1$
by decomposing terms into a max-top and remaining subterms is doomed to
fail.
Our basic idea for establishing confluence of a (weakly) layered TRS is
to perform rewrite steps on arbitrary terms
on the corresponding elements of a layer system in the terms'
decomposition,
with subterms replaced by variables (this replacement is enabled by \cvar).

Figure~\ref{fig-problems}\subref{fig-problems:bl} depicts the rewrite step
$\m f(\m c,\m c) \to_{\RR_1} \m f(\m c,\m g(\m c))$
with both terms decomposed according to $\bL_1$.
Note that the 
subterm $\m c$
rewrites to $\m g(\m c)$, but the resulting
subterm is split into two layers.
Note furthermore that $\m f(\m c,\m g(\m c)) \to_{\RR_1} \m b$,
but the corresponding left-hand side $\m f(x, \m g(x))$ does not match
any part of the decomposition of $\m f(\m c, \m g(\m c))$.
Condition \crewrite (which is violated by $\bL_1$) helps ensuring
that rewrite steps 
on terms can be adequately simulated on layers.

Next consider Figure~\ref{fig-problems}\subref{fig-problems:fp},
depicting the rewrite step
$\m f(\m c,\m c) \to_{\RR_1} \m f(\m c,\m g(\m c))$
with terms decomposed according to $\bL_2$.
Note that $\bL_2$ satisfies \ctop, \cvar and \crewrite.
Nevertheless,
the result of the rewrite step $\m c \to_{\RR_1} \m g(\m c)$ is broken
apart: only a part of $\m g(\m c)$ is merged with the max-top of
$\m f(\m c,\m g(\m c))$.
Condition \cpartial prevents such partial fusion.
We can see that it is violated by $\bL_2$:
we have $\m f(\square,\m g(\square)) \in \bL_2$ and $\m g(\m c) \in \bL_2$,
but $\m f(\square,\m g(\square) \merge \m g(\m c)) =
\m f(\square, \m g(\m c)) \notin \bL_2$.
Finally, $\bL_3$ weakly layers $\RR$.

In order to motivate \cconsistent, we consider the TRS
$\RR_2$ consisting of the rewrite rules
\begin{xalignat*}{3}
\m{f}(x,x) &\to \m{a} &
\m{f}(x,\m{g}(x)) &\to \m{b} &
\m{h}(\m{c},x) \to \m{g}(\m{h}(x,x))
\end{xalignat*}
which is closely related to $\RR_1$; instead of the rewrite step
$\m c \to_{\RR_1} \m g(\m c)$ we have
$t_c \to_{\RR_2} \m g(t_c)$ for $t_c = \m h(\m c,\m c)$,
and therefore
$\m a \lud{\RR_2}{}\from \m f(t_c,t_c) \to_{\RR_2} \m f(t_c,\m g(t_c))
\to_{\RR_2} \m b$.
We define a layer system $\bL_4$ by
\begin{align*}
\bL_c &= \{ 
v, \m h(v,w), \m h(\m c,v) \mid v,w \in \VC \} \\
\bL_4 &= \{ \m f(\m g^n(s),\m g^m(t)),
\m g^n(t),\m a,\m b,\m c,s \mid s,t \in \bL_c,\, n,m \in \Nat \}
\end{align*}
It is straightforward to verify that $\bL_4$ weakly layers $\RR_2$
and that $\RR_2$ is confluent on $\bLT[4]$.
Figure~\ref{fig-problems}\subref{fig-problems:fa} depicts the
rewrite step $t_c \to_{\RR_2} \m g(t_c)$.
It affects the max-top of $t_c$,
but the max-top of the result, $\m g(\m h(\m c,\square))$,
is larger than the result of rewriting the max-top $\m h(\m c,\square)$
of $t_c$: $\m h(\m c,\square) \to \m g(\m h(\square,\square))$.
In the case of $\RR_2$, there are rewrite sequences in which such
fusion from above happens infinitely often, and that presents another
obstacle to confluence.
Condition \cconsistent is designed to rule out such fusion from above
completely,
and indeed the rewrite step $t_c \to_{\RR_2} \m g(t_c)$
shows that \cconsistent is violated by $\bL_4$ and $\RR_2$.

Finally consider the layer system
\[
\bL_5 = \{ \m f(v,w), \m f(\m g^{n+1}(\m c),\m g^{m+1}(\m c)),\m a,
\m g^n(\m c),\m g^n(v),v \mid v,w \in \VC,\, n,m \in \Nat \}
\]
which weakly layers the TRS $\RR_3$ consisting of the rewrite rules
\begin{xalignat*}{2}
\m{f}(x,x) &\to \m{a} &
\m{c} \to \m{g}(\m{c})
\end{xalignat*}
and satisfies \cconsistent.
Figure~\ref{fig-problems}\subref{fig-problems:ca}
depicts the rewrite step $\m f(\m g(\m c),\m c) \to_{\RR_3}
\m f(\m g(\m c),\m g(\m c))$.
What happens here is that the result of rewriting the subterm
$\m c \to \m g(\m c)$
fuses with the previous top, $\m f(\square, \square)$,
but only if the unrelated first subterm $\m g(\m c)$ fuses at the same
time. This phenomenon causes problems in our proof, and \cstepfusion
prevents that.
To wit, we have $\m f(\square,\square) \in \bL_5$ and
$\m f(\m g(\m c),\m g(\m c)) \in \bL_5$,
so by \cstepfusion with $p = 1$, there should be
$\m f(\square \merge \m g(\m c),\square) \in \bL_5$,
but this is not the case.
\end{example}

The following convention helps to differentiate various contexts.

\begin{convention}
We use $C$ and $D$ to denote \emph{contexts},
$B$ to denote \emph{base contexts} (to be introduced in
Section~\ref{sec-main}), 
$L$ and $N$ to denote arbitrary \emph{layers}, 
and $M$ to denote a max-top of a term or context.
\end{convention}

In the sequel we implicitly assume a given layer system $\bL$.
In light of the next lemma we speak of \emph{the} max-top
of a term or context.

\begin{lemma}
\label{lem-max-top}
Any non-empty context has a unique and non-empty max-top.
\end{lemma}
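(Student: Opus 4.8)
The plan is to reduce all three assertions—existence, uniqueness, and non-emptiness of the max-top of a non-empty context $C$—to two facts: that the set of tops of $C$ is finite and non-empty, and that $\bL$ is closed under the (defined) merge $\merge$. Granting these, the set $T = \{L \in \bL \mid L \Cleq C\}$ of tops is a finite, non-empty subset of $\bL$ in which any two elements $L, N$ satisfy $L \Cleq C \Cgeq N$, so $L \merge N$ is defined, lies below $C$, and (by closure) again belongs to $T$. Iterating, the finite set $T$ has a greatest element $\bigsqcup T \in T$, which is therefore its unique maximal element, i.e.\ the unique max-top; and since it dominates a non-empty top it is itself different from $\square$. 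The remaining work is thus to supply the two facts, which is where the three defining conditions enter.

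First I would establish closure of $\bL$ under merge: if $L, N \in \bL$ and $L \merge N$ is defined, then $L \merge N \in \bL$. The key observation is that condition \cpartial at the root position $\epsilon$ does exactly this whenever $\rt(L) \in \FF$, because then $\epsilon \in \Pos_\FF(L)$, $L|_\epsilon = L$, and $L[L \merge N]_\epsilon = L \merge N$. The remaining cases are degenerate: if $L = \square$ then $L \merge N = N \in \bL$, and if $\rt(L) \in \VV$ then $L$ is a single variable whose only upper bound is $L$ itself, forcing $N \Cleq L$ and hence $L \merge N = L \in \bL$. So the merge always stays in $\bL$.

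Next I would produce one non-empty top of $C$, which uses \ctop and \cvar. Let $\Pos_\square(C) = \{p_1, \dots, p_k\}$ be the (pairwise parallel, leaf) hole positions of $C$, and replace each $\square$ at $p_i$ by a fresh, pairwise distinct variable $x_i$, obtaining a genuine term $t \in \TT(\FF,\VV)$. By \ctop, $t$ has a non-empty top $M \in \bL$ with $M \Cleq t$. Since $x_i$ occurs in $t$ only at $p_i$ and $M \Cleq t$ is a pruning of $t$, each $x_i$ occurs in $M$ at most at $p_i$; replacing every such occurrence back by $\square$ and applying \cvar once per position (the positions being parallel, the replacements are independent) yields a layer $M' \in \bL$. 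A position-by-position comparison shows $M' \Cleq C$, so $M'$ is a top of $C$; moreover $M'$ is non-empty, since $M' = \square$ would force $M = x_i$ for some $i$, hence $t = x_i$ and $C = \square$, contradicting non-emptiness of $C$.

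Finally, finiteness of $T$ is immediate: every $D \Cleq C$ is obtained from $C$ by pruning some subcontexts to $\square$, and the finite tree $C$ admits only finitely many such prunings. Combining the three steps closes the argument via the greatest-element observation of the first paragraph. I expect the main obstacle to be the bridging step: carefully tracking positions so that the term-top $M$ produced by \ctop for the variable-instantiated $t$ can be converted, through repeated use of \cvar, into a genuine context-top of $C$ that is provably below $C$ and non-empty. The merge-closure step, by contrast, is short once one spots the $\epsilon$ instance of \cpartial, and the finiteness remark is routine.
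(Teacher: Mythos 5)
Your proof is correct and follows essentially the same route as the paper's: a non-empty top is produced by instantiating the holes of $C$ with fresh variables, invoking \ctop, and transporting the resulting top back through \cvar, while uniqueness rests on merging tops below $C$ via the root instance of \cpartial. The packaging differs only cosmetically---pairwise distinct fresh variables instead of a single one, an explicit merge-closure lemma (which also handles the variable-root case the paper treats implicitly), and a greatest-element fold over the finite set of tops in place of the paper's two-maximal-tops contradiction.
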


\begin{proof}
Let $C$ be a non-empty context.
To show that $C$ has a non-empty top let~$x$ be a variable not 
occurring in $C$ and consider $C[x,\ldots,x]$, which has a non-empty
top $L_x$ by \ctop. Then $L := L_x\sigma$ 
with $\dom(\sigma)=\{x\}$ and $\sigma(x)= \square$ is a top of $C$
since $L \in \bL$ by \cvar and $L \Cleq C$ by construction.
It is non-empty since $L = \square$ implies $L_x = x$,
hence $C[x,\ldots,x] = x$ and consequently $C = \square$ because $x$
is fresh, contradicting the premises.
Hence the set $S$ of non-empty tops of $C$ is non-empty. Since it
also is finite it has a (non-empty) maximal element.

To show uniqueness let $M$ and $M'$ be max-tops of $C$.
Then $M \Cleq C$ and $M' \Cleq C$ ensures that $M \merge M'$ is defined,
and a layer by \cpartial
(if $\square \in \{ M,M' \}$ then \cpartial is not needed).
If $M \neq M'$ then $M \Clt M\merge M'$ or $M' \Clt M\merge M'$. Since
$M\merge M' \Cleq C$ this gives the desired contradiction.
\end{proof}

Next we introduce the key notion of the rank of a term.

\begin{definition}
\label{def-rank}
Let $t = M[\seq{t}]$ with $M$ the max-top of $t$.
Then $\seq{t}$ are the \emph{aliens} of $t$.
We define
$\rank(t) = 1 + \max\{ \rank(t_i) \mid  1 \leq i \leq n \}$,
where $\max(\varnothing) = 0$ by convention.
\end{definition}

Since the max-top of a term is uniquely defined (Lemma~\ref{lem-max-top}),
it follows that also its aliens are uniquely defined.
The next example shows that rewriting might increase the rank of a term.
In Lemma~\ref{lem-rank-decrease} 
we show that this cannot happen in weakly layered TRSs.

\begin{example}
\label{ex-rank}
Consider the layer system
\[
\bL_6 = \{ v, \m{f}(v), \m{g}(v), \m{h}(v), \m{f}(\m{g}(\m{h}(v))),
\m{g}(\m{g}(v)), \m a \mid v \in \VC \}
\]
Note that (in contrast to modularity) subterms can have larger rank.
E.g., if $s = \m{f}(\m{g}(\m{h}(x)))$ and $t = \m{g}(\m{h}(x))$
then $\rank(s) = 1 < 2 = \rank(t)$.
Furthermore $s \to_\RR t$ in the TRS $\RR$ containing the rule
$\m{f}(\m{g}(x)) \to \m{g}(x)$.
Note that $\RR$ is not weakly layered according to $\bL_6$.
\end{example}

The next lemma states a useful decomposition result.

\begin{lemma}
\label{lem-fusion}
Let $t = L[\seq{t}]$, $L$ a top of $t$,
and $k$ be the maximum of $\rank(t_i)$ for $1 \leq i \leq n$.
Then $\rank(t) \leq k + 1$ and aliens of~$t$ that are not rooted 
at hole positions of $L$ have rank less than~$k$.
\end{lemma}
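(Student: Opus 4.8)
Plan. Write $M$ for the max-top of $t$, which exists and is unique by Lemma~\ref{lem-max-top}, so that $L \Cleq M \Cleq t$. Because $L \Cleq M$, every hole of $M$ lies at or below a hole of $L$, and an alien rooted at a hole $p$ of $L$ is then just $t|_p$, one of the $t_i$, of rank at most $k$. Hence the whole lemma reduces to a single claim: every alien $s = t|_q$ whose position factors as $q = p\,r$ with $p$ a hole of $L$ and $r \neq \epsilon$ (i.e. $s$ is \emph{not} rooted at a hole of $L$) satisfies $\rank(s) < \rank(t|_p) \leq k$. Granting this, the alien bound is immediate, and the rank bound follows because then every alien has rank at most $k$, so $\rank(t) = 1 + \max_i \rank(\text{alien}_i) \leq k+1$.

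The engine of the proof is a hereditary maximality property of the max-top, which I would isolate first: for every $\pi \in \Pos_\FF(M)$, the max-top $M_\pi$ of $t|_\pi$ satisfies $M_\pi \Cleq M|_\pi$. The subtlety is that $M|_\pi$ itself need \emph{not} be a layer — restrictions of layers are not layers in general, as witnessed by $\bL_6$ in Example~\ref{ex-rank} — so one cannot simply invoke maximality of $M_\pi$ against $M|_\pi$. Instead I would \emph{merge}: since $M, M_\pi \in \bL$, $\pi \in \Pos_\FF(M)$, and $M|_\pi \merge M_\pi \Cleq t|_\pi$ is defined, condition \cpartial gives $M[\,M|_\pi \merge M_\pi\,]_\pi \in \bL$. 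This context is $\Cleq t$ and $\Cgeq M$, so maximality of $M$ forces equality, whence $M|_\pi \merge M_\pi = M|_\pi$, that is $M_\pi \Cleq M|_\pi$. This one application of \cpartial together with maximality of the max-top is the crux of the argument.

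With this in hand I would prove the reduced claim for a fixed alien $s = t|_q$ by induction on $|q| - |p|$, establishing more generally that $\rank(s) < \rank(t|_{q'})$ for every proper prefix $q'$ of $q$; note that such $q'$ automatically lie in $\Pos_\FF(M)$, being ancestors of the hole $q$. Applying the hereditary maximality at $q'$ gives $M_{q'} \Cleq M|_{q'}$, where $M_{q'}$ is the max-top of $t|_{q'}$. Since $q$ is a hole of $M$, the suffix of $q$ below $q'$ is a hole of $M|_{q'}$, hence lies at or below some hole $r_1$ of $M_{q'}$. If the two coincide, then $s = (t|_{q'})|_{r_1}$ is an alien of $t|_{q'}$, so $\rank(s) < \rank(t|_{q'})$ directly by the definition of rank. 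Otherwise $q'r_1$ is a strictly longer proper prefix of $q$, the induction hypothesis yields $\rank(s) < \rank(t|_{q'r_1})$, and since $t|_{q'r_1}$ is an alien of $t|_{q'}$ we also have $\rank(t|_{q'r_1}) < \rank(t|_{q'})$; chaining closes the step. Taking $q' = p$ gives $\rank(s) < \rank(t|_p) \leq k$, as required.

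I expect the main obstacle to be precisely the hereditary maximality step. The naive inductive attempt — applying the lemma to $t|_p$ equipped with its own max-top as top — is vacuous, because $s$ need not be an alien of $t|_p$ and may sit arbitrarily deep inside it; the restriction $M|_p$ records exactly how deep, but it is not a layer. The \cpartial-based merge is what lets me descend one layer at a time while keeping each intermediate context a genuine layer, and the induction on the residual depth $|q| - |q'|$ is what accumulates the strict rank decrease down to $s$.
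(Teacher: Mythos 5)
Your proof is correct, and its crux coincides exactly with the paper's: what you isolate as hereditary maximality --- for $\pi \in \Pos_\FF(M)$, merge the max-top $M_\pi$ of $t|_\pi$ into $M$ via \cpartial and use maximality of $M$ to force $M_\pi \Cleq M|_\pi$ --- is literally case (3) of the paper's proof, where \cpartial gives $M[M|_p \merge M_i]_p \in \bL$ and hence $M_i \Cleq M|_p$. What is genuinely different is the induction wrapped around this step. The paper proves a stronger claim for an arbitrary context $C \Cleq M$ with $t = C[t_1,\dots,t_n]$, by induction on $|t| - |C|_{\FF \cup \VV}$: it inserts the max-tops of the subterms at \emph{all} holes of $C$ simultaneously, forming the maximal $C'$ with $C \Clt C' \Cleq C[M_1,\dots,M_n]$ and $C' \Cleq M$, shows via the merge step that every newly exposed hole term is an alien of some $t_i$ (hence of rank $< \rank(t_i) \leq k$), and iterates until $C = M$, where both conclusions are read off at once. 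You instead fix a single alien $s = t|_q$ and descend the one branch from the $L$-hole $p$ to $q$, inducting on residual depth and chaining $\rank(s) < \rank(t|_{q'r_1}) < \rank(t|_{q'})$ through the nested max-tops along that path; the two inductions traverse the same tower of layers, globally versus branch-locally. Your organization buys a cleanly reusable auxiliary fact and per-alien bookkeeping, and your side conditions all check out: proper prefixes of a hole of $M$ do lie in $\Pos_\FF(M)$; $M|_{q'} \merge M_{q'}$ is defined since both are $\Cleq t|_{q'}$; and $r_1 \neq \epsilon$ because max-tops are non-empty (Lemma~\ref{lem-max-top}), so your measure strictly decreases. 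The paper's organization buys the slightly more general statement (valid for any $C \Cleq M$, not just tops) and dispatches all aliens in one sweep, at the cost of the hole/variable/function-position trichotomy relating $C$ to $C'$. Your unproved assertion $L \Cleq M$ is at the paper's own level of rigor (it follows by merging $L$ with $M$ exactly as in the uniqueness part of Lemma~\ref{lem-max-top}).
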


\begin{proof}
Let $M$ be the max-top of $t$.
We show the (stronger) property for any context $C$ with $C \Cleq M$ 
(instead of a top $L$ of $t$).  Note that $L \Cleq M$. 
The proof is by induction on $|t| - |C|_{\FF \cup \VV}$,
which is a natural number because $C \Cleq t$. If $C = M$ then
$\rank(t) = 1 + \max\{ \rank(t_i) \mid 1 \leq i \leq n \} = 1 + k$
and all aliens of $t$ are rooted at hole positions of $C$, so we are done.
Otherwise, let $M_i$ be the max-top of $t_i$. There is a unique maximal
context $C'$ such that $C' \Cleq C[M_1,\dots,M_n]$ and $C' \Cleq M$.
Furthermore, we have $C \Clt C'$ because the $M_i$ are non-empty
by Lemma~\ref{lem-max-top}.
Because $C' \Cleq M \Cleq t$,
$t = C'[t_1',\dots,t_m']$
where $t_j'$ is the subterm of $t$ at the position of the $j$-th hole
in $C'$.
For each $p \in \Pos_\square(C)$ there are three possibilities.
Let $C[\seq{t}]|_p = t_i$.
\begin{enumerate}
\item
If $p \in \Pos_\square(C')$ then
$C'[\seq[m]{t'}]|_p = t_j'$ and $t_i = t_j'$ for some $j$.
\item
If $p \in \Pos_\VV(C')$ then there are no holes below $p$ in $C'$.
\item
\label{itm-lem-fusion-3}
If $p \in \Pos_\FF(C')$ then $p \in \Pos_\FF(M)$
and $M[M|_p \merge M_i]_p \in \bL$ by \cpartial. Because
$M$ is the max-top of $t$ this implies $M_i \Cleq M|_p$ and
therefore $C'|_p = M_i$ by construction of $C'$.
Hence all $t_j'$ corresponding to holes of $C'$ below
$p$ are aliens of $t_i$ having rank less than $\rank(t_i)$.
\end{enumerate}
We can now apply the induction hypothesis to $C'[t_1',\dots,t_m']$
since $C \Clt C'$ implies
$|t| - |C|_{\FF \cup \VV} > |t| - |C'|_{\FF \cup \VV}$.
To conclude, note that any alien rooted
at a hole position of $C'$ but not at a hole position of $C$
equals a $t_j'$ from case (\ref{itm-lem-fusion-3}) and therefore
has rank less than $k$.
\end{proof}

\begin{lemma}
\label{lem-bl-closed}
Let $\RR$ be a TRS that is weakly layered according to $\bL$.
Then $\bL$ is closed under rewriting by $\RR$.
\end{lemma}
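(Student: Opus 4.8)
The plan is to establish the one-step statement: if $L \in \bL$ and $L \to_{p, \ell \to r} N$ for some $\ell \to r \in \RR$, then $N \in \bL$; closure of $\bL$ under $\to_\RR$ (and hence under $\to_\RR^*$) is then immediate by induction. Since $\ell \notin \VV$ we have $\rt(\ell) \in \FF$, so the matched subterm $L|_p = \ell\sigma$ is rooted by a function symbol and $p \in \Pos_\FF(L)$; in particular $L \neq \square$.

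First I would eliminate the holes of $L$. Pick a variable $y$ not occurring in $L$ and let $\hat{L}$ be the term obtained from $L$ by replacing every hole with $y$. Replacing the holes one at a time and applying \cvar at each step yields $\hat{L} \in \bL$. Because $\hat{L} \Cleq \hat{L}$, the context $\hat{L}$ is a top of itself, and as every top of $\hat{L}$ is $\Cleq \hat{L}$ it is in fact the unique (by Lemma~\ref{lem-max-top}) max-top of $\hat{L}$. Replacing holes by a variable does not change function-symbol positions, so $\Pos_\FF(\hat{L}) = \Pos_\FF(L)$ and thus $p \in \Pos_\FF(\hat{L})$.

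Next I would lift the step to $\hat{L}$. Let $\hat\sigma$ be $\sigma$ with all holes in its images replaced by $y$. Since $\ell$ itself contains no holes, replacing all holes of $L|_p = \ell\sigma$ by $y$ produces exactly $\ell\hat\sigma$; this is well defined even when $\ell$ is not left-linear, because repeated occurrences of a variable in $\ell$ are matched to identical hole-containing subterms, which stay identical after the uniform replacement. Hence $\hat{L} \to_{p, \ell \to r} \hat{L}[r\hat\sigma]_p =: \hat{N}$, and a direct comparison (replacement by $y$ commutes with context plugging) shows that $\hat{N}$ is exactly $N$ with every hole replaced by $y$. Now \crewrite applies to $s = \hat{L}$ with max-top $M = \hat{L}$ and position $p \in \Pos_\FF(M)$, yielding a layer $L' \in \bL$ with $\hat{L} \to_{p, \ell \to r} L'$. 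As the result of a step at a fixed position with a fixed rule is unique, $L' = \hat{N}$, so $\hat{N} \in \bL$.

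Finally I would translate back. Since $y$ is fresh and was introduced solely in place of holes, it occurs in $\hat{N}$ precisely at the hole positions of $N$; replacing every occurrence of $y$ in $\hat{N}$ by $\square$ therefore recovers $N$. Doing this one occurrence at a time and invoking \cvar (in the direction $C[y]_q \in \bL$ implies $C[\square]_q \in \bL$) gives $N \in \bL$. The only delicate point, and the main obstacle, is the interaction of holes with non-left-linear rules: replacing distinct holes by distinct fresh variables would in general break the match at $p$, so using a single fresh variable for all holes is what makes the lifted step well defined. Everything else is routine bookkeeping combining \cvar, \crewrite, and determinism of the rewrite step.
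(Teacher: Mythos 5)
Your proof is correct and takes essentially the same route as the paper: the paper's proof likewise replaces all holes of $L$ by a single fresh variable $x$, observes that $L[x,\ldots,x] \in \bL$ (by \cvar) is its own max-top, applies \crewrite to obtain $N[x,\ldots,x] \in \bL$, and converts back via \cvar. Your additional care about non-left-linear rules and the determinism of a step at a fixed position with a fixed rule merely makes explicit what the paper compresses into ``obviously''.
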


\begin{proof}
Let $L \in \bL$ and $L \to_\RR N$. 
Obviously $L[x,\ldots,x] \to_\RR N[x,\ldots,x]$ for a fresh variable $x$.
Since $L[x,\ldots,x] \in \bL$ by \cvar it is its own max-top.
We conclude since $N[x,\ldots,x] \in \bL$ by \crewrite and
hence $N \in \bL$ by \cvar.
\end{proof}

We now present technical results about rewriting contexts.
In the sequel we often want to replace variables affected by
some substitution $\sigma$ by holes. We therefore denote by
$\sigma_\square(x)$ the substitution obtained by letting
$\sigma_\square(x) = \square$ for $x \in \dom(\sigma)$ and
$\sigma_\square(x) = x$ otherwise. For a context $C$
we denote by $C_\square$ the context obtained from 
$C$ by replacing all variables by holes.

\begin{lemma}
\label{lem-context-instance}
Let $C$ be a context and $\ell$ a non-variable term.
If $\ell \matches C|_p$ then there is a term $c$ such that 
\begin{enumerate}
\item 
\label{lem-context-instance-C}
$\ell \matches c|_p$ and $C = c\sigma_\square$ for some 
substitution $\sigma$, and
\item 
\label{lem-context-instance-D}
if $C \Cleq D$ for a context $D$ and $\ell \matches D|_p$ then $c \matches D$.
\end{enumerate}
\end{lemma}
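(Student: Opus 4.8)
The plan is to fix a matcher witnessing the hypothesis and then build $c$ by turning the holes of $C$ into fresh variables, taking care that the region matched by $\ell$ retains $\ell$'s (possibly non-linear) shape. Concretely, let $\sigma'$ be a substitution with $\ell\sigma' = C|_p$. For every variable $x \in \Var(\ell)$ I would let $\tau(x)$ be the term obtained from $\sigma'(x)$ by replacing each of its holes with a fresh variable, using pairwise distinct fresh variables throughout $\sigma'(x)$ and across distinct $x$; then $\ell\tau$ is a genuine term. I define $c$ to be the term obtained from $C$ by replacing the subterm at $p$ with $\ell\tau$ and replacing every hole of $C$ lying parallel to $p$ with a further fresh variable. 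No hole of $C$ can lie strictly above $p$, since $\rt(C|_p) = \rt(\ell) \in \FF$ forces every proper prefix of $p$ to be a function-symbol position; hence $c$ is hole-free, i.e.\ a term.

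For part~(\ref{lem-context-instance-C}), $c|_p = \ell\tau$ immediately gives $\ell \matches c|_p$. Let $\sigma$ be a substitution whose domain is exactly the set of fresh variables introduced above. Applying $\sigma_\square$ to $c$ turns every fresh variable back into a hole: parallel to $p$ this recovers the original holes of $C$, while at $p$ we have $(\ell\tau)\sigma_\square = \ell\sigma' = C|_p$, because $\tau(x)\sigma_\square = \sigma'(x)$ by construction (the holes of $\sigma'(x)$ became fresh variables and are now turned back into holes, and the ordinary variables of $\sigma'(x)$ are untouched as they lie outside $\dom(\sigma)$). Hence $c\sigma_\square = C$.

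For part~(\ref{lem-context-instance-D}), assume $C \Cleq D$ and $\ell\rho = D|_p$ for some substitution $\rho$. Because $\Cleq$ is characterised position-wise, $C \Cleq D$ yields $C|_p \Cleq D|_p$, that is $\ell\sigma' \Cleq \ell\rho$; comparing the two instances through the fixed skeleton of $\ell$ gives $\sigma'(x) \Cleq \rho(x)$ for every $x \in \Var(\ell)$. I then define a substitution $\theta$ matching $c$ onto $D$: each fresh variable created from a hole of $C$ parallel to $p$ is sent to the context that $D$ places in that hole; each fresh variable that replaced the $j$-th hole of some $\sigma'(x)$ is sent to the context filling that same hole in the witness of $\sigma'(x) \Cleq \rho(x)$, so that $\tau(x)\theta = \rho(x)$; and every original variable of $C$ is left fixed. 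With this choice $c\theta$ agrees with $D$ off $p$, and at $p$ we get $c|_p\theta = \ell\tau\theta = \ell\rho = D|_p$, whence $c \matches D$.

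The one genuinely delicate point is the treatment of non-linear $\ell$. Had I simply replaced every hole of $C$ by a distinct fresh variable, part~(\ref{lem-context-instance-D}) would only get easier, but part~(\ref{lem-context-instance-C}) would fail: a repeated variable of $\ell$ demands equal subterms at the corresponding positions of $c|_p$, whereas $C$ may carry two unrelated holes there. Routing the replacement through $\tau$---one shared fresh variable per hole of $\sigma'(x)$, reused at every occurrence of $x$---is exactly what reconciles ``$\ell$ still matches $c|_p$'' with ``$c$ is general enough to match every admissible $D$''. I expect verifying this coherence, together with the passage from $\ell\sigma' \Cleq \ell\rho$ to the variable-wise comparison $\sigma'(x) \Cleq \rho(x)$, to be where the real bookkeeping lies.
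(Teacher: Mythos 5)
Your proposal is correct, but it takes a genuinely more explicit route than the paper. The paper's proof replaces the holes of $C$ by fresh variables to obtain a term $c_0$, sets $c_1 := c_0[\ell]_p$, observes that $C$ witnesses that $c_0$ and $c_1$ have a common instance, and defines $c$ abstractly as a \emph{most general} common instance; part~(2) is then immediate from most generality, and the remaining work is a renaming argument showing that the matcher for $C$ may be assumed to have the form $\sigma_\square$. Your $c$ is precisely this most general common instance, computed by hand: routing the hole-to-variable replacement through $\tau$ --- one shared fresh variable per hole of $\sigma'(x)$, reused at every occurrence of $x$ in $\ell$ --- is exactly the identification that the unifier of $c_0$ and $c_1$ performs, and this is indeed the crux for non-linear $\ell$, as you note. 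Your verification of part~(2) is sound: $C \Cleq D$ does give $C|_{pq} \Cleq D|_{pq}$ at every position $q$ with $\ell|_q = x$, hence $\sigma'(x) \Cleq \rho(x)$; the matcher $\theta$ is well defined because each fresh variable $z$ receives a single value ($D|_{pqq'}$ is independent of the occurrence $q$ of $x$, since $D|_{pq} = \rho(x)$ for all of them), and because original variables of $C$ reappear unchanged in $D$. The trade-off between the two proofs: your construction makes part~(1), and in particular the exact form $C = c\sigma_\square$, true by construction (fresh variables correspond bijectively to holes of $C$), whereas the paper must recover this a posteriori via its renaming argument; conversely, the paper gets part~(2) for free from most generality, while you must exhibit $\theta$ explicitly. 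One microscopic slip that does not affect correctness: the reason no hole of $C$ lies strictly above $p$ is simply that $p \in \Pos(C)$ --- holes and variables are leaves, so every proper prefix of $p$ is a function-symbol position --- rather than the fact that $\rt(C|_p) = \rt(\ell) \in \FF$, which concerns the position $p$ itself.
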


\begin{proof}
Assume that $C$ has $n \geqslant 0$ holes.
We may assume without loss of generality that $C$ and $\ell$
have no variables in common.
Let $c_0 := C[\seq{x}]$ with fresh variables $\seq{x}$.
The context $C$ witnesses the fact that $c_0$ and $c_1 := c_0[\ell]_p$
are unifiable.
Let $c$ be a most general instance of $c_0$ and $c_1$.
Note that variables in $c$ can be renamed freely.
If $C \Cleq D$ then $D$ is an instance of $c_0$. Furthermore,
if $\ell \matches D|_p$ then $D$ must be an instance of $c_1$ as well
and therefore $c \matches D$. In particular, $c \matches C$
and thus $C = c\sigma$ for some substitution $\sigma$.
Let $\tau$ be a substitution such that $c = c_0\tau$.
For $x \in \Var(C)$, $\sigma(\tau(x)) = x$, which implies that $\tau(x)$
is a variable.
We can rename each $\tau(x)$ to $x$ in $c$.
Therefore we may assume without loss of generality that
$\sigma(x) = \tau(x) = x$ for $x \in \Var(C)$.
For the variables $x_i$,
we have $\sigma(\tau(x_i)) = \square$ for all $1 \leq i \leq n$,
which is only possible if $\sigma$ maps those variables to $\square$.
Consequently, $\sigma_\square = \sigma$.
\end{proof}

If a rewrite rule is applied to a context then each hole may be
erased, copied or duplicated. The same holds for the terms used to fill the
holes in a context, as formalized by the next lemma.

\begin{lemma}
\label{lem-context-rewrite}
If $C \to_{p,\ell \to r} C'$ and $\ell \matches C[\seq{s}]|_p$
then $C[\seq{s}] \to_{p,\ell \to r} C'[\seq[m]{t}]$ and
$\{ \seq[m]{t} \} \subseteq \{ \seq{s} \}$.
\end{lemma}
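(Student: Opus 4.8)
The plan is to track how the holes of $C$ are filled and to show that the rewrite step commutes with this filling. First I would fix the matching substitution $\theta$ witnessing the context step, so that $C|_p = \ell\theta$ and $C' = C[r\theta]_p$. Since $\ell$ contains no holes, the root of $\ell$ is a function symbol and $p$ is a function position of $C$; moreover every hole of $C$ lying below $p$ must occur inside $\theta(x)$ for some $x \in \Var(\ell)$ (holes can only sit beneath variable positions of $\ell$), while all other holes of $C$ are parallel to $p$ and are untouched by the step.

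Next I would invoke the second hypothesis $\ell \matches C[\seq{s}]|_p$ to obtain its (unique) matching substitution $\rho$, and observe that $\rho$ is obtained from $\theta$ by filling the holes in its images: because $C[\seq{s}]|_p$ is just $C|_p = \ell\theta$ with the holes below $p$ replaced by the corresponding $s_i$, we get that $\rho(x)$ equals $\theta(x) = C|_{pq}$ (for a position $q$ of $x$ in $\ell$) with its holes filled by the $s_i$ occupying those positions in $C[\seq{s}]$. This is where the hypothesis does genuine work: when $\ell$ is non-linear it guarantees that the $s_i$ filling two occurrences of the same variable agree, so that this $\rho$ is well defined and is indeed the matching substitution.

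Performing the step on the filled term then yields $C[\seq{s}] \to_{p,\ell \to r} C[\seq{s}][r\rho]_p$, and it remains to exhibit this right-hand side as $C'[\seq[m]{t}]$ for a suitable choice of fillers. I would read off the holes of $C' = C[r\theta]_p$: they consist of the holes of $C$ parallel to $p$, to be filled by the same $s_i$ as before, together with, for each occurrence in $r$ of a variable $x$, a copy of each hole of $\theta(x)$, to be filled by the corresponding $s_i$. Filling $r\theta$ in this way produces exactly $r\rho$, since $\rho(x)$ is $\theta(x)$ filled; hence the fillers reassemble $C[\seq{s}][r\rho]_p$, so $C'[\seq[m]{t}]$ is the result of the step. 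Each $t_j$ is by construction one of the $s_i$, and because $\Var(r) \subseteq \Var(\ell)$ no hole of $C'$ arises from a variable absent from $\ell$; thus $\{ \seq[m]{t} \} \subseteq \{ \seq{s} \}$, with erasure when a hole's variable does not occur in $r$ and duplication when it occurs several times.

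I expect the main obstacle to be the bookkeeping in the non-linear and duplicating cases: keeping track of which occurrence of which hole of $C$ is copied, erased, or identified as one passes from $C$ to $C'$ through $r$ and $\theta$, and verifying that the hole-filling turning $r\theta$ into $r\rho$ is consistent. All of this is forced once one notes that $\rho$ is simply $\theta$ followed by the hole-filling map, so that the residual work is notational rather than conceptual.
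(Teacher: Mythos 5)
Your proof is correct, and it takes a genuinely different route from the paper's. The paper derives this lemma in three lines from Lemma~\ref{lem-context-instance}: that lemma (proved by unification, taking a most general instance $c$ of $C[\seq{x}]$ and $C[\seq{x}][\ell]_p$) supplies a term $c$ with $\ell \matches c|_p$, $C = c\sigma_\square$, and --- by its part~(2), applied to $D = C[\seq{s}]$ --- a substitution $\sigma$ with $c\sigma = C[\seq{s}]$; the step $c \to_{p,\ell \to r} c'$ then projects simultaneously to $C' = c'\sigma_\square$ and to $C'[\seq[m]{t}] = c'\sigma$, and since the $t_j$ can only come from $\sigma$ they are among the $s_i$. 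You instead work directly with the two matchers, $\theta$ for the context step and $\rho$ for the filled term, and your key observation --- that $\rho$ is $\theta$ composed with the hole-filling map, with the hypothesis $\ell \matches C[\seq{s}]|_p$ supplying the consistency needed at repeated variables of a non-linear $\ell$ --- is exactly what the paper's term $c$ packages abstractly: $c$ is $C$ with the holes below $p$ refined just enough to match $\ell$, so that both $C$ and $C[\seq{s}]$ are its instances. Your position-wise verification that filling the holes of $C' = C[r\theta]_p$ (those parallel to $p$, plus a copy of the holes of $\theta(x)$ for each occurrence of $x$ in $r$) reproduces $r\rho$ is sound; the fillers are even uniquely determined, since filling at fixed hole positions is injective, so ``the corresponding $s_i$'' is unambiguous across occurrences. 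What each approach buys: yours is more elementary and self-contained, needing no unification and no auxiliary lemma; the paper's factoring through $c$ buys reuse --- the same skeleton $c$ is substituted into again in Lemma~\ref{lem-balanced-rewrite} (whose proof literally extends this one by a third substitution $\tau$ on $c$), and Lemma~\ref{lem-context-instance} is also invoked in Lemmata~\ref{lem-consistent} and~\ref{lem-r-repr}, where your per-position bookkeeping would have to be redone from scratch each time.
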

\begin{proof}
Since $\ell \matches C|_p$, 
Lemma~\ref{lem-context-instance}\eqref{lem-context-instance-C} yields
a term $c$ and a substitution
$\sigma_\square$ such that
$\ell \matches c|_p$ and $C = c\sigma_\square$.
Furthermore due to $C \Cleq C[\seq{s}]$ and
$\ell \matches C[\seq{s}]|_p$,
there is a substitution~$\sigma$ with $c\sigma = C[\seq{s}]$ by
Lemma~\ref{lem-context-instance}\eqref{lem-context-instance-D}.
Hence $C \to_{p,\ell\to r} C'$ mirrors
a rewrite step $c \to_{p,\ell \to r} c'$
with $C' = c'\sigma_\square$ and $C'[\seq[m]{t}] = c'\sigma$.
Since $\seq[m]{t}$ can only come from $\sigma$ we conclude.
\end{proof}

This section ends with a key lemma that enables the use
of induction on the rank of terms for proving confluence of $\RR$.

\begin{lemma}
\label{lem-rank-decrease}
Let $\RR$ be a weakly layered TRS.
If $s \to_\RR t$ then $\rank(s) \geq \rank(t)$.
\end{lemma}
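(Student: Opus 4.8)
The plan is to prove the claim by induction on $\rank(s)$, splitting on the position of the contracted redex relative to the max-top $M$ of $s$. Writing $s = M[\seq{s}]$, we have $\rank(s) = 1 + \max\{\rank(s_i) \mid 1 \leq i \leq n\}$. Suppose $s \to_{p,\ell \to r} t$. As rule left-hand sides are never variables, $s|_p$ is rooted by a function symbol, so $p$ is not a variable position of $M$; hence either $p \in \Pos_\FF(M)$, or $p$ lies on or below a hole position of $M$. In either case the goal is to exhibit a \emph{top} $N$ of $t$ (not necessarily its max-top) whose hole-fillers all have rank at most $\rank(s) - 1$, from which Lemma~\ref{lem-fusion} yields $\rank(t) \leq \rank(s)$ immediately.

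First I would treat the case that $p$ lies on or below a hole of $M$, say at the hole position $q \in \Pos_\square(M)$ carrying the alien $s_i$, so that $s_i \to_\RR s_i'$ and $t = M[s_1, \dots, s_i', \dots, s_n]$. Since $M$ has a hole at $q$, it is unchanged by the step and remains a top of $t$. The induction hypothesis applies to $s_i$, whose rank is strictly smaller than that of $s$, giving $\rank(s_i') \leq \rank(s_i)$. Hence every hole-filler of $M$ in $t$ has rank at most $\max_j \rank(s_j) = \rank(s) - 1$, and Lemma~\ref{lem-fusion} gives $\rank(t) \leq \rank(s)$.

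Next I would treat the case $p \in \Pos_\FF(M)$. Condition~\crewrite supplies a layer $L \in \bL$ with $M \to_{p,\ell \to r} L$, and in particular $\ell \matches M|_p$. As $M \Cleq s = M[\seq{s}]$ and $\ell \matches s|_p$, Lemma~\ref{lem-context-rewrite} lifts the context step to $s \to_{p,\ell \to r} L[\seq[m]{t}]$ with $\{\seq[m]{t}\} \subseteq \{\seq{s}\}$, so that $t = L[\seq[m]{t}]$. If $L \neq \square$ then $L$ is a top of $t$ whose hole-fillers are all aliens of $s$, each of rank at most $\rank(s) - 1$, and Lemma~\ref{lem-fusion} again gives $\rank(t) \leq \rank(s)$. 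The degenerate case $L = \square$ I would dispatch directly: then $t = t_1$ is a single alien of $s$, so $\rank(t) = \rank(t_1) \leq \rank(s) - 1$.

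The conceptual crux — and the step I expect to demand the most care — is that after the step the max-top of $t$ need no longer be $M$; indeed the example just before the lemma shows that a subterm may have larger rank, and fusion from above may enlarge the top. Routing both cases through Lemma~\ref{lem-fusion} is exactly what sidesteps this: the lemma bounds $\rank(t)$ via \emph{any} top of $t$, so we never have to identify the new max-top and only need a top whose hole-fillers are controlled. In the first case such a top is $M$ itself, which survives unchanged above the rewritten hole; in the second it is the layer produced by condition~\crewrite, transported to $t$ by Lemma~\ref{lem-context-rewrite}. The remaining points are routine: that $M$ indeed stays a top after a rewrite strictly inside an alien, and that variable positions of $M$ are excluded outright since left-hand sides are non-variable.
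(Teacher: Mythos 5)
Your proof is correct and takes essentially the same route as the paper's: induction on $\rank(s)$, the same case split on $p \in \Pos_\FF(M)$ versus $p$ at or below a hole of the max-top $M$, using \crewrite together with Lemma~\ref{lem-context-rewrite} in the former case and the induction hypothesis in the latter, with both cases concluded via Lemma~\ref{lem-fusion}. Your explicit handling of $L = \square$ and your exclusion of variable positions of $M$ merely spell out details the paper leaves implicit (the paper's appeal to Lemma~\ref{lem-fusion} covers $L = \square$ as well, since its proof works for any context below the max-top).
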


\begin{proof}
By induction on the rank of $s$. Let $s \to_p t$ and $s = M[\seq{s}]$ be
the decomposition of $s$ into max-top and aliens. We distinguish two
cases.

If $p \in \Pos_\FF(M)$ then condition \crewrite yields $M \to_p L$ and
$L$ a top of $t$. Let $t = L[\seq[m]{t}]$. 
By Lemma~\ref{lem-context-rewrite} 
$\{ \seq[m]{t} \} \subseteq \{ \seq{s} \}$ since $M \to_p L$. Hence
$\rank(t) \leq 1 + \max\{ \rank(t_i) \mid 1 \leq i \leq m \}
\leq 1 + \max\{ \rank(s_i) \mid 1 \leq i \leq n \} = \rank(s)$
using Lemma~\ref{lem-fusion}.

If $p \notin \Pos_\FF(M)$ then $s_j \to s_j'$ and 
$t = M[s_1,\dots,s_j',\dots,s_n]$ for some ${1 \leq j \leq n}$. 
The induction hypothesis yields $\rank(s_j) \geq \rank(s_j')$. 
Since $M$ is a top of $t$, Lemma~\ref{lem-fusion} yields
$\rank(t) \leq 
 1 + \max\{ \rank(s_j'), \rank(s_i) \mid 1 \leq i \leq n, i \neq j \}
\leq 1 + \max\{ \rank(s_i) \mid {1 \leq i \leq n} \} = \rank(s)$.
\end{proof}

\section{Confluence by Layer Systems}
\label{sec-main}

We start this long section by stating our main results. All results
reduce the task of proving confluence of a TRS to the easier task of
proving confluence of the terms in a suitable layer system, i.e., 
the terms in $\bLT$, which are precisely the terms of rank one.
The first result imposes left-linearity.

\begin{theorem}
\label{thm-main-ll}
Let $\RR$ be a weakly layered TRS that is confluent on terms of rank one.
If $\RR$ is left-linear then $\RR$ is confluent.
\end{theorem}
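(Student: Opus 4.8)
The plan is to prove by induction on the rank $n$ that $\RR$ is confluent on the set of terms of rank at most $n$; the theorem then follows since every term has some finite rank. The base case $n = 1$ is precisely the hypothesis, because the terms of rank one are exactly those in $\bLT$. By Lemma~\ref{lem-rank-decrease} rewriting never increases the rank, so each of these sets is closed under $\to_\RR$ and the induction is well founded.

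For the inductive step I would decompose every term of rank at most $n+1$ as $s = M[\seq{s}]$ into its unique max-top $M$ (Lemma~\ref{lem-max-top}) and its aliens $\seq{s}$, each of rank at most $n$. Every step $s \to_\RR t$ is then either a \emph{top step}, at a function position of $M$, or an \emph{inner step}, inside a single alien. To turn local information into confluence I would label top and inner steps separately and invoke the decreasing diagrams technique (Theorem~\ref{thm-dd}), labelling an inner step by the alien it rewrites (hence by a rank below $n+1$) and top steps by a dedicated top label placed above all inner labels. Two of the four local-peak cases are routine: inner steps in parallel aliens commute because their positions are disjoint, and two inner steps in the same alien are joined by the induction hypothesis applied to that alien of rank at most $n$.

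The interesting peaks are top--top and top--inner. For a top--top peak $t \from s \to u$ I would abstract the aliens into fresh variables, obtaining a layer $M[\seq{x}] \in \bLT$ of rank one; by condition~\crewrite each of the two steps lifts to a step on this layer, confluence on rank one joins the two lifted layers, and reinstantiating the aliens produces a term-level valley between $t$ and $u$. \emph{Left-linearity} is the decisive ingredient here: since no left-hand side forces two holes to carry equal subterms, the lifted layer reductions can be reinstantiated with the original aliens uniformly using Lemma~\ref{lem-context-rewrite}, so that no two aliens ever have to be merged. For a top--inner peak the top step, being left-linear, treats the rewritten alien as an opaque unit that it may duplicate, erase, or relocate but never inspect; performing the inner step on every surviving copy after the top step, and the top step after the inner step, closes the diagram.

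The main obstacle I anticipate is the top--top case, namely reconciling the layer-level joining with the term-level alien decomposition. A top step can make its result fuse with material supplied by the aliens, so that the max-top of the reduct strictly exceeds the image of $M$ --- the fusion-from-above and conspiring-aliens phenomena of Figure~\ref{fig-problems} --- and then the aliens of the intermediate terms no longer sit neatly at the holes of $M$. Taming this requires Lemma~\ref{lem-fusion} together with \cpartial to account for how aliens are redistributed under fusion and to guarantee that the subterms attached at corresponding positions after joining are literally equal, so that confluence of the aliens is never called upon. Making this bookkeeping precise, rather than closing the individual diagrams, is where the real effort lies, and is presumably what motivates the base-context apparatus developed in Section~\ref{sec-main}.
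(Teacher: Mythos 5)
Your overall scaffolding (induction on rank via Lemma~\ref{lem-rank-decrease}, base case from the hypothesis, left-linearity to lift steps between a term and its alien-abstracted layer) matches the paper, but there is a genuine gap in the step where you ``invoke the decreasing diagrams technique'': the labelling you propose cannot be made decreasing, and you have in effect only sketched \emph{local} confluence, which proves nothing for a possibly non-terminating $\RR$. Concretely: in your top--top case, confluence on rank one yields a valley that, after reinstantiating the aliens, consists of arbitrarily many steps in (reducts of) the max-top, i.e.\ arbitrarily many steps carrying your dedicated top label; but a decreasing diagram permits at most one step with the maximal label on each joining side, the rest being forced strictly below. The same defect appears in the inner--inner case within a single alien: the valley from the induction hypothesis is a sequence of inner steps whose labels do not go down --- if you label by the rank of the alien, Lemma~\ref{lem-rank-decrease} gives only a non-strict decrease, and if you label by the alien term itself there is no well-founded order available, since $\RR$ need not terminate. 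Finally, the top/inner dichotomy itself is unstable along these joining sequences: because of the fusion phenomena you correctly identify, a step mirrored from the original max-top need not lie in the max-top of the intermediate term it is applied to, so your case analysis does not even persist beyond the first step of the join.

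This is precisely why the paper does not work with the max-top/all-aliens decomposition and single-step labels, and why the difficulty you defer in your last paragraph is the actual heart of the proof rather than bookkeeping. Following van Oostrom, the paper decomposes a native term into its \emph{base context} and only its \emph{tall} aliens (short aliens stay inside the base context), and makes the moves coarse: a short step $\RRR{}$ is an entire rewrite sequence mirrored by a sequence from the base context, labelled by a predecessor of its source, so that a short--short peak is closed by \emph{exactly one} short step on each side (Lemmas~\ref{lem-peak-analysis} and~\ref{lem-ss-ll}; left-linearity enters there by linearizing the abstracted term $t'$ and re-matching it onto the base context of $t$ via Lemma~\ref{lem-base-increase}); a tall step $\rrr[\iota]{}$ rewrites all tall aliens simultaneously and is labelled by the imbalance $\iota$, which strictly decreases after balancing the reducts (Lemmas~\ref{lem-balance}, \ref{lem-peak-tt}, and~\ref{lem-peak-ts}). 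Packaging unboundedly long joins into single labelled macro-steps is what makes Theorem~\ref{thm-dd} applicable; with your per-step labels no choice of well-founded order can achieve this, so the proposal as it stands does not close.
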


The second result exchanges left-linearity for a condition that
is weaker than non-duplication.

\begin{definition}
\label{def-bd}
Let $\RR$ be a TRS and $\Diamond$ a fresh unary function symbol. Then
$\RR$ is \emph{bounded duplicating} if the relative rewrite system
$\{ \Diamond(x) \to x \} / \RR$ is terminating.
\end{definition}

\begin{theorem}
\label{thm-main-bd}
Let $\RR$ be a weakly layered TRS that is confluent on terms of rank one.
If $\RR$ is bounded duplicating then $\RR$ is confluent.
\end{theorem}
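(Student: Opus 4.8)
The plan is to mirror the structure of the left-linear case (Theorem~\ref{thm-main-ll}) and reduce confluence of $\RR$ on all of $\TT(\FF,\VV)$ to confluence on terms of rank one by induction on the rank, applying the decreasing diagrams criterion (Theorem~\ref{thm-dd}) to $\to_\RR$. By Lemma~\ref{lem-rank-decrease} the set of terms of rank at most $k$ is closed under rewriting, so it suffices to establish, for each $k$, that $\to_\RR$ is locally decreasing when restricted to terms of rank $\leq k$; the base case $k=1$ is exactly the hypothesis (the terms of rank one being $\bLT$), and the induction hypothesis supplies confluence on all terms of rank $<k$. I would fix, for a term $s$ of rank $k$, its decomposition $s = M[\seq{s}]$ into the max-top $M$ and the aliens $s_i$ (each of rank $<k$, all uniquely determined by Lemma~\ref{lem-max-top}), and classify every rewrite step issuing from $s$ as a \emph{top step} (redex position in $\Pos_\FF(M)$) or an \emph{inner step} (below a hole of $M$, inside some alien).

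Next I would complete the three kinds of local peaks. An inner/inner peak inside a single alien is joinable by the induction hypothesis, while inner peaks in distinct aliens are parallel and commute directly, Lemma~\ref{lem-fusion} guaranteeing that the rank stays within the inductive window. A top/top peak I would resolve by abstracting $M$ to the rank-one term $\hat M$ obtained by filling its holes with distinct fresh variables (legitimate by \cvar), lifting both steps to $\hat M$ via \crewrite, joining the results using confluence on rank one, and finally reinstantiating the aliens for the fresh variables. A top/inner peak I would close by commuting the inner step past the top step: by Lemma~\ref{lem-context-rewrite} the aliens of the top-reduct form a sub-multiset of $\seq{s}$, so an inner reduction $s_i \to s_i'$ can be replayed on every surviving copy of $s_i$, after which the two sides meet.

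The crux, and the only point where bounded duplication is really used, is precisely this replaying across copies: a single top step may duplicate an alien, and the diagram-closing reductions produced at the rank-one level may duplicate the reinstantiated aliens still further, so a naive induction on the number of aliens need not terminate. The idea is to charge each duplication to the relative system $\{\Diamond(x)\to x\}/\RR$ from Definition~\ref{def-bd}: marking alien positions with $\Diamond$, one observes that an $\RR$-step copying an alien is exactly a configuration in which the collapsing rule $\Diamond(x)\to x$ can undo the extra copies, and reads off from the assumed termination of $\{\Diamond(x)\to x\}/\RR$ a well-founded measure, on the rank together with the marked multiset of aliens, along which the completed diagrams decrease. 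I expect this to be the main obstacle: one must check that the measure strictly decreases on the middle step and does not increase on the side steps of each completed diagram, which is exactly the local-decreasingness bookkeeping that Theorem~\ref{thm-dd} demands. In the left-linear case this bookkeeping is vacuous, since no alien is ever copied, which is why Theorem~\ref{thm-main-ll} needs no duplication hypothesis. Assembling the three cases under the rank-based labels refined by this measure yields local decreasingness of $\to_\RR$ on terms of rank $\leq k$, and Theorem~\ref{thm-dd} then closes the induction and hence the theorem.
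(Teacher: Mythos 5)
Your skeleton (induction on rank via Lemma~\ref{lem-rank-decrease}, decreasing diagrams, a well-founded measure extracted from $\{\Diamond(x)\to x\}/\RR$) matches the paper's, but two steps as written would fail. First, the top/top peak: abstracting the max-top by filling its holes with \emph{distinct} fresh variables destroys exactly the non-left-linear redexes that this theorem, in contrast to Theorem~\ref{thm-main-ll}, must handle. If $s=\m f(t,t)$ with both copies of $t$ aliens and the rule is $\m f(x,x)\to\m a$, the step mirrors on the max-top $\m f(\square,\square)$ but not on $\m f(x_1,x_2)$. The paper instead chooses a bijection $\pi$ between the \emph{set} of tall aliens and fresh variables, so equal aliens receive the same variable (Lemma~\ref{lem-peak-analysis}), and preservation of such equalities under rewriting is what the relation $\propto$ and Lemma~\ref{lem-balanced-rewrite} manage. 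For the same reason your top/inner commutation is incomplete in the hard direction: an inner step $s_i\to s_i'$ may destroy an equality the top redex needs, so the top step cannot be replayed until the \emph{other} copies of $s_i$ are rewritten too; this rebalancing needs confluence at lower rank (Lemma~\ref{lem-balance}) and forces the paper to label alien (tall) steps by their \emph{imbalance} so the resulting diagrams are decreasing (Lemmata~\ref{lem-peak-tt} and~\ref{lem-peak-ts}) --- a label component absent from your scheme.

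Second, you misattribute the role of bounded duplication. In the paper's order, $s_0\succ s_1$ iff $s_0^\Diamond\to^+_{\{\Diamond(x)\to x\}/\RR}s_1^\Diamond$, and the strict decrease comes not from an alien being copied but from \emph{fusion from above}: when a short-step sequence absorbs a tall alien into the base context, that alien's $\Diamond$-marker is erased by $\Diamond(x)\to x$ (Lemma~\ref{lem-diamond-step} and case~(2) of the proof of Lemma~\ref{lem-ss-bd}), so the predecessor label of the ensuing short steps strictly drops. A duplicating step copies the marker and yields no decrease; duplication is the \emph{threat} to well-foundedness (it replenishes markers), which is precisely why termination of $\{\Diamond(x)\to x\}/\RR$ is the hypothesis --- it is not a mechanism that ``undoes extra copies.'' Relatedly, your decomposition differs from the paper's in a way that matters: the paper's base context replaces only the \emph{tall} aliens (rank exactly $r$) by holes, keeping short aliens inside the shallow base context, so steps in short aliens are short steps (Lemmata~\ref{lem-short-step} and~\ref{lem-shallow-step}) and the fusion bookkeeping is confined to one rank level; with your max-top/all-aliens split the same accounting would have to be redone at every rank. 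Repairing your proposal along these lines --- same variable for equal aliens, balancing before replaying non-linear top steps, imbalance labels on alien steps, and a $\Diamond$-decrease tied to fusion rather than duplication --- essentially reconstructs Sections~\ref{sec-setup}--\ref{sec-dd-short} of the paper.
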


\begin{lemma}
\label{lem-nd-bd}
Non-duplicating TRSs are bounded duplicating.
\end{lemma}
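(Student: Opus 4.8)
The plan is to use the number of occurrences of the fresh symbol $\Diamond$ in a term as a decreasing measure. Write $\SS = \{ \Diamond(x) \to x \}$, so that the relative system in question is $\SS/\RR$ and termination means well-foundedness of $\to_{\SS/\RR}^+$. I would first reduce relative termination to a statement about the combined relation: an infinite $\to_{\SS/\RR}$-derivation exists if and only if there is an infinite $\to_{\SS \cup \RR}$-derivation containing infinitely many $\SS$-steps. It therefore suffices to exhibit a measure into $\Nat$ that never increases along $\to_{\SS \cup \RR}$ and strictly decreases at every $\SS$-step. I claim $|t|_\Diamond$ does the job.

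For an $\SS$-step $t \to_{p,\Diamond(x)\to x} t'$, the redex $\Diamond(\sigma(x))$ is replaced by $\sigma(x)$; the single $\Diamond$ at the root of the redex is erased while all $\Diamond$ occurrences inside $\sigma(x)$ are preserved, so $|t'|_\Diamond = |t|_\Diamond - 1$. For an $\RR$-step $t \to_{p,\ell\to r} t'$ I would use that the rules of $\RR$ are over $\FF$ and $\Diamond \notin \FF$, hence $\ell$ and $r$ contain no $\Diamond$. Consequently every occurrence of $\Diamond$ in $\ell\sigma$ or $r\sigma$ lies inside some $\sigma(x)$, and the change in the measure is
\[
|t'|_\Diamond - |t|_\Diamond = |r\sigma|_\Diamond - |\ell\sigma|_\Diamond = \sum_{x \in \Var(\ell)} \bigl(|r|_x - |\ell|_x\bigr)\,|\sigma(x)|_\Diamond.
\]
Since $\RR$ is non-duplicating we have $|r|_x \leq |\ell|_x$ for every variable $x$, so each summand is nonpositive and thus $|t'|_\Diamond \leq |t|_\Diamond$.

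Combining the two cases, $|t|_\Diamond$ is a natural number that is non-increasing along any $\to_{\SS \cup \RR}$-derivation and strictly decreasing at each $\SS$-step. Hence no such derivation can contain infinitely many $\SS$-steps, which by the reduction above yields well-foundedness of $\to_{\SS/\RR}^+$, i.e., $\RR$ is bounded duplicating.

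I do not expect a genuine obstacle here; the argument is a direct weight count. The only points requiring care are the passage from relative termination to the combined relation $\to_{\SS \cup \RR}$ (so that the single measure argument applies across mixed derivations) and the bookkeeping of per-variable occurrence counts in the displayed identity, where one must invoke $\Diamond \notin \FF$ to know that $\Diamond$ is contributed solely by the substitution.
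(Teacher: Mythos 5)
Your proof is correct and uses exactly the paper's argument: counting occurrences of $\Diamond$ as a measure that $\SS$-steps strictly decrease and $\RR$-steps (by non-duplication and $\Diamond \notin \FF$) do not increase. You merely spell out the per-variable bookkeeping and the passage from relative termination to mixed derivations, which the paper leaves implicit.
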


\begin{proof}
Let $\RR$ be a non-duplicating TRS.
In order to show termination of
$\{ \Diamond(x) \to x \} / \RR$ we measure
terms by counting the number of occurrences of the $\Diamond$ symbol.
Clearly each application of the $\Diamond(x) \to x$ rule decreases
that number and rules of $\RR$ do not increase it because they
do not duplicate $\Diamond$ symbols and cannot introduce any
new ones.
\end{proof}

Bounded duplication strictly extends non-duplication; the TRS
consisting of the rewrite rule
$\m{f}(x,x) \to \m{g}(x,x,x)$ is duplicating but still
bounded duplicating.
This can be shown by the polynomial interpretation~\cite{L79}
given by
\begin{xalignat*}{3}
\m{f}_\Nat(x,y) &= 2x + 2y&
\m{g}_\Nat(x,y,z) &= x+y+z &
\Diamond_\Nat(x) &= x+1
\end{xalignat*}
By combining Theorem~\ref{thm-main-bd} with Lemma~\ref{lem-nd-bd}, we
obtain the following corollary.

\begin{corollary}
\label{thm-main-nd}
Let $\RR$ be a weakly layered TRS that is confluent on terms of rank one.
If $\RR$ is non-duplicating then $\RR$ is confluent.
\qed
\end{corollary}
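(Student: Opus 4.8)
The plan is to derive the statement directly from the two results immediately preceding it, namely Theorem~\ref{thm-main-bd} and Lemma~\ref{lem-nd-bd}, both of which I am entitled to assume. No new confluence argument is needed: the corollary is obtained by first strengthening the non-duplication hypothesis into bounded duplication and then feeding the outcome into Theorem~\ref{thm-main-bd}.

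Concretely, I would fix a TRS $\RR$ that is weakly layered according to some layer system $\bL$, confluent on terms of rank one, and non-duplicating. The first step is to apply Lemma~\ref{lem-nd-bd} to $\RR$, which turns non-duplication into bounded duplication, i.e.\ it establishes termination of $\{ \Diamond(x) \to x \}/\RR$. After this step the three hypotheses of Theorem~\ref{thm-main-bd} are simultaneously satisfied: $\RR$ is weakly layered, confluent on terms of rank one, and bounded duplicating. The second step is simply to invoke Theorem~\ref{thm-main-bd}, whose conclusion is precisely the confluence of $\RR$ that we seek.

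The only point that genuinely has to be verified is that the hypotheses line up, and they do so without any reshaping: the two assumptions shared with Theorem~\ref{thm-main-bd} (weakly layered, confluent on rank one) are passed through unchanged, while the lone assumption specific to this corollary, non-duplication, is exactly the input that Lemma~\ref{lem-nd-bd} converts into the remaining premise. I therefore expect no real obstacle at the level of the corollary itself; all the substantive difficulty is already absorbed into Theorem~\ref{thm-main-bd}, whose proof should rely on the rank measure and on the fact (Lemma~\ref{lem-rank-decrease}) that weakly layered rewriting does not increase rank, together with the decreasing-diagrams criterion of Theorem~\ref{thm-dd}. The corollary is thus a one-line composition, consistent with its being stated with its proof marker inline.
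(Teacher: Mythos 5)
Your proposal is correct and matches the paper's own argument exactly: the paper obtains this corollary by combining Lemma~\ref{lem-nd-bd} (non-duplicating implies bounded duplicating) with Theorem~\ref{thm-main-bd}, which is precisely your two-step composition. The hypotheses line up as you verified, so nothing further is needed.
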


The third result does not impose any conditions on $\RR$ but further
limits the layer systems that can be employed to derive confluence.

\begin{theorem}
\label{thm-main-con}
Let $\RR$ be a layered TRS that is confluent on terms of rank one.
Then $\RR$ is confluent.
\end{theorem}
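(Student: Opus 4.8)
The plan is to prove confluence by induction on the rank of terms (Definition~\ref{def-rank}). Because a layered TRS is in particular weakly layered, Lemma~\ref{lem-rank-decrease} guarantees that rewriting never increases the rank, so for each $n$ the set $T_n$ of terms of rank at most $n$ is closed under $\to_\RR$ and it makes sense to speak of confluence of $\RR$ on $T_n$. The base case $n = 1$ is exactly the hypothesis that $\RR$ is confluent on terms of rank one, i.e.\ on the terms of $\bLT$. The whole weight of the argument therefore lies in the inductive step: assuming $\RR$ is confluent on $T_n$, establish confluence on $T_{n+1}$.

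For the inductive step I would fix $s = M[\seq[k]{s}]$ of rank $n+1$, with $M$ its max-top (Lemma~\ref{lem-max-top}) and $\seq[k]{s}$ its aliens, each of rank at most $n$. Given a peak $t \lud{}{*}\from s \to^* u$, the idea is to separate the reduction into a \emph{top part}, taking place in the layer $M$, and \emph{inner parts}, taking place inside the aliens. To make this precise I would follow the Convention and track, along any reduction out of $s$, a \emph{base context} $B$: a context contained via $\Cleq$ in every term of the reduction, a stable refinement of the max-top. Abstracting the aliens sitting below $B$ by fresh variables turns the top part into a reduction between terms of $\bLT$ (rank one), staying inside $\bL$ by \crewrite and \cvar; these can then be joined using the assumed confluence on rank-one terms. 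The inner reductions take place in the aliens and, together with the need to re-identify aliens, are handled by the induction hypothesis.

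The subtle interactions are exactly those illustrated in Figure~\ref{fig-problems}, and the key steps are to show that they cannot spoil this separation. A top-layer step may erase, copy, or merge aliens, but by Lemma~\ref{lem-context-rewrite} the aliens of the result form a sub-multiset of $\seq[k]{s}$, so duplicated aliens are genuine copies that can be reduced identically and the induction hypothesis still applies to each. When a non-left-linear rule is applied in the top part it forces certain aliens to coincide; after the aliens have been reduced independently, confluence on the lower ranks (the induction hypothesis) lets me bring these forced-equal aliens to a common reduct, re-enabling the rule on the joining side — this is the point at which full confluence of the aliens, rather than mere local confluence, is needed. Condition \cconsistent keeps the image of a top-layer step a max-top, ruling out the fusion from above of Figure~\ref{fig-problems}, so the base context never shrinks and the rank decomposition stays intact; condition \cstepfusion ensures that when an alien reduces to a term whose own top belongs with the layer above, the merged context is again a layer (ruling out the conspiring aliens of Figure~\ref{fig-problems}), so that aliens can always be cleanly re-extracted after fusion.

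The main obstacle I anticipate is the faithful bookkeeping in this recombination: one must simultaneously keep the abstracted top reduction inside $\bL$, feed each variable a common reduct of all aliens that were identified with it, and verify that stitching the joined top together with the joined aliens yields a valid term of rank at most $n+1$ without reintroducing any of the four pathologies of Figure~\ref{fig-problems}. I expect the cleanest organization is to phrase the top-level joinability as a local-decreasingness statement and appeal to Theorem~\ref{thm-dd}, labelling steps by the depth of the layer at which they act so that inner steps are dominated by outer ones; the delicate part is then checking the decreasing diagram for a peak consisting of one outer and one inner step, where duplication by the outer step and fusion by the inner step occur together and must be reconciled using \cstepfusion together with the induction hypothesis.
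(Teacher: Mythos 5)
Your high-level architecture --- induction on rank via Lemma~\ref{lem-rank-decrease}, a split into top and inner activity, decreasing diagrams via Theorem~\ref{thm-dd}, with \cconsistent and \cstepfusion policing fusion --- is indeed the paper's, but two of your concrete choices contain genuine gaps. First, the decomposition: the paper's base context replaces only the \emph{tall} aliens (those of rank exactly $r$) by holes; the short aliens remain inside the base context, which is therefore a \emph{shallow} context of rank up to $r$, not a rank-one layer. Hence steps inside short aliens are \emph{short} (outer) steps, and the abstraction in the peak analysis (Lemma~\ref{lem-peak-analysis}) produces a \emph{foreign} term of rank at most $r$, which is joined by the induction hypothesis --- not by the rank-one assumption, as you propose. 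If instead you abstract \emph{all} aliens so that the top becomes rank one, a step inside a short alien cannot be simulated on that abstraction at all, yet as an ``inner'' step it can enlarge the max-top by fusion (\cpartial), destroying the correspondence; moreover your ``stable'' base context $B \Cleq$ every term of the reduction need not exist: \cconsistent explicitly allows $L = \square$, so a collapsing step at the top leaves $\square$ as the only candidate, and your claim that the base context ``never shrinks'' is false. The paper replaces your stable context by the mirroring requirement of Definition~\ref{short steps} together with Lemmas~\ref{lem-shallow-closed}, \ref{lem-base-increase} and~\ref{lem-shallow-step}, which let the base context evolve along the reduction.

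Second, the labelling is reversed and the quantitative measure is missing. You order inner steps below outer ones and label by depth; the paper labels tall (inner) steps by the \emph{imbalance} of the target sequence and orders them \emph{above} all short (outer) steps. This orientation is forced by the tall--tall peak: joining it requires balancing the tall aliens via the induction hypothesis (Lemma~\ref{lem-balance}), after which fusion can occur, so the joining sequences contain short steps (Lemma~\ref{lem-tall-short}); with your order those short steps would sit above both labels of the peak and the diagram would fail to be decreasing, whereas with the paper's order they lie below everything, and the tall labels do not increase because balancing never increases imbalance (Lemmas~\ref{lem-peak-tt} and~\ref{lem-peak-ts}). Finally, the hypothesis that $\RR$ is \emph{layered} enters not at the outer--inner peak you single out as delicate, but at the short--short peak: there Lemma~\ref{lem-consistent} (using \cconsistent together with \cstepfusion) shows that the mirrored base-context sequence tracks the base context of the target, so both joining sequences are again single short steps labelled by predecessors of the source (Lemma~\ref{lem-ss-con}). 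You correctly sense that \cstepfusion serves to re-extract aliens after fusion, but without the imbalance measure and the tall-above-short order, your appeal to Theorem~\ref{thm-dd} cannot be completed as sketched.
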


Hence for duplicating TRSs there are three possibilities to prove
confluence, either by weakly layering a left-linear rewrite system
(Theorem~\ref{thm-main-ll}),
by establishing bounded duplication for a weakly layered rewrite system
(Theorem~\ref{thm-main-bd}),
or by layering the rewrite system (Theorem~\ref{thm-main-con}).
Table~\ref{tab:incomparable} shows that the three results are pairwise
incomparable where 
$\bL_7 = \{ v, \m{k}(v,w), \m{b} \mid v, w \in \VC \}$
and $\bL_6$ is as in Example~\ref{ex-rank}.
\begin{table}[tb]
\begin{center}
\[
\renewcommand{\arraystretch}{1.2}
\begin{array}{@{}rclc@{~}|@{~}ccc@{}}
\multicolumn{3}{c}{\text{rewrite rule}} & \text{layer system} &
\text{Theorem~\ref{thm-main-ll}}
& \text{Theorem~\ref{thm-main-bd}} & \text{Theorem~\ref{thm-main-con}} \\
\hline
\m{f}(\m{g}(\m{h}(x))) &\to& \m{g}(x) & \bL_6
& \SUCC & \SUCC & \FAIL \\
\m{k}(\m{b},x) &\to& \m{k}(x,x) & \bL_7
& \SUCC & \FAIL & \SUCC \\
\m{k}(x,x) &\to& \m{k}(x,x) & \bL_7
& \FAIL & \SUCC & \SUCC
\end{array}
\renewcommand{\arraystretch}{1}
\]
\caption{Incomparability of the main results.}
\label{tab:incomparable}
\end{center}
\end{table}

In the following subsections we develop proofs for
Theorems~\ref{thm-main-ll}, \ref{thm-main-bd}, and~\ref{thm-main-con}.
In Section~\ref{sec-setup} we describe the proof setup and
introduce auxiliary rewrite relations. In Sections~\ref{sec-dd-tall}
and \ref{sec-dd-short} we show that the auxiliary relations are
locally decreasing. Finally, we wrap up the proofs in
Section~\ref{sec-dd-proof}.

\subsection{Proof Setup}
\label{sec-setup}

Assume we are given a weakly layered TRS $\RR$ such that $\RR$ is
confluent on terms of rank one. We will show confluence of $\RR$ on all
terms by induction on the rank of terms.
In the sequel we prepare for the induction step, hence:
\begin{center}
\fbox{\emph{We fix $r$ and assume terms with rank at most $r$ to be
confluent.}}
\end{center}

Next we generalize the crucial concepts of \cite{vO08b}
from the modularity setting to layer systems.
We have renamed \emph{non-native} to \emph{foreign} because
non-native is not the complement of native.

\begin{definition}
Terms with rank at most $r+1$ are called \emph{native}.
An alien of a native term is \emph{tall}
if its rank equals $r$ and \emph{short} otherwise.
\emph{Foreign} terms have rank less than or equal to $r$.
\end{definition}

Note that by definition, foreign terms are also native.
However, we will only call terms foreign if they are descendants of 
aliens of a native term.

\begin{definition}
\label{def-steps}
Let $t$ be a native term. Its \emph{base context} $B$ is obtained by
replacing
all tall aliens in $t$ with holes. The tall aliens form the
\emph{base sequence} $\myvec t$, which satisfies $t = B[\myvec t]$.
\end{definition}

\begin{definition}
Sequences of foreign terms are called \emph{foreign sequences}.
The \emph{imbalance} of a foreign sequence $\myvec t$
is the number of distinct terms in $\myvec t$.
The imbalance of a native term $t$ is the imbalance of its base sequence.
If $\myvec s$ and $\myvec t$ are sequences of length $n$, then
we write $\myvec s \propto \myvec t$ if $s_i = s_j$ implies
$t_i = t_j$ for all $1 \leq i, j \leq n$.
\end{definition}

Note that the relation $\propto$ is transitive. It is useful for
analyzing the imbalance of foreign sequences. If
$\myvec s \propto \myvec t$ then the imbalance of
$\myvec t$ is no larger than that of $\myvec s$.

\begin{definition}
\label{short steps}
Let $s$ and $t$ be native terms.
A \emph{short step} $s \RRR[s_0]{} t$ is a sequence of $\RR$-steps
$s \to_\RR^* t$
that is mirrored by a rewrite sequence $B \to_\RR^* C$
from the base context $B$ of $s$.
Short steps are labeled by terms $s_0$ that are predecessors of
the source: $s_0 \to_\RR^* s$. We omit the label when it is irrelevant.
\end{definition}

There are two ways in which short steps arise: either by rewriting short
aliens
(hence the name), or by rewriting the max-top of a term.
In the sequel we will sometimes use the fact that in
Definition~\ref{short steps}, $C \Cleq t$ by
Lemma~\ref{lem-context-rewrite},
and when writing $s = B[\myvec s]$ and $t = C[\myvec t]$, each element of
$\myvec t$ is an element of $\myvec s$.

\begin{definition}
\label{tall steps}
Let $B$ and $\myvec s$ be the base context and base sequence of a
native term~$s$. If $\myvec s \to_\RR^* \myvec t$ then
$s = B[\myvec s] \rrr[\iota]{} B[\myvec t] = t$ is a \emph{tall step}.
Here the label $\iota$ is the imbalance of $\myvec t$.
\end{definition}

Note that $\myvec t$ 
in Definition~\ref{tall steps}
is a foreign sequence because $\RR$ is weakly layered.
Further note that the imbalance of $t$ 
may be smaller than $\iota$ (since $B$ need not be the base context
of $t$).
The following example illustrates the above concepts.

\begin{example}
\label{ex-vO08b}
Consider the TRSs
$\RR_1 = \{ \m{f}(x,x) \to x \}$ over
$\FF_1 = \{ \m{f}, \m{a} \}$ and 
$\RR_2 = \{ \m{G}(x) \to \m{I}, \m{I} \to \m{K}, \m{G}(x) \to \m{H}(x), 
\m{H}(x) \to \m{J}, \m{J} \to \m{K} \}$ over
$\FF_2 = \{ \m{I}, \m{J}, \m{K}, \m{G}, \m{H} \}$
and let $\RR = \RR_1 \cup \RR_2$.
Then $\bL = \CC(\FF_1,\VV) \cup \CC(\FF_2,\VV)$
layers $\RR_1$ and $\RR_2$ (cf.\ the proof of
Theorem~\ref{thm-modularity}).
Assume that $r = 2$.
Table~\ref{tab-native} demonstrates some properties and notions.
We have 
\newcommand\ulalien[1]{\underline{\vphantom{+}\smash{#1}}}
${\m{f}(\ulalien{\m{G}(\m{a})},\ulalien{\m{G}(\m{a})}) \RRR[]{} \m{G}(\m{a})}$ but
$\m{f}(\ulalien{\m{G}(\m{a})},\ulalien{\m{G}(\m{a})}) \RRR[]{} \m{I}$ is not possible 
since the step $\m{G}(\m{a}) \to_\RR \m{I}$ is not in the base
context of $\m{f}(\m{G}(\m{a}),\m{G}(\m{a}))$.
(Here we have underlined the tall aliens.) We also have 
$\m{f}(\ulalien{\m{G}(\m{a})},\ulalien{\m{G}(\m{a})}) \rrr[2]{} 
 \m{f}(\m{J},\m{G}(\m{a})) = u$,
despite the imbalance of $u$ being 1 (note that $\m{f}(\square,\square)$
is not the base context of~$u$). 
Furthermore,
$(\m{G}(\m{a}),\m{G}(\m{a})) \not\propto (\m{J},\m{G}(\m{a}))$ but
as the latter can be further rewritten
$(\m{J},\m{G}(\m{a})) \to^*_\RR (\m{J},\m{J})$
we obtain $(\m{G}(\m{a}),\m{G}(\m{a})) \propto (\m{J},\m{J})$.
\end{example}

\begin{table}[tb]
\begin{center}
\renewcommand{\arraystretch}{1.2}
\begin{tabular}{@{}r@{~$=$~}l|cclllc@{}}
\multicolumn{2}{c|}{term} & foreign
& native &
max-top & base context & base sequence & imbalance \\
\hline
$s$ & $\m{f}(\m{G}(\m{a}),\m{G}(\m{a}))$ &
$\FAIL$ &
$\SUCC$ &
$\m{f}(\square,\square)$ &
$\m{f}(\square,\square)$ &
$(\m{G}(\m{a}),\m{G}(\m{a}))$ &
1
\\
$t$ & $\m{f}(\m{H}(\m{a}),\m{G}(\m{a}))$ &
$\FAIL$ &
$\SUCC$ &
$\m{f}(\square,\square)$ &
$\m{f}(\square,\square)$ &
$(\m{H}(\m{a}),\m{G}(\m{a}))$ &
2
\\
$u$ & $\m{f}(\m{J},\m{G}(\m{a}))$ &
$\FAIL$ &
$\SUCC$ &
$\m{f}(\square,\square)$ &
$\m{f}(\m{J},\square)$ &
$(\m{G}(\m{a}))$ &
1
\\
$v$ & $\m{f}(\m{K},\m{K})$ &
$\SUCC$ &
$\SUCC$ &
$\m{f}(\square,\square)$ &
$\m{f}(\m{K},\m{K})$ &
$()$
& 0
\end{tabular}
\renewcommand{\arraystretch}{1}
\caption{Properties for $r = 2$.}
\label{tab-native}
\end{center}
\end{table}

\begin{remark}
The constraint on short steps is subtle. It implies that the rewrite
steps do not overlap with any descendants of the tall aliens of $s$, but
furthermore it also has the effect of delaying fusion of those tall
aliens with the base context until the end of the rewrite sequence,
in the sense of \cite{FZM11}.
\end{remark}

We prove confluence of $\RR$ on native terms
by showing that any local peak consisting of short steps and/or tall steps
may be joined decreasingly.
Steps are compared as follows.
Tall steps are ordered by their imbalance,
tall steps are ordered above short steps, and
short steps are compared by a well-founded order introduced later
(in the proof of Lemma~\ref{lem-ss-bd}).

In the remainder of this section we use
$s$, $t$, and $u$ to denote native terms.

\subsection{Local Decreasingness of Peaks involving Tall Steps}
\label{sec-dd-tall}

Based on Lemma~\ref{lem-context-rewrite} we obtain the following result:

\begin{lemma}
\label{lem-balanced-rewrite}
Let $\myvec s$ and $\myvec t$ be sequences of contexts with
$\myvec s \propto \myvec t$ and
$C \to_{p,\ell \to r} C'$. If $\ell \matches C[\myvec s]|_p$
then $C[\myvec t] \to_{p,\ell \to r} C'[\myvec t']$ with each element of
$\myvec t'$ belonging to $\myvec t$.
\end{lemma}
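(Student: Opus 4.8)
This statement refines Lemma~\ref{lem-context-rewrite}: rather than filling the holes of both $C$ and $C'$ with the same sequence, we use $\myvec s$ only to witness the match at $p$ but read off the contracted context after filling with $\myvec t$. The plan is to route the argument through the \emph{linearised} context $c$ supplied by Lemma~\ref{lem-context-instance}, in which every hole of $C$ is represented by a variable. On $c$ the rewrite step is unambiguous, and we may instantiate its hole variables with entries of $\myvec t$ instead of $\myvec s$. The hypothesis $\myvec s \propto \myvec t$ will be used at exactly one point: to make the substitution that sends hole variables to $\myvec t$-entries well defined.

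Concretely, from $C \to_{p,\ell \to r} C'$ we have $\ell \matches C|_p$, so Lemma~\ref{lem-context-instance}\eqref{lem-context-instance-C} yields $c$ with $\ell \matches c|_p$ and $C = c\sigma_\square$; following that proof I assume $\sigma$ fixes $\Var(C)$ and sends the remaining (hole) variables of $c$ to $\square$, so that $\Var(c)=\Var(C)\cup H$ with $H$ the hole variables. Since $C \Cleq C[\myvec s]$ and $\ell \matches C[\myvec s]|_p$, Lemma~\ref{lem-context-instance}\eqref{lem-context-instance-D} supplies $\tau_s$ with $c\tau_s = C[\myvec s]$; this $\tau_s$ fixes $\Var(C)$ and maps each $x \in H$ to the $\myvec s$-entry occupying the hole that $x$ represents. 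A single $x$ may represent several holes, because the unification behind Lemma~\ref{lem-context-instance} glues holes together when $\ell$ has repeated variables; but then $c\tau_s = C[\myvec s]$ forces the corresponding entries $s_i, s_j$ to coincide.

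I would then define $\tau_t$ to fix $\Var(C)$ and to map each $x \in H$ to $t_i$, where $i$ is any hole represented by $x$. Its well-definedness is exactly where $\myvec s \propto \myvec t$ enters: the equalities $s_i = s_j$ forced above yield $t_i = t_j$, so the choice of $i$ does not matter. By construction $c\tau_t = C[\myvec t]$, and since $\ell \matches c|_p$ we get $\ell \matches C[\myvec t]|_p$, making the step applicable to $C[\myvec t]$. As in the proof of Lemma~\ref{lem-context-rewrite}, $C \to_{p,\ell \to r} C'$ mirrors a step $c \to_{p,\ell \to r} c'$ with $C' = c'\sigma_\square$; closing this step under the substitution $\tau_t$ gives $C[\myvec t] = c\tau_t \to_{p,\ell \to r} c'\tau_t$. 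Because rewriting introduces no fresh variables, $\Var(c') \subseteq \Var(c)$, so every hole variable of $c'$ is sent by $\tau_t$ to an entry of $\myvec t$; hence $c'\tau_t = C'[\myvec t']$ with each element of $\myvec t'$ belonging to $\myvec t$, as claimed.

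The only genuinely delicate part is the bookkeeping in the middle paragraph: identifying which holes of $C$ are merged into one variable of $c$ by the unification underlying Lemma~\ref{lem-context-instance}, and confirming that precisely those merges produce the equalities $s_i = s_j$ that $\propto$ transports to $t_i = t_j$. Once the well-definedness of $\tau_t$ is established, the rest is routine substitution bookkeeping that mirrors the proof of Lemma~\ref{lem-context-rewrite}.
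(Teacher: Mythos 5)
Your proof is correct and takes essentially the same route as the paper's: the paper likewise extends the proof of Lemma~\ref{lem-context-rewrite} by defining a substitution $\tau$ that fixes the variables of $C$ and sends each hole variable $x$ of $c$ with $\sigma(x) = s_i$ to $t_i$, invoking $\myvec s \propto \myvec t$ exactly where you do (to make $\tau$ well defined, so that $C[\myvec t] = c\tau \to_{p,\ell \to r} c'\tau = C'[\myvec t']$). Your explicit bookkeeping about holes glued together by the unification behind Lemma~\ref{lem-context-instance} is the same point the paper leaves implicit.
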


\begin{proof}
We extend the proof of Lemma~\ref{lem-context-rewrite} as follows.
Let $\tau$ be the substitution
\[
\tau(x) = \begin{cases}
t_i & \text{if $x \in \dom(\sigma_\square)$ and $\sigma(x) = s_i$} \\
x & \text{otherwise}
\end{cases}
\]
Note that $C[\myvec t] = c\tau$ because $\myvec s \propto \myvec t$.
We have $c\tau \to_{p,\ell \to r} c'\tau$. Comparing
$c'\tau$ and $C' = c'\sigma_\square$ establishes the claim that
$c'\tau = C'[\myvec t']$ with each element of $\myvec t'$
equaling some element of $\myvec t$.
\end{proof}

\begin{lemma}
\label{lem-balance}
Let $\myvec s$, $\myvec t$, $\myvec u$ be foreign sequences.
If $\myvec s \to_\RR^* \myvec t$ and $\myvec s \to_\RR^* \myvec u$ then
there is a foreign sequence $\myvec v$ such that
$\myvec t \to_\RR^* \myvec v$,
$\myvec u \to_\RR^* \myvec v$ with
$\myvec t \propto \myvec v$ and
$\myvec u \propto \myvec v$.
\end{lemma}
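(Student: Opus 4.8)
The plan is to read the statement as a \emph{confluence-with-balance} property and to reduce it to ordinary confluence on foreign terms, which is available by the induction hypothesis boxed in Section~\ref{sec-setup} (terms of rank at most $r$, i.e.\ foreign terms, are confluent), together with the fact that foreign terms are closed under rewriting by Lemma~\ref{lem-rank-decrease}. The two balance requirements $\myvec t \propto \myvec v$ and $\myvec u \propto \myvec v$ insist that positions carrying equal terms in $\myvec t$ (respectively in $\myvec u$) carry equal terms in $\myvec v$; this is exactly why simply closing the peak position-by-position is not good enough, and arranging the construction so that equalities are respected is the crux of the argument.

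First I would work on the index set $\{1,\dots,n\}$ and introduce the equivalence relations $i \sim_t j$ iff $t_i = t_j$ and $i \sim_u j$ iff $u_i = u_j$, and let $\equiv$ be the least equivalence relation containing both (the transitive closure of their union). Any sequence $\myvec v$ that is constant on each $\equiv$-class automatically satisfies both balance conditions, since $t_i = t_j$ forces $i \sim_t j$, hence $i \equiv j$, hence $v_i = v_j$, and symmetrically for $\myvec u$. So the goal reduces to producing, for each $\equiv$-class $K$, a single foreign term $v_K$ that is reachable from every $t_i$ and every $u_i$ with $i \in K$; setting $v_i = v_K$ then gives $\myvec t \to_\RR^* \myvec v$ and $\myvec u \to_\RR^* \myvec v$, with $\myvec v$ foreign by Lemma~\ref{lem-rank-decrease}.

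Next, fix a class $K$ and consider the finite set $T_K = \{ t_i, u_i \mid i \in K \}$. For each $i$ the terms $t_i$ and $u_i$ are reducts of the foreign term $s_i$, hence joinable by confluence of foreign terms; moreover $t_i = t_j$ when $i \sim_t j$ and $u_i = u_j$ when $i \sim_u j$. Walking along a chain of $\sim_t$- and $\sim_u$-steps that witnesses $i \equiv i'$ shows that the elements of $T_K$ are connected under joinability (equal terms being trivially joinable). Since every term in sight is foreign and therefore confluent, joinability is \emph{transitive}, so all elements of $T_K$ are pairwise joinable; as $T_K$ is finite, one more appeal to confluence yields a common reduct $v_K$ of all of them, which is foreign.

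The hard part is precisely this balance bookkeeping: I must pin down one common reduct uniformly across a whole $\equiv$-class instead of closing each position in isolation. The device that makes this go through is the transitivity of joinability under confluence, which collapses the zigzag of reductions linking equal $t$-entries and equal $u$-entries inside a class into a single target $v_K$. The only point that needs care is that all intermediate terms remain foreign, so that the induction hypothesis may be invoked repeatedly; this is guaranteed by Lemma~\ref{lem-rank-decrease}, which ensures rank does not increase under rewriting.
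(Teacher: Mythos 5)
Your proof is correct, but it takes a genuinely different route from the paper's. The paper proves Lemma~\ref{lem-balance} by induction on the number of positions $i$ with $t_i \neq u_i$: it picks one such position, joins $t_i$ and $u_i$ to a common reduct $v$ by confluence on foreign terms, replaces \emph{every} occurrence of the terms $t_i$ and $u_i$ in both $\myvec t$ and $\myvec u$ by $v$ (which preserves $\propto$ and strictly decreases the number of disequalities), and concludes by the induction hypothesis and transitivity of $\propto$. You instead build $\myvec v$ in one shot: you take the join $\equiv$ of the index equivalences induced by $\myvec t$ and $\myvec u$, note that the two balance conditions amount exactly to $\myvec v$ being constant on $\equiv$-classes (indeed the conditions are \emph{equivalent} to that, not merely implied by it), and produce one common reduct per class via transitivity of joinability under confluence plus a common reduct of a finite pairwise-joinable set. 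The two devices correspond closely: the paper's replacement of all occurrences of both $t_i$ and $u_i$ is what computes your $\equiv$-closure, one step at a time, so your construction can be seen as an unrolling of the paper's induction. What the paper's version buys is economy: it needs only transitivity of $\propto$ and never has to state the auxiliary facts your argument relies on --- that joinability is transitive on a rewrite-closed set of confluent terms, and that finitely many pairwise joinable foreign terms have a common reduct; both are true but each needs its own small induction via confluence at the intermediate terms, which you gesture at rather than carry out. What your version buys is that the combinatorial content is explicit: the characterization of balance as constancy on $\equiv$-classes is a clean invariant, and the per-class common reduct makes the shape of $\myvec v$ transparent. Both arguments rest on the same resources --- the boxed hypothesis that terms of rank at most $r$ are confluent, and closure of foreign terms under rewriting via Lemma~\ref{lem-rank-decrease} --- and both are constructive, as the paper requires.
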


\begin{proof}
Let $m$ be the length of $\myvec s$.
We use induction on the number of disequalities
$t_i \neq u_i$ for $1 \leq i \leq m$.
If this number is zero then $\myvec t = \myvec u$ and we can
take $\myvec v = \myvec t$.
Otherwise, 
$t_i \neq u_i$ for some $1 \leq i \leq m$.
Both $t_i$ and $u_i$ are reducts of $s_i$ and thus have a common reduct $v$
since $\RR$ is confluent on foreign terms. By replacing every
occurrence of $t_i$ and $u_i$ in 
$\myvec t$, $\myvec u$ by $v$, we obtain new
sequences $\myvec t'$, $\myvec u'$ that satisfy
$\myvec s \to_\RR^* \myvec t \to_\RR^* \myvec t'$,
$\myvec s \to_\RR^* \myvec u \to_\RR^* \myvec u'$,
$\myvec t \propto \myvec t'$
and
$\myvec u \propto \myvec u'$. Since the 
number of disequalities $t_i' \neq u_i'$ is decreased, we
conclude by the induction hypothesis and the transitivity of $\propto$.
\end{proof}

A step in the base context is short.

\begin{lemma}
\label{lem-short-step}
Let $p$ be a non-hole position of the base context of $s$.
If $s \to_p t$ then $s \RRR{} t$.
\end{lemma}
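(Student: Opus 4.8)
The plan is to show directly that the single step $s \to_{p,\ell\to r} t$ qualifies as a short step, i.e., that it is mirrored by a step from the base context. Write $M$ for the max-top of $s$ and $B$ for its base context, so that $M \Cleq B \Cleq s$ and $B$ arises from $s$ by replacing every tall alien with a hole. Since mirroring only asks for the \emph{same rule at the same position}, and since any step $B \to_{p,\ell\to r} C$ lifts by Lemma~\ref{lem-context-rewrite} to $s = B[\myvec s] \to_{p,\ell\to r} C[\myvec t]$, whose target coincides with $t$ by determinacy of rewriting, the entire lemma reduces to producing such a step, i.e., to establishing
\[
\ell \matches B|_p .
\]

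The crux, and the one real obstacle, is to certify that the redex pattern $\ell$ does not reach down into the tall aliens that have been turned into holes in $B$. I would argue this by locating the non-hole position $p$. If $p \in \Pos_\FF(M)$, then $p$ lies in the max-top and condition \crewrite applies, giving $M \to_{p,\ell\to r} L$ and in particular $\ell \matches M|_p$. Because $\ell$ is hole-free, every hole of $M|_p$ sits strictly below a variable position of $\ell$; passing from $M|_p$ to $B|_p$ only fills the short-alien holes (leaving the tall-alien holes in place), an operation that is determined by the subterms alone and therefore respects the identifications forced by a possibly non-linear $\ell$. Hence $B|_p$ is again an instance of $\ell$, so $\ell \matches B|_p$. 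Otherwise $p$ lies at or below a hole position $q$ of $M$ that carries a \emph{short} alien; as the hole positions of $M$ are pairwise parallel, no tall alien occurs below $q$, so $B|_p = s|_p$ and $\ell \matches B|_p$ holds trivially.

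In all cases $\ell \matches B|_p$, which yields the desired step $B \to_{p,\ell\to r} C$. This step uses the same rule at the same position as $s \to_{p,\ell\to r} t$, hence mirrors it, and Lemma~\ref{lem-context-rewrite} together with determinacy confirms that filling the holes of $C$ reproduces $t$. By Definition~\ref{short steps} we conclude $s \RRR{} t$. The whole difficulty is concentrated in the first case of the middle paragraph: it is exactly weak layering (condition \crewrite) that prevents the redex from straddling the boundary between the max-top and a tall alien, and parallelism of the alien positions dispatches the short-alien case for free.
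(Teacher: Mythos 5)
Your proof is correct and takes essentially the same route as the paper's: the same case split on whether $p \in \Pos_\FF(M)$, with \crewrite supplying the step $M \to_{p,\ell\to r} L$ in the max-top case and $B|_p = s|_p$ disposing of the short-alien case. The only difference is presentational: where the paper writes $B = M[\myvec s']$ with $\myvec s \propto \myvec s'$ and cites Lemma~\ref{lem-balanced-rewrite}, you reprove that lemma's content inline (your observation that filling holes is determined by the subterms, hence compatible with a non-linear $\ell$, is exactly the $\propto$-argument), and your appeal to determinacy of the lifted step is unnecessary since mirroring only requires the same rule at the same position.
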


\begin{proof}
Let $B$ be the base context of $s$ and let $s \to_p t$.
We show $B \to_p C$ for some context $C$.
Because left-hand sides of rules are not variables, $p \in \Pos_\FF(B)$.
Let $M$ be the max-top of $s$, which is also the max-top of $B$.
We distinguish two cases.
If $p \in \Pos_\FF(M)$ then consider the decomposition $s = M[\myvec s]$.
According to \crewrite there is a layer $L$ with $M \to_{p} L$.
We have $B = M[\myvec s']$ where $s'_i = s_i$ if $s_i$ is
a short alien and $s_i' = \square$ if $s_i$ is tall.
Clearly $\myvec s \propto \myvec s'$ and hence we conclude by
Lemma~\ref{lem-balanced-rewrite}.
If $p \notin \Pos_\FF(M)$ then $s|_p$ is a subterm of a short alien
of $s$ and thus $B|_p = s|_p$. Hence
$B \to_{p} C$ for the context $C := B[t|_p]_p$.
\end{proof}

When doing a short step $s = B[\myvec s] \RRR{} C[\myvec s'] = t$,
in general the context $C$ is not the base context of $t$ (because of
fusion from above or conspiring aliens).
Similarly, for a tall step $s = B[\myvec s] \rrr{} B[\myvec t] = t$
in general the context $B$ is not the base context of $t$ (because of
fusion caused by steps in the aliens of $t$),
but both contexts ($B$ and $C$) satisfy the 
more general property defined below.

\begin{definition}
We call a context \emph{shallow} if its rank is at most $r$ and
all its aliens are terms from $\TT(\FF,\VV)$.
\end{definition}

Note that the base contexts of native terms are shallow.
The same holds for the max-tops of native terms.
Furthermore, shallow contexts are closed under rewriting,
as shown by the next lemma.

\begin{lemma}
\label{lem-shallow-closed}
If $C$ is a shallow context and $C \to_\RR D$ then $D$ is
a shallow context.
\end{lemma}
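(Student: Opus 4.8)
The plan is to show that if $C$ is shallow and $C \to_\RR D$, then $D$ inherits both defining properties of shallowness: its rank is at most $r$, and all its aliens are genuine terms (contain no holes). Since $C$ is a context, I first need to understand how a rewrite step interacts with the holes of $C$. I would apply Lemma~\ref{lem-context-rewrite} in the following way: write $C = B'[\myvec s]$ where $\myvec s$ are the aliens of $C$, which by shallowness are terms in $\TT(\FF,\VV)$, and $B'$ is the part of $C$ consisting of its max-top together with any shallow structure above the aliens. Actually, the cleaner route is to fill the holes of $C$ with a single fresh variable $x$, obtaining the term $C[x,\dots,x]$, and relate rewriting on $C$ to rewriting on this term.

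**First I would** handle the rank bound. Replace every hole in $C$ by a fresh variable $x$ to get the term $\hat C = C[x,\dots,x]$. Since the holes of a shallow context sit at the alien positions (rank-$0$ leaves from the layering perspective once aliens are themselves terms), and since $C$ has rank at most $r$, the term $\hat C$ also has rank at most $r$: substituting a variable for a hole does not raise the rank, by \cvar and the definition of rank via max-tops. The rewrite step $C \to_{p,\ell\to r} D$ lifts to $\hat C \to_{p,\ell\to r} \hat D$ where $\hat D = D[x,\dots,x]$ (the rule's left-hand side $\ell$ is not a variable, so $p$ is a function position and the step is unaffected by the hole-for-variable substitution). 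Now Lemma~\ref{lem-rank-decrease} applies to the \emph{weakly layered} TRS $\RR$, giving $\rank(\hat D) \leq \rank(\hat C) \leq r$. Replacing the variables back by holes does not change the rank, so $\rank(D) \leq r$ as well.

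**Next I would** verify that all aliens of $D$ are hole-free terms. The aliens of $C$ are terms in $\TT(\FF,\VV)$, so every hole of $C$ lies strictly above the alien positions, i.e.\ inside the shallow max-top/base part. By Lemma~\ref{lem-context-rewrite}, when the rule is applied to $C$ at position $p$, each hole of $C$ is erased, copied, or duplicated, but holes are never created below the contraction: the new holes of $D$ are exactly (copies of) old holes of $C$. Consequently each hole of $D$ still sits above a subterm that descends from an alien of $C$ (a genuine term) — more precisely, the subterms filling the holes of $D$ are among the subterms filling the holes of $C$, which are hole-free. This guarantees that $D$'s aliens, which lie at or below the hole positions in the decomposition of $\hat D$, are all terms in $\TT(\FF,\VV)$.

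**The main obstacle I expect** is the bookkeeping needed to argue that the aliens of $D$ (defined via $D$'s own max-top decomposition) are hole-free, rather than just the subterms sitting at $D$'s hole positions. The subtlety is that the rewrite step may cause \emph{fusion from above} or \emph{conspiring aliens}, so the max-top of $\hat D$ can be strictly larger than the image of the max-top of $\hat C$, and $D$'s aliens need not correspond positionally to $C$'s aliens. The key point to nail down is that fusion can only \emph{absorb} former alien material into the top — making the top larger and aliens smaller — and can never push a hole of $C$ downward into an alien of $D$. I would make this precise using Lemma~\ref{lem-fusion}: taking the max-top $M$ of $\hat D$ and noting that any hole of $D$ lies at or above an alien position of $C$, the aliens of $\hat D$ are subterms of the hole-free terms filling $C$'s holes, hence themselves hole-free, so the corresponding aliens of $D$ lie in $\TT(\FF,\VV)$. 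Combining the rank bound with the hole-free alien property yields that $D$ is shallow.
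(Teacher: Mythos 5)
Your rank argument coincides with the paper's proof verbatim (fill the holes with a fresh variable $x$, lift the step to $\hat C = C[x,\dots,x] \to \hat D$, apply Lemma~\ref{lem-rank-decrease}), so that half is fine. The genuine gap is in the alien half: your justification never invokes condition~\crewrite, and without it the claim you need is simply false, so no combination of Lemmata~\ref{lem-context-rewrite} and~\ref{lem-fusion} alone can close it. Concretely, take $\bL_6$ from Example~\ref{ex-rank} and the rule $\m f(\m g(x)) \to \m g(x)$ (not weakly layered according to $\bL_6$). The context $C = \m f(\m g(\m h(\square)))$ is its own max-top, hence shallow, yet its reduct $D = \m g(\m h(\square))$ has max-top $\m g(\square)$ and alien $\m h(\square)$, which contains a hole. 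Here Lemma~\ref{lem-context-rewrite} applied to $C \to D$ does correctly trace the hole of $D$ back to the hole of $C$, and Lemma~\ref{lem-fusion} applied with the max-top of $\hat D$ is vacuously satisfiable — neither rules out the max-top of $\hat D$ \emph{shrinking} so that an old hole lands strictly inside a new alien. Your key assertion, that the aliens of $\hat D$ are subterms of the aliens of $\hat C$, relates the decomposition of $\hat D$ to that of $\hat C$, and Lemma~\ref{lem-fusion} instantiated with the max-top of $\hat D$ says nothing about $\hat C$ at all.

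What is needed — and what the paper does — is a case split on whether $p \in \Pos_\FF(M_x)$ for the max-top $M_x$ of $\hat C$. If yes, \crewrite produces a rewrite step $M_x \to_{p,\ell \to r} L_x$ with $L_x \in \bL$, and Lemma~\ref{lem-context-rewrite} shows that $L_x$ is a top of $\hat D$ whose hole positions are filled by aliens of $\hat C$ (so, after un-substituting $x$, each hole of the corresponding top $L$ of $D$ holds a hole or a hole-free term). If $p \notin \Pos_\FF(M_x)$, then $M_x$ itself remains a top of $\hat D$ with the same property, since rewriting a hole-free alien yields a hole-free term. In both cases every hole of $D$ is a hole of this top, hence of the max-top of $D$ (because any top lies below the max-top with respect to $\Cleq$), which is exactly the statement that no hole sits inside an alien of $D$. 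Your intuition that ``fusion can only absorb alien material into the top and never push a hole of $C$ downward'' is correct, but making it precise requires exhibiting the top of $\hat D$ via \crewrite, which your sketch omits; as the $\bL_6$ example shows, this use of weak layeredness is unavoidable. (A minor slip: the holes of a shallow $C$ are not ``strictly above the alien positions'' — holes and aliens both occupy hole positions of the max-top, so they are parallel.)
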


\begin{proof}
Assume that $C \to_{p,\ell \to r} D$.
Then $C[x,\dots,x] \to_{p,\ell \to r} D[x,\dots,x]$ for a fresh
variable $x$. Let $M_x$ be the max-top of $C[x,\dots,x]$ and
note that the max-top $M$ of $C$ is obtained by replacing each
occurrence of $x$ by a hole in $M_x$.
If $p \in \Pos_\FF(M) = \Pos_\FF(M_x)$ then by \crewrite there is a
rewrite step $M_x \to_{p,\ell \to r} L_x$ where $L_x$ is a layer,
and even a top of $D[x,\dots,x]$ by Lemma~\ref{lem-context-rewrite}.
There is a mirroring rewrite step $M \to_{p,\ell \to r} L$
where $L$ is a top of $D$. By Lemma~\ref{lem-context-rewrite},
each hole of $L$ corresponds to a hole or a term without holes in $D$.
If $p \notin \Pos_\FF(M)$ then we take $L = M$, which is a top
of $D$. Again, each hole of $L$ corresponds to a hole or a term in $D$.
In both cases we conclude by noting that any holes of $D$ are
holes of $L$ and therefore also of the max-top of $D$ and that
the rank of $D$, which equals the rank of $D_x$, is at most $r$ by
Lemma~\ref{lem-rank-decrease}.
\end{proof}

Let $s = B[\myvec s]$ be the decomposition of $s$ into base context and
base sequence. From the previous result we get that 
$B[\myvec s] \RRR{} C[\myvec s'] = t$ (with $B \to^*_\RR C$) implies
that $C$ is shallow.
The next result establishes that the shallow context $C$ is never larger
than the base context of $t$.

\begin{lemma}
\label{lem-base-increase}
Let $C$ be a shallow context and $t$ a native term.
If $C \Cleq t$ then $C \Cleq B$ for the base context $B$ of $t$.
\end{lemma}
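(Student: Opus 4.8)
The plan is to read the target inequality positionally. Write $M$ for the max-top of $t$, so $t=M[\myvec t]$ and the base context $B$ is $t$ with the tall aliens (the $t|_q$ with $q\in\Pos_\square(M)$ and $\rank(t|_q)=r$) replaced by holes. Both $C$ and $B$ are $\Cleq t$, and $B$ differs from $t$ exactly on the tall-alien subtrees, so $C\Cleq B$ can fail only if $C$ carries a non-hole symbol at or below the root of some tall alien. I would therefore argue by contradiction: suppose $t|_q$ is a tall alien with $C$ non-hole at $q$. A routine first step is $M_C\Cleq M$ for the max-top $M_C$ of $C$, since $M_C\Cleq C\Cleq t$ makes $M_C$ a top of $t$ and, by the merging argument of Lemma~\ref{lem-max-top}, every top lies below the max-top. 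As $M$ has a hole at $q$, the layer $M_C$ has a unique hole $q^\circ$ with $q^\circ\preceq q$, and $C|_{q^\circ}$ is an alien of $C$. Because $C$ is non-hole at $q$, this alien is not $\square$; shallowness then makes it a term of rank at most $r-1$, and a hole-free context that is $\Cleq t|_{q^\circ}$ must equal $t|_{q^\circ}$, so $C|_{q^\circ}=t|_{q^\circ}$.

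Everything now reduces to a rank statement about $t$ alone, which I expect to be the heart of the proof: for every $q^\circ\preceq q$ with $q\in\Pos_\square(M)$ one has $\rank(t|_{q^\circ})\geq\rank(t|_q)$. Granting it, $\rank(t|_{q^\circ})\geq r$ contradicts $\rank(t|_{q^\circ})=\rank(C|_{q^\circ})\leq r-1$, which finishes the argument. For $q^\circ=q$ the claim is trivial, so assume $q^\circ\prec q$, whence $q^\circ\in\Pos_\FF(M)$. Let $M'$ be the max-top of $t|_{q^\circ}$. I would paste $M'$ into $M$ at $q^\circ$: since $M,M'\in\bL$, $q^\circ\in\Pos_\FF(M)$, and $M|_{q^\circ}\merge M'\Cleq t|_{q^\circ}$ is defined, condition \cpartial yields $M[M|_{q^\circ}\merge M']_{q^\circ}\in\bL$. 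This layer lies below $t$ and above $M$, so by maximality of $M$ (Lemma~\ref{lem-max-top}) it must equal $M$, i.e.\ $M'\Cleq M|_{q^\circ}$. Writing $q=q^\circ p$, the hole of $M|_{q^\circ}$ at $p$ is then a $\square$-position of $M'$, so $M'$ has a hole $p^\star\preceq p$, and $t|_{q^\circ p^\star}$ is an alien of $t|_{q^\circ}$ containing $t|_q$. If $p^\star=p$ then $\rank(t|_{q^\circ})\geq 1+\rank(t|_q)$ and we are done; otherwise $q^\circ p^\star$ is again a function position of $M$ strictly below $q^\circ$, and I recurse on the shorter path from $q^\circ p^\star$ to $q$ (note $p^\star\neq\epsilon$, since the non-empty max-top $M'$ has a function symbol at its root). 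The induction is on the path length and strictly decreases at each step.

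\textbf{Main obstacle.} The genuine difficulty is exactly this rank claim, because rank is \emph{not} monotone under taking subterms (Example~\ref{ex-rank}): a shallow alien of rank $\leq r-1$ could a priori still contain a rank-$r$ subterm, which is precisely the configuration that would falsify the lemma. What saves it is that the offending subterm is a bona fide alien of $t$, sitting at a hole of the \emph{global} max-top $M$; the pasting step through \cpartial together with the maximality of $M$ converts this global maximality into the local fact $M'\Cleq M|_{q^\circ}$, which forces the tall alien to persist as an alien of each intermediate subterm along the path and hence propagates its rank upward. The remaining ingredients -- the positional characterization of $\Cleq$, its monotonicity on subterms, and the rigidity of hole-free contexts -- are routine and I would only state them.
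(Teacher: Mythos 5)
Your proof is correct, and at its core it runs on the same engine as the paper's, but you rebuild that engine instead of citing it. The paper's proof is three lines: decompose $C = M_C[\myvec s]$ into max-top and aliens (shallowness makes every non-hole $s_i$ a hole-free term of rank less than $r$), lift along $M_C \Cleq C \Cleq t$ to $t = M_C[\myvec t']$ with $t'_i = s_i$ whenever $s_i \neq \square$, and invoke Lemma~\ref{lem-fusion} to conclude that every tall alien of $t$ sits inside a $t'_i$ of rank at least $r$, hence below a hole of $C$. Your ``rank claim'' is exactly the fact needed there, and your pasting step---merging the max-top $M'$ of $t|_{q^\circ}$ into $M$ at $q^\circ \in \Pos_\FF(M)$ via \cpartial and deducing $M' \Cleq M|_{q^\circ}$ from the maximality of $M$---reproduces case~(3) of the proof of Lemma~\ref{lem-fusion} almost verbatim, so a citation could in principle have replaced your path induction. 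What your explicit recursion buys is precision on a point the paper's citation glosses over: as literally stated, Lemma~\ref{lem-fusion} bounds aliens not rooted at hole positions of $L$ only by the \emph{global} maximum $k$ of the $\rank(t_i)$, which by itself does not force the particular $t'_i$ containing a given tall alien to have rank at least $r$ (a priori the tall alien could hide under a low-rank $t'_i$ while a different $t'_j$ accounts for the large $k$); the per-branch version actually needed is what the internal induction of Lemma~\ref{lem-fusion} establishes, and your recursion proves precisely this localized statement. So your instinct that non-monotonicity of rank under subterms (Example~\ref{ex-rank}) is the genuine danger, resolved by localizing the global maximality of $M$ through \cpartial, matches the paper's design exactly; the only criticisms are economy, and that a couple of degenerate cases (e.g.\ $C = \square$, and ruling out a variable at the root of $M'$, which you need for $p^\star \neq \epsilon$) deserve the same one-line care the paper gives such cases in Lemma~\ref{lem-max-top}.
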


\begin{proof}
Let $C = M[\myvec s]$ be the decomposition of $C$ into max-top and
aliens. Since $C$ is shallow, elements of $\myvec s$ are either holes or
terms of rank less than~$r$. From $M \Cleq C \Cleq t$ we infer the
existence of a sequence $\myvec t'$ such that $t = M[\myvec t']$ and
$s_i = t'_i$ whenever $s_i \neq \square$.
By Lemma~\ref{lem-fusion} every tall alien in $t$
is a subterm of a term of rank at least $r$ in
$\myvec t'$. Hence $C \Cleq B$ as desired.
\end{proof}

Steps within shallow contexts are short steps.
\begin{lemma}
\label{lem-shallow-step}
Let $p$ be a non-hole position in a shallow context $C$ with
$s = C[\myvec s]$. If $s \to_p t$ then $s \RRR{} t$.
\end{lemma}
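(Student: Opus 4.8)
The plan is to reduce the claim to Lemma~\ref{lem-short-step} by showing that $p$ is in fact a non-hole position of the base context $B$ of $s$; then the step $s \to_p t$ already takes place in the base context and is short by that lemma. The structural fact that makes this work is that a shallow context always sits below the base context, which is precisely the content of Lemma~\ref{lem-base-increase}.

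First I would record that $C \Cleq s$, which holds trivially since $s = C[\myvec s]$ is obtained by filling the holes of $C$. Because $C$ is shallow and $s$ is native (by our standing convention on $s$, $t$, $u$), Lemma~\ref{lem-base-increase} then yields $C \Cleq B$, where $B$ is the base context of $s$. Next I would locate $p$ inside $B$. Since $s \to_p t$ is a rewrite step and left-hand sides of rules are not variables, we have $p \in \Pos_\FF(s)$; as $p$ is a non-hole position of $C$ and $C \Cleq s$, the root symbols of $C|_p$ and $s|_p$ coincide, so in fact $p \in \Pos_\FF(C)$. From $C \Cleq B$ we obtain $\Pos_\FF(C) \subseteq \Pos_\FF(B)$, hence $p \in \Pos_\FF(B)$, i.e. $p$ is a non-hole position of the base context of $s$.

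Finally I would invoke Lemma~\ref{lem-short-step}, which gives exactly $s \RRR{} t$ for a rewrite step at a non-hole position of the base context, completing the argument. I do not expect a genuine obstacle here: the whole lemma is carried by Lemma~\ref{lem-base-increase} (shallow contexts lie below the base context) together with Lemma~\ref{lem-short-step} (steps in the base context are short). The only point requiring a little care is the bookkeeping that ensures the rewrite position $p$, being a non-hole position of $C$, is a genuine function position and therefore survives into $B$; everything else is immediate from the monotonicity of $\Cleq$.
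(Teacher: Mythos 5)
Your proof is correct and follows exactly the paper's route: the paper's entire proof of Lemma~\ref{lem-shallow-step} is ``By Lemmata~\ref{lem-base-increase} and~\ref{lem-short-step}.'' You have merely spelled out the bookkeeping (that $s$ is native by the standing convention, that $C \Cleq B$, and that the rewrite position $p$ lies in $\Pos_\FF(B)$) which the paper leaves implicit.
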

\begin{proof}
By Lemmata~\ref{lem-base-increase} and~\ref{lem-short-step}.
\end{proof}

Steps below a shallow context can be decomposed into tall and short steps.

\begin{lemma}[(tall--short factorization)]
\label{lem-tall-short}
Let $s = C[\myvec s]$ with a shallow context $C$
and a foreign sequence $\myvec s$.
If $\myvec s \to_\RR^* \myvec t$ and $\iota$ is the imbalance of
$\myvec t$ then
$C[\myvec s] \rrr[\leq\iota]{} {\cdot} \RRR{*} C[\myvec t]$.
\end{lemma}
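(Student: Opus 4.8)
The plan is to realise the claimed factorisation by first performing a single tall step that rewrites the tall aliens of $s$, and then a sequence of short steps that carries out all remaining work inside shallow contexts. The starting point is Lemma~\ref{lem-base-increase}: since $C$ is shallow and $C \Cleq s$, we have $C \Cleq B$ for the base context $B$ of $s$, so that $s = B[\myvec a]$ with $\myvec a$ the base sequence of $s$, each $a_j$ being a subterm of exactly one filling $s_{i(j)}$. Reading the given reduction $\myvec s \to_\RR^* \myvec t$ off componentwise, every tall alien is thereby driven towards the corresponding destination $t_{i(j)}$, and by Lemma~\ref{lem-rank-decrease} the aliens stay foreign throughout.

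Next I would construct the target $\myvec b$ of the tall step. It must end in a configuration from which only outer rewriting remains, while its label---the imbalance of $\myvec b$---must not exceed $\iota$. To control the imbalance I would exploit that $\RR$ is confluent on foreign terms (the induction hypothesis fixed in Section~\ref{sec-setup}) through Lemma~\ref{lem-balance}: tall aliens lying in regions $i$ with a common destination $t_i$ are rewritten to common reducts, so that the auxiliary sequence $\myvec c$ defined by $c_j = t_{i(j)}$ satisfies $\myvec c \propto \myvec b$. Since the distinct values of $\myvec c$ are among those of $\myvec t$, its imbalance is at most $\iota$, and $\myvec c \propto \myvec b$ then bounds the imbalance of $\myvec b$ by $\iota$ as well. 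This yields the tall step $s = B[\myvec a] \rrr[\leq\iota]{} B[\myvec b]$.

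It then remains to prove $B[\myvec b] \RRR{*} C[\myvec t]$. Every remaining step takes place within a shallow context: shallow contexts are closed under rewriting (Lemma~\ref{lem-shallow-closed}) and, by Lemma~\ref{lem-base-increase}, the governing context always stays below the base context of the current term, so Lemma~\ref{lem-shallow-step} certifies each such step as short. The subtle point is that tall aliens may fuse into the context during these short steps; once fused in, an alien becomes part of the base context of the intermediate term and may be rewritten further by later short steps. It is exactly this phenomenon that lets the short phase reach the possibly distinct final subterms of $C[\myvec t]$ starting from the balanced sequence $\myvec b$, and Lemma~\ref{lem-balanced-rewrite} is what keeps the accounting of which aliens survive, are copied, or are erased under control.

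I expect the main obstacle to be the construction of $\myvec b$ in the second paragraph. The two requirements pull in opposite directions: separating the distinct final subterms of $C[\myvec t]$ argues for a large imbalance, whereas the label bound forces a small one. Reconciling them hinges on deferring the separation of aliens to the short phase (after fusion into the context) and on combining Lemma~\ref{lem-balance} with the transitivity of $\propto$ to keep the tall-step result no more imbalanced than $\myvec t$. The delicate case is that of several distinct tall aliens inside a single region $s_i$, which need not be joinable as foreign terms; verifying that any such alien that must remain distinct in $C[\myvec t]$ can nonetheless be postponed until it has fused into the context---so that it is never forced to be separated \emph{within} the tall step---is where the bulk of the bookkeeping lies.
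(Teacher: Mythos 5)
There is a genuine gap, and it sits exactly where you yourself flag the ``delicate case.'' Your starting point, Lemma~\ref{lem-base-increase}, only yields $C \Cleq B$, i.e.\ that each tall alien of $s$ is \emph{some subterm} of a filling $s_{i(j)}$. On that basis, ``reading the given reduction off componentwise'' is unjustified: a reduction $s_{i} \to_\RR^* t_{i}$ need not restrict to a reduction of a proper subterm of $s_{i}$, since steps may take place above or across the alien's root. The paper closes precisely this hole by applying Lemma~\ref{lem-fusion} with $L$ the max-top of $C$: because $C$ is shallow, all aliens of $C$ have rank less than $r$, so every alien of $s$ of rank $r$ must be rooted at a hole position of $C$ --- hence the tall aliens of $s$ are \emph{whole} elements of $\myvec s$, forming a subsequence $\myvec s'$. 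Your worry about several distinct tall aliens inside a single region $s_i$ is therefore vacuous, but that is a fact that must be \emph{proved}, and your proposal explicitly leaves it open; the proof is incomplete at its load-bearing point.

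Once the subsequence fact is available, your balancing construction is not only unnecessary but actively breaks the statement. Taking $\myvec t'$ to be the subsequence of $\myvec t$ corresponding to $\myvec s'$ gives the tall step $B[\myvec s'] \rrr[\leq\iota]{} B[\myvec t']$ directly: the entries of $\myvec t'$ are among those of $\myvec t$, so its imbalance is at most $\iota$ with no appeal to Lemma~\ref{lem-balance} or to confluence on foreign terms. By contrast, if you rewrite tall aliens ``to common reducts'' strictly past their destinations $t_{i(j)}$, the short phase can no longer terminate at $C[\myvec t]$: rewriting is irreversible, and the lemma demands reaching exactly $C[\myvec t]$, not some further reduct of it. Your proposed rescue --- letting aliens fuse into the base context and then ``separating'' them by later short steps --- cannot recover terms that the tall step has already identified. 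The paper's short phase is entirely pedestrian: the remaining (short) entries of $\myvec s$ rewrite to the corresponding entries of $\myvec t$ inside the shallow context $C[\myvec s'']$, each step being short by Lemma~\ref{lem-shallow-step} (with Lemma~\ref{lem-shallow-closed} keeping the intermediate contexts shallow), yielding $B[\myvec t'] \RRR{*} C[\myvec t]$ with the tall fillings $\myvec t'$ untouched in the holes throughout.
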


\begin{proof}
Let $B$ and $\myvec s'$ be the base context and base sequence of $s$.
Note that by Lemma~\ref{lem-fusion} (with $L$ equal to the max-top of
$C$) the tall aliens $\myvec s'$ of $s$ are a subsequence of $\myvec s$,
because all aliens of $C$ have rank less than $r$.
For the corresponding subsequence $\myvec t'$ of $\myvec t$,
we obtain $s = B[\myvec s'] \rrr[\leq\iota]{} B[\myvec t']$,
while the remaining elements of $\myvec s$ and $\myvec t$
give rise to a rewrite sequence
$B = C[\myvec s''] \to_\RR^* C[\myvec t'']$,
where $\myvec s''$ ($\myvec t''$) is obtained by replacing
the terms corresponding to the elements of $\myvec s'$ ($\myvec t'$)
by holes.
Consequently,
$B[\myvec t'] = C[\myvec s''][\myvec t'] \RRR{*} C[\myvec t''][\myvec t']
= C[\myvec t]$
by Lemma~\ref{lem-shallow-step}.
\end{proof}

\begin{example}
Continuing Example~\ref{ex-vO08b}.
Let $s = \m f(\m J,\m G(\m a))$. Then $s = C[\myvec s]$ for the shallow
context
$C = \m f(\square,\square)$ with $\myvec s = (\m J, \m G(\m a))$.
Let $\myvec t = (\m K, \m I)$. Since $\myvec s \to_\RR^* \myvec t$
the conditions of Lemma~\ref{lem-tall-short} hold and we have
$C[\myvec s] \rrr[\leq 2]{} \cdot \RRR[]{*} C[\myvec t]$.
The tall step arises as
$s = \m f(\m J,\square)[\m G(\m a)] \rrr[1]{} 
     \m f(\m J,\square)[\m I] = \m f(\m J,\m I)$
while $\m f(\m J,\m I) \RRR[]{} \m f(\m K, \m I)$
is a short step since $\m f(\m J,\m I)$ is its own base context.
\end{example}

\begin{lemma}
\label{lem-peak-tt}
Local peaks of tall steps are decreasing:
\[
{\lll[\iota]{}} \cdot {\rrr[\kappa]{}} ~\subseteq~
{\rrr[\leq\kappa]{}} \cdot {\RRR{*}} \cdot {\LLL{*}}
\cdot {\lll[\leq\iota]{}}
\]
\end{lemma}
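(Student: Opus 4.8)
The plan is to collapse the peak of two tall steps to a single common reduct obtained by rewriting the shared base sequence to a common descendant, and then to split the two resulting sequence reductions back into a tall step followed by short steps using tall--short factorization. Since both tall steps emanate from the same native source $s$, they share its base context $B$ and base sequence $\myvec{s}$, so I would write $s = B[\myvec{s}]$, $t = B[\myvec{t}]$ and $u = B[\myvec{u}]$ with $\myvec{s} \to_\RR^* \myvec{t}$ and $\myvec{s} \to_\RR^* \myvec{u}$, where $\iota$ and $\kappa$ are the imbalances of $\myvec{t}$ and $\myvec{u}$. Two facts recorded earlier are what make everything fit: $B$ is shallow (base contexts of native terms are shallow), and $\myvec{s}$, $\myvec{t}$, $\myvec{u}$ are foreign sequences because $\RR$ is weakly layered.

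First I would apply Lemma~\ref{lem-balance} to the foreign sequences $\myvec{s}$, $\myvec{t}$, $\myvec{u}$, obtaining a foreign sequence $\myvec{v}$ with $\myvec{t} \to_\RR^* \myvec{v}$, $\myvec{u} \to_\RR^* \myvec{v}$, $\myvec{t} \propto \myvec{v}$ and $\myvec{u} \propto \myvec{v}$. The crucial observation is that $\propto$ does not increase imbalance: writing $\iota'$ for the imbalance of $\myvec{v}$, the relation $\myvec{u} \propto \myvec{v}$ gives $\iota' \leq \kappa$ and $\myvec{t} \propto \myvec{v}$ gives $\iota' \leq \iota$, so $\iota' \leq \min(\iota,\kappa)$. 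This is precisely what licenses the ``swapped'' labels $\leq\kappa$ on the $t$-side and $\leq\iota$ on the $u$-side of the target.

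Next I would invoke tall--short factorization (Lemma~\ref{lem-tall-short}) twice with the shallow context $B$: once for $t = B[\myvec{t}]$ with $\myvec{t} \to_\RR^* \myvec{v}$, and once for $u = B[\myvec{u}]$ with $\myvec{u} \to_\RR^* \myvec{v}$. These yield $t \rrr[\leq\iota']{} {\cdot} \RRR{*} B[\myvec{v}]$ and $u \rrr[\leq\iota']{} {\cdot} \RRR{*} B[\myvec{v}]$. Weakening labels via $\iota' \leq \kappa$ on the first and $\iota' \leq \iota$ on the second gives $t \rrr[\leq\kappa]{} {\cdot} \RRR{*} B[\myvec{v}]$ and $u \rrr[\leq\iota]{} {\cdot} \RRR{*} B[\myvec{v}]$. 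Reversing the second reduction and composing both at the common term $B[\myvec{v}]$ produces exactly $t \mathrel{{\rrr[\leq\kappa]{}} \cdot {\RRR{*}} \cdot {\LLL{*}} \cdot {\lll[\leq\iota]{}}} u$, as required.

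The main obstacle is conceptual rather than computational: one must notice that $B$ need \emph{not} be the base context of $t$ or of $u$, since fusion from below may enlarge it, so the two return reductions cannot be written directly as tall steps out of $t$ and $u$. This is exactly the gap Lemma~\ref{lem-tall-short} closes, as it only requires the enclosing context to be shallow rather than to be a genuine base context. The only remaining care is bookkeeping the single imbalance bound $\iota'$ through the two $\propto$ relations so that the asymmetric labels come out as stated; everything else is a routine application of the cited lemmas.
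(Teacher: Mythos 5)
Your proof is correct and takes essentially the same route as the paper's: both arguments balance the two alien reductions with Lemma~\ref{lem-balance}, use $\myvec t \propto \myvec v$ and $\myvec u \propto \myvec v$ to bound the imbalance of $\myvec v$ by $\min(\iota,\kappa)$, and close the peak by applying tall--short factorization (Lemma~\ref{lem-tall-short}) from both $t = B[\myvec t]$ and $u = B[\myvec u]$ to the common term $B[\myvec v]$, with the shallowness of the base context $B$ of $s$ licensing both applications. Your closing remark---that $B$ need not be the base context of $t$ or $u$, which is exactly why Lemma~\ref{lem-tall-short} asks only for a shallow context---is the correct reading of the same subtlety the paper handles implicitly.
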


\begin{proof}
Let $t \lll[\iota]{} s \rrr[\kappa]{} u$ and let the base context
and base sequence of $s$ be $B$ and $\myvec s$. There are
foreign sequences $\myvec t$ and $\myvec u$ such that
$\myvec t \lud{\RR}{*}\from \myvec s \to_\RR^* \myvec u$ and
$t = B[\myvec t]$, $u = B[\myvec u]$.
By Lemma~\ref{lem-balance}, we can find a foreign sequence
$\myvec v$ such that
$\myvec t \to_\RR^* \myvec v \lud{\RR}{*}\from \myvec u$,
$\myvec t \propto \myvec v$, and $\myvec u \propto \myvec v$. Hence
the imbalance of $\myvec v$ is less than or equal to both $\iota$
and $\kappa$ and we conclude by Lemma~\ref{lem-tall-short}.
\end{proof}

\begin{example}
To demonstrate Lemma~\ref{lem-peak-tt}, we extend Example~\ref{ex-vO08b}.
Let $s = \m f(\m G(\m a), \m G(\m a))$. Then
$t = \m f(\m H(\m a), \m I) \lll[2]{} s \rrr[2]{}
\m f(\m I,\m H(\m a)) = u$.
Note that $\m I \to_\RR \m K$ and
$\m H(\m a) \to_\RR \m J \to_\RR \m K$.
The base contexts of $t$ and $u$ are $\m f(\square,\m I)$ and
$\m f(\m I,\square)$, respectively.
Consequently, $t \rrr[1]{} \m f(\m K,\m I) \RRR{} \m f(\m K,\m K)
\LLL{} \m f(\m I, \m K) \lll[1]{} u$.
\end{example}

\begin{lemma}
\label{lem-peak-ts}
Local peaks involving a tall and a short step are decreasing:
\[
{\lll[\iota]{}} \cdot {\RRR{}} ~\subseteq~ 
{\rrr[< \iota]{=}} \cdot {\RRR{*}} \cdot {\LLL{*}} \cdot
{\lll[\leq \iota]{}}
\]
\end{lemma}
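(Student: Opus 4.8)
The plan is to replay the short step on the $t$-side and the tall step on the $u$-side, letting both derivations meet at a common term. First I would fix the decomposition $s = B[\myvec s]$ into base context and base sequence. The tall step then reads $t = B[\myvec t]$ with $\myvec s \to_\RR^* \myvec t$ and the imbalance of $\myvec t$ equal to $\iota$, while the short step is mirrored by a base rewriting $B \to_\RR^* C$, giving $u = C[\myvec u]$ with $C$ shallow by Lemma~\ref{lem-shallow-closed} and each element of $\myvec u$ drawn from $\myvec s$ by Lemma~\ref{lem-context-rewrite}. The target at which I aim to join is $v_2 := C[\myvec{v'}]$, where $\myvec{v'}$ fills each hole of $C$ with a suitable reduct of the alien at its originating position in $\myvec s$.

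The obstacle to replaying $B \to_\RR^* C$ on $t = B[\myvec t]$ is that a non-left-linear step in the base rewriting may identify two holes whose aliens coincide in $\myvec s$ but differ in $\myvec t$; then Lemma~\ref{lem-balanced-rewrite} does not apply. To repair this I would first balance $\myvec t$: applying Lemma~\ref{lem-balance} to the two rewrite sequences $\myvec s \to_\RR^* \myvec t$ and the identity $\myvec s \to_\RR^* \myvec s$ yields a foreign sequence $\myvec v$ with $\myvec t \to_\RR^* \myvec v$, $\myvec s \propto \myvec v$, and $\myvec t \propto \myvec v$. The first two facts give a tall step $t = B[\myvec t] \rrr{} B[\myvec v]$ whose result is $\propto$-compatible with $\myvec s$, and hence amenable to the replay.

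The hard part is the strict imbalance bound on this tall step. From $\myvec t \propto \myvec v$ the imbalance of $\myvec v$ is at most $\iota$. For strictness I would argue on the induced partitions: $\myvec t \propto \myvec v$ says the partition of $\myvec v$ is coarser than that of $\myvec t$, and $\myvec s \propto \myvec v$ says it is coarser than that of $\myvec s$. If the imbalance of $\myvec v$ equalled $\iota$ the first coarsening would be an equality of partitions, so the partition of $\myvec s$ would refine that of $\myvec t$, i.e.\ $\myvec s \propto \myvec t$. Thus whenever $\myvec s \not\propto \myvec t$ the imbalance of $\myvec v$ is strictly below $\iota$, giving $t \rrr[<\iota]{} B[\myvec v]$; and when $\myvec s \propto \myvec t$ I would take $\myvec v = \myvec t$ and use the reflexive case of $\rrr[<\iota]{=}$.

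It then remains to close the diagram at $v_2 = C[\myvec{v'}]$, where $\myvec{v'}$ fills each hole of $C$ with the $\myvec v$-value at its originating position (well defined because $\myvec s \propto \myvec v$). On the $t$-side, since $\myvec s \propto \myvec v$, iterating Lemma~\ref{lem-balanced-rewrite} along $B \to_\RR^* C$ gives $B[\myvec v] \to_\RR^* C[\myvec{v'}]$; each step lies at a non-hole position of a shallow context (shallowness being preserved by Lemma~\ref{lem-shallow-closed}), so by Lemma~\ref{lem-shallow-step} it is a short step and $B[\myvec v] \RRR{*} C[\myvec{v'}]$. On the $u$-side, rewriting the aliens of $u = C[\myvec u]$ along $\myvec s \to_\RR^* \myvec v$ reaches the same term $C[\myvec{v'}]$, and Lemma~\ref{lem-tall-short} factors this as $u \rrr[\leq\iota]{} {\cdot} \RRR{*} C[\myvec{v'}]$, the tall step having imbalance at most that of $\myvec{v'}$, hence at most $\iota$. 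Combining the two sides yields $t \rrr[<\iota]{=} {\cdot} \RRR{*} {\cdot} \LLL{*} {\cdot} \lll[\leq\iota]{} u$, as required.
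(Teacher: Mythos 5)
Your proposal reproduces the paper's proof essentially step for step: the same decomposition $s = B[\myvec s]$, the same balancing via Lemma~\ref{lem-balance} applied to $\myvec s \to_\RR^* \myvec t$ and the trivial sequence $\myvec s \to_\RR^* \myvec s$, the same case split on $\myvec s \propto \myvec t$, and the same meeting point (your $C[\myvec{v'}]$ is the paper's $C[\myvec w]$, defined by $w_i = v_j$ when $u_i = s_j$), reached on the $t$-side by iterating Lemma~\ref{lem-balanced-rewrite} and Lemma~\ref{lem-shallow-step} and on the $u$-side by Lemma~\ref{lem-tall-short}. Your partition-coarsening argument for the strict bound on the imbalance of $\myvec v$ is a correct reformulation of the paper's direct argument (pick $i,j$ with $s_i = s_j$ but $t_i \neq t_j$; then $v_i = v_j$ while all equalities among the $t_i$ carry over to $\myvec v$).

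One imprecision worth fixing: in the unbalanced case you assert a single tall step $t = B[\myvec t] \rrr[<\iota]{} B[\myvec v]$. By Definition~\ref{tall steps} a tall step must be rooted at the base context of its \emph{source}, and $B$ need not be the base context of $t$, since the steps $\myvec s \to_\RR^* \myvec t$ may have caused fusion of former tall aliens with $B$ (the paper points this out right before Lemma~\ref{lem-tall-short}). The correct move, which the paper makes, is to invoke the tall--short factorization here as well, yielding $t \rrr[<\iota]{} \cdot \RRR{*} B[\myvec v]$; the extra short steps are then absorbed into the subsequent $\RRR{*}$ of your diagram, so the final shape ${\rrr[<\iota]{=}} \cdot {\RRR{*}} \cdot {\LLL{*}} \cdot {\lll[\leq\iota]{}}$ is unaffected. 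Since you already cite Lemma~\ref{lem-tall-short} on the $u$-side, this is a local repair rather than a missing idea.
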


\begin{proof}
Let $t \lll[\iota]{} s \RRR{} u$ and let the base context and
base sequence of $s$ be $B$ and $\myvec s$. We have $t = B[\myvec t]$
with $\myvec s \to_\RR^* \myvec t$ for some foreign sequence $\myvec t$
and $u = C[\myvec u]$.
We construct $\myvec v$ and $\myvec w$ such that
${B[\myvec t] \rrr[< \iota]{=} \cdot \RRR{*} B[\myvec v] \RRR{*} 
C[\myvec w]  \LLL{*} \cdot \lll[\leq \iota]{} C[\myvec u]}$.
We distinguish two cases.
\begin{enumerate}
\item
If $\myvec s \propto \myvec t$ then we let $\myvec v = \myvec t$.
Hence $B[\myvec t] = B[\myvec v]$ and thus
$B[\myvec t] \rrr[< \iota]{=} {\cdot} \RRR{*} B[\myvec v]$.
\item
Otherwise, using Lemma~\ref{lem-balance} with
$\myvec s \to_\RR^* \myvec t$
and $\myvec s \to_\RR^* \myvec s$
we can find a foreign sequence $\myvec v$ such that
$\myvec t \to_\RR^* \myvec v$, $\myvec t \propto \myvec v$,
and $\myvec s \propto \myvec v$. Since the imbalance of $\myvec v$
is less than $\iota$
($\myvec s \not\propto \myvec t$ means that there are $i,j$ with
$s_i = s_j$ and $t_i \neq t_j$.
By $\myvec s \propto \myvec v$, we have $v_i = v_j$,
and $\myvec t \propto \myvec v$ ensures that
all other equalities between elements of $\myvec t$ carry over to
$\myvec v$,
so the imbalance becomes smaller)
we obtain $B[\myvec t] \rrr[< \iota]{=} {\cdot} \RRR{*} B[\myvec v]$
from Lemma~\ref{lem-tall-short}.
\end{enumerate}
By the definition of $\RRR{}$ we get $B \to_\RR^* C$
mirroring
$s = B[\myvec s] \to_\RR^* C[\myvec u] = u$.
Hence $\myvec u$ is a sequence of foreign terms such that all elements
of $\myvec u$ are elements of $\myvec s$, which follows by repeated
application of Lemma~\ref{lem-context-rewrite}. We define
$w_i = v_j$ if $u_i = s_j$.
Then $\myvec u \to_\RR^* \myvec w$ and the
imbalance of $\myvec w$ is at most $\iota$. Hence
$C[\myvec u] \rrr[\leq\iota]{} {\cdot} \RRR{*} C[\myvec w]$ by
Lemma~\ref{lem-tall-short}.
We also have $B[\myvec v] \to_\RR^* C[\myvec w]$ with no rewrite step
affecting a tall alien and thus $B[\myvec v] \RRR{*} C[\myvec w]$
by Lemma~\ref{lem-shallow-step}.
\end{proof}

\begin{example}
We revisit Example~\ref{ex-vO08b}.
Let $s = \m f(\m f(\m G(\m a), \m G(\m a)),\m I)$.
The base context of $s$ is $\m f(\m f(\square,\square),\m I)$.
Then
$t = \m f(\m f(\m I, \m H(\m a)), \m I) \lll[2]{} s
\RRR{} \m f(\m G(\m a),\m K) = u$. The base context of
$t$ is $\m f(\m f(\m I, \square), \m I)$ and we have
$t \rrr[1]{} \m f(\m f(\m I,\m K),\m I)
   \RRR{} \m f(\m f(\m K,\m K),\m K)
   \RRR{} \m f(\m K, \m K) = v$,
whereas the base context of $u$ is $\m f(\square,\m K)$ and
$u \rrr[1]{} v$.
\end{example}

\begin{lemma}[(Main Lemma)]
\label{lem-main}
If $\RRR{}$ is locally decreasing then $\RR$ is confluent on native terms.
\end{lemma}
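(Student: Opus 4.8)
The plan is to apply the decreasing diagrams theorem (Theorem~\ref{thm-dd}) to the labelled abstract rewrite system whose steps are the short steps $\RRR{}$ and the tall steps $\rrr{}$, and then to transfer the resulting confluence back to $\to_\RR$ on native terms. The hypothesis supplies local decreasingness for the short--short peaks, while Lemma~\ref{lem-peak-tt} and Lemma~\ref{lem-peak-ts} supply it for the tall--tall and tall--short peaks; so the real content of the lemma is the bookkeeping that glues these three facts together and relates ${\RRR{}} \cup {\rrr{}}$ to $\to_\RR$.

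First I would show that, on native terms, ${\RRR{}} \cup {\rrr{}}$ and $\to_\RR$ induce the same reflexive--transitive closure. By Lemma~\ref{lem-rank-decrease} the set of native terms is closed under $\to_\RR$, so it suffices to sandwich the relations. A single native step $s \to_{p} t$, with $s = B[\myvec s]$ for the base context $B$ and base sequence $\myvec s$, is either short or tall: if $p$ is a non-hole position of $B$ then $s \RRR{} t$ by Lemma~\ref{lem-short-step}, and otherwise $p$ lies inside a tall alien $s_j$, so rewriting that alien gives a tall step $s = B[\myvec s] \rrr{} B[\myvec t] = t$ in the sense of Definition~\ref{tall steps} (taking the short-step label to be $s$ itself is harmless). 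Hence ${\to_\RR} \subseteq {\RRR{}} \cup {\rrr{}}$ on native terms. Conversely every short and every tall step is by definition a sequence of $\RR$-steps, so ${\RRR{}} \cup {\rrr{}} \subseteq {\to_\RR^*}$. Consequently $({\RRR{}} \cup {\rrr{}})^* = {\to_\RR^*}$ on native terms, and since confluence depends only on the reflexive--transitive closure, it is enough to prove that ${\RRR{}} \cup {\rrr{}}$ is confluent.

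For that I would invoke Theorem~\ref{thm-dd} with the labelling described before this lemma: tall steps carry their imbalance, short steps carry the labels of the well-founded order assumed in the hypothesis, and every tall label is placed above every short label (the ordinal sum of two well-founded orders is again well-founded). With this order all local peaks are decreasing --- short--short by the assumed local decreasingness of $\RRR{}$, tall--short and, symmetrically, short--tall by Lemma~\ref{lem-peak-ts}, and tall--tall by Lemma~\ref{lem-peak-tt} --- so ${\RRR{}} \cup {\rrr{}}$ is locally decreasing and hence confluent. Combined with the previous paragraph this yields confluence of $\RR$ on native terms.

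The step I expect to be most delicate is checking that Lemmas~\ref{lem-peak-tt} and~\ref{lem-peak-ts} really are instances of the decreasing-diagram condition for the \emph{combined} order: one must verify that the short steps appearing in the valleys count as strictly below both facet labels (which holds precisely because tall dominates short), and that the tall facets stated with a $\leq$-bound in those lemmas split correctly into the facet step and the dominated side steps of the definition of a decreasing local peak. Once this correspondence is pinned down, the remainder is the routine assembly sketched above.
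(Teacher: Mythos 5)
Your proposal is correct and follows essentially the same route as the paper's proof: every native $\to_\RR$-step is classified as short (via Lemma~\ref{lem-short-step}) or tall, giving ${\to_\RR} \subseteq {\RRR{}} \cup {\rrr{}} \subseteq {\to_\RR^*}$, and confluence of the combined relation is obtained from Theorem~\ref{thm-dd} together with the hypothesis and Lemmata~\ref{lem-peak-tt} and~\ref{lem-peak-ts}. The paper states this in three lines; your additional bookkeeping (closure of native terms under rewriting, the ordinal-sum well-founded order placing tall labels above short ones, and the check that the $\leq$-bounded facets in the joining lemmas match the decreasing-diagram format) merely makes explicit what the paper leaves implicit.
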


\begin{proof}
Every rewrite step $s \to_\RR t$ can be written as $s \RRR{} t$ by
Lemma~\ref{lem-short-step} or $s \rrr{} t$ if the rewrite rule
is applied to a tall alien of $s$. Hence
${{\to_\RR} \subseteq {\rrr{}} \cup {\RRR{}} \subseteq {\to_\RR^*}}$ and
thus the claim follows from the confluence of
${\rrr{}} \cup {\RRR{}}$. The latter is a consequence of 
Theorem~\ref{thm-dd} in connection with the assumption and 
Lemmata~\ref{lem-peak-tt} and~\ref{lem-peak-ts}.
\end{proof}

The various versions of the main theorem will follow from
Lemma~\ref{lem-main}.

\subsection{Local Decreasingness of Short Steps}
\label{sec-dd-short}

In this section we study conditions to make short steps locally decreasing.
The following result allows to represent a native term $s$ by a foreign
term $s'$ and a substitution $\pi$ such that $s = s'\pi$. This will be
the key for joining the peak originating from~$s$ by the confluence
assumption of~$s'$.

\begin{lemma}[(peak analysis)]
\label{lem-peak-analysis}
For a local peak $t \LLL{} s \RRR{} u$ there are foreign terms
$s'$, $t'$, $u'$, $v'$ and substitutions $\pi$, $\pi_\square$ such that
\begin{enumerate}
\item
$\pi$ is a bijection with $\dom(\pi) \cap \Var(s) = \varnothing$,
\item
$s'\pi = s$, $t'\pi = t$, $u'\pi = u$, $s'\pi_\square$ is the
base context of $s$, and $t' \pi_\square$ and $u' \pi_\square$ are
shallow contexts of $t$ and $u$, and
\item
$v' \lud{\RR}{*}\from t' \lud{\RR}{*}\from s' \to_\RR^* u' \to_\RR^* v'$
and $t \to_\RR^* v \lud{\RR}{*}\from u$ with $v = v' \pi$.
\end{enumerate}
\end{lemma}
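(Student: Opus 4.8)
The plan is to abstract the tall aliens of $s$ into fresh variables, turning $s$ into a \emph{foreign} term $s'$ to which the induction hypothesis (confluence of all terms of rank at most $r$) applies, and then to transport the resulting valley back along the abstraction. First I would write $s = B[\myvec s]$ for the decomposition of $s$ into its base context $B$ and base sequence $\myvec s$ of tall aliens. Let $a_1,\dots,a_k$ be the \emph{distinct} terms occurring in $\myvec s$ and choose fresh variables $y_1,\dots,y_k$; define $\pi$ by $\pi(y_j) = a_j$ and let $\pi_\square$ be its hole-variant with $\pi_\square(y_j) = \square$. Setting $s' := B[\myvec y]$, where the $i$-th hole of $B$ receives the variable representing $s_i$, we immediately get $s'\pi = B[\myvec s] = s$ and $s'\pi_\square = B$, the base context of $s$. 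Since distinct aliens receive distinct variables, $\pi$ is a bijection, and its domain is disjoint from $\Var(s)$ by freshness, which is item~(1). Moreover $s'$ is foreign: $B$ is shallow, so all its aliens are holeless terms of rank less than $r$ and every hole of $B$ lies inside its max-top; by \cvar the variables $\myvec y$ then sit inside the max-top of $s'$ and create no new aliens, whence $\rank(s') \leq r$.

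Next I would produce $t'$ and $u'$ by replaying the short steps on $s'$. By definition the short step $s \RRR{} t$ is mirrored by a base-context sequence $B \to_\RR^* C_t$, and $C_t$ is shallow with $C_t \Cleq t$ by Lemmata~\ref{lem-shallow-closed} and~\ref{lem-context-rewrite}. Iterating Lemma~\ref{lem-context-rewrite} along $B \to_\RR^* C_t$ while filling the holes with $\myvec y$ gives $s' = B[\myvec y] \to_\RR^* C_t[\myvec{y}''] =: t'$ with every element of $\myvec{y}''$ among $\myvec y$; hence $t'\pi_\square = C_t$ is a shallow context of $t$. Instantiating this derivation by $\pi$ yields a rewrite sequence $s = s'\pi \to_\RR^* t'\pi$ which uses the same rules at the same positions as $B \to_\RR^* C_t$, i.e.\ which mirrors $B \to_\RR^* C_t$ starting from $s$; as the short step $s \to_\RR^* t$ is the unique such sequence, $t'\pi = t$. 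The same construction applied to $s \RRR{} u$ yields $u' := C_u[\myvec{y}''']$ with $u'\pi = u$ and $u'\pi_\square = C_u$ a shallow context of $u$. This settles item~(2).

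Finally, since $s'$ is foreign and rank does not increase under rewriting (Lemma~\ref{lem-rank-decrease}), the reducts $t'$ and $u'$ are foreign as well, so the peak $t' \lud{\RR}{*}\from s' \to_\RR^* u'$ lies entirely within the terms of rank at most $r$. The induction hypothesis then supplies a common reduct $v'$ with $t' \to_\RR^* v'$ and $u' \to_\RR^* v'$, giving the first conjunct of item~(3). Applying the substitution $\pi$ and using closure of $\to_\RR^*$ under substitution, we obtain $t = t'\pi \to_\RR^* v'\pi$ and $u = u'\pi \to_\RR^* v'\pi$, so that $v := v'\pi$ satisfies $t \to_\RR^* v \lud{\RR}{*}\from u$, completing item~(3).

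The algebra of the substitutions is routine; the step that requires care is verifying $t'\pi = t$ (and symmetrically $u'\pi = u$). The subtlety is that a short step may copy, erase, or duplicate the aliens, so one must argue that the abstraction is \emph{faithful}: because $\pi$ is a bijection and the replayed derivation on $s'$ uses exactly the rules and positions of $B \to_\RR^* C_t$, instantiating it by $\pi$ reproduces the original short step itself, rather than merely some derivation agreeing with it away from the aliens. Keeping the hole-tracking of Lemma~\ref{lem-context-rewrite} synchronized between $\myvec s$ and $\myvec y$ is where the bookkeeping concentrates, and I expect this to be the main obstacle in the formal write-up.
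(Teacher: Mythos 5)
Your proposal is correct and takes essentially the same route as the paper: abstract the tall aliens via a bijection $\pi$ to fresh variables (equal aliens receiving the same variable), replay the mirrored base-context sequences $B \to_\RR^* C_t$ and $B \to_\RR^* C_u$ on $s' = B[\myvec y]$ to obtain foreign $t'$ and $u'$, close the foreign peak by the standing confluence assumption on terms of rank at most $r$, and instantiate the valley by $\pi$. The one slip is a citation in the replay step: Lemma~\ref{lem-context-rewrite} presupposes $\ell \matches B[\myvec y]|_p$, which you have not established from $\ell \matches B[\myvec s]|_p$; the paper instead invokes Lemma~\ref{lem-balanced-rewrite}, which is precisely the tool for transferring a step from $B[\myvec s]$ to $B[\myvec y]$ under the relation $\myvec s \propto \myvec y$ that your equal-aliens-to-equal-variables construction guarantees.
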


\begin{proof}
Let $s = B[\myvec s]$ be the decomposition of $s$ into base context
and base sequence, and recall that base contexts are shallow. According
to the definition of $\RRR{}$ there are
rewrite sequences $B \to_\RR^* C_t$, $B \to_\RR^* C_u$
mirroring $s \to_\RR^* t$, $s \to_\RR^* u$, respectively.
Using Lemma~\ref{lem-shallow-closed} repeatedly, we find that
$C_t$ and $C_u$ are shallow contexts.
Let $\pi$ be a bijection between the
tall aliens of $s$ and fresh variables,
and define $s' = B[\pi^{-1}(\myvec s)]$.
We have $\myvec s \propto \pi^{-1}(\myvec s)$ and therefore repeated
application of Lemma~\ref{lem-balanced-rewrite} yields rewrite sequences
$s' \to_\RR^* t'$ and $s' \to_\RR^* u'$
mirroring $s' \pi = s \to_\RR^* t = t' \pi$ and
$s' \pi = s \to_\RR^* u = u'\pi$.
Since $s'$ is a foreign term and therefore
confluent, $t'$ and $u'$ have a
common reduct: $t' \to_\RR^* v' \lud{\RR}{*}\from u'$. By applying
$\pi$ to this valley we obtain $t \to_\RR^* v \lud{\RR}{*}\from u$.
Note that $s'\pi_\square = B$, $t'\pi_\square = C_t$ and
$u'\pi_\square = C_u$ are shallow contexts as claimed.
\end{proof}

\begin{example}
Consider the layer system $\bL$ given by
\begin{align*}
\bL_0 &= \{
v, \m a, \m b, \m f(v), \m g(v), \m g(\m b) \mid v \in \VC
\} \\
\bL &= \bL_0 \cup \{ \m h(C,C',C'') \mid C,C',C'' \in \bL_0 \}
\end{align*}
which weakly layers the TRS
$\RR = \{ \m h(x,y,z) \to \m h(y,x,z), \m f(x) \to \m g(x),
\m a \to \m b \}$.
Assume that $r = 1$
and let $s = \m h(\m a, \m f(\m a), \m f(\m b))$. The base context
of $s$ is $\m h(\m a, \m f(\square), \m f(\square))$. There is a peak
of short steps
\[t = \m h(\m b, \m g(\m a), \m f(\m b)) \LLL{} s \RRR{}
\m h(\m f(\m a), \m a, \m g(\m b)) = u
\]
From Lemma~\ref{lem-peak-analysis},
we may obtain $\pi = \{ \m a / x, \m b / y \}$,
$s' = \m h(\m a,\m f(x),\m g(y))$,
$t' = \m h(\m b, \m g(x), \m f(y))$,
$u' = \m h(\m f(x), \m a, \m g(y))$,
and $v' = \m h(\m g(x), \m b, \m g(y))$.
Note that $t' \pi_\square = \m h(\m b, \m g(\square), \m f(\square))$
is the base context of $t$ but
$u' \pi_\square = \m h(\m f(\square), \m a, \m g(\square))$
does not equal $\m h(\m f(\square), \m a, \m g(\m b))$, the
base context of $u$.
\end{example}

\begin{lemma}
\label{lem-ss-ll}
If $\RR$ is left-linear then $\RRR{}$ is locally decreasing.
\end{lemma}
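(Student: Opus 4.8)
The plan is to push the local peak of short steps down one rank, solve it with the induction hypothesis (confluence of foreign terms), and then transport the solution back up into a joining valley of short steps; left-linearity will be needed only at the very end, to label that valley.

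First I would take a local peak $t \LLL{} s \RRR{} u$ and feed it to peak analysis (Lemma~\ref{lem-peak-analysis}). This yields foreign terms $s'$, $t'$, $u'$, a bijection $\pi$ with $s'\pi = s$, $t'\pi = t$, $u'\pi = u$, and the shallow contexts $s'\pi_\square$, $t'\pi_\square$, $u'\pi_\square$ of $s$, $t$, $u$, together with foreign reductions $t' \lud{\RR}{*}\from s' \to_\RR^* u'$. Since $s'$ is foreign it is confluent by the induction hypothesis, so $t'$ and $u'$ have a common reduct $v'$, and applying $\pi$ produces $v = v'\pi$ with $t \to_\RR^* v$ and $u \to_\RR^* v$.

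Next I would show that both legs of this valley consist of short steps. Consider the leg coming from $t' \to_\RR^* v'$. Because $\pi$ substitutes the foreign tall aliens for the fresh variables and the foreign reduction never rewrites inside those variables, applying $\pi_\square$ mirrors it step for step by a reduction $t'\pi_\square \to_\RR^* v'\pi_\square$ of \emph{contexts}; all of these are shallow by Lemma~\ref{lem-shallow-closed}, starting from the shallow context $t'\pi_\square$. Each step of the lifted reduction $t \to_\RR^* v$ thus takes place at a non-hole position of the current shallow context with the tall aliens plugged back in, so Lemma~\ref{lem-shallow-step} turns it into a short step. Hence $t \RRR{*} v$, and symmetrically $u \RRR{*} v$, giving a valley of short steps. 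I would stress that this part uses neither left-linearity nor the step labels.

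The remaining and hardest step is to assign labels so that this valley witnesses local decreasingness. Here left-linearity enters: because left-hand sides are linear, a short step out of $s$ never relies on two aliens being syntactically equal, so the passage to the foreign representative $s'$—which separates distinct aliens by distinct variables—faithfully reflects the original peak, and the joining valley can be placed entirely over predecessors lying strictly below the common label of the peak in the well-founded order on short steps. The plan is to exploit this to fit the two legs into the strictly-labeled $\RRR{*}$- and $\LLL{*}$-parts of the decreasing pattern of Section~\ref{sec-introduction}. The main obstacle is precisely this label bookkeeping—pinning down the well-founded order and checking that the measure strictly decreases along the valley—and I expect left-linearity to be exactly what prevents the measure from being inflated when aliens coincide, the phenomenon that forces the more delicate relative-termination argument of the bounded-duplicating case in Lemma~\ref{lem-ss-bd}.
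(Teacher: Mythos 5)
There is a genuine gap, and you have located it yourself: the entire labeling argument is left as an expectation, and the strategy you sketch for it cannot work. Your first two paragraphs are fine --- peak analysis plus the mirroring of $t' \to_\RR^* v'$ under $\pi_\square$, Lemma~\ref{lem-shallow-closed}, and Lemma~\ref{lem-shallow-step} do yield a valley $t \RRR{*} v \LLL{*} u$ of short steps (this is essentially the maneuver used in case~(2) of the proof of Lemma~\ref{lem-ss-bd}). But a decreasing diagram for the peak $t \LLL[s_0]{} s \RRR[s_1]{} u$ allows at most \emph{one} step labeled exactly $s_1$ on the $t$-side (the $\rud{\beta}{=}\to$ component); every other step must carry a label \emph{strictly below} $s_0$ or $s_1$ in a well-founded order on short-step labels. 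In the left-linear case no such order is available: the order on short labels ($s_0 \succ s_1$ iff $s_0^\Diamond \to^+_{\{\Diamond(x)\to x\}/\RR} s_1^\Diamond$) is only introduced in the proof of Lemma~\ref{lem-ss-bd} and is well-founded precisely because of bounded duplication, which a left-linear TRS need not satisfy. So a multi-step valley of short steps, however you label it, does not fit the decreasing pattern, and your hope that left-linearity ``prevents the measure from being inflated'' misidentifies where left-linearity actually enters.

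The paper's proof avoids the problem by making each leg a \emph{single} short step, which needs no strict decrease at all. The obstacle is that $C_t = t'\pi_\square$ is only a shallow context with $C_t \Cleq B_t$ (Lemma~\ref{lem-base-increase}); it may be strictly smaller than the base context $B_t$ of $t$ because of fusion, so the mirrored sequence $C_t \to_\RR^* v'\pi_\square$ does not by itself certify a short step out of $t$. The fix: linearize $t'$ to $t''$ by replacing each variable with a fresh one; left-linearity guarantees that $t' \to_\RR^* v'$ can be mirrored as $t'' \to_\RR^* v''$, and since $t''$ is linear with $t''\pi_\square$-skeleton $C_t \Cleq B_t$, we get $t'' \matches B_t$, hence $B_t = t''\sigma \to_\RR^* v''\sigma$ mirrors the whole sequence $t \to_\RR^* v$ from the actual base context. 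This makes $t \RRR[s_1]{} v$ one short step labeled $s_1$ (legitimate since $s_1 \to_\RR^* s \to_\RR^* t$), and symmetrically $u \RRR[s_0]{} v$, so the diagram is decreasing using only the reflexive components --- no well-founded order on short labels is needed. This lifting step, not any separation of equal aliens (which peak analysis already handles via the bijection $\pi$), is the real role of left-linearity, and it is the piece your proposal is missing.
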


\begin{proof}
Consider a local peak $t \LLL[s_0]{} s \RRR[s_1]{} u$. First we apply
Lemma~\ref{lem-peak-analysis}. Let $t''$ be a linearization of $t'$, which
we obtain by replacing each variable in $t'$ by a fresh variable. Because
$\RR$ is left-linear, $t' \to_\RR^* v'$
can be mirrored as $t'' \to_\RR^* v''$. Let $B_t$ be the base context of
$t$ and $C_t = t'\pi_\square$. We have $C_t \Cleq B_t$ by
Lemma~\ref{lem-base-increase}, which implies $t'' \matches B_t$
and thus $B_t = t''\sigma$ for some substitution $\sigma$.
We have
$B_t \to_\RR^* v''\sigma$. Together with $t \to_\RR^* v$,
which mirrors $B_t \to_\RR^* v''\sigma$,
we obtain $t \RRR{} v$. This step can be labeled
with $s_1$ because $s_1 \to_\RR^* s \to_\RR^* t$.
By symmetry we obtain $u \RRR[s_0]{} v$ and hence $\RRR[]{}$ is locally
decreasing.
\end{proof}

Next we deal with bounded duplicating TRSs.
In order to exploit relative termination,
we insert $\Diamond$ symbols in front of tall aliens as follows.

\begin{definition}
Let $s$ be a native term
with base context $B$ and base sequence~$\myvec s$.
Then
$s^\Diamond = B[\Diamond(\myvec s)]$ where $\Diamond(\myvec s)$
denotes the result of 
replacing each element $u$ of $\myvec s$ by $\Diamond(u)$.
\end{definition}

\begin{lemma}
\label{lem-diamond-step}
If $s \to_\RR t$ then
$s^\Diamond \to_\RR {\cdot} \to_{\Diamond(x) \to x}^* t^\Diamond$.
\end{lemma}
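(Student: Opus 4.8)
The plan is to lift the single step $s \to_\RR t$ to a single $\RR$-step on $s^\Diamond$ and then delete the $\Diamond$-markers that have become superfluous. Write $s = B[\myvec s]$ for the decomposition of $s$ into base context and base sequence, so that $s^\Diamond = B[\Diamond(\myvec s)]$. First I would split on the position $p$ of the step $s \to_{p,\ell \to r} t$. If $p$ is a non-hole position of $B$, then, as in the proof of Lemma~\ref{lem-short-step}, the step is mirrored by $B \to_{p} C$ with $C$ shallow; since any holes of $B$ inside the redex lie at variable positions of $\ell$, the left-hand side still matches $B[\Diamond(\myvec s)]|_p$, and Lemma~\ref{lem-context-rewrite} yields a single step $s^\Diamond \to_{p} C[\Diamond(\myvec t)]$ in which every filling is some $\Diamond(s_i)$. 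If instead $p$ lies below a hole of $B$, the step takes place strictly inside a tall alien $s_j$, hence strictly below the fresh $\Diamond$ on top of it, and we simply rewrite $\Diamond(s_j) \to_\RR \Diamond(s_j')$, obtaining $s^\Diamond \to_p B[\dots,\Diamond(s_j'),\dots]$.

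In both cases the result has the shape $w = D[\Diamond(\myvec a)]$, where $D$ is a shallow context ($D = C$, resp.\ $D = B$), the $\Diamond$'s sit exactly on the holes of $D$, and deleting every $\Diamond$ recovers $t = D[\myvec a]$. By construction $D \Cleq t$ (in the first case this is the usual $C \Cleq t$ from Lemma~\ref{lem-context-rewrite}; in the second $B$ survives untouched above the rewritten alien), and $D$ is shallow by Lemma~\ref{lem-shallow-closed} together with the shallowness of base contexts. The crucial step is then to invoke Lemma~\ref{lem-base-increase}, which gives $D \Cleq B_t$ for the base context $B_t$ of $t$. Since $\Cleq$ can only fill holes, never create them, this inclusion means every hole of $B_t$ is already a hole of $D$; that is, every tall alien of $t$ occupies a top-level hole of $D$ and therefore already carries a $\Diamond$ in $w$.

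It follows that the $\Diamond$-positions of $t^\Diamond = B_t[\Diamond(\myvec b)]$ form a subset of the $\Diamond$-positions of $w$, and that on this common subset the two terms agree (at a hole of $B_t$ the corresponding $a_i$ is exactly the tall alien $b_k$). The remaining $\Diamond$'s in $w$ sit above those $a_i$ whose hole of $D$ was filled in $B_t$; such an $a_i$ is entirely absorbed into the base context of $t$ and, by the hole inclusion, contains no tall alien of $t$, so it is $\Diamond$-free below its top marker. Deleting each surplus marker with one application of $\Diamond(x) \to x$ turns $w$ into $t^\Diamond$. I expect the main obstacle to be precisely the temptation that a tall alien of $t$ might arise strictly inside a former tall alien that has partially fused with the context, which would force us to \emph{insert} a $\Diamond$ that the rule $\Diamond(x)\to x$ can never supply; the hole inclusion coming from $D \Cleq B_t$ is exactly what excludes this, guaranteeing that only deletions are ever required.
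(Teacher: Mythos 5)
Your proof is correct and follows essentially the same route as the paper: the same case split on whether the step occurs at a non-hole position of the base context $B$, the same lifting to a single $\RR$-step on $s^\Diamond$, and the same conclusion that afterwards only \emph{deletions} of markers are needed. Two small remarks. First, where the paper shows that the tall aliens of $t$ form a subsequence of the fillings by applying Lemma~\ref{lem-fusion} directly (with $L$ the max-top of the post-step context), you route through Lemma~\ref{lem-base-increase}; this is an equivalent detour, since that lemma is itself proved from Lemma~\ref{lem-fusion}, and your hole-inclusion reading $\Pos_\square(B_t) \subseteq \Pos_\square(D)$ is exactly the fact needed to rule out the insertion scenario you correctly identify as the main danger. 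Second, your justification that $\ell$ still matches $B[\Diamond(\myvec s)]|_p$ --- that the holes of $B$ inside the redex lie at (or below) variable positions of $\ell$ --- is not by itself sufficient when $\ell$ is non-linear: two occurrences of the same variable must match \emph{equal} terms after marking, which holds only because equal tall aliens receive equal markers, i.e.\ $\myvec s \propto \Diamond(\myvec s)$. This is precisely what Lemma~\ref{lem-balanced-rewrite} packages, and it is the lemma the paper invokes at this point; citing it (instead of Lemma~\ref{lem-context-rewrite} preceded by the bare matching claim) repairs the step with no further work.
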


\begin{proof}
Let $s \to_{p,\ell \to r} t$ and let $B$ be the base context of $s$.
If $p \in \Pos_\FF(B)$ then by Lemma~\ref{lem-balanced-rewrite}
we obtain a term $t'$ and a context $C$ such that
$s^\Diamond \to_{p,\ell \to r} t'$ and $B \to_{p,\ell \to r} C$.
Decomposing $t$ as $t = C[\myvec t]$ we find that
$t' = C[\Diamond(\myvec t)]$.
If $p \notin \Pos_\FF(B)$, then the rewrite step is within a tall alien
of $s$.
Hence letting $C = B$ and decomposing $t$ as $C[\myvec t]$,
we find that $s^\Diamond = C[\Diamond(\myvec s)] \to_\RR
C[\Diamond(\myvec t)]$.
In either case, Lemma~\ref{lem-fusion} (with $L$ equal to the max-top of
$C$)
shows that the tall aliens of $t$ are a subsequence of $\myvec t$,
and therefore $C[\Diamond(\myvec t)] \to_{\Diamond(x) \to x}^* t^\Diamond$,
using that $\Diamond(t_i) \to_{\Diamond(x) \to x} t_i$
for those $t_i$ that are not tall aliens.
\end{proof}

\begin{lemma}
\label{lem-ss-bd}
If $\RR$ is bounded duplicating then $\RRR{}$ is locally decreasing.
\end{lemma}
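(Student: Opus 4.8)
The plan is to retrace the left-linear proof of Lemma~\ref{lem-ss-ll}, replacing its linearization step---unavailable once rules may duplicate variables---by a well-founded measure read off from the relative termination of $\{\Diamond(x)\to x\}/\RR$. First I would apply the peak analysis (Lemma~\ref{lem-peak-analysis}) to a local peak $t \LLL[s_0]{} s \RRR[s_1]{} u$, obtaining foreign terms $s'$, $t'$, $u'$, $v'$, a bijection $\pi$ substituting the tall aliens of $s$, and, from confluence of the foreign term $s'$, a foreign valley $t' \to_\RR^* v' \lud{\RR}{*}\from u'$. Applying $\pi$ and using closure of rewriting under substitution, this lifts to $t \to_\RR^* v \lud{\RR}{*}\from u$ with $v = v'\pi$, mirrored by the shallow contexts $C_t = t'\pi_\square$ and $C_u = u'\pi_\square$. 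The task is to reorganize these two lifts into a decreasing diagram of short steps.

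Next I would fix the comparison of short steps deferred in the proof setup: declare a label $a$ to rank above a label $b$ exactly when $a^\Diamond \to_{\{\Diamond(x)\to x\}/\RR}^+ b^\Diamond$, a relation well-founded precisely because $\RR$ is bounded duplicating. Lemma~\ref{lem-diamond-step} ties this order to rewriting: an $\RR$-step on a native term is simulated on its $\Diamond$-decoration by a single $\RR$-step followed by a run of $\Diamond$-erasures, a $\Diamond$ being erased exactly when a tall alien fuses with the surrounding context. Hence every fusion strictly lowers the measure, while fusion-free steps preserve it.

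I would then cut each lift into short steps. By Lemma~\ref{lem-base-increase} the mirroring context satisfies $C_t \Cleq B_t$ for the base context $B_t$ of $t$; since rules may be non-left-linear, two occurrences of an alien coded by one variable of $t'$ can occupy holes of $C_t$ that $B_t$ fills unequally, and this differential fusion is what prevents the whole lift from being mirrored from $B_t$ at once. Instead I read off the maximal prefix mirrorable from the current base context as one short step (Definition~\ref{short steps}, each individual step short by Lemma~\ref{lem-shallow-step}), a new short step starting whenever a fusion has enlarged that base context. I label the first prefix from $t$ by $s_1$ (valid since $s_1 \to_\RR^* s \to_\RR^* t$). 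As every later break is provoked by a fusion, the source $w$ of each subsequent short step satisfies $s^\Diamond \to_{\{\Diamond(x)\to x\}/\RR}^+ w^\Diamond$; since $s_0, s_1 \to_\RR^* s$ gives $s_0^\Diamond, s_1^\Diamond \to_{\{\Diamond(x)\to x\}/\RR}^* s^\Diamond$, the label $w$ ranks strictly below both $s_0$ and $s_1$. The lift from $u$ is symmetric, with facing label $s_0$.

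Both sides reach the common reduct $v$, so the join is a sequence of short steps from $t$ and from $u$ down to $v$ in which only the facing steps carry the peak labels $s_1$ and $s_0$ while every other step ranks strictly below both; this is exactly a decreasing diagram, and hence $\RRR{}$ is locally decreasing. The crux, I expect, is the fusion bookkeeping above: proving that each break between consecutive short steps coincides with a $\Diamond$-erasure, so that every non-facing step genuinely drops below the facing labels. This is where bounded duplication is indispensable---it is what makes the $\Diamond$-comparison a well-founded order and at the same time forces the cascade of fusions to terminate.
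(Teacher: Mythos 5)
Your proposal is correct and rests on exactly the paper's ingredients: the peak analysis of Lemma~\ref{lem-peak-analysis}, the well-founded label order $a \succ b$ iff $a^\Diamond \to_{\{\Diamond(x)\to x\}/\RR}^+ b^\Diamond$, and Lemma~\ref{lem-diamond-step} to turn fusion of tall aliens into a strict measure decrease. The only real divergence is how the lifted join $t \to_\RR^* v$ is cut into short steps. You cut it into maximal prefixes mirrorable from the current base context, with fresh labels at each break; the paper uses a cruder binary dichotomy that makes this bookkeeping unnecessary: either $t'\pi_\square$ \emph{is} the base context $B_t$ of $t$, in which case the entire lift is a single short step $t \RRR[s_1]{} v$, or $t'\pi_\square \Clt B_t$, in which case the tall aliens of $t$ are a proper subsequence of the terms filling the holes of $t'\pi_\square$, so at least one $\Diamond$ is erased between $s^\Diamond$ and $t^\Diamond$, giving $s_1 \succ t$ --- and then \emph{every single} $\RR$-step of the lift is made a short step via Lemma~\ref{lem-short-step}, all carrying the one fixed label $t$, which ranks strictly below $s_1$. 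The paper's dichotomy also sidesteps the one imprecision in your write-up: a break in mirrorability need not be ``provoked by'' a fusion happening at that moment --- the blocking fusion may already have occurred during $s \to_\RR^* t$ (this is precisely the case $t'\pi_\square \Clt B_t$, where differential filling of the holes of $t'\pi_\square$ defeats a non-left-linear match from $B_t$). Your argument survives with ``preceded by'' in place of ``provoked by'', since no break can occur at all when $t'\pi_\square = B_t$, and once a single $\Diamond$-erasure separates $s^\Diamond$ from the current decoration, every later source $w$ also satisfies $s_1^\Diamond \to_{\{\Diamond(x)\to x\}/\RR}^+ w^\Diamond$. Finally, note that labels strictly below $s_1$ alone already fit the third slot of the decreasing-diagram condition (below $\alpha$ \emph{or} $\beta$), so the extra comparison with $s_0$ that you carry along is harmless but not needed.
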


\begin{proof}
Since $\RR$ is bounded duplicating, we may assume a fresh function
symbol $\Diamond$ such that $\{ \Diamond(x) \to x \} / \RR$ is
terminating. In order to compare the labels
we define a well-founded order on 
terms by $s_0 \succ s_1$ if
$s_0^\Diamond \to_{\{ \Diamond(x) \to x \} / \RR}^+ s_1^\Diamond$.
Consider a local peak $t \LLL[s_0]{} s \RRR[s_1]{} u$ which we first
subject to
Lemma~\ref{lem-peak-analysis}. We analyze the sequence $t \to_\RR^* v$
resulting from the peak analysis by distinguishing two cases.
\begin{enumerate}
\item
If $t'\pi_\square$ is the base context of $t$ then the rewrite sequence
$t'\pi_\square \to_\RR^* v'\pi_\square$ mirrors $t \to_\RR^* v$.
Hence we obtain
$t \RRR[s_1]{} v$, noting that the label $s_1$ satisfies
$s_1 \to_\RR^* s \to_\RR^* t$.
\item
If $t'\pi_\square$ is not the base context
then like in the proof of Lemma~\ref{lem-diamond-step},
we can decompose $t$ as $t = t'\pi_\square[\myvec t']$ in order to obtain
$s^\Diamond \to_\RR^* t'\pi_\square[\Diamond(\myvec t')]$.
Since $t'\pi_\square$ is not the base context,
the tall aliens of $t$ are a proper subsequence of $\myvec t'$
and therefore,
$t'\pi_\square[\Diamond(\myvec t')] \to_{\Diamond(x) \to x}^+ t^\Diamond$.
We also have $s_1 \to_\RR^* s$, which implies
$s_1^\Diamond \to_{\RR \cup \{ \Diamond(x) \to x \}}^* s$
by Lemma~\ref{lem-diamond-step}.
As a consequence,
$s_1^\Diamond \to_{\RR / \{ \Diamond(x) \to x \}}^+ t^\Diamond$
and $s_1 \succ t$ follow.
By repeated application of Lemma~\ref{lem-short-step} we obtain
$t \RRR[t]{*} v$ and thus $t \RRR[\curlyvee s_1]{*} v$.
\end{enumerate}
The analogous analysis of $u \to_\RR^* v$ yields $u \RRR[s_0]{} v$ or
$u \RRR[\curlyvee s_0]{*} v$ and hence $\RRR[]{}$ is locally decreasing.
\end{proof}

Finally, we prepare for the main result about layered TRSs, where
condition \cconsistent of Definition~\ref{def-laysys} is crucial.

\begin{lemma}
\label{lem-consistent}
Let $\RR$ be a layered TRS and $t \to_{p,\ell \to r} t'$
for native terms $t$ and $t'$
If $p \in \Pos_\FF(B)$ for the base context $B$ of $t$ then
either $B \to_{p,\ell \to r} B'$ for the base context $B'$ of $t'$
or $t'$ is its own base context.
\end{lemma}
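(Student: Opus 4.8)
The plan is to pin down the max-top $M$ of $t$ and to observe that, since the base context $B$ is obtained from $t$ by replacing the tall (rank-$r$) aliens by holes, we have $M \Cleq B \Cleq t$ and $M$ is also the max-top of $B$ (a strictly larger layer below $B$ would be below $t$, contradicting maximality of $M$ by Lemma~\ref{lem-max-top}). Everything then hinges on a case distinction according to whether the rewrite position satisfies $p \in \Pos_\FF(M)$, mirroring the proof of Lemma~\ref{lem-short-step}.

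First I would treat $p \in \Pos_\FF(M)$, which is where \cconsistent does the work. Here \crewrite supplies a layer $L$ with $M \to_{p,\ell \to r} L$, and \cconsistent forces $L$ to be either $\square$ or the max-top of $t'$. If $L = \square$ the step collapses the whole max-top at the root with a collapsing rule, so $t'$ is exactly the alien of $t$ sitting at the hole to which the right-hand variable is matched; being an alien of a native term it has rank at most $r$ and is therefore its own base context, yielding the second alternative. If instead $L$ is the max-top of $t'$, I would lift the step to the base level: writing $B = M[\myvec b]$ with $b_i = \square$ for the tall aliens and $b_i = s_i$ for the short ones, we have $\myvec s \propto \myvec b$, so Lemma~\ref{lem-balanced-rewrite} gives $B \to_{p,\ell \to r} L[\myvec b']$. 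By Lemma~\ref{lem-context-rewrite} every alien of $t' = L[\myvec a]$ is an alien of $t$ (hence has the same rank), and a hole of $L$ is filled with $\square$ in $\myvec b'$ precisely when the corresponding alien of $t'$ is tall; thus $L[\myvec b']$ is exactly the base context $B'$ of $t'$, giving $B \to_{p,\ell \to r} B'$.

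The case $p \notin \Pos_\FF(M)$ is where I expect the real difficulty, and where \cstepfusion is essential. Since $p \in \Pos_\FF(B)$ while tall aliens are holes of $B$, the position $p$ must lie strictly inside a short alien $s_j$, so $t' = M[\dots,s_j',\dots]$ with $\rank(s_j') \le \rank(s_j) < r$ by Lemma~\ref{lem-rank-decrease}, and $B \to_{p,\ell \to r} C$ where $C$ is $B$ with the same step performed inside its filled-in copy of $s_j$. The only threat to $C = B'$ is fusion from below: the max-top $L$ of $t'$ may properly extend $M$, and if it were to swallow a tall alien then holing the tall aliens of $t'$ would produce something larger than $C$. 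The crux is to exclude this with \cstepfusion: for any tall-alien position $q_i$ (necessarily untouched, since it is neither $p$ nor below it) we have $M \Cleq L$ with both layers, so $M[L|_{q_i}]_{q_i} \in \bL$; this context is still below $t$, whence maximality of $M$ forces $L|_{q_i} = \square$. Hence no tall alien is absorbed, the tall-alien positions of $t'$ coincide with those of $t$, and holing them in $t'$ reproduces exactly $C$, so $C = B'$ and $B \to_{p,\ell \to r} B'$.

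In summary, \cconsistent is decisive in the first case, where it forbids the rewritten max-top from being an intermediate layer that is neither $\square$ nor the max-top of $t'$ (fusion from above), while \cstepfusion is decisive in the second, where it forbids conspiring aliens from fusing a tall alien into an enlarged max-top. The main obstacle is this second case: one must argue that fusion from below can absorb only short material, never a tall alien, so that the base context tracks the rewrite step.
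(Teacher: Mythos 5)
Your proposal is correct and takes essentially the same route as the paper: the same case split on $p \in \Pos_\FF(M)$, with \cconsistent supplying the collapsing/max-top dichotomy in the first case (including the identical treatment of the $L = \square$ subcase) and \cstepfusion yielding the same maximality contradiction that prevents tall aliens from being absorbed in the second. The only cosmetic difference is that in the non-collapsing subcase you mirror the step on $B$ by citing Lemma~\ref{lem-balanced-rewrite}, whereas the paper redoes that construction by hand via Lemma~\ref{lem-context-instance} and an explicit substitution $\tau$; note that the tallness-preserving positional correspondence between $\myvec b'$ and the aliens of $t'$ that you assert is slightly more than the bare statement of Lemma~\ref{lem-balanced-rewrite} (which only gives set membership), but it is exactly what its proof, and the paper's explicit construction, provide.
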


\begin{proof}
Let $M$ and $M'$ be the max-tops of $t$ and $t'$. We distinguish two
cases.
\begin{enumerate}
\item
If $p \in \Pos_\FF(M)$ then by
\cconsistent either $M \to_{p,\ell \to r} \square$ or
$M \to_{p,\ell \to r} M'$. In the former case $t'$ equals an
alien of $t$. Since the rank of $t'$ is at most $r$, $t'$ is its
own base context. So assume $M \to_{p,\ell \to r} M'$.
By Lemma~\ref{lem-context-instance} there exist a term $m$
and a substitution $\sigma$
such that $m \to_{p,\ell \to r} m'$ for some $m'$ (since 
$\ell \matches m|_p$), $t = m\sigma$, and $M = m\sigma_\square$.
Define a substitution $\tau$ as follows:
\[
\tau(x) = \begin{cases}
\square & \text{if $x \in \dom(\sigma_\square)$ and
$\sigma(x)$ is a tall alien of $t$} \\
\sigma(x) & \text{otherwise}
\end{cases}
\]
We have $B = m\tau$ by construction of $\tau$. Let $B' = m'\tau$.
Clearly $B \to_{p,\ell \to r} B'$.
By comparing $m'\tau$ to $M' = m'\sigma_\square$, we see that $B'$ is
the base context of $t'$.
\item
If $p \notin \Pos_\FF(M)$ then a short alien of $t$ is rewritten.
By letting $B$ and $\myvec t$ be the base context and base sequence
of $t$, by Lemma~\ref{lem-context-rewrite} we obtain a rewrite step
$t = B[\myvec t] \to_{p,\ell \to r} B'[\myvec t'] = t'$
with $\myvec t' = \myvec t$ because $p$ is parallel to the
hole positions of $B$.
We claim that $B'$ is the base context of $t'$.
Suppose to the contrary that some $t_i$ is not a tall alien
of $t'$. Let $q$ be its position in $t$,
which is also its position in $t'$. Since $q \in \Pos_{\square}(M)$
and $q \notin \Pos_{\square}(M')$, $M \sqsubset M[M'|_q]_q$. Hence
$M[M'|_q]_q \in \bL$ by \cstepfusion and thus
$M[M'|_q]_q \sqsubseteq t$, contradicting
the fact that $M$ is a max-top of $t$.
\qed
\end{enumerate}
\end{proof}

The following example shows that \cstepfusion is essential for
Lemma~\ref{lem-consistent}.

\begin{example}
\label{ex-stepfusion-essential}
Recall Figure~\ref{fig-problems} and the underlying layer system $\bL$, 
which satisfies \crewrite and \cconsistent. However \cstepfusion is
violated, e.g., we have
$L = \m{k}(\square,\square) \in \bL$ and
$N = \m{k}(\m{h}(\square),\m{h}(\square)) \in \bL$ but
$L[N|_2]_2 = \m{k}(\square,\m{h}(\square)) \notin \bL$.
Consider the term $t = \m{k}(\m{f}(\m{a}),\m{h}(\m{a}))$ of rank $3$.
Its base context is $B = \m{k}(\m{f}(\m{a}),\square)$.
We have $t \to \m{k}(\m{h}(\m{a}),\m{h}(\m{a})) =: t'$.
The base context of $t'$ is $\m{k}(\m{h}(\square),\m{h}(\square)) =: B'$ 
but $B \not\to_\RR B'$.
\end{example}

\begin{lemma}
\label{lem-ss-con}
If $\RR$ is layered then $\RRR{}$ is locally decreasing.
\end{lemma}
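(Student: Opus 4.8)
The plan is to follow the proof of Lemma~\ref{lem-ss-bd} closely, keeping the peak analysis of Lemma~\ref{lem-peak-analysis} but swapping the relative-termination order on labels for the order that compares predecessors by rank (so that $s_1 \succ t$ means $\rank(s_1) > \rank(t)$), which is well founded because ranks are natural numbers. In place of the $\Diamond$-measure the structural engine will be Lemma~\ref{lem-consistent}, from which I would extract the dichotomy that for a short step $s \RRR{} t$ in a layered system either the witnessing mirror context is already the base context of $t$, or $t$ is foreign.

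Concretely, I would take a local peak $t \LLL[s_0]{} s \RRR[s_1]{} u$, apply Lemma~\ref{lem-peak-analysis} to obtain foreign terms $s', t', u', v'$ together with substitutions $\pi, \pi_\square$ such that $t'\pi_\square$ and $u'\pi_\square$ are shallow contexts of $t$ and $u$ and a valley $t \to_\RR^* v \lud{\RR}{*}\from u$ with $v = v'\pi$, and then argue symmetrically on the two sides. For the side $t \to_\RR^* v$: if $t'\pi_\square$ is the base context of $t$, then $t'\pi_\square \to_\RR^* v'\pi_\square$ mirrors the sequence in the base context and $t \RRR[s_1]{} v$, the label $s_1$ being legitimate because $s_1 \to_\RR^* s \to_\RR^* t$. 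Otherwise the dichotomy gives that $t$ is foreign, so $t$ is its own base context, each rewrite step comprising $t \to_\RR^* v$ is short by Lemma~\ref{lem-short-step}, and labelling each by its (foreign) source yields $t \RRR[\curlyvee s_1]{*} v$; here $s_1 \succ t$ because $s$ necessarily has rank $r+1$ (if $s$ were foreign it would have no tall aliens, $\pi$ would be empty, and $t'\pi_\square = t$ would be the base context, placing us in the first case), so every predecessor of $s$, and in particular $s_1$, has rank at least $r+1$ while $\rank(t) \leq r$. Assembling the two sides produces, in each orientation, either a single facing step (labelled $s_1$, respectively $s_0$) or a run of strictly smaller steps, which is exactly the shape of a decreasing diagram.

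The main obstacle is the dichotomy itself, which I would prove by induction along the rewrite sequence underlying the short step, applying Lemma~\ref{lem-consistent} at every step. Each constituent rewrite step lies at a position in $\Pos_\FF$ of the current base context, since by Lemma~\ref{lem-base-increase} the mirror context is $\Cleq$ the base context; hence Lemma~\ref{lem-consistent} applies and tells us that the base context either rewrites to the base context of the successor or the successor is foreign. If the second alternative ever occurs, all later terms, and in particular $t$, are foreign by Lemma~\ref{lem-rank-decrease}; if it never occurs, the base contexts evolve by the same rules at the same positions as the mirror, so the mirror context is the base context of $t$. This is exactly where conditions \cconsistent and \cstepfusion are used, through the two cases of Lemma~\ref{lem-consistent}: without ruling out fusion from above and conspiring aliens the mirror context could sit strictly below the base context while $t$ still has rank $r+1$, and then the rank label would fail to decrease and the diagram would not close decreasingly.
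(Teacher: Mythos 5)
Your proof is correct, and its skeleton---the peak analysis of Lemma~\ref{lem-peak-analysis} followed by repeated application of Lemma~\ref{lem-consistent} along the mirrored sequence---is exactly the paper's; your explicit dichotomy (maintaining inductively that the mirror context equals the current base context, by determinism of rewriting at a fixed position with a fixed rule, or else that the current term is foreign and stays foreign by Lemma~\ref{lem-rank-decrease}) is precisely what the paper compresses into ``we find by repeated application of Lemma~\ref{lem-consistent} that the base context $B_t$ of $t$ equals $t'\pi_\square$ or $t$.'' The one genuine divergence is how you close the second case. The paper observes that when $t$ is its own base context, the valley $t \to_\RR^* v$ \emph{mirrors itself}, so it is a \emph{single} short step $t \RRR[s_1]{} v$, i.e.\ a facing step just as in the first case; consequently both sides of the peak close with one facing step each ($t \RRR[s_1]{} v$ and $u \RRR[s_0]{} v$) and no well-founded order on short-step labels is needed at all. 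You instead decompose $t \to_\RR^* v$ into several short steps labelled by their foreign sources and buy decreasingness from a rank-based label order; this is sound---your verifications that in the second case $\rank(s_1) \geq r+1 > r \geq \rank(t)$ (via Lemma~\ref{lem-rank-decrease} and the observation that a foreign source forces the first case), and that a term is a legitimate label for a step it sources, both hold, and the resulting diagram has the same decreasing shape as in Lemma~\ref{lem-ss-bd}---but it is heavier machinery than necessary. What your route buys is structural parallelism with the bounded-duplicating proof; what the paper's buys is the sharper insight that for layered systems short-step peaks close with facing steps alone.
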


\begin{proof}
Consider a local peak $t \LLL[s_0]{} s \RRR[s_1]{} u$. First we analyze
the peak by Lemma~\ref{lem-peak-analysis}.
The rewrite sequence $t'\pi_\square \to_\RR^* v'\pi_\square$ mirrors
$t = t'\pi \to_\RR^* v'\pi = v$.
We find by repeated application
of Lemma~\ref{lem-consistent} that the base context $B_t$ of $t$ equals
$t'\pi_\square$ or $t$.
In both cases, we have $t \RRR[s_1]{} v$, noting that $t \to_\RR^* v$
mirrors itself, and that $s_1 \to_\RR^* t$.
We obtain $u \RRR[s_0]{} v$ in the same way and hence 
$\RRR[]{}$ is locally decreasing.
\end{proof}

\subsection{Proof of Main Theorems}
\label{sec-dd-proof}

Because the proofs are similar, we prove all main results in one go.

\begin{proof}[of Theorems~\ref{thm-main-ll}, \ref{thm-main-bd},
and \ref{thm-main-con}]
By assumption 
the TRS~$\RR$ is weakly layered and confluent on terms of rank~one.
We have to show that
\begin{itemize}
\item[~~-]
if $\RR$ is left-linear then $\RR$ is confluent
(Theorem~\ref{thm-main-ll}),
\item[~~-]
if $\RR$ is bounded duplicating then $\RR$ is confluent
(Theorem~\ref{thm-main-bd}), and
\item[~~-]
if $\RR$ is layered then $\RR$ is confluent (Theorem~\ref{thm-main-con}).
\end{itemize}
We show confluence of 
all terms by induction on the rank~$r$ of a term. In the base case
we consider terms of rank one, which are confluent by assumption.
Assume as induction hypothesis that confluence of terms of rank $r$
or less has been established. We consider terms of rank $r+1$,
to which the analysis of Sections~\ref{sec-setup}--\ref{sec-dd-short}
applies. By Lemma~\ref{lem-main} in conjunction
with Lemma~\ref{lem-ss-ll} (for weakly layered left-linear~$\RR$),
Lemma~\ref{lem-ss-bd} (for weakly layered bounded duplicating~$\RR$),
or Lemma~\ref{lem-ss-con} (for layered~$\RR$),
we obtain confluence of $\RR$ on terms of rank up to $r+1$,
completing the induction step.
\end{proof}

\section{Applications}
\label{sec-applications}

In this section the abstract confluence results via layer systems are
instantiated by concrete applications. Section~\ref{app-modularity}
treats the plain modularity case~\cite{T87} and
Section~\ref{app-layer-preservation}
covers layer-preservation~\cite{O94}. The result for quasi-ground
systems~\cite{KST95}
is less known but also fits our framework, as outlined
in Section~\ref{app-quasi-ground}. Currying~\cite{K95} is the topic of
Section~\ref{app-currying}, before many-sorted persistence~\cite{AT97} is
discussed in Section~\ref{app-many-sorted}. 

For the results in this section the reverse directions also hold. We do
not give the (easy) proofs since they do not require layer systems.

In Sections~\ref{app-modularity},
\ref{app-layer-preservation}, and
\ref{app-quasi-ground} we deal with two TRSs $\RR_1$ and $\RR_2$ that
are defined over the respective signatures $\FF_1$ and $\FF_2$.
We let $\RR = \RR_1 \cup \RR_2$ and $\FF = \FF_1 \cup \FF_2$.

\subsection{Modularity}
\label{app-modularity}

We recall the classical modularity result for confluence~\cite{T87}.

\begin{theorem}
\label{thm-modularity}
Suppose $\FF_1 \cap \FF_2 = \varnothing$.
If $\RR_1$ and $\RR_2$ are confluent then $\RR$ is confluent.
\end{theorem}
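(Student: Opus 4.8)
The plan is to instantiate the abstract machinery of Section~\ref{sec-main} with the layer system
\[
\bL = \CC(\FF_1,\VV) \cup \CC(\FF_2,\VV),
\]
whose layers are exactly the contexts built from a \emph{single} one of the two signatures (together with variables and holes), as already announced in Example~\ref{ex-vO08b}. With this choice the max-top of a term $t$ is its maximal homogeneous prefix: peel off from the root the largest connected region of symbols belonging to the signature of $\rt(t)$, replacing the topmost subterms rooted in the other signature by holes. Consequently the aliens of $t$ are precisely the maximal alien subterms in Toyama's original sense, and a term has rank one iff it lies entirely in $\TT(\FF_1,\VV)$ or $\TT(\FF_2,\VV)$. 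Since $\FF_1 \cap \FF_2 = \varnothing$, no rule of $\RR_2$ can apply inside a pure $\FF_1$-term and vice versa, so on rank-one terms $\RR$ coincides with $\RR_1$ (resp.\ $\RR_2$) and is confluent by hypothesis. The goal is then to verify that $\RR$ is \emph{layered} according to $\bL$ and to invoke Theorem~\ref{thm-main-con}; one must use this version rather than Theorem~\ref{thm-main-ll} or~\ref{thm-main-bd}, since modularity is claimed for arbitrary (possibly duplicating, non-left-linear) confluent systems.

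First I would dispatch the structural conditions on $\bL$. Both \ctop (take $f(\square,\dots,\square)$ for $f = \rt(t)$) and \cvar (membership in $\bL$ depends only on the set of function symbols, which is unaffected by swapping variables and holes) are immediate. For \cpartial and \cstepfusion the key observation is that a defined merge $L|_p \merge N$, or a comparison $L \Cleq N$, forces the roots to agree whenever both sides are non-empty, hence forces $N$ into the same $\CC(\FF_i,\VV)$ as the nontrivial part of $L$; substituting an $\FF_i$-context into an $\FF_i$-context stays within $\CC(\FF_i,\VV)$, so $\bL$ is closed as required, and the degenerate cases involving a hole are trivial.

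The heart of the argument is \crewrite together with \cconsistent. Suppose $M$ is the max-top of $s$, $p \in \Pos_\FF(M)$, and $s \to_{p,\ell\to r} t$, writing $\sigma$ for the matching substitution with $s|_p = \ell\sigma$. Since $p$ is a function position of the homogeneous context $M$, say $M \in \CC(\FF_1,\VV)$, the redex root lies in $\FF_1$, whence $\ell \to r \in \RR_1$. The function-symbol skeleton of $\ell$ consists of $\FF_1$-symbols connected to the root at $p$, and these are exactly the positions that survive into the max-top; therefore $\ell$ already matches $M|_p$, and applying the rule yields $M \to_{p,\ell\to r} L$ with $L = M[r\theta]_p \in \CC(\FF_1,\VV) \subseteq \bL$, giving \crewrite. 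For \cconsistent I would argue that, because $M$ is obtained from $s$ by holing out the maximal $\FF_2$-subterms, the matching context $\theta$ sends each rule variable to the max-top of the corresponding subterm of $s$ (or to $\square$ when that subterm is $\FF_2$-rooted); as $r$ is itself over $\FF_1 \cup \VV$, the term $r\theta$ is precisely the maximal $\FF_1$-prefix of the new subterm $r\sigma$ created in $t$. Since $t$ agrees with $s$ away from $p$, this makes $L = M[r\theta]_p$ exactly the max-top of $t$, degenerating to the permitted case $L = \square$ precisely when $p = \epsilon$ and the redex is rewritten to an $\FF_2$-rooted term. Hence $\RR$ is layered, and Theorem~\ref{thm-main-con} delivers confluence of $\RR$.

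The main obstacle I anticipate is the bookkeeping behind \cconsistent: one must track how the matching substitution $\sigma$ on $s$ restricts to the context substitution $\theta$ on $M$ and verify that rewriting the max-top produces neither more nor less than the max-top of the reduct, thereby excluding both the partial fusion and the fusion from above of Figure~\ref{fig-problems}. This is exactly where disjointness of the signatures is used essentially, since it guarantees that the $\FF_1$-skeleton of a redex never crosses an alien boundary. Everything else reduces to routine closure checking.
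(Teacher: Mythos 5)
Your proposal is correct and takes essentially the same route as the paper's proof: the same layer system $\bL = \CC(\FF_1,\VV) \cup \CC(\FF_2,\VV)$, the same verification of the layering conditions, and the same appeal to Theorem~\ref{thm-main-con} after observing that rank-one terms are exactly the pure terms of $\TT(\FF_1,\VV) \cup \TT(\FF_2,\VV)$ and hence confluent by hypothesis. Your checks of \crewrite and \cconsistent are merely more explicit about the matching substitution than the paper's terse version, but the key point --- aliens are rooted in the other signature and so cannot fuse with the rewritten max-top, with $L = \square$ arising only from a collapsing root step onto an alien --- is identical.
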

\begin{proof}
Define
\[
\bL := \CC(\FF_1,\VV) \cup \CC(\FF_2,\VV)
\]
We show that $\RR$ is layered.
Since $\VV \subseteq \bL$ and $f(\square,\dots,\square) \in \bL$ for all
function symbols $f \in \FF_1 \cup \FF_2$, every term in
$\TT(\FF,\VV)$ has a non-empty top. Hence condition \ctop holds.
Also condition \cvar holds because $\bL$ is closed under
the operation of interchanging variables and holes.
For condition \cpartial we observe that if
${L \in \CC(\FF_i,\VV)}$, $p \in \Pos_\FF(L)$, and
$N \in \bL$ such that $L|_p \merge N$ is defined then
$\rt(L|_p) \in \FF_i$ and thus
$N \in \CC(\FF_i,\VV)$. Consequently,
$L[L|_p \merge N]_p \in \CC(\FF_i,\VV) \subseteq \bL$.
Since each rule is over a single signature, and layers are closed
under rewriting, condition \crewrite follows easily.
For condition \cconsistent we consider a 
term~$s$ with max-top~$M$, $p \in \Pos_\FF(M)$, and
rewrite step $s \to_{p,\ell \to r} t$ which is mirrored
by $M \to_{p,\ell \to r} L$.
Suppose $M \in \CC(\FF_i,\VV)$. We have $L \in \CC(\FF_i,\VV)$.
The case $L = \square$ is obtained when $t$ is an alien of $s$,
which is only possible if the rule $\ell \to r$ is collapsing.
Otherwise $L$ is the max-top of $t$
since the root symbols of aliens of $s$ belong to $\FF_{3-i}$ and
hence cannot fuse with $L$ to form a larger top.
Finally, condition \cstepfusion holds because if
$N \in \CC(\FF_i,\VV)$ then
$L \Cleq N$ implies $L \in \CC(\FF_i,\VV)$
and thus also $L[N|_p]_p$ belongs to
$\CC(\FF_i,\VV)$.

According to Theorem~\ref{thm-main-con}, $\RR$ is confluent
if we show that $\RR$ is confluent on terms of rank one. The latter follows
from the fact that rewriting does not increase the rank of a term
(Lemma~\ref{lem-rank-decrease}) together with the observation
that non-variable terms of rank one belong to either
$\TT(\FF_1,\VV)$ or $\TT(\FF_2,\VV)$ and only rewrite rules of $\RR_i$
apply to terms in $\TT(\FF_i,\VV)$, in connection with the confluence
assumptions of $\RR_1$ and $\RR_2$.
\end{proof}

\subsection{Layer-Preservation}
\label{app-layer-preservation}

Layer-preserving TRSs are a special class of TRSs with
shared function symbols
for which confluence is modular as shown
in \cite{O94}. In this section, we reprove this result using
layer systems.
Let $\TT_X(\FF,\VV)$ denote the set of terms with root symbol from $X$.
Let $\BB := \FF_1 \cap \FF_2$, $\DD_1 := \FF_1 \setminus \FF_2$ and
$\DD_2 := \FF_2 \setminus \FF_1$.
The result on layer preservation can be stated as follows.

\begin{theorem}
\label{thm-layer-preservation}
Let 
$\RR_1 \subseteq \TT(\BB,\VV)^2 \cup \TT_{\DD_1}(\FF_1,\VV)^2$,
$\RR_2 \subseteq \TT(\BB,\VV)^2 \cup \TT_{\DD_2}(\FF_2,\VV)^2$, and
$\RR_1 \cap \TT(\BB,\VV)^2 = \RR_2 \cap \TT(\BB,\VV)^2$.
If $\RR_1$ and $\RR_2$ are confluent then $\RR$ is confluent.
\end{theorem}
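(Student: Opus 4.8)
The plan is to instantiate Theorem~\ref{thm-main-con} by exhibiting a layer system for which $\RR = \RR_1 \cup \RR_2$ is layered and confluent on terms of rank one. The tempting choice $\bL = \CC(\FF_1,\VV) \cup \CC(\FF_2,\VV)$ (as in the modularity proof) fails \cpartial: a shared symbol $b \in \BB$ lets a $\DD_1$-rooted and a $\DD_2$-rooted layer fuse across $b$. Indeed, with $d_i \in \DD_i$, merging $N = b(d_2(\square))$ into $L = d_1(b(\square))$ below the lower $b$ yields $d_1(b(d_2(\square)))$, which lies in neither $\CC(\FF_1,\VV)$ nor $\CC(\FF_2,\VV)$. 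The fix is to forbid base-rooted contexts from carrying any colour. Accordingly I would set
\[
\bL = \{\, C \in \CC(\FF_1,\VV) \mid \rt(C) \in \DD_1 \,\}
    \cup \{\, C \in \CC(\FF_2,\VV) \mid \rt(C) \in \DD_2 \,\}
    \cup \CC(\BB,\VV),
\]
so that a layer is either a pure base context or an $\FF_i$-context whose root is a genuine $\DD_i$-symbol (base symbols may still occur strictly below a coloured root). This matches the intended stratification, in which base symbols are absorbed into the coloured layer above them, while a base-rooted maximal base context forms a layer of its own.

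Next I would verify the six conditions. \ctop and \cvar are immediate, since $f(\square,\dots,\square) \in \bL$ for every $f \in \FF$ and the membership conditions are insensitive to exchanging variables and holes. For \cpartial the decisive point is that $L|_p \merge N$ can only be defined when $\rt(N) = \rt(L|_p)$: if this root lies in $\DD_i$ then $N \in \CC(\FF_i,\VV)$, while if it lies in $\BB$ then $N$ must be a pure base context (the coloured clauses of $\bL$ require a $\DD_i$ root). In either case the fusion stays within a single $\FF_i$ (or within $\BB$) and the root of $L$ is unchanged, so the result is again a layer. Conditions \crewrite, \cconsistent, and \cstepfusion exploit that each rule of $\RR$ is either a shared base rule (both sides in $\TT(\BB,\VV)$) or a $\DD_i$-rooted rule over $\FF_i$: rewriting a layer therefore yields a context over the same signature whose root is either unchanged or again in $\DD_i$, and only a collapsing (necessarily base) rule can produce $\square$.

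The crux is \cconsistent, i.e.\ ruling out fusion from above. Here I would argue that the aliens of a $\DD_i$-rooted layer are all $\DD_j$-rooted with $j \neq i$ (the stratification breaks only at the opposite colour, variables being absorbed via \cvar), and the aliens of a pure base layer are $\DD_k$-rooted. Since $\DD_j \cap \FF_i = \varnothing$ and $\DD_k \cap \BB = \varnothing$, no alien can merge with the rewritten top, so the image $L$ in \crewrite is again a \emph{maximal} top of $t$ unless the rule collapses it to $\square$. This is precisely the place where the pure-base restriction in $\bL$ earns its keep, playing the role that \cstepfusion plays in Lemma~\ref{lem-consistent}; I expect this to be the main obstacle, together with getting the definition of $\bL$ right in the first place.

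Finally, terms of rank one are exactly the single layers: pure base terms over $\BB$, or $\DD_i$-rooted terms in $\TT(\FF_i,\VV)$. Each such class is closed under $\RR$, since the rank cannot increase by Lemma~\ref{lem-rank-decrease} and the root colour is preserved (a root step uses a $\DD_i$-rule, keeping the root in $\DD_i$, and base rules apply only below a coloured root). On the $\DD_i$-rooted class $\RR$ acts as $\RR_i$ together with the shared base rules, and on the base class as those shared rules alone, which coincide for $\RR_1$ and $\RR_2$ by the hypothesis $\RR_1 \cap \TT(\BB,\VV)^2 = \RR_2 \cap \TT(\BB,\VV)^2$. Confluence of $\RR_1$ and $\RR_2$ thus yields confluence of $\RR$ on terms of rank one, and Theorem~\ref{thm-main-con} delivers confluence of $\RR$.
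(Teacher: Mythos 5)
Your proposal is correct and coincides with the paper's own proof: your $\bL$ is literally the paper's layer system $\CC(\BB,\VV) \cup \TT_{\DD_1}(\FF_1 \cup \{\square\},\VV) \cup \TT_{\DD_2}(\FF_2 \cup \{\square\},\VV)$ written in different notation, and you conclude via Theorem~\ref{thm-main-con} with the same rank-one argument (rules of $\RR_{3-i}$ applicable in a peak from a term of $\TT(\FF_i,\VV)$ are base rules and hence belong to $\RR_i$). The only difference is one of detail: the paper dismisses the verification of the layer conditions as ``much like in the modularity case,'' whereas you spell them out, correctly identifying the $\DD_i$-root restriction as what rescues \cpartial and \cconsistent in the presence of shared symbols.
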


\begin{proof}
We define
\[
\bL := \CC(\BB,\VV) \cup
\TT_{\DD_1}(\FF_1 \cup \{ \square \},\VV) \cup
\TT_{\DD_2}(\FF_2 \cup \{ \square \},\VV)
\]
It is easy to verify that $\bL$ layers
$\RR := \RR_1 \cup \RR_2$, much like in the modularity case.
In particular, $\bL$ is closed under rewriting.
Consider a term $s$ of rank one and a peak
$t \lud{\RR}{*}\from s \rud{\RR}{*}\to u$.
Let $i \in \{ 1, 2 \}$ be such that
$s \in \TT(\FF_i,\VV)$. The only rules of $\RR_{3-i}$ that can be
used in the peak come from $\TT(\BB,\VV)^2$ and hence
also appear in
$\RR_i$. Since $\RR_i$ is confluent on $\TT(\FF_i,\VV)$
we obtain joinability of $t$ and $u$ in $\RR_i$ and thus also in~$\RR$.
Hence $\RR$ is confluent on terms of rank one and we conclude by
Theorem~\ref{thm-main-con}.
\end{proof}

Toyama's modularity result has been adapted in~\cite{O94a}
to constructor-sharing combinations in which the participating TRSs may
share constructor symbols under the additional condition that
neither collapsing nor constructor-lifting rules are present.
This result is subsumed by Theorem~\ref{thm-layer-preservation},
cf.~\cite[p.~249]{O02}.
Still, layer-preservation and modularity are incomparable
(since layer-preservation places collapsing rules in both systems).

\subsection{Quasi-Ground Systems}
\label{app-quasi-ground}

We show modularity of quasi-ground TRSs \cite[Theorem 1]{KST95} using
layer systems.

\begin{definition}
\label{def-quasi-ground}
We call a context $C$ \emph{quasi-ground} if for all
$p \in \Pos(C)$ with $\rt(C|_p) \in \FF_1 \cap \FF_2$,
$C|_p$ is ground over $\FF$, i.e., $C|_p \in \TT(\FF)$.
\end{definition}

\begin{theorem}
Suppose
$\rt(\ell) \notin \FF_1 \cap \FF_2$ and $\ell$ and $r$ are quasi-ground,
for all $\ell \to r \in \RR$.
If $\RR_1$ and $\RR_2$ are confluent then $\RR$ is confluent.
\end{theorem}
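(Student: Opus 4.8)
The plan is to produce a layer system $\bL$ according to which $\RR$ is layered and then to apply Theorem~\ref{thm-main-con}, following the pattern of Theorems~\ref{thm-modularity} and~\ref{thm-layer-preservation}. The quasi-groundness hypothesis means that shared symbols occur only inside ground subterms, so a maximal ground subterm rooted at a symbol of $\BB$ behaves like an opaque constant. Accordingly the decomposition must cut at every $\DD_1$/$\DD_2$ boundary and must also split off a shared symbol whenever its arguments cannot be kept inside a single signature. Concretely I would take $\bL = \bL_1 \cup \bL_2 \cup \bL_\BB$ with
\[
\bL_i = \{ C \in \CC(\FF_i,\VV) \mid C \text{ is quasi-ground} \}
\qquad
\bL_\BB = \{ g(\seq{z}) \mid g \in \BB,\ z_j \in \VC \cup \TT(\FF_1) \cup \TT(\FF_2) \}.
\]
In a context from $\bL_i$ every shared symbol is the root of a ground $\FF_i$-block belonging to the layer, whereas $\bL_\BB$ supplies the cut layers that peel a shared symbol off when its argument subtrees are non-ground or of mixed signature.

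First I would verify that $\bL$ is a layer system and that $\RR$ is layered. Conditions \ctop and \cvar are routine. For \cpartial and \cstepfusion the point is that a context in $\bL_i$ never has a hole below a symbol of $\BB$, so merging two layers or filling a hole can neither create a hole under a shared symbol nor mix $\DD_1$ with $\DD_2$; the cut layers are closed under the partial fills forced by \cstepfusion precisely because their arguments may be single-signature ground terms. The crux is \crewrite: for a step $s \to_{p,\ell\to r} t$ with $p \in \Pos_\FF(M)$ and max-top $M$, I must show that $\ell$ already matches within $M$. As $\rt(\ell) \notin \BB$, the redex is not rooted at a cut layer, so $M \in \bL_i$ and $\ell \to r \in \RR_i$; and since $\ell$ is quasi-ground, each shared symbol of $\ell$ lies inside a ground subterm, so the function-symbol skeleton of $\ell$ cannot descend into an alien of $M$ (whose root is a $\DD_{3-i}$ symbol, or a shared symbol not heading a ground $\FF_i$-block). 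Hence $\ell \matches M|_p$, giving $M \to_{p,\ell\to r} L \in \bL$, and the usual bookkeeping shows $L$ is again the max-top of $t$ or equals $\square$, which is \cconsistent. Making this ``redexes never cross a layer boundary'' argument precise --- in particular when the step takes place inside a ground $\BB$-block of $M$ --- is the step I expect to be the main obstacle.

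It remains to establish confluence of $\RR$ on $\bLT$; by construction $\bLT$ is closed under $\to_\RR$ (Lemmata~\ref{lem-bl-closed} and~\ref{lem-rank-decrease}), which ultimately rests on the preservation of quasi-groundness by $\to_\RR$. A term in $\bL_i \cap \TT(\FF,\VV)$ lies entirely over $\FF_i$, and since rules of $\RR_{3-i}$ are rooted in $\DD_{3-i}$ they cannot apply; thus every $\RR$-step on it is an $\RR_i$-step staying over $\FF_i$, and confluence is immediate from confluence of $\RR_i$. A term $g(\seq{z}) \in \bL_\BB$ has a shared root that no rule can rewrite, and its arguments are single-signature ground terms occupying parallel positions, so any peak decomposes into independent peaks in the arguments, each joined by confluence of the relevant $\RR_i$; reassembling the parallel valleys joins the original peak. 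Therefore $\RR$ is confluent on terms of rank one, and Theorem~\ref{thm-main-con} yields confluence of $\RR$.
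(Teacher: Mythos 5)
You follow the paper's route overall---the same quasi-ground single-signature layers, cut layers at shared symbols, verification that $\RR$ is layered, rank-one confluence, then Theorem~\ref{thm-main-con}---but your cut layers genuinely differ: the paper takes $\bL_c = \{ f(\seq{v}) \mid f \in \FF_1 \cap \FF_2,\ v_i \in \VC \}$, while your $\bL_\BB$ additionally admits single-signature ground arguments. This is more than cosmetic. With the paper's literal $\bL_c$, condition \cstepfusion in fact fails: take $L = g(\square,\square) \in \bL_c$ with $g \in \BB$ and $N = g(a,b) \in \bL_1$ with $a,b \in \TT(\FF_1)$ ground; then $L \Cleq N$ but $L[N|_1]_1 = g(a,\square)$ lies in none of $\bL_1$, $\bL_2$, $\bL_c$ (the shared root forces groundness in $\bL_1,\bL_2$, and $a \notin \VC$), so the paper's remark that \cstepfusion ``follows as in the proof of Theorem~\ref{thm-modularity}'' is too quick. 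Your enlargement closes exactly this hole---as you anticipate when you say the cut layers are closed under the fills forced by \cstepfusion precisely because ground single-signature arguments are allowed. The price is a larger set of rank-one terms (mixed ground terms under a shared root), and your parallel-peak argument for those, using that the shared root admits no root step and the arguments are rewritten independently by $\RR_1$ or $\RR_2$, is correct.

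Two remarks on the step you flag as the main obstacle. The case where the redex lies inside a ground $\BB$-block of $M$ is actually the easy one: such a block is ground, hence coincides with the corresponding subterm of $s$, so the redex sits entirely inside $M$ and quasi-groundness of $r$ together with groundness of the matching substitution keeps the reduct in $\bL$. The genuinely delicate point is non-left-linear matching: your skeleton argument correctly shows that $\Pos_\FF(\ell)$ stays inside $M|_p$ (a cut at a $\DD_i$-rooted position would contradict maximality of $M$, and a shared-rooted position of $\ell$ heads a ground block that is never cut), but $\ell \matches M|_p$ additionally requires that the two occurrences of a repeated variable $x$, instantiated to the same term $\sigma(x)$, are pruned by $M$ to \emph{equal} subcontexts. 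The paper settles this by introducing the substitution $\tau$ mapping each $x \in \Var(\ell)$ to the $\bL_i$-max-top of $\sigma(x)$: replacing $M$ below each position $pq$ with $q \in \Pos_x(\ell)$ by $\tau(x)$ again yields a quasi-ground top of $s$, so maximality of $M$ forces $M|_{pq} = \tau(x)$ for all such $q$, whence $M = M[\ell\tau]_p$ and $M \to_{p,\ell \to r} M[r\tau]_p \in \bL_i$; the same $\tau$ then drives the \cconsistent{} verification. With that argument inserted, your proof goes through.
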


\begin{proof}
We define a layer system $\bL := \bL_1 \cup \bL_2 \cup \bL_c$ with
\begin{align*}
\bL_i &= \{ C \in \CC(\FF_i,\VV) \mid
\text{$C$ is quasi-ground} \} \quad \text{for $i = 1, 2$} \\
\bL_c &= \{ f(\seq{v}) \mid
\text{$f \in \FF_1 \cap \FF_2$ and $v_i \in \VC$ for $1\leq i\leq n$}\}
\end{align*}
We readily check that \ctop and \cvar are satisfied. For
\cpartial, $\bL_1$, $\bL_2$ and $\bL_c$ are individually
closed under merging at function positions. Fix $i \in \{ 1, 2 \}$.
If we merge $L \in \bL_i$ with $N \in \bL_{3-i} \cup \bL_c$
at $p \in \Pos_\FF(L)$ then either $N = \square$ and
$L[L|_p \merge N] = L \in \bL_i$, or
$\rt(L|_p) \in \FF_1 \cap \FF_2$, which implies
$L|_p \in \TT(\FF)$ and hence
$L[L|_p \merge N]_p = L[L|_p]_p = L \in \bL_i$.
Note that $L \in \bL_c$ can be merged with $N \in \bL_i$ only 
at position $p = \epsilon$. If $N = \square$ then
$L \merge \square = L \in \bL_c$
and otherwise $L \merge N = N \in \bL_i$.
For \crewrite we
let $M$ be the max-top of $s$, $p \in \Pos_\FF(M)$,
and consider a rewrite step $s \to_{p,\ell \to r} t$.
We assume without
loss of generality that $\ell \to r \in \RR_1$. Hence
$M \in \bL_1$ because $\rt(\ell) \in \FF_1 \setminus \FF_2$.
Note that $\bL_1$ is closed under taking subterms and that for any
substitution $\tau\colon \VV \to \bL_1$ we have $\ell \tau \in \bL_1$.
Let $\sigma$ be a substitution such that $s|_p = \ell\sigma$ and
let $\tau$ be the substitution that maps each
variable $x \in \Var(\ell)$ to the $\bL_1$-max-top of $\sigma(x)$.
We have $M = M[\ell\tau]_p$ and thus $M \to_{p,\ell \to r} L$ with
$L = M[r\tau]_p \in \bL_1$. For \cconsistent it is easy to see that $L$
is the $\bL_1$-max-top of $t$. Suppose $L \neq \square$. We
claim that $L$ is the max-top (with respect to $\bL$) of $t$. This follows
from the observation that if there is a top of $t$ that comes from
$\bL_2$ or $\bL_c$ then $\rt(L) \in \FF_1 \cap \FF_2$ and thus
$L \in \TT(\FF)$, which cannot be made larger.
Condition \cstepfusion follows as in the proof of
Theorem~\ref{thm-modularity}.

Now let $\RR_1$ and $\RR_2$ be confluent. We show that $\RR$ is confluent
on terms of rank one. Consider a term of rank one. Note that rules from
$\RR_1$
only apply to elements of $\bL_1$. Furthermore, $\bL_1$ is closed under
rewriting by $\RR_1$. Likewise, rules from $\RR_2$
only apply to elements of $\bL_2$, which is closed under rewriting
by $\RR_2$. We conclude that $\RR$ is confluent on terms of rank one and by
Theorem~\ref{thm-main-con} this implies that $\RR$ is confluent.
\end{proof}

\subsection{Currying}
\label{app-currying}

Currying is a transformation of TRSs such that the resulting TRS has
only one non-constant function symbol $\m{Ap}$ that represents
partial applications. It is useful in the construction of
polynomial-time procedures for deciding properties of TRSs, e.g.,
\cite{CGN01}. \cite{K95}~proved that confluence is
preserved by currying.

\begin{definition}
Given a TRS $\RR$ over a signature $\FF$, let 
$\FF_\CC = \{ \m{Ap} \} \cup \{ f_0 \mid f \in \FF \}$ where $\m{Ap}$ is
a fresh binary
function symbol and all function symbols in $\FF$ become constants.
The \emph{curried version} $\Cu(\RR)$ of $\RR$ is the TRS over the
signature $\FF_\CC$ with rules
$\{ \Cu(\ell) \to \Cu(r) \mid \ell \to r \in \RR \}$. Here
$\Cu(t) = t$ if $t$ is a variable or a constant and
$\Cu(f(\seq{t})) =
\m{Ap}( \cdots \m{Ap}(f_0,\Cu(t_1)) \cdots , \Cu(t_n))$
(with $n$ occurrences of $\m{Ap}$).
Let $\FF_\UU = \{ \m{Ap} \} \cup \{ f_i \mid \text{$f \in \FF$ and
$0 \leqslant i \leqslant \ari(f)$} \}$, where 
each $f_i$ has arity $i$ and $f_{\ari(f)}$ is identified with $f$.
The \emph{partial parametrization} $\PP(\RR)$ of $\RR$ is the TRS
$\RR \cup \UU$ over the signature $\FF_\UU$,
where $\UU$ consists of all \emph{uncurrying} rules:
\[
\m{Ap}(f_i(\seq[i]{x}),x_{i+1}) \to f_{i+1}(\seq[i+1]{x})
\]
for all $f \in \FF$ and $0 \leqslant i < \ari(f)$.
\end{definition}

The next example familiarizes the reader with the above concepts.

\begin{example}
\label{ex-currying}
For the TRS $\RR = \{ \m{f}(x,x) \to \m{f}(\m{a},\m{b}) \}$ we have
\begin{align*}
\Cu(\RR) &= \{ \m{Ap}(\m{Ap}(\m{f}_0,x),x) \to
\m{Ap}(\m{Ap}(\m{f}_0,\m{a}),\m{b}) \} \\
\UU      &= \{ \m{Ap}(\m{f}_0,x) \to \m{f}_1(x),
               \m{Ap}(\m{f}_1(x),y) \to \m{f}(x,y) \} \\
\PP(\RR) &= \RR \cup \UU
\end{align*}
Note that for a term $s = \m{Ap}(\m{Ap}(\m{Ap}(\m{f}_0,x),x),x)$
we have
\begin{align*}
s &\to_{\Cu(\RR)} \m{Ap}(\m{Ap}(\m{Ap}(\m{f}_0,\m{a}),\m{b}),x)
\intertext{and}
s &\to_\UU \m{Ap}(\m{Ap}(\m{f}_1(x),x),x) 
   \to_\UU \m{Ap}(\m{f}(x,x),x)
   \to_{\RR} \m{Ap}(\m{f}(\m{a},\m{b}),x)
\end{align*}
so the partial parametrization is closely related to currying.
\end{example}

Note that $\UU$ is both terminating and orthogonal, hence confluent. 
By $\rnf[\UU]{s}$ we denote the unique $\UU$-normal form of a term~$s$.

\begin{lemma}[{\cite[Proposition~3.1]{K95}}]
\label{lem-cur-pp-cur}
Let $\RR$ be a TRS.
If $\PP(\RR)$ is confluent then $\Cu(\RR)$ is confluent.
\qed
\end{lemma}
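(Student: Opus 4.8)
The plan is to transfer confluence from $\PP(\RR)$ to $\Cu(\RR)$ along a pair of translations between curried terms $\TT(\FF_\CC,\VV)$ and parametrized terms $\TT(\FF_\UU,\VV)$. In one direction I use the \emph{uncurrying} map $U(t) := \rnf[\UU]{t}$, which is well defined because $\UU$ is confluent and terminating. In the other direction I introduce a \emph{full currying} map $c\colon \TT(\FF_\UU,\VV)\to\TT(\FF_\CC,\VV)$ defined by $c(x)=x$ for $x\in\VV$ and $c(f_i(t_1,\dots,t_i)) = \m{Ap}(\cdots\m{Ap}(f_0,c(t_1))\cdots,c(t_i))$, i.e.\ $c$ rebuilds every symbol $f_i$ as a nested application of the constant $f_0$. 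Two immediate observations drive the argument: $c$ restricts to the identity on $\TT(\FF_\CC,\VV)$ (curried terms contain only $\m{Ap}$ and the constants $f_0$), and $c$ is invariant under $\UU$, since both sides of an uncurrying rule $\m{Ap}(f_i(\vec x),x_{i+1})\to f_{i+1}(\vec x)$ are mapped by $c$ to the same nested application. In particular $s\to_\UU^* U(s)$ yields $c(U(s))=c(s)=s$ for every curried term $s$, so $c\circ U$ is the identity on $\TT(\FF_\CC,\VV)$.

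Second, I would establish two simulation properties. \emph{Forward:} for curried $s$, a step $s\to_{\Cu(\RR)} t$ induces a step $U(s)\to_\RR U(t)$. Here the redex is a complete application $\Cu(\ell)\sigma$, which uncurries to $\ell\widehat\sigma$ (with $\widehat\sigma(x)=U(\sigma(x))$) regardless of any surrounding $\m{Ap}$'s, and the rule $\ell\to r\in\RR$ then applies on the $\UU$-normal form, giving $U(t)$. \emph{Backward:} for $\PP(\RR)$-terms, a step $w\to_\RR w'$ yields $c(w)\to_{\Cu(\RR)} c(w')$ (currying the $\RR$-redex produces a $\Cu(\RR)$-redex), whereas $w\to_\UU w'$ yields $c(w)=c(w')$ by the invariance noted above. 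Since ${\to_\RR}\cup{\to_\UU}={\to_{\PP(\RR)}}$, the backward property lifts to $w\to_{\PP(\RR)}^* w'\implies c(w)\to_{\Cu(\RR)}^* c(w')$, and the forward property lifts to $s\to_{\Cu(\RR)}^* t\implies U(s)\to_{\RR}^* U(t)$, which is in particular a $\PP(\RR)$-reduction.

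With these in hand the proof is a short projection. Given a $\Cu(\RR)$-peak $t\from_{\Cu(\RR)}^{*} s\to_{\Cu(\RR)}^{*} u$, the forward property produces a $\PP(\RR)$-peak $U(t)\from_{\PP(\RR)}^{*} U(s)\to_{\PP(\RR)}^{*} U(u)$. Confluence of $\PP(\RR)$ supplies a common reduct $w$ with $U(t)\to_{\PP(\RR)}^* w$ and $U(u)\to_{\PP(\RR)}^* w$. Applying the backward property and using $c\circ U=\mathrm{id}$ on curried terms gives $t=c(U(t))\to_{\Cu(\RR)}^* c(w)$ and $u=c(U(u))\to_{\Cu(\RR)}^* c(w)$, so $c(w)$ is the desired common $\Cu(\RR)$-reduct.

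The main obstacle lies entirely in the two simulation lemmas, i.e.\ the bookkeeping of how $U$ and $c$ interact with substitutions, contexts, and the two rule sets. The delicate point is the forward direction: one must check that uncurrying never splits or over-applies the matched redex (over-application by extra outer $\m{Ap}$'s is harmless, because the root of $\Cu(\ell)$ is already at full arity), so that exactly the rule $\ell\to r$ fires on $U(s)$. The companion facts that $c$ commutes with substitution and collapses every $\UU$-step are routine but must be stated precisely; once they are, the retraction $c\circ U=\mathrm{id}$ makes confluence descend from $\PP(\RR)$ to $\Cu(\RR)$ automatically.
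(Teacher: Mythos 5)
The paper itself offers no proof of this lemma---it is imported verbatim from Kahrs~\cite{K95} with a bare citation---so your proposal is measured against the standard argument, and its architecture is exactly right: project $\Cu(\RR)$-peaks to $\UU$-normal forms, close them by confluence of $\PP(\RR)$, and pull the valley back with the full currying map $c$, using the retraction $c \circ U = \mathrm{id}$ on $\TT(\FF_\CC,\VV)$. The backward simulation ($\RR$-steps curry to $\Cu(\RR)$-steps, $\UU$-steps are collapsed by $c$) and the retraction identity are correct as you argue them; only note that your definition of $c$ must be supplemented with the homomorphic clause $c(\m{Ap}(s,t)) = \m{Ap}(c(s),c(t))$, since $\m{Ap}$ is not of the form $f_i$ for $f \in \FF$ and there is no constant $\m{Ap}_0$ in $\FF_\CC$.

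There is, however, a genuine flaw in the forward simulation as you state it: a step $s \to_{\Cu(\RR)} t$ does \emph{not} in general induce a single step $U(s) \to_\RR U(t)$, and the lifted claim $s \to_{\Cu(\RR)}^* t \implies U(s) \to_\RR^* U(t)$ is false. You anticipated the danger from over-application above the redex, but the failure sits on the \emph{target} side, with collapsing rules: when $r \in \VV$, the contractum $U(\sigma(x))$ may be rooted at a partially applied symbol $f_i$, which can form a fresh $\UU$-redex with an $\m{Ap}$ of the surrounding context. Concretely, take $\RR = \{ \m{h}(x) \to x \}$ with $\m{h}$ unary and $\m{f}$ binary, so $\Cu(\RR) = \{ \m{Ap}(\m{h}_0,x) \to x \}$, and let $s = \m{Ap}(\m{Ap}(\m{h}_0,\m{Ap}(\m{f}_0,y)),z)$. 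Then $U(s) = \m{Ap}(\m{h}(\m{f}_1(y)),z)$, while $s \to_{\Cu(\RR)} t = \m{Ap}(\m{Ap}(\m{f}_0,y),z)$ with $U(t) = \m{f}(y,z)$; the only available $\RR$-step gives $U(s) \to_\RR \m{Ap}(\m{f}_1(y),z) \neq U(t)$, and the missing step $\m{Ap}(\m{f}_1(y),z) \to_\UU \m{f}(y,z)$ is a $\UU$-step, not an $\RR$-step. (For non-collapsing $r$ the single-step claim is fine, since the root of $r\widehat\sigma$ is then at full arity and no boundary redex can arise.) The repair is local and cheap: prove the forward property in the form $U(s) \to_\RR \cdot \to_\UU^* U(t)$, hence $U(s) \to_{\PP(\RR)}^* U(t)$---the redex $\ell\widehat\sigma$ itself always survives $\UU$-normalization because $\UU$ is linear and non-erasing, its pattern consists of full-arity symbols, and non-left-linearity is preserved since equal subterms have equal normal forms. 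Since your concluding projection argument only ever uses that the projected peak is a $\PP(\RR)$-peak, this weakening leaves the rest of the proof intact.
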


\begin{theorem}[{\cite[Theorem~5.2]{K95}}]
Let $\RR$ be a TRS.
If $\RR$ is confluent then $\Cu(\RR)$ is confluent.
\end{theorem}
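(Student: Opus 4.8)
The plan is to reduce the statement to the partial parametrization and then invoke the layered confluence criterion. By Lemma~\ref{lem-cur-pp-cur} it suffices to prove that $\PP(\RR) = \RR \cup \UU$ is confluent, where $\UU$ collects the uncurrying rules over the signature $\FF_\UU$. Since an arbitrary $\RR$ may be neither left-linear nor (bounded) duplicating, among the three main results only Theorem~\ref{thm-main-con} is available, so I would aim to exhibit a layer system $\bL$ for which $\PP(\RR)$ is \emph{layered} and confluent on terms of rank one.

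For the layer system, the guiding idea is that a layer should capture exactly one logical application over $\FF$, read modulo uncurrying, with the boundaries between layers occurring precisely at \emph{over-applications}. Concretely, I would let $\bL$ consist of those contexts over $\FF_\UU$ whose application spines are never over-applied --- every occurrence of a symbol $g$ receives at most $\ari(g)$ arguments along its $\m{Ap}$-spine --- together with the purely applicative spines $\m{Ap}(\dots \m{Ap}(\square, \square) \dots, \square)$ that arise when a completely applied term is fed further arguments. The crucial consequence is that a completely applied $\FF$-term, and all of its $\UU$-equivalent curried representations, collapse into a single layer, so that a fully applied (sub)term has rank one while each additional over-application raises the rank by one. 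In particular the uncurrying step $\m{Ap}(f_i(\vec x), x_{i+1}) \to f_{i+1}(\vec x, x_{i+1})$ with $i+1 \leq \ari(f)$ happens \emph{inside} a layer rather than across a boundary.

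I would then verify the six conditions. Properties \ctop, \cvar, \cpartial and \cstepfusion are structural and should follow from the fact that ``not over-applied'' is preserved under interchanging holes and variables, under merging at function positions, and under substituting a larger layer into a hole. The heart of the matter --- and the step I expect to be the main obstacle --- is the simultaneous verification of \crewrite and \cconsistent for the uncurrying rules: one must check both that the redex pattern $\m{Ap}(f_i(\vec x), x_{i+1})$ lies in the max-top (so the step can be mirrored on the layer) and that the resulting layer $f_{i+1}(\vec x, x_{i+1})$ is again the max-top of the contractum rather than a proper top that could still fuse downward. The naive two-colour system $\CC(\FF_{\mathrm{app}}, \VV) \cup \CC(\FF, \VV)$ fails exactly here, since uncurrying turns an applicative head into a functional one while leaving the arguments applicative; the over-application-based system above repairs this, because the arguments of a completed application already belong to the same layer, and what genuinely begins a new layer is an over-application, which uncurrying never removes. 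For the $\RR$-rules the conditions are comparatively easy, as $\ell$ and $r$ are over $\FF$ and any term over $\FF$ is a single layer of rank one.

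Finally, I would establish confluence of $\PP(\RR)$ on terms of rank one --- i.e.\ on the over-application-free terms --- by projecting along $\UU$. As $\UU$ is orthogonal it is terminating and confluent, and on rank-one terms every $\m{Ap}$ is an uncurrying redex, so $\rnf[\UU]{s}$ contains no $\m{Ap}$; moreover a $\UU$-step leaves $\rnf[\UU]{s}$ unchanged while an $\RR$-step is mirrored by an $\RR$-step (or identity) on $\rnf[\UU]{s}$. Treating the remaining maximal partial applications $f_i(\dots)$ with $i < \ari(f)$ as frozen (they head no redex), the $\UU$-normal forms are $\FF$-terms over an enlarged set of inert constants, on which $\RR$ is confluent because confluence of $\RR$ on $\TT(\FF, \VV)$ is inherited once fresh inert constants are added. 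Hence any peak of rank-one terms projects to an $\RR$-peak, is joined by confluence of $\RR$, and is lifted back using confluence of $\UU$. With $\bL$ shown to layer $\PP(\RR)$ and confluence on rank-one terms in hand, Theorem~\ref{thm-main-con} gives confluence of $\PP(\RR)$, and Lemma~\ref{lem-cur-pp-cur} then yields confluence of $\Cu(\RR)$, as required.
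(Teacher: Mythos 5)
Your overall architecture is exactly the paper's: reduce to $\PP(\RR)$ via Lemma~\ref{lem-cur-pp-cur}, design a layer system whose boundaries are the over-applications, verify that $\PP(\RR)$ is layered, prove confluence on rank-one terms by projecting onto $\UU$-normal forms, and conclude with Theorem~\ref{thm-main-con}. However, your concrete layer system has a genuine defect: admitting the ``purely applicative spines'' $\m{Ap}(\cdots\m{Ap}(\square,\square)\cdots,\square)$ of arbitrary depth as layers violates \cpartial. Concretely, let $\ari(\m{f})=1$, $L=\m{Ap}(\m{Ap}(\square,\square),\square)$, $N=\m{Ap}(\m{f}_0,\square)$, and $p=1$. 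Both $L$ and $N$ are layers in your system ($N$ is not over-applied, $L$ is a pure spine), $p\in\Pos_\FF(L)$ because $\rt(L|_1)=\m{Ap}$, and $L|_p\merge N=\m{Ap}(\m{f}_0,\square)$ is defined; so \cpartial demands $\m{Ap}(\m{Ap}(\m{f}_0,\square),\square)\in\bL$, yet this context is over-applied ($\m{f}_0$ receives two arguments along its spine, and its $\UU$-normal form $\m{Ap}(\m{f}_1(\square),\square)$ still contains $\m{Ap}$) and is not a pure hole spine, hence lies in neither of your classes. The root cause is that a spine of depth at least two has \emph{internal} $\m{Ap}$-nodes that are function positions of the layer, and merging a partial application into such a position can complete the application, creating an over-application strictly inside a single layer. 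Since \cpartial underpins uniqueness of max-tops (Lemma~\ref{lem-max-top}) and the decomposition Lemma~\ref{lem-fusion}, this is not a cosmetic issue: Theorem~\ref{thm-main-con} is simply not applicable to your $\bL$. The paper avoids the trap by taking $\bL_2=\{\m{Ap}(v,t)\mid v\in\VC,\ t\in\bL_1\}$, i.e., each over-application node forms a layer of depth one whose over-applied head is a hole or variable, so no function position of any layer ever sits above an over-application boundary (a tower of $k$ over-applications then costs $k$ ranks rather than one). Replacing your spine class by this $\bL_2$ repairs the proof and makes it essentially the paper's.

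A smaller point concerns your rank-one argument. Its core matches the paper's ($s\to_{\RR\cup\UU}t$ implies $\rnf[\UU]{s}\to_\RR^=\rnf[\UU]{t}$, so peaks project to $\RR$-peaks on $\UU$-normal forms), but ``treating the remaining maximal partial applications as frozen inert constants'' is not accurate: a $\UU$-normal form may contain $f_i(\seq[i]{t})$ with $0<i<\ari(f)$, whose arguments can still be rewritten by $\RR$, so these subterms are not inert and they are not constants. What is actually needed is that confluence of $\RR$ survives adjoining the fresh symbols $f_i$ with $i<\ari(f)$ to the signature; this follows from modularity (Theorem~\ref{thm-modularity}, taking the disjoint union with the empty TRS over the new symbols). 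With that, the projected peak is an $\RR$-peak over the extended signature and joins as claimed.
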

\begin{proof}
According to Lemma~\ref{lem-cur-pp-cur} it suffices to
show that $\PP(\RR)$ is confluent.
To this end, we let $\bL := \bL_1 \cup \bL_2$, where
$\bL_1$ is the smallest extension of $\VC$ such that
\[
\m{Ap}(\cdots \m{Ap}(f_m(\seq[m]{s}),s_{m+1}) \cdots, s_n) \in \bL_1
\]
for all $f_m \in \FF_\UU \setminus \{ \m{Ap} \}$, $\seq{s} \in \bL_1$, 
with $n$ less than or equal to
the arity of $f$ in the original TRS~$\RR$, and
\[
\bL_2 =
\{ \m{Ap}(v,t) \mid \text{$v \in \VC$ and $t \in \bL_1$} \}
\]
It is not difficult to see that $\bL_1$ consists of those
contexts in $\CC(\FF_\UU,\VV)$
whose $\UU$-normal form contains no occurrences of $\m{Ap}$.
See Figure~\ref{fig-app} for some layered terms.

\begin{figure}[t]
\small
\subfloat[\label{fig-currying:1}]{%
\begin{tikzpicture}[scale=.5]
  \coordinate (r) at (0.0,0.0);
  \coordinate (rl) at (-3.125,-1.0);
  \coordinate (rr) at (3.125,-1.0);
  \coordinate (ra) at (-3.125,-1.0);
  \coordinate (ral) at (-5.625,-2.0);
  \coordinate (rar) at (-0.625,-2.0);
  \coordinate (raa) at (-5.625,-2.0);
  \coordinate (raal) at (-6.875,-3.0);
  \coordinate (raar) at (-4.375,-3.0);
  \coordinate (raaa) at (-6.875,-3.0);
  \coordinate (raaal) at (-7.875,-4.0);
  \coordinate (raaar) at (-5.875,-4.0);
  \coordinate (raab) at (-4.375,-3.0);
  \coordinate (raabl) at (-5.375,-4.0);
  \coordinate (raabr) at (-3.375,-4.0);
  \coordinate (rab) at (-0.625,-2.0);
  \coordinate (rabl) at (-1.875,-3.0);
  \coordinate (rabr) at (0.625,-3.0);
  \coordinate (raba) at (-1.875,-3.0);
  \coordinate (rabal) at (-2.875,-4.0);
  \coordinate (rabar) at (-0.875,-4.0);
  \coordinate (rabb) at (0.625,-3.0);
  \coordinate (rabbl) at (-0.375,-4.0);
  \coordinate (rabbr) at (1.625,-4.0);
  \coordinate (rb) at (3.125,-1.0);
  \coordinate (rbl) at (2.125,-2.0);
  \coordinate (rbr) at (4.125,-2.0);
  \draw [rounded corners, fill=myblue] (rl) -- (rb) -- (rbl) --
        (rbr) -- (rb) -- (r) -- cycle;
  \draw [rounded corners, fill=myred] (ral) -- (rar) -- (ra) --
        cycle;
  \draw [rounded corners, fill=myblue] (raal) -- (raab) -- (raabl) --
        (raabr) -- (raab) -- (raa) -- cycle;
  \draw [rounded corners, fill=myred] (raaal) -- (raaar) -- (raaa) --
        cycle;
  \draw [rounded corners, fill=myblue] (rabl) -- (rabb) -- (rabbl) --
        (rabbr) -- (rabb) -- (rab) -- cycle;
  \draw [rounded corners, fill=myred] (rabal) -- (rabar) -- (raba) --
        cycle;
  \node at ($(r)-(0,.6)$) {$\m{Ap}$};
  \node at ($(ra)-(0,.6)$) {$\m{Ap}$};
  \node at ($(raa)-(0,.6)$) {$\m{Ap}$};
  \node at ($(raaa)-(0,.6)$) {$\m{f_0}$};
  \node at ($(raab)-(0,.6)$) {$x$};
  \node at ($(rab)-(0,.6)$) {$\m{Ap}$};
  \node at ($(raba)-(0,.6)$) {$\m{f_0}$};
  \node at ($(rabb)-(0,.6)$) {$x$};
  \node at ($(rb)-(0,.6)$) {$x$};
\end{tikzpicture}
}
\hfill
\subfloat[\label{fig-currying:2}]{%
\begin{tikzpicture}[scale=.5]
  \coordinate (r) at (0.0,0.0);
  \coordinate (rl) at (-3.125,-1.0);
  \coordinate (rr) at (3.125,-1.0);
  \coordinate (ra) at (-3.125,-1.0);
  \coordinate (ral) at (-5.625,-2.0);
  \coordinate (rar) at (-0.625,-2.0);
  \coordinate (raa) at (-5.625,-2.0);
  \coordinate (raal) at (-6.875,-3.0);
  \coordinate (raar) at (-4.375,-3.0);
  \coordinate (raaa) at (-6.875,-3.0);
  \coordinate (raaal) at (-7.875,-4.0);
  \coordinate (raaar) at (-5.875,-4.0);
  \coordinate (raab) at (-4.375,-3.0);
  \coordinate (raabl) at (-5.375,-4.0);
  \coordinate (raabr) at (-3.375,-4.0);
  \coordinate (rab) at (-0.625,-2.0);
  \coordinate (rabl) at (-1.875,-3.0);
  \coordinate (rabr) at (0.625,-3.0);
  \coordinate (raba) at (-1.875,-3.0);
  \coordinate (rabal) at (-2.875,-4.0);
  \coordinate (rabar) at (-0.875,-4.0);
  \coordinate (rabb) at (0.625,-3.0);
  \coordinate (rabbl) at (-0.375,-4.0);
  \coordinate (rabbr) at (1.625,-4.0);
  \coordinate (rb) at (3.125,-1.0);
  \coordinate (rbl) at (2.125,-2.0);
  \coordinate (rbr) at (4.125,-2.0);
  \draw [rounded corners, fill=myblue] (rl) -- (rb) -- (rbl) --
        (rbr) -- (rb) -- (r) -- cycle;
  \draw [rounded corners, fill=myred] (ral) -- (rab) -- (raba) --
        (rabal) -- (rabar) -- (raba) -- (rabb) -- (rabbl) -- (rabbr) --
        (rabb) -- (rab) -- (ra) -- cycle;
  \draw [rounded corners, fill=myblue] (raaa) -- (raaal) --
        (raaar) -- (raaa) -- (raab) -- (raabl) -- (raabr) -- (raab) --
        (raa) -- cycle;
  \node at ($(r)-(0,.6)$) {$\m{Ap}$};
  \node at ($(ra)-(0,.6)$) {$\m{Ap}$};
  \node at ($(raa)-(0,.6)$) {$\m{Ap}$};
  \node at ($(raaa)-(0,.6)$) {$\m{f_1}$};
  \node at ($(raab)-(0,.6)$) {$x$};
  \node at ($(rab)-(0,.6)$) {$\m{Ap}$};
  \node at ($(raba)-(0,.6)$) {$\m{f_1}$};
  \node at ($(rabb)-(0,.6)$) {$x$};
  \node at ($(rb)-(0,.6)$) {$x$};
\end{tikzpicture}
}
\hfill
\subfloat[\label{fig-currying:3}]{%
\begin{tikzpicture}[scale=.5]
  \coordinate (r) at (0.0,0.0);
  \coordinate (rl) at (-3.125,-1.0);
  \coordinate (rr) at (3.125,-1.0);
  \coordinate (ra) at (-3.125,-1.0);
  \coordinate (ral) at (-5.625,-2.0);
  \coordinate (rar) at (-0.625,-2.0);
  \coordinate (raa) at (-5.625,-2.0);
  \coordinate (raal) at (-6.875,-3.0);
  \coordinate (raar) at (-4.375,-3.0);
  \coordinate (raaa) at (-6.875,-3.0);
  \coordinate (raaal) at (-7.875,-4.0);
  \coordinate (raaar) at (-5.875,-4.0);
  \coordinate (raab) at (-4.375,-3.0);
  \coordinate (raabl) at (-5.375,-4.0);
  \coordinate (raabr) at (-3.375,-4.0);
  \coordinate (rab) at (-0.625,-2.0);
  \coordinate (rabl) at (-1.875,-3.0);
  \coordinate (rabr) at (0.625,-3.0);
  \coordinate (raba) at (-1.875,-3.0);
  \coordinate (rabal) at (-2.875,-4.0);
  \coordinate (rabar) at (-0.875,-4.0);
  \coordinate (rabb) at (0.625,-3.0);
  \coordinate (rabbl) at (-0.375,-4.0);
  \coordinate (rabbr) at (1.625,-4.0);
  \coordinate (rb) at (3.125,-1.0);
  \coordinate (rbl) at (2.125,-2.0);
  \coordinate (rbr) at (4.125,-2.0);
  \draw [rounded corners, fill=myblue] (rl) -- (rb) -- (rbl) --
        (rbr) -- (rb) -- (r) -- cycle;
  \draw [rounded corners, fill=myred] (raa) -- (raaa) -- (raaal) --
        (raaar) -- (raaa) -- (raab) -- (raabl) -- (raabr) -- (raab) --
        (raa) -- (rar) -- (ra) -- cycle;
  \draw [rounded corners, fill=myblue] (rabl) -- (rabb) -- (rabbl) --
        (rabbr) -- (rabb) -- (rab) -- cycle;
  \draw [rounded corners, fill=myred] (rabal) -- (rabar) -- (raba) --
        cycle;
  \node at ($(r)-(0,.6)$) {$\m{Ap}$};
  \node at ($(ra)-(0,.6)$) {$\m{Ap}$};
  \node at ($(raa)-(0,.6)$) {$\m{Ap}$};
  \node at ($(raaa)-(0,.6)$) {$\m{f_2}$};
  \node at ($(raab)-(0,.6)$) {$x$};
  \node at ($(rab)-(0,.6)$) {$\m{Ap}$};
  \node at ($(raba)-(0,.6)$) {$\m{f_2}$};
  \node at ($(rabb)-(0,.6)$) {$x$};
  \node at ($(rb)-(0,.6)$) {$x$};
\end{tikzpicture}
}
\hfill
\subfloat[\label{fig-currying:4}]{%
\begin{tikzpicture}[scale=.5]
  \coordinate (r) at (0.0,0.0);
  \coordinate (rl) at (-3.125,-1.0);
  \coordinate (rr) at (3.125,-1.0);
  \coordinate (ra) at (-3.125,-1.0);
  \coordinate (ral) at (-5.625,-2.0);
  \coordinate (rar) at (-0.625,-2.0);
  \coordinate (raa) at (-5.625,-2.0);
  \coordinate (raal) at (-6.875,-3.0);
  \coordinate (raar) at (-4.375,-3.0);
  \coordinate (raaa) at (-6.875,-3.0);
  \coordinate (raaal) at (-7.875,-4.0);
  \coordinate (raaar) at (-5.875,-4.0);
  \coordinate (raab) at (-4.375,-3.0);
  \coordinate (raabl) at (-5.375,-4.0);
  \coordinate (raabr) at (-3.375,-4.0);
  \coordinate (rab) at (-0.625,-2.0);
  \coordinate (rabl) at (-1.875,-3.0);
  \coordinate (rabr) at (0.625,-3.0);
  \coordinate (raba) at (-1.875,-3.0);
  \coordinate (rabal) at (-2.875,-4.0);
  \coordinate (rabar) at (-0.875,-4.0);
  \coordinate (rabb) at (0.625,-3.0);
  \coordinate (rabbl) at (-0.375,-4.0);
  \coordinate (rabbr) at (1.625,-4.0);
  \coordinate (rb) at (3.125,-1.0);
  \coordinate (rbl) at (2.125,-2.0);
  \coordinate (rbr) at (4.125,-2.0);
  \draw [rounded corners, fill=myblue] (rl) -- (rb) -- (rbl) --
        (rbr) -- (rb) -- (r) -- cycle;
  \draw [rounded corners, fill=myred] (raa) -- (raaa) -- (raaal) --
        (raaar) -- (raaa) -- (raab) -- (raabl) -- (raabr) -- (raab) --
        (raa) -- (rab) -- (raba) -- (rabal) -- (rabar) -- (raba) --
        (rabb) -- (rabbl) -- (rabbr) -- (rabb) -- (rab) -- (ra) -- cycle;
  \node at ($(r)-(0,.6)$) {$\m{Ap}$};
  \node at ($(ra)-(0,.6)$) {$\m{Ap}$};
  \node at ($(raa)-(0,.6)$) {$\m{Ap}$};
  \node at ($(raaa)-(0,.6)$) {$\m{f_2}$};
  \node at ($(raab)-(0,.6)$) {$x$};
  \node at ($(rab)-(0,.6)$) {$\m{Ap}$};
  \node at ($(raba)-(0,.6)$) {$\m{f_2}$};
  \node at ($(rabb)-(0,.6)$) {$x$};
  \node at ($(rb)-(0,.6)$) {$x$};
\end{tikzpicture}
}
\caption{Layering terms in $\PP(\RR)$ for the TRS $\RR$ in
Example~\ref{ex-currying}.}
\label{fig-app}
\end{figure}

We claim that $\PP(\RR)$ is layered. Conditions \ctop and \cvar are
trivial and conditions \cpartial and \cstepfusion are easily shown by
induction on the definition of $\bL_1$. 
The interesting case for \cpartial is when $L \in \bL_1$. Since
merging cannot create new
$\m{Ap}$ symbols above any $f_m$, the result is in
$\bL_1$, whenever defined.
For \crewrite and \cconsistent, we
let $M$ be the max-top of $s$, $p \in \Pos_\FF(M)$, and
consider a rewrite step $s \to_{p,\ell \to r} t$
with $\ell \to r \in \PP(\RR)$
Because $\bL$ is closed under taking subterms,
$M|_p$ is a top of $s|_p$. It is the max-top because otherwise
we could merge the max-top of $s|_p$ with $M$ at position $p$ and obtain
a larger top of $s$.
Note that $\ell_\square \in \bL_1$
(recall that $\ell_\square$ is obtained by replacing all variables in
$\ell$ by $\square$)
We have $\ell_\square \Cleq s|_p$ and 
therefore $\ell_\square \Cleq M|_p$. As a matter of fact,
$M|_p$ is obtained from $\ell_\square$ by replacing
each hole at position $q$ by the max-top (in $\bL_1$) of $s|_{pq}$.
Because equal subterms have equal max-tops, $s \matches M|_p$ and
hence there is a rewrite step $M \to_{p,\ell \to r} L$.
We have $L \in \bL_1$ because $\bL_1$ is closed under rewriting
by $\PP(\RR)$. Furthermore,
the max-tops of the aliens of $s$ do not belong to $\bL_1$
and therefore the aliens of $s$ are still aliens of $L$,
unless $L = \square$.
It follows that both \crewrite and \cconsistent hold.

To show confluence of $\PP(\RR)$ on terms of rank one, first note that
elements of $\bL_2$ allow no root steps and therefore it suffices to
show confluence on terms in $\bL_1$. It is easy to see that
$s \to_{\RR\,\cup\,\UU} t$ implies $\rnf[\UU]{s} \to_\RR^= \rnf[\UU]{t}$.
Hence, for a peak $t \lud{\RR\,\cup\,\UU}{*}\from s
\to_{\RR\,\cup\,\UU}^* u$
there is a corresponding peak $\rnf[\UU]{t} \lud{\RR}{*}\from
\rnf[\UU]{s} \to_\RR^* \rnf[\UU]{u}$, which is joinable by
the confluence of $\RR$. Hence $t$ and $u$ are joinable in $\PP(\RR)$.
We conclude by Theorem~\ref{thm-main-con}.
\end{proof}

\subsection{Many-sorted Persistence}
\label{app-many-sorted}

In this subsection, we prove persistence of confluence \cite{AT96}.
We begin by recalling many-sorted terms and rewriting.

\begin{definition}
\label{def-sorted-signature}
Let $S$ be a set of \emph{sorts}.
A \emph{sort attachment $\SS$ associates
with each function symbol $f \in \FF$ of arity $n$
a type $f : \alpha_1 \times \cdots \times \alpha_n \to \alpha$
with $\alpha_i, \alpha \in S$ for $1\leq i\leq n$, and
with each variable $x \in \VV$ a sort from $S$.}
Let $\VV_\alpha$ denote the set of variables of sort $\alpha$.
We assume that each $\VV_\alpha$ is countably infinite.
\end{definition}

Note that $\VV_\alpha \cap \VV_\beta = \varnothing$ for all
$\alpha,\beta \in S$ whenever $\alpha \neq \beta$.

\begin{definition}
\label{def-typed-terms}
Let $\SS$ be a sort attachment.
We define terms of sort $\alpha$ inductively by
$\TT_\alpha(\FF,\VV) = \VV_\alpha \cup \{
f(\seq{t}) \mid
f : \alpha_1 \times \cdots \times \alpha_n \to \alpha
\text{ and } t_i \in \TT_{\alpha_i}(\FF,\VV)
\text{ for } 1 \leq i \leq n
\}$.
The set of many-sorted terms is defined as
$\TT_\SS(\FF,\VV) = \bigcup_{\alpha \in S} \TT_\alpha(\FF,\VV)$.
\end{definition}

\begin{definition}
A TRS $\RR$ is \emph{compatible} with a sort attachment $\SS$ if 
for each rule $\ell \to r \in \RR$, there is a sort $\alpha \in S$
with $\ell,r \in \TT_\alpha(\FF,\VV)$.
\end{definition}

\begin{remark}
\label{rem-sort-closed}
If a TRS $\RR$ is compatible with a sort attachment $\SS$ then
$\TT_\alpha(\FF,\VV)$ is closed under rewriting by $\RR$, for
each $\alpha \in S$.
\end{remark}

The following theorem states that confluence is a persistent property of TRSs.
\begin{theorem}
\label{thm-persistence}
Let a TRS $\RR$ be compatible with a sort attachment $\SS$. 
If $\RR$ is confluent on $\TT_\SS(\FF,\VV)$ then $\RR$ is confluent.
\end{theorem}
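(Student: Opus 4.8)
The plan is to instantiate the abstract machinery of Section~\ref{sec-main} by a layer system that records the maximal well-sorted prefix of a term. Concretely, I would extend the sort discipline to contexts by declaring that the hole \emph{and every variable} may take on an arbitrary sort, and define for each $\alpha \in S$ the set of \emph{flexibly sorted} contexts $\TT_\alpha(\FF,\VC)$ inductively by $w \in \TT_\alpha(\FF,\VC)$ for every $w \in \VC$, and $f(c_1,\dots,c_n) \in \TT_\alpha(\FF,\VC)$ whenever $f : \alpha_1 \times \cdots \times \alpha_n \to \alpha$ and $c_i \in \TT_{\alpha_i}(\FF,\VC)$. The layer system is then $\bL = \bigcup_{\alpha \in S} \TT_\alpha(\FF,\VC)$. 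The max-top of a term is its maximal prefix all of whose function symbols are applied to arguments of the expected sort, so the aliens are exactly the maximal subterms whose root sort clashes with the sort demanded by the surrounding symbol. Treating variables flexibly rather than honouring their declared sort is forced on us: condition~\cvar demands that holes and variables be interchangeable, which fails if the hole ranges over all sorts while a variable is pinned to one.

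Next I would check that $\bL$ \emph{layers} $\RR$, so that the unconditional Theorem~\ref{thm-main-con} applies. Conditions~\ctop and \cvar are immediate, and \cpartial, \cstepfusion follow because merging or plugging at a function position never disturbs the sort demanded there, so the flexible typing is preserved (much as in Theorem~\ref{thm-modularity}). For \crewrite one shows, using that the skeleton $\ell_\square$ of a left-hand side is a layer with $\ell_\square \Cleq s|_p$ together with maximality of the max-top $M$ (so that $M[M|_p \merge \ell_\square]_p \Cleq M$ by \cpartial), that $\ell_\square \Cleq M|_p$; hence $\ell$ matches $M|_p$ and $M$ rewrites to a layer $L$. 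The crux of \cconsistent is that $\RR$ is \emph{compatible} with $\SS$: in a well-sorted rule $\ell \to r$ every variable $y$ occurs only at positions of its single sort $\mathrm{sort}(y)$, so an alien bound to $y$ sits at a $\mathrm{sort}(y)$-position both before and after the step and therefore keeps clashing with its context. Thus no alien can fuse into the top, and $L$ is the max-top of $t$ (or $\square$ in the collapsing case).

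It then remains to establish confluence on terms of rank one, i.e.\ on flexibly sorted terms, from the assumed confluence on the genuinely sorted terms $\TT_\SS(\FF,\VV)$. The bridge is a \emph{sort-aware renaming}: fixing the essentially unique flexible typing of a rank-one term $t$, I would replace each occurrence of a variable $x$ at a position of expected sort $\beta$ by a fresh variable $x^{(\beta)} \in \VV_\beta$, where equal pairs $(x,\beta)$ receive the same name. This yields a genuinely well-sorted $\hat t \in \TT_\SS(\FF,\VV)$ together with a collapsing substitution $\theta$ with $\theta(x^{(\beta)}) = x$ and $\hat t\theta = t$. Because $\RR$ is compatible, the repeated variables of any applicable rule are single-sorted, so equal redex-subterms sit at equal expected sorts and are renamed identically; consequently $\ell$ matches $\hat t|_p$ exactly when it matches $t|_p$. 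Hence the renaming is a step-for-step mirror between the reductions of $t$ and those of $\hat t$ that commutes with $\theta$ (downward projection is routine, and Remark~\ref{rem-sort-closed} keeps $\hat t$ and its reducts inside $\TT_\SS(\FF,\VV)$). Any reductions $t \to^* a$ and $t \to^* b$ therefore lift to a peak on $\hat t$, which is joinable by hypothesis, and applying $\theta$ to the join yields a join of $a$ and $b$.

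The main obstacle is precisely this last transfer in the presence of non-left-linear rules: a collapsing substitution can in general create redexes, so naive linearization would break confluence. The resolution — and the only place where compatibility of $\RR$ with $\SS$ is genuinely used — is that linearization need only separate occurrences carrying \emph{different} expected sorts, while single-sortedness of rule variables guarantees that exactly those occurrences a rule must see as equal are kept equal by the renaming. I would also flag \cconsistent (ruling out fusion from above) as the subtlest structural condition, for the same reason.
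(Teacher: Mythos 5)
Your proposal is correct and takes essentially the same route as the paper's proof: your ``flexibly sorted'' contexts are exactly the paper's $\bL$ (the closure of $\TT_\SS(\FF,\VV)$ under \cvar), the layering conditions are verified the same way from compatibility, and your sort-aware renaming $\hat t$ with $\hat t\theta = t$ is precisely the paper's choice of a well-sorted $s'$ with $s'|_p = s'|_q$ whenever $s|_p = s|_q$ at equal-sort positions, followed by the same appeal to Theorem~\ref{thm-main-con}. If anything, your verification of \crewrite, \cconsistent, and the step-for-step mirroring is more detailed than the paper's, which dismisses these points in one line.
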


\begin{proof}
Assume that $\RR$ is confluent on $\TT_\SS(\FF,\VV)$.
We let $\bL$ be the smallest set such that
$\TT_\SS(\FF,\VV) \subseteq \bL$ and $\bL$ is closed under
replacing variables by holes and vice versa (cf.~\cvar).
It is easy to see that 
$\RR$ is layered according to $\bL$.
\crewrite and \cconsistent follow from the compatibility
assumption and Remark~\ref{rem-sort-closed}.
Also \cstepfusion is confirmed easily.
We show that $\RR$ is confluent on terms of rank one. To this end,
consider a term $s \in \bLT$. 
The confluence assumption on $\TT_\SS(\FF,\VV)$ does not immediately
apply to~$s$ since the variables need not match the type of their context.
If $s$ is a variable then $s$ is confluent. 
Otherwise, there is a term $s'$ in $\TT_\SS(\FF,\VV)$ that has $s$ as an
instance. Because
subterms of sort $\alpha$ are interchangeable in many-sorted terms,
we may choose $s'$ in such a way that $s'|_p = s'|_q$ if
$s'|_p, s'|_q \in \VV_\alpha$ for some $\alpha$ and
$s|_p = s|_q$. Note that for each $p$ the sort of $s'|_p$
is uniquely determined by $s$. Because the sets $\TT_\alpha(\FF,\VV)$
are pairwise disjoint, any rewrite sequence on $s \in \bLT$
is mirrored by a rewrite sequence from
$s' \in \TT_\SS(\FF,\VV)$. By assumption,
$s'$ is confluent and hence $s$ is confluent as well. We
conclude that $\RR$ is confluent on terms of rank one and hence
confluent by Theorem~\ref{thm-main-con}.
\end{proof}

\section{Order-sorted Persistence}
\label{app-order-sorted2}

In this section we establish order-sorted persistence. 
Section~\ref{sec-order-cr} introduces order-sorted rewriting, states
the main result, and explains how to exploit it for establishing
confluence.
In Section~\ref{app-order-sorted1} we prove the result for left-linear 
systems before Section~\ref{sec-restricted-ls} shows that layer systems
cannot immediately cover arbitrary TRSs.
We refine them such that they become suitable and give an alternative
proof for many-sorted persistence (Section~\ref{sec-persistence-r})
before we finally prove order-sorted persistence in 
Section~\ref{sec-order-proof}.
We compare our result with the earlier result from~\cite{AT96} in 
Section~\ref{rel-order-sorted}.

\subsection{Confluence via Order-sorted Persistence}
\label{sec-order-cr}

To obtain order-sorted terms, we equip a set of sorts
$S$ with a precedence $\succ$ and
modify Definition~\ref{def-typed-terms} as follows.

\begin{definition}
\label{def-order-sorted}
Let $\SS$ be a sort attachment.
We define terms of sort $\alpha$ inductively by
$\TT_\alpha(\FF,\VV) = \VV_\alpha \cup \{
f(\seq{t}) \mid
f : \alpha_1 \times \cdots \times \alpha_n \to \alpha$,
${t_i \in \TT_{\beta_i}(\FF,\VV)}$,
${\alpha_i \succeq \beta_i}$, and
${1 \leq i \leq n} \}$.
The set of order-sorted terms is
$\TT_\SS(\FF,\VV) = \bigcup_{\alpha \in S} \TT_\alpha(\FF,\VV)$.
A term $t$ is \emph{strictly order-sorted} if 
$\rt(t|_p) : \alpha_1 \times \dots \times \alpha_n \to \alpha$
and $t|_{pi} \in \VV_{\beta}$ imply $\alpha_i = \beta$,
for all $p \in \Pos_\FF(t)$.
\end{definition}

Note that we obtain  many-sorted terms by letting ${\succ} = \varnothing$.
Next we define when a TRS is \emph{compatible} with a
sort attachment $\SS$ in the order-sorted setting.

\begin{definition}
\label{def-os-compat}
A TRS~$\RR$ is \emph{compatible} with a sort attachment $\SS$ if 
each rule $\ell \to r \in \RR$ satisfies condition~\eqref{os:1},
and \emph{strongly compatible} with $\SS$ if condition~\eqref{os:2}
is satisfied as well.
\begin{enumerate}
\item
\label{os:1}
If $\ell \in \TT_{\alpha}(\FF,\VV)$ and $r \in \TT_{\beta}(\FF,\VV)$ then
$\alpha \succeq \beta$ and $\ell$ is strictly order-sorted.
\item
\label{os:2}
If $r \in \VV_{\beta}$ then $\beta$ is maximal in~$S$.
If $r \notin \VV$ then $r$ is strictly order-sorted.
\end{enumerate}
\end{definition}

Note that condition \eqref{os:1} ensures that well-typed terms
are closed under rewriting.
The main result on order-sorted persistence is stated below.

\begin{theorem}
\label{thm-order}
Let $\RR$ be compatible with a sort attachment~$\SS$.
Furthermore assume that $\RR$ is left-linear, bounded duplicating,
or strongly compatible with~$\SS$.
If $\RR$ is confluent on $\TT_\SS(\FF,\VV)$ then it is confluent.
\end{theorem}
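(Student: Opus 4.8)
The plan is to mirror the proof of Theorem~\ref{thm-persistence}: exhibit a layer system adapted to the order-sorted discipline and reduce, in each of the three cases, to confluence on rank-one terms via the matching abstract theorem (left-linearity via Theorem~\ref{thm-main-ll}, bounded duplication via Theorem~\ref{thm-main-bd}, strong compatibility via Theorem~\ref{thm-main-con}). First I would try the straightforward system $\bL$ consisting of the order-sorted contexts, those $C$ that lie in $\TT_\alpha(\FF,\VC)$ for some sort $\alpha$ once each hole is given a suitable sort, closed under exchanging variables and holes so that \cvar holds by construction. Holes then sit exactly where the surrounding context cannot be supplied with a term of the required sort, the max-top of a term is its maximal order-sorted prefix, and the rank-one terms are the genuinely order-sorted terms, so that confluence on $\TT_\SS(\FF,\VV)$ supplies the base case.

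Next I would verify the structural conditions. \ctop is immediate, since the root symbol with holes as arguments is an order-sorted layer, and \cpartial holds because order-sorted contexts are closed under merging at a function position: two layers meeting there agree on the symbol and on the declared argument sorts, so their supremum stays order-sorted. Condition \crewrite is the crucial one and is where compatibility condition~\eqref{os:1} is used: because $\ell$ is strictly order-sorted and $\alpha \succeq \beta$, a step $M \to_{p,\ell \to r} L$ on the max-top cannot raise the sort at $p$ and its contractum remains an order-sorted layer. For the left-linear case weak layering already suffices and Theorem~\ref{thm-main-ll} applies. For the strongly compatible case I also need \cconsistent and \cstepfusion; here condition~\eqref{os:2} is what prevents the contractum from fusing an alien into a larger top (the fusion from above that \cconsistent rules out), while \cstepfusion follows from the same sort-coherence of merging as \cpartial.

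For confluence on rank-one terms I would argue as in Theorem~\ref{thm-persistence}. Given an order-sorted $s \in \bLT$ whose variables need not carry the sort their positions demand, lift it to a well-sorted $s' \in \TT_\SS(\FF,\VV)$ by giving each variable occurrence the sort forced by its position, choosing the lift so that equal variable-subterms of $s$ that are forced to the same sort are mapped to a common variable of $s'$. Since each $\TT_\alpha(\FF,\VV)$ is closed under rewriting, any peak over $s$ is mirrored over $s'$, which is joinable by assumption, and pulling the valley back along the substitution joins the original peak; Lemma~\ref{lem-rank-decrease} then feeds the induction carried out inside the abstract theorems.

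The hard part will be that this naive system does not in fact satisfy the conditions for general order-sorted systems: this is the obstruction displayed in Section~\ref{sec-restricted-ls}, and the very reason the unrestricted statement is false. I expect to have to refine it so that aliens are cut off at every strict decrease of sort, making the layers strictly order-sorted; the real work is then re-establishing \crewrite and \cconsistent for this refined system in the bounded-duplicating and strongly compatible cases (Sections~\ref{app-order-sorted1} and~\ref{sec-order-proof}) and, more delicately, deducing confluence on the refined rank-one terms, now the strictly order-sorted terms, from confluence on all of $\TT_\SS(\FF,\VV)$. The latter relies on strict order-sortedness being closed under rewriting, which strong compatibility guarantees through condition~\eqref{os:2} but which fails in general, so the sort-lifting step is markedly subtler than in the many-sorted reworking of Section~\ref{sec-persistence-r}.
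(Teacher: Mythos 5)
Your skeleton (order-sorted layers, three cases dispatched to Theorems~\ref{thm-main-ll}, \ref{thm-main-bd}, \ref{thm-main-con}, base case supplied by confluence on $\TT_\SS(\FF,\VV)$) matches the paper, and you correctly anticipate both the left-linear case (Theorem~\ref{thm-order-ll}) and the obstruction of Example~\ref{ex-restrict-mot}. But your repair is not the paper's and does not close the gap. The paper does \emph{not} strengthen the layers to strictly order-sorted contexts; it weakens condition \cvar to \cvarp, obtaining the \emph{variable-restricted} layer system $\vL = \TT_{\SS \cup \{ \bot \}}(\FF \cup \{ \square \},\VV)$ with a fresh minimal sort $\bot$ for the hole, so that $\vLT = \TT_\SS(\FF,\VV)$ exactly and the base case is the hypothesis verbatim --- no sort-lifting of rank-one terms is needed at all. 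The price is that the abstract theorems no longer apply as stated, and the real work (Section~\ref{sec-restricted-ls}) is re-deriving them as Corollary~\ref{cor-rmain}: fix a well-order $\gg$ on the variables of a lift $s'$, prove every representable term has a unique \emph{least} representative (Lemma~\ref{lem-repr}), show a step on a term in $\bLvT$ can be mirrored on its least representative $\lst{t}$ (Lemma~\ref{lem-r-repr}, where \cconsistent and \cstepfusion enter for the strongly compatible case), and, for the bounded-duplicating case, generalize to \emph{weakly} bounded duplication and use relative termination of $\gg$ modulo $\RR$ (Lemma~\ref{lem-wbd}) to label and re-minimize after each step. None of this machinery appears in your proposal, and your hoped-for substitute --- closure of strict order-sortedness under rewriting --- addresses the wrong failure mode. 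The problem is not closure but that a reduct can identify strictly more subterms than the reduct of any well-sorted lift: in Example~\ref{ex-restrict-mot} the step $\m h(x,x,u,v) \to \m g(\m h(u,v,u,v))$ instantiated with $z$ everywhere yields $\m g(\m h(z,z,z,z))$, while the lifted step yields $\m g(\m h(u,v,u,v))$ with $u \neq v$, so the next non-left-linear step fails to mirror; that is exactly why the paper re-minimizes representatives (the $\gg$-step of Example~\ref{ex-restrict-mot3}) instead of mirroring naively.

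Your specific refinement --- cutting aliens at every strict decrease of sort --- also fails on its own terms. First, for merely compatible rules (the bounded-duplicating case) right-hand sides need not be strictly order-sorted, so rewriting a max-top can itself introduce strict sort decreases (again $\m g(\m h(u,v,u,v))$, with $u : 0$ and $v : 1$ sitting at argument positions of declared sort $2$); the literal reduct of the max-top is then not a layer of your system, and \crewrite fails. Second, any layer system satisfying the full condition \cvar --- which Theorems~\ref{thm-main-ll}--\ref{thm-main-con} require --- lets arbitrary variables replace holes, so the rank-one terms of your refined system are not ``the strictly order-sorted terms'' but their closure under ill-sorted variable replacement; for instance $\m h(z,z,z,z)$, and hence $\m f(t,t)$ of Example~\ref{ex-restrict-mot}, reappear as rank-one terms, and deducing their confluence from confluence on $\TT_\SS(\FF,\VV)$ is precisely the unproved transfer you set out to avoid. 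In short, the missing idea is the \cvarp/least-representative apparatus of Section~\ref{sec-restricted-ls}; without it (or an equivalent), only your left-linear case stands.
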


Theorem~\ref{thm-order} gives rise to a decomposition result
(presented in~\cite{AT96,AT97}) based on 
order-sorted persistence.
The decomposition is based on the observation that the
sort of a term restricts the rules that can be applied when rewriting
it; therefore we can decompose a TRS $\RR$ that is compatible with
a sort attachment $\SS$ into several TRSs $\RR_\alpha$ ($\alpha \in S$)
each containing the rules applicable to terms of sort $\alpha$ or less.
Formally, we define $\trianglerighteq$ on sorts as the
smallest transitive relation such that
${\succ} \subseteq {\trianglerighteq}$ and
$\alpha \trianglerighteq \alpha_i$ whenever
$f : \alpha_1 \times \cdots \times \alpha_n \to \alpha$,
and then define $\RR_\alpha = \{ \ell \to r \mid
\text{$\ell \to r \in \RR$, $\ell \in \TT_{\beta}(\FF,\VV)$, and
$\alpha \trianglerighteq \beta$} \}$.

The next example shows that order-sorted persistence is more
powerful than many-sorted persistence for decomposing TRSs.

\begin{example}[(adapted from \cite{AT96})]
Consider the TRS $\RR$ consisting of the rewrite rules
\begin{xalignat*}{4}
(1)\,\,\m{f}(x,\m{a}) &\to \m{g}(x) &
(2)\,\,\m{f}(x,\m{f}(x,\m{b})) &\to \m{b} &
(3)\,\,\m{g}(\m{c}) &\to \m{c} &
(4)\,\,\m{h}(x) &\to \m{h}(\m{g}(x)) 
\end{xalignat*}
and the set of sorts $S = \{ 0, 1, 2 \}$ with
$1 \succeq 0$. Let the sort attachment be given by
$\m{a}, \m{b} : 1$, $\m{c} : 0$,
$\m{f} : 0 \times 1 \to 1$,
$\m{g} : 0 \to 0$,
$\m{h} : 0 \to 2$, and $x : 0$.
It is straightforward to check that $\RR$ is consistent with $\SS$.
In the order-sorted TRS, only rules (1), (2), and (3)
can be applied to terms of sort $1$ and their reducts, rules
(3) and (4) can be applied to terms
of sort $2$, and only rule (3) can be applied to terms of sort $0$.
Hence, since
$\RR_1 = \{ (1), (2), (3) \}$
(which is terminating and has no critical pairs),
$\RR_2 = \{ (3), (4) \}$
(which is orthogonal),
and $\RR_0 = \{ (3) \}$ 
(orthogonal) are confluent, $\RR$ is confluent.
No such decomposition can be obtained with many-sorted
persistence. Consider a \emph{most general} sort attachment
making all rules many-sorted:
$\m{a}, \m{b}, \m{c}, x : 0$,
$\m{f} : 0 \times 0 \to 0$,
$\m{g} : 0 \to 0$, and
$\m{h} : 0 \to 2$.
Since terms of sort $2$ can have subterms of sort $0$, no
decomposition is possible.
\end{example}

The weaker conditions in Definition~\ref{def-os-compat}
for left-linear TRSs are beneficial.

\begin{example}
\label{ex-mot-order}
Consider the TRS $\RR$ consisting of the rewrite rules
\begin{xalignat*}{3}
\m{f}(\m{a}) &\to \m{f}(\m{f}(\m{h}(\m{c}))) &
\m{g}(\m{b}) &\to \m{g}(\m{g}(\m{h}(\m{c}))) &
\m{h}(x) &\to x
\end{xalignat*}
and the set of sorts $S = \{ 0, 1, 2 \}$ with $1, 2 \succeq 0$.
Let the sort attachment be given by
$\m{a} : 1$, $\m{b} : 2$, $\m{c}, x : 0$,
$\m{f} : 1 \to 1$, $\m{g} : 2 \to 2$, and
$\m{h} : 0 \to 0$,
Note that $\RR$ is compatible with $\SS$.
We can decompose $\RR$ into the component induced by sort $1$:
$\RR_1 = \{
\m{f}(\m{a}) \to \m{f}(\m{f}(\m{h}(\m{c}))),
\m{h}(x) \to x
\}$,
sort $2$:
$\RR_2 = \{
\m{g}(\m{b}) \to \m{g}(\m{g}(\m{h}(\m{c}))),
\m{h}(x) \to x
\}$,
and sort $0$:
$\RR_0 = \{ \m{h}(x) \to x \}$.
If we add the restrictions for non-left-linear systems,
the collapsing rule
$\m{h}(x) \to x$ enforces $\m{h} : \alpha \to \alpha$ for 
a maximal sort $\alpha$. Hence also the arguments of $\m{f}$ and $\m{g}$
have sort $\alpha$, and $\alpha$ is greater than or equal to the sort of
$\m{a}, \m{b}, \m{c}, \m{f}(x), \m{g}(x)$.
So the component induced by $\alpha$ contains all rules.
\end{example}

\subsection{Order-sorted Persistence for Left-linear Systems}
\label{app-order-sorted1}

In this section we show that layer systems can establish order-sorted
persistence for left-linear TRSs.

\begin{theorem}
\label{thm-order-ll}
Let $\RR$ be compatible with a sort attachment $\SS$.
If $\RR$ is left-linear and confluent on $\TT_\SS(\FF,\VV)$ then it
is confluent.
\end{theorem}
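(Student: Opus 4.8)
The plan is to instantiate the left-linear machinery of Theorem~\ref{thm-main-ll}, following the blueprint of the many-sorted proof (Theorem~\ref{thm-persistence}) but establishing only \emph{weak} layering, since in the order-sorted setting the stronger conditions \cconsistent and \cstepfusion need not hold (this is precisely the obstruction analysed later in Section~\ref{sec-restricted-ls}). Concretely, I would let $\bL$ be the smallest set of contexts containing $\TT_\SS(\FF,\VV)$ and closed under replacing variables by holes and vice versa, so that \cvar holds by construction. Since every function symbol $f$ of $\FF$ carries a declared type, the context $f(\square,\dots,\square)$ (obtained from a well-sorted $f(\seq{x})$) lies in $\bL$, and every variable is a layer, so \ctop holds as well. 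By design, membership of a term in $\bL$ depends only on its function-symbol skeleton: a term lies in $\bLT$ exactly when its variable positions can be populated by variables of suitable sorts so as to form an order-sorted term, and each such position may be sorted independently of the others.

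The crux of the weak-layering part is \crewrite. Given a max-top $M$ of $s$, a position $p \in \Pos_\FF(M)$, and a step $s \to_{p,\ell\to r} t$, I must first argue that the redex skeleton lies inside the max-top, i.e.\ $\ell_\square \Cleq M|_p$, so that $M \to_{p,\ell\to r} L$ is defined at all. This follows from maximality of $M$: were $\ell$'s function skeleton to reach below a hole of $M$, then at the first such boundary the order-sortedness of $\ell$ would certify exactly the sort compatibility whose failure caused $M$ to stop there, a contradiction. Once the redex is contained in $M$, contracting it yields $L \in \bL$ because condition~\eqref{os:1} of Definition~\ref{def-os-compat} --- the sort decrease $\alpha \succeq \beta$ together with strict order-sortedness of $\ell$ --- guarantees that order-sorted terms are closed under rewriting, in analogy with Remark~\ref{rem-sort-closed}. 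This establishes weak layering; crucially I would \emph{not} attempt \cconsistent or \cstepfusion.

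It remains to prove confluence on terms of rank one, i.e.\ on $\bLT$, from the hypothesis that $\RR$ is confluent on $\TT_\SS(\FF,\VV)$. Given $s \in \bLT$, I would fix one valid order-sorted typing of its function skeleton and replace each variable occurrence by a \emph{fresh} variable of the sort dictated by that typing, obtaining a term $s' \in \TT_\SS(\FF,\VV)$ together with a variable renaming $\pi$ satisfying $s'\pi = s$. The point is then to transfer confluence from $s'$ to $s$: every peak $t \lud{\RR}{*}\from s \to_\RR^* u$ should lift to a peak $t' \lud{\RR}{*}\from s' \to_\RR^* u'$ with $t'\pi = t$ and $u'\pi = u$, and since $s'$ is confluent the resulting valley descends to a valley for $s$ upon applying $\pi$.

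The main obstacle, and the place where left-linearity is indispensable, is this lifting step. Unlike the many-sorted case, the sorts $\TT_\alpha(\FF,\VV)$ are no longer pairwise disjoint, so two occurrences of a single variable in $s$ may be forced to carry different sorts in $s'$; thus $s'$ need not preserve the equalities among variable occurrences present in $s$. For a left-linear rule this is harmless: each variable of a left-hand side occurs once, so a match of $\ell$ against $s|_q = (s'|_q)\pi$ never relies on such an equality and can be replayed directly against $s'|_q$, assigning to each variable of $\ell$ the subterm of $s'$ under the corresponding position. Consequently every $\RR$-step on $s$ lifts, reducts of $s'$ remain in $\TT_\SS(\FF,\VV)$ by closure under \eqref{os:1}, and the argument closes by invoking Theorem~\ref{thm-main-ll}.
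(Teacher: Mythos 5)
Your proposal is correct and follows the paper's proof essentially step for step: the same layer system (the closure of $\TT_\SS(\FF,\VV)$ under \cvar), weak layering only, with \crewrite established via maximality of the max-top together with the sort decrease $\alpha \succeq \beta$ and strictness of $\ell$, and rank-one confluence obtained by lifting a peak through a variable substitution $\chi$ (your $\pi$) using left-linearity, concluding with Theorem~\ref{thm-main-ll}. The one point you leave implicit is that left-linearity is already needed inside \crewrite itself --- the paper explicitly uses it to obtain a substitution $\sigma$ with $M[\ell\sigma]_p = M$, since $\ell_\square \Cleq M|_p$ alone does not yield a match of a non-linear $\ell$ against the context $M|_p$ (two occurrences of a variable may face a hole and a non-hole) --- but as your theorem hypothesis supplies left-linearity anyway, this is a presentational lapse rather than a gap.
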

\begin{proof}
Let $\bL$ be the smallest set such that $\TT_\SS(\FF,\VV) \subseteq \bL$
and $\bL$ is closed under~\cvar. First we show that 
$\RR$ is weakly layered according to~$\bL$.
In the sequel we call contexts \emph{weakly order-sorted} if they
are order-sorted except that arbitrary variables may occur at any
position. (These are exactly the elements of $\bL$ and weakly
order-sorted terms are those in $\bLT$.)

Condition \ctop holds trivially and condition \cvar holds by assumption.
For \cpartial we assume that $L|_p \merge N = N'$ with $p \in \Pos_\FF(L)$
is defined. 
Since $L,N \in \bL$ obviously $N'$ is weakly order-sorted and so is
$L[N']_p$ since $\rt(L|_p) = \rt(N')$ and hence $L[N']_p \in \bL$.
The final condition is \crewrite. So let $s \to_{p,\ell\to r} t$
with $p \in \Pos_\FF(M)$ for the max-top $M$ of $s$.
We have $\rt(M|_p) = \rt(\ell)$ and hence
$M[\ell]_p$ is a layer. Since $M$ is the max-top of $s$ and $\ell$ is
left-linear there is a substitution $\sigma$ such that
${M[\ell\sigma]_p = M}$.
Hence ${M \to_{p,\ell\to r} M[r\sigma]_p}$.
By compatibility with the sort attachment~$\SS$ we have $r\sigma \in \bL$.
Furthermore if $\alpha$ and $\beta$ are the sorts of $\ell$ and $r$
then $\alpha \succeq \beta$ ensures that $M[r\sigma]_p$ is 
weakly order-sorted and hence a member of $\bL$.

Next we show confluence of terms of rank one. To this end let 
$s \in \bL \cap \TT(\FF,\VV)$. Then there are a term
$s' \in \TT_\SS(\FF,\VV)$
and a variable substitution~$\chi$ such that $s = s'\chi$.
Let $t \lud{\RR}{*}{\from} s \to^*_\RR u$. By left-linearity of $\RR$
there are terms $t'$ and $u'$ with $t = t'\chi$ and $u = u'\chi$ such that
$t' \lud{\RR}{*}{\from} s' \to^*_\RR u'$. The confluence assumption on
$\TT_\SS(\FF,\VV)$ yields
$t' \to^*_\RR v' \lud{\RR}{*}{\from} u'$.
Hence $t = t'\chi \to^*_\RR v'\chi \lud{\RR}{*}\from u'\chi = t$.
We conclude by Theorem~\ref{thm-main-ll}.
\end{proof}

\subsection{Variable-restricted Layer Systems}
\label{sec-restricted-ls}

The following example shows that Theorem~\ref{thm-main-con} alone cannot
establish Theorem~\ref{thm-order} for TRSs which are neither 
left-linear nor bounded duplicating.

\begin{example}
\label{ex-restrict-mot}
Consider the set of sorts
$S = \{ 0, 1, 2, 3, 4 \}$, where $2 \succeq 0$ and $2 \succeq 1$.
The sort attachment $\SS$ is given by
\begin{xalignat*}{4}
u &: 0 &
v &: 1 &
\m f &: 3 \times 3 \to 4 &
\m h &: 2 \times 2 \times 0 \times 1 \to 3
\\
x &: 2 &
y &: 3 &
\m g &: 3 \to 3 &
\m a,\m b &: 4
\end{xalignat*}
and the TRS $\RR$ consists of the rules
\begin{xalignat*}{3}
\m f(y,y) &\to \m a &
\m f(y,\m g(y)) &\to \m b &
\m h(x,x,u,v) &\to \m g(\m h(u,v,u,v))
\end{xalignat*}
Then $\RR$ is confluent on $\TT_\SS(\FF,\VV)$ because it is locally
confluent and terminating on order-sorted terms, noting that $u$ and $v$
never represent equal terms due to sort constraints. However, if we
take $\bL$ to be the closure of $\TT_\SS(\FF,\VV)$ under \cvar then
the term $\m f(t,t)$ with $t = \m h(z,z,z,z)$ is not confluent because
${\m a \from \m f(t,t) \to \m f(t,\m g(t)) \to \m b}$.
Note that $\m f(t,t)$ is not order-sorted but contained in $\bL$.
Furthermore, observe that $\RR$ is layered according to~$\bL$.
Finally note that $\bL$ is the smallest
layer system with this property that contains $\TT_\SS(\FF,\VV)$.

\end{example}

The above example does not contradict Theorem~\ref{thm-order}
since $\RR$ is not strongly
compatible with~$\SS$;
the right-hand sides of $\RR$ are not strictly order-sorted although
$\RR$ is neither left-linear nor bounded duplicating. In particular
we have an infinite reduction
$\m h(z,z,\Diamond(z),\Diamond(z))
\to_\RR \m g(\m h(\Diamond(z),\Diamond(z),\Diamond(z),\Diamond(z)))
\to^+_{\Diamond(x) \to x} \m g(\m h(z,z,\Diamond(z),\Diamond(z))) 
\to_\RR \cdots$\,.

The problem is that layer systems allow to replace variables by variables
of a different sort and hence contain terms which are not order-sorted,
enabling new rewrite steps (which does never happen in the many-sorted
case nor for left-linear systems in the order-sorted setting). 
Since $\TT_\SS(\FF,\VV) \subsetneq \bLT$, we have to study when confluence
on $\TT_\SS(\FF,\VV)$ implies confluence on $\bLT$ in order to apply 
Theorem~\ref{thm-main-con}.
Instead of proving the missing implication directly, we again pursue
a general approach. To this end we relax condition~\cvar such that 
variables need not be replaced by variables of different sort,
to enable the representation of $\TT_\SS(\FF,\VV)$ as $\bLT$,
where $\bL$ satisfies the following refined notion of layer systems.

\begin{definition}
\label{def-restrict}
Recall the conditions from Definition~\ref{def-laysys}.
We introduce the following condition:
\begin{enumerate}
\item[\cvarp]
  If $C[x]_p \in \bL$ then $C[\square]_p \in \bL$.
  If $C[\square]_p \in \bL$ then 
  $\{ x \in \VV \mid C[x]_p \in \bL \}$ is an infinite set.
\end{enumerate}
We call $\bL \subseteq \CC(\FF,\VV)$ a
\emph{variable-restricted layer system} if it satisfies the 
conditions \ctop, \cvarp, and \cpartial.
Analogously, a variable-restricted layer system
\emph{weakly layers} $\RR$ if \crewrite is satisfied and
\emph{layers} $\RR$ if \crewrite, \cconsistent, and \cstepfusion
are satisfied.
\end{definition}

To distinguish between variable-restricted and (unrestricted) layer
systems we denote the former by $\vL$ in the future.
Note that \cvar implies \cvarp, hence any layer system is
also a variable-restricted layer system.
Furthermore, for each variable-restricted layer system $\vL$ there
is a corresponding (unrestricted) layer system
$\bLv = \vL \cup \{ C[x]_p \mid \text{$C[\square]_p \in \vL$ and
$x \in \VV$} \}$.
Obviously $\vL \subseteq \bLv$.

With the new condition \cvarp it is now possible to adequately represent 
$\TT_\SS(\FF,\VV)$ by a variable-restricted layer system.

\begin{example}[(Example~\ref{ex-restrict-mot} revisited)]
\label{ex-restrict-mot2}
To obtain a variable-restricted layer system,
let $\vL$ be the smallest set such that
$\TT_\SS(\FF,\VV) \subseteq \vL$ and $\vL$ is closed under
replacing variables by holes. Then it satisfies \cvarp.
Note that $\vLT = \TT_\SS(\FF,\VV)$
and hence $t = \m{h}(z,z,z,z) \notin \vL$ and thus $\m{f}(t,t) \notin \vL$.
\end{example}

For a weakly layered TRS the reduct of a rank one term
again is a rank one term.

\begin{lemma}
\label{lem-bl-r-closed}
Let\/ $\vL$ be a variable-restricted layer system that weakly layers
a TRS $\RR$. Then $\vLT$ is closed under rewriting by $\RR$.
\end{lemma}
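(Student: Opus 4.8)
The plan is to adapt the argument of Lemma~\ref{lem-bl-closed} to the variable-restricted setting, exploiting that closure is only asserted for \emph{terms} (elements of $\vLT$), which contain no holes. First I would note that the max-top machinery transfers verbatim to variable-restricted layer systems: in the proof of Lemma~\ref{lem-max-top}, existence of a non-empty top only ever replaces variables by holes, which is licensed by the first implication of \cvarp, while uniqueness relies solely on \cpartial. Hence every non-empty context again has a unique max-top with respect to $\vL$, and \crewrite refers to this notion.

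The key simplification is that a term $s \in \vLT$ is a layer with no holes, so $s$ is its own max-top, being the $\Cleq$-maximum context below itself. This sidesteps exactly the step in Lemma~\ref{lem-bl-closed} where the holes of a layer are filled with a fresh variable to make it its own max-top; in the variable-restricted case that move is not freely available, since \cvarp only guarantees an infinite set of \emph{admissible} variables rather than every fresh one. Working with hole-free layers removes the need for it entirely.

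Given $s \in \vLT$ and a step $s \to_{p,\ell \to r} t$, I would argue $p \in \Pos_\FF(s)$: since $\ell \notin \VV$ the redex satisfies $\rt(s|_p) = \rt(\ell) \in \FF$. Writing $M = s$ for the max-top, we thus have $p \in \Pos_\FF(M)$, and \crewrite yields $M \to_{p,\ell \to r} L$ for some $L \in \vL$. Because a rewrite step is determined by its position and rule, this step coincides with $s \to_{p,\ell \to r} t$, forcing $L = t$; as $t$ is a term we conclude $t \in \vLT$.

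The only point demanding care — and the reason a direct argument is preferable to routing through the companion system $\bLv$ and Lemma~\ref{lem-bl-closed} — is that $\RR$ need not weakly layer $\bLv$: max-tops computed in $\bLv$ can differ from those in $\vL$, so \crewrite for $\bLv$ is not at our disposal. Restricting attention to hole-free layers neatly avoids this obstacle, and the resulting proof is in fact shorter than that of Lemma~\ref{lem-bl-closed}.
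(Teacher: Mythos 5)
Your proof is correct and takes essentially the same route as the paper's own argument: a term in $\vLT$ has no holes and is therefore its own max-top, so \crewrite applied at the redex position (a function position, since left-hand sides are non-variable) shows the reduct is itself a layer and hence in $\vLT$. Your additional verifications — that Lemma~\ref{lem-max-top} survives under \cvarp and that one should not detour through $\bLv$ — are sound but go beyond what the paper's three-line proof spells out.
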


\begin{proof}
Let $t \in \vLT$ and $t \to_\RR u$. Note that $t$ is its
own max-top. By~\crewrite, its reduct $u$ is a layer and hence
$u \in \vLT$.
\end{proof}

In the remainder of this section we show the analogues of
Theorems~\ref{thm-main-ll}, \ref{thm-main-bd}, and \ref{thm-main-con} for
variable-restricted layer systems (cf.\ Corollary~\ref{cor-rmain}).

The case of left-linear systems is straightforward.

\begin{lemma}
\label{lem-restrict-ll}
Let $\vL$ be a variable-restricted layer system that
weakly layers a left-linear TRS $\RR$.
If $\RR$ is confluent on $\vLT$ then
$\RR$ is confluent on $\bLvT$.
\end{lemma}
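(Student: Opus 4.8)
The plan is to mimic the proof of Theorem~\ref{thm-order-ll}: reduce confluence on the larger set $\bLvT$ to the assumed confluence on $\vLT$ by writing every term of $\bLvT$ as a variable instance of a \emph{linear} term of $\vLT$ and transporting peaks along this representation with the help of left-linearity. The heart of the argument is therefore a representation lemma stating that for every $s \in \bLvT$ there exist a linear term $s' \in \vLT$ and a variable substitution $\chi$ with $s = s'\chi$ and $\dom(\chi)$ disjoint from the variables of $s$.

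To prove this, I first observe that $s \in \bLvT$ forces the skeleton $s_\square \in \vL$: if $s \in \vL$ this follows by applying the first clause of \cvarp once for each variable position, and if $s = C[x]_p$ with $C[\square]_p \in \vL$ then $s_\square = (C[\square]_p)_\square \in \vL$ for the same reason. I then reconstruct $s'$ by filling the holes of $s_\square$ one position at a time, keeping the invariant that the partially filled context lies in $\vL$. At each remaining hole position $p$ the second clause of \cvarp guarantees that infinitely many variables keep the context in $\vL$; hence I may pick one, say $z_p$, that is \emph{fresh}, that is, distinct from the variables of $s$ and from all previously chosen $z_q$, and record $\chi(z_p) = s|_p$. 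Since the $z_p$ are pairwise distinct and fresh, $s'$ is linear and $\chi$ is a well-defined variable substitution with $s'\chi = s$.

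The joining argument is then routine. Given a peak $t \lud{\RR}{*}\from s \to_\RR^* u$ with $s \in \bLvT$, I lift it along $\chi$. Because $\chi$ replaces variables by variables, it introduces no function symbols, so the function-symbol skeleton of $s = s'\chi$ coincides with that of $s'$; consequently every redex in $s$ matching a left-hand side $\ell$ has its skeleton already present in $s'$, and the left-linearity of $\ell$ ensures the match requires no equalities that $\chi$ may have created. Each step therefore lifts, and iterating gives $t' \lud{\RR}{*}\from s' \to_\RR^* u'$ with $t = t'\chi$ and $u = u'\chi$. By Lemma~\ref{lem-bl-r-closed} the terms $t'$ and $u'$ lie in $\vLT$, so the confluence hypothesis yields a common reduct $t' \to_\RR^* v' \lud{\RR}{*}\from u'$ in $\vLT$. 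Applying $\chi$ gives $t \to_\RR^* v'\chi \lud{\RR}{*}\from u$, and since $(v'\chi)_\square = v'_\square \in \vL$ the common reduct $v'\chi$ again lies in $\bLvT$. The same skeleton observation shows that $\bLvT$ is closed under rewriting, so both requirements for confluence on $\bLvT$ are met.

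The main obstacle is the representation lemma. Keeping the original variables of $s$ need not work, because $s$ may carry variables that are inadmissible for $\vL$ at their positions, which is exactly the order-sorted mismatch exhibited in Example~\ref{ex-restrict-mot}; the infinitude clause of \cvarp is precisely what lets me replace each such variable by a fresh admissible one while avoiding spurious coincidences that could enable extra rewrite steps. The lifting step is standard but genuinely depends on left-linearity: if $\chi$ identifies two variables, a non-left-linear rule could fire on the resulting equal subterms in $s$ without any corresponding redex in $s'$.
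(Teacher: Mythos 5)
Your proof is correct and takes essentially the same route as the paper's: represent $s \in \bLvT$ as $s'\chi$ for a term $s' \in \vLT$ and a variable substitution $\chi$ (the paper merely asserts this via \cvar and \cvarp, whereas you spell out the construction by emptying to the skeleton and refilling holes with fresh variables using the infinitude clause of \cvarp), lift the peak step by step using left-linearity, apply Lemma~\ref{lem-bl-r-closed} and the confluence assumption on $\vLT$, and project the resulting valley back with $\chi$. The only difference is expository: linearity of $s'$ is not actually needed (left-linearity of $\RR$ together with $\chi$ being a variable-for-variable substitution already guarantees that every step lifts, and reducts of $s'$ need not stay linear anyway under duplicating rules), so your explicit linear representative is simply a convenient, valid instantiation of the existence claim the paper leaves implicit.
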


\begin{proof}
Let $s \in \bLvT$. By \cvar and \cvarp
a term $s' \in \vLT$ and a variable substitution $\chi$
exist such that $s'\chi = s$.
Now consider rewrite sequences
$t \lud{\RR}{*}\from s \to_\RR^* u$. Thanks to left-linearity,
there are terms $t'$ and $u'$ with ${t'\chi = t}$, ${u'\chi = u}$, and
$t' \lud{\RR}{*}\from s' \to_\RR^* u'$. 
By repeated application of
Lemma~\ref{lem-bl-r-closed}, $t'$, $u'$ as well as all intermediate
terms are elements of $\vLT$. From the assumption we obtain a valley
${t' \to_\RR^* v' \lud{\RR}{*}\from u'}$, inducing a valley 
${t = t'\chi  \to_\RR^* v'\chi \lud{\RR}{*}\from u'\chi = u}$.
Note that $v'\chi \in \bLv$ (by Lemma~\ref{lem-bl-closed}) and obviously
$v'\chi \in \TT(\FF,\VV)$.
\end{proof}

To prepare for a result concerning bounded duplicating TRSs, we
generalize bounded duplication to weakly bounded duplication,
which turns out to be more suitable for the proof of
Lemma~\ref{lem-restrict-wbd} below.

\begin{definition}
\label{def-wbd}
We call $\RR$ \emph{weakly bounded duplicating} if
$\{ \top \to \bot \} / \RR$ is terminating for fresh constants $\top$
and $\bot$.
\end{definition}

\begin{lemma}
\label{lem-bd-wbd}
Any bounded duplicating TRS is weakly bounded duplicating.
\end{lemma}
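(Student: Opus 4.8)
The plan is to reduce weakly bounded duplication to bounded duplication by a syntactic translation that turns occurrences of the constant $\top$ into $\Diamond$-redexes. Concretely, I fix a variable $z \in \VV$ and define a term homomorphism $\phi$ that is the identity on $\FF$ and on variables and sets $\phi(\top) = \Diamond(z)$ and $\phi(\bot) = z$. Thus $\phi$ maps terms over $\FF \cup \{\top,\bot\}$ to terms over $\FF \cup \{\Diamond\}$, replacing every occurrence of $\top$ by the redex $\Diamond(z)$ and every occurrence of $\bot$ by its contractum $z$.

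First I verify that $\phi$ translates single steps appropriately. A step $s \to_{\top \to \bot} t$ replaces one occurrence of $\top$ at a position $p$ by $\bot$; applying $\phi$ yields $\Diamond(z)$ at $p$ in $\phi(s)$ and $z$ at $p$ in $\phi(t)$, so $\phi(s) \to_{\Diamond(x) \to x} \phi(t)$ is a single $\Diamond$-step. For a step $s \to_\RR t$ using a rule $\ell \to r$ at position $p$ with matching substitution $\sigma$, the rule is over $\FF$, so by the substitution lemma $\phi(\ell\sigma) = \ell(\phi \circ \sigma)$ and $\phi(r\sigma) = r(\phi \circ \sigma)$; hence $\phi(s) \to_\RR \phi(t)$ is a single $\RR$-step using the same rule at the same position. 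In both cases \emph{exactly one} step of the corresponding kind is produced.

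Next I argue by contraposition. Suppose $\RR$ is not weakly bounded duplicating, that is, $\{\top \to \bot\}/\RR$ is not terminating. Unfolding the definition of relative termination, this yields an infinite $\RR \cup \{\top \to \bot\}$ derivation $t_0 \to t_1 \to \cdots$ containing infinitely many $\top \to \bot$ steps. Applying $\phi$ pointwise and using the step translation above, I obtain an infinite $\RR \cup \{\Diamond(x) \to x\}$ derivation $\phi(t_0) \to \phi(t_1) \to \cdots$ in which the $\Diamond$-rule is used infinitely often (once for each former $\top \to \bot$ step). This is precisely an infinite $\{\Diamond(x) \to x\}/\RR$ derivation, contradicting the assumption that $\RR$ is bounded duplicating.

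No genuine obstacle is expected: the construction is a straightforward change of signature, and the only points requiring care are that each $\top \to \bot$ step maps to precisely one $\Diamond$-step (so that the infinitude of the main steps is preserved) and that $\RR$-steps survive the translation unchanged, which holds because the rules of $\RR$ mention neither $\top$, $\bot$, nor $\Diamond$. I also note that using a single fixed variable $z$ for all introduced redexes is harmless, since each subterm $\Diamond(z)$ can be contracted independently.
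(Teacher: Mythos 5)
Your proof is correct and takes essentially the same route as the paper's: both argue by contraposition, translating an infinite $\{\top \to \bot\}/\RR$ derivation pointwise into an infinite $\{\Diamond(x) \to x\}/\RR$ derivation by turning each $\top$ into a $\Diamond$-redex, using that the rules of $\RR$ mention none of the fresh symbols. The only cosmetic difference is the image of the translation: you send $\top \mapsto \Diamond(z)$ and $\bot \mapsto z$, whereas the paper replaces $\top$ by $\Diamond(\bot)$ and leaves $\bot$ untouched, thereby avoiding the introduction of the variable $z$.
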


\begin{proof}
Assume that $\RR$ is not weakly bounded duplicating.
So there exists an infinite rewrite sequence
$t_0 \to t_1 \to \cdots$ in 
$\RR \cup \{ \top \to \bot \}$ that contains infinitely many
applications of the rule $\top \to \bot$.
Let $t_i'$ be obtained from $t_i$ by replacing all occurrences of
$\top$ by $\Diamond(\bot)$. Since $\top$ does not appear in the rules
of $\RR$, we obtain an infinite rewrite sequence
$t_0' \to t_1' \to \cdots$ in 
$\RR \cup \{ \Diamond(x) \to x \}$ with infinitely many
applications of the instance $\Diamond(\bot) \to \bot$ of
$\Diamond(x) \to x$. Hence $\RR$ is not bounded duplicating.
\end{proof}

To see that weakly bounded duplication generalizes bounded duplication,
consider the TRS $\RR$ consisting of the single rule
$\m{f}(\m{a},x) \to \m{f}(x,x)$, which is
not bounded duplicating since
$\m{f}(\m{a},\Diamond(\m{a})) \to_\RR
\m{f}(\Diamond(\m{a}),\Diamond(\m{a})) \to_{\Diamond(x) \to x}
\m{f}(\m{a},\Diamond(\m{a})) \to_\RR
\cdots$, but weakly bounded duplicating.

Below we will establish the following two lemmata.

\begin{lemma}
\label{lem-restrict-wbd}
Let $\vL$ be a variable-restricted layer system that 
weakly layers a weakly bounded duplicating TRS~$\RR$.
If $\RR$ is confluent on $\vLT$
then $\RR$ is confluent on $\bLvT$.
\end{lemma}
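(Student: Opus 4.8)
The plan is to push the template of the left-linear case (Lemma~\ref{lem-restrict-ll}) as far as it goes and to absorb the non-left-linear behaviour into a decreasing-diagram argument whose well-founded order is supplied by weakly bounded duplication, in exactly the way the $\Diamond$-order of Lemma~\ref{lem-ss-bd} handled ordinary bounded duplication. First I fix $s \in \bLvT$ and, using \cvar together with \cvarp exactly as in Lemma~\ref{lem-restrict-ll}, choose a representative $s' \in \vLT$ and a variable substitution $\chi$ with $s'\chi = s$; here \cvarp provides enough variables to take $s'$ with pairwise distinct variables except that I realise in $s'$ every identification forced by $\chi$ that $\vL$ is willing to make (as in the many-sorted proof of Theorem~\ref{thm-persistence}). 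Thus $\chi$ performs only the identifications that $\vL$ refuses; these \emph{illegal merges} are precisely what enables the extra rewrite steps of Example~\ref{ex-restrict-mot}. Rather than lifting whole peaks at once, I would label each $\to_\RR$-step on $\bLvT$ by a predecessor of its source (as in Definition~\ref{short steps}), bundle maximal sub-sequences that are mirrored on the representative into single labelled steps as the short-step relation does, and then apply Theorem~\ref{thm-dd}; confluence of the bundled relation gives confluence of $\to_\RR$ since it lies between $\to_\RR$ and $\to_\RR^*$.

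The well-founded order on labels comes from marking. To each $s \in \bLvT$ I associate a marked term $s^\top$ by placing the fresh constant $\top$ on the variable occurrences that witness an illegal merge, so that $|s^\top|_\top$ measures how far $s$ departs from a genuine $\vLT$-term. The key auxiliary step is the analogue of Lemma~\ref{lem-diamond-step}: if $s \to_\RR u$ then $s^\top \to_\RR {\cdot} \to_{\top \to \bot}^* u^\top$, where the $\top \to \bot$ part is non-empty exactly when the step cannot be mirrored on $s'$, i.e.\ when it exploits an illegal merge. Since $\RR$ is weakly bounded duplicating (Definition~\ref{def-wbd}), $\{\top \to \bot\}/\RR$ is terminating, so setting $s_0 \succ s_1$ iff $s_0^\top \to_{\{\top \to \bot\}/\RR}^+ s_1^\top$ yields a well-founded order, and the simulation step gives $s_1 \succ u$ whenever $s_1 \to_\RR^* s \to_\RR u$ and the last step uses an illegal merge.

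With these ingredients the local-peak analysis mirrors Lemma~\ref{lem-ss-bd}, the representative $s'$ playing the role of the base context. Given a local peak $t \LLL[s_0]{} s \RRR[s_1]{} u$, I provide a rank-one peak analysis: the mirrorable part of each branch lifts to $s'$, stays inside $\vLT$ by repeated use of Lemma~\ref{lem-bl-r-closed}, is joined by the confluence assumption on $\vLT$, and the valley is transported back along $\chi$. For each branch I distinguish two cases, as in Lemma~\ref{lem-ss-bd}. If the reduct is representable in $\vLT$ (the step is mirrored on $s'$), the join out of it is labelled by the opposite side ($s_1$ on the $t$-side, $s_0$ on the $u$-side) and fits the $\rud{\beta}{=}$ slot of a decreasing diagram. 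If instead an illegal merge is used, the simulation step gives $s_1 \succ t$ (resp.\ $s_0 \succ u$), so the join is composed of $\curlyvee s_1$-steps (resp.\ $\curlyvee s_0$-steps); in both cases the valley is decreasing, and local decreasingness follows.

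The main obstacle is the marking construction together with its simulation step: one must verify that an $\RR$-step matching $s$ only because $\chi$ has illegally identified variables genuinely consumes a $\top$, so that $|{\cdot}|_\top$ strictly decreases modulo $\RR$, while a mirrored step duplicates $\top$'s only in ways that $\top \to \bot$ can subsequently clean up. This is the precise point at which \emph{weakly} bounded duplication, rather than ordinary bounded duplication, is the natural hypothesis: the marks sit at the variable leaves of the rank-one terms of $\bLvT$, so a fresh constant governed by $\top \to \bot$ is the right marker, whereas the unary $\Diamond$ of Lemma~\ref{lem-ss-bd} is tailored to erasing tall aliens. The passage between the two notions, needed to feed the bounded-duplicating instance of Theorem~\ref{thm-order} into this lemma, is exactly Lemma~\ref{lem-bd-wbd}.
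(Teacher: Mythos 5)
There is a genuine gap, and it sits exactly where you locate the ``main obstacle'': the $\top$-marking construction and its simulation lemma. Your argument rests on the dichotomy that a step from $t$ either mirrors on the representative or ``exploits an illegal merge'', with the latter consuming a $\top$. This misdiagnoses the dynamics. The paper proves (Lemma~\ref{lem-r-repr}\eqref{lem-r-repr-1}) that \emph{every} $\RR$-step from $t$ can be mirrored on the \emph{least} representative $\lst{t}$ --- including steps whose matching depends on identifications made by $\chi$ --- because minimality forces $\lst{t}$ to realize every identification that some layer of $\vL$ admits (this uses \crewrite, \cpartial and max-tops, and is the technical heart of the section). What fails is only that the mirrored reduct $u'$ need not be \emph{least}: renormalizing $u' \gg \lst{u}$ replaces variables by $\gg$-smaller ones, and a \emph{later} step may be unmirrorable on the drifted $u'$ even though it uses no illegal merge at all. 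In Example~\ref{ex-restrict-mot3}, $\m f(y,\m g(y)) \to \m b$ mirrors on $\m f(\lst{t},\m g(\lst{t}))$ but not on the drifted representative $\m f(\lst{t},\m{g}(\m{h}(x_3,x_4,x_3,x_4)))$, so non-mirrorability is not equivalent to using an illegal merge, and a measure counting ``illegal-merge witnesses'' decreases in the wrong situations.

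Concretely, the claimed simulation $s^\top \to_\RR {\cdot} \to_{\top \to \bot}^* u^\top$ fails on three counts: (i) $\top \to \bot$ converts a mark into the constant $\bot$, never into the variable that $u^\top$ carries at a position that has ceased to witness an illegal merge, so the cleanup cannot land on $u^\top$ (unlike $\Diamond(x) \to x$ in Lemma~\ref{lem-diamond-step}, which recovers the unmarked subterm); (ii) a step exploiting an illegal merge typically erases or collapses the marked occurrences by the $\RR$-step itself, so no $\top \to \bot$ step occurs and the strict decrease $s_1 \succ t$ you need for the decreasing diagram does not follow; (iii) a duplicating step can create fresh illegal-merge witnesses in $u$ that are absent from the simulated term, so marks can grow in a way $\top \to \bot$ cannot repair. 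The paper instead develops the least-representative apparatus (Lemmata~\ref{lem-repr-copy}, \ref{lem-repr}, \ref{lem-r-repr}) and orders labels by ${\to_{{\gg}/\RR}^+}$, where $\gg$ is a fixed well-order on the finitely many variables of $s'$ regarded as fresh constants: weak bounded duplication enters exactly once, through Lemma~\ref{lem-wbd}, whose proof uses the $\top/\bot$ idea correctly --- for each variable $\alpha$ one projects $\alpha \mapsto \top$ and every other such constant to $\bot$, so a renormalization step becomes a genuine $\top \to \bot$ step, and a multiset-of-lexicographic measure yields termination of $\gg$ relative to $\RR$. Your outer scaffolding (labelled bundled steps $\RRV{}$, two cases per branch, Theorem~\ref{thm-dd}) matches the paper's, but without the mirroring lemma and with the faulty marking order the case analysis has no valid strict decrease, so the proof as proposed does not go through.
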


\begin{lemma}
\label{lem-restrict-con}
Let $\vL$ be a variable-restricted layer system that 
layers a TRS $\RR$.
If $\RR$ is confluent on $\vLT$ then $\RR$ is confluent on $\bLvT$.
\end{lemma}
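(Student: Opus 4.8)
The plan is to follow the skeleton of the proof of Lemma~\ref{lem-restrict-ll}, but to replace its single use of left-linearity by an appeal to the full layered conditions. The first ingredient is the representation underlying all three bridging lemmas: for every $s \in \bLvT$ there are a term $s' \in \vLT$ and a variable substitution $\chi$ with $s = s'\chi$. Here the weakening of \cvar to \cvarp is decisive. Replacing variables by holes keeps us in $\vL$ by the first clause of \cvarp, so the common skeleton of $s$ lies in $\vL$; the second clause then guarantees that at each hole of that skeleton \emph{infinitely many} admissible variables are available, so we may refill the holes with pairwise distinct fresh admissible variables. This produces a representative $s'\in\vLT$ that is essentially linear at the filled positions, together with a renaming $\chi$ that records exactly the identifications present in $s$ but absent from $s'$.

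Given a peak $t \lud{\RR}{*}\from s \to_\RR^* u$ in $\bLvT$, the goal is to lift it along $\chi$ to a peak in $\vLT$, close it by the confluence assumption on $\vLT$ (staying in $\vLT$ by repeated use of Lemma~\ref{lem-bl-r-closed}), and project the resulting valley back with $\chi$. Steps that are performed in the part of $s$ still carrying genuine $\vL$-structure lift verbatim, and their interaction is controlled by \crewrite together with \cconsistent, so that the reasoning of Lemma~\ref{lem-consistent}—which in the unrestricted setting shows that rewriting the base context keeps it a base context unless the term collapses—transports to $\vL$. The obstruction is a step of a non-left-linear rule that matches $s = s'\chi$ only because $\chi$ identifies two distinct variables of $s'$; such a step has no verbatim counterpart on $s'$.

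To organize these non-liftable steps I would stratify $\bLvT$ by the rank induced by the variable-restricted system $\vL$ itself. Although every term of $\bLvT$ has $\bLv$-rank one, its $\vL$-max-top is in general a proper subcontext, so it carries a well-defined $\vL$-rank; max-tops are unique by the analogue of Lemma~\ref{lem-max-top}, whose proof only needs \cvarp and \cpartial. The base case, $\vL$-rank one, is precisely $\vLT$ and holds by assumption. For the inductive step I would run the short-step/tall-step analysis of Section~\ref{sec-main} \emph{relative to} $\vL$: short steps take place in the $\vL$-base context, lift to layers of $\vL$, and are locally decreasing by the $\vL$-version of Lemma~\ref{lem-ss-con} (this is where \cconsistent does the work that left-linearity and weak bounded duplication do in Lemmata~\ref{lem-restrict-ll} and~\ref{lem-restrict-wbd}); tall steps act inside the $\vL$-aliens, which have smaller $\vL$-rank and are confluent by the induction hypothesis. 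The problematic non-liftable steps are thereby confined to the aliens and absorbed into the induction, and the peak is joined by decreasing diagrams exactly as in Lemma~\ref{lem-main}.

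The hard part will be the peak-analysis construction (Lemma~\ref{lem-peak-analysis}) relative to $\vL$. Peak analysis replaces equal $\vL$-aliens by a single common fresh variable, and for the representative $s'$ to lie in $\vLT$ that one variable must be \emph{simultaneously} admissible at all the $\vL$-hole positions where the alien occurs. This is not granted by \cvarp alone, which only controls each hole in isolation; it is precisely here that the full layered conditions must be used, with \cconsistent excluding fusion from above and \cstepfusion excluding the conspiring-aliens phenomenon, so that coincident aliens are forced to occupy mutually compatible hole positions. Establishing this compatibility—and, more routinely, checking that each lemma of Sections~\ref{sec-setup}--\ref{sec-dd-short} survives the substitution of \cvarp for \cvar—is the technical heart of the argument; once it is in place, confluence on $\bLvT$ follows, and Theorem~\ref{thm-main-con} applied to the genuine layer system $\bLv$ can then be invoked to lift confluence from rank-one terms to all terms in the applications.
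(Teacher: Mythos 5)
Your plan breaks down at exactly the point you flag as its technical heart, and the claimed fix is not available. The compatibility fact you hope to extract from \cconsistent and \cstepfusion --- that coincident aliens are forced to occupy hole positions admitting one \emph{common} fresh variable --- is false, even in the intended order-sorted instantiation. In Example~\ref{ex-repr} of the paper the occurrences of the single alien $z$ at positions $13$ and $14$ have disjoint admissible sets $V_{13} = \{x_3,x_7\}$ and $V_{14} = \{x_4,x_8\}$, so no one variable can represent $z$ at both holes; a peak analysis \`a la Lemma~\ref{lem-peak-analysis} relative to $\vL$, which replaces equal aliens by a single fresh variable, therefore cannot be carried out. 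The paper's actual mechanism is different: it fixes a well-order $\gg$ on $\Var(s')$ and works with the \emph{least representative} $\lst{t}$ (Lemma~\ref{lem-repr}), in which every hole receives its own minimal admissible variable, so equal aliens need \emph{not} share a variable globally. Minimality then does the work that you wanted simultaneous admissibility to do: the proof of Lemma~\ref{lem-r-repr}\eqref{lem-r-repr-1} (via \crewrite, \cpartial, and the copying Lemma~\ref{lem-repr-copy}) shows that variables in $\lst{t}$ are forced to coincide precisely at those positions where a non-left-linear redex pattern requires equality, so every $\RR$-step on $t$ lifts to $\lst{t}$; the full layered conditions \cconsistent and \cstepfusion enter only in part~\eqref{lem-r-repr-2}, to guarantee that the lifted step lands on $\lst{u}$ again (or on a variable).

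A second concrete error: your claim that the non-liftable steps are ``confined to the aliens and absorbed into the induction'' is wrong. A non-left-linear rule fires at a function position of the $\vL$-max-top exactly because $\chi$ identifies distinct aliens --- in Example~\ref{ex-restrict-mot} the step $\m f(t,t) \to \m a$ with $\m f(y,y) \to \m a$ is a root step, hence a short step in your stratification, not an alien step --- so induction on $\vL$-rank does not dispose of these steps. Moreover, the heavy apparatus you propose (stratification by $\vL$-rank, rerunning the short/tall analysis, a $\vL$-version of Lemma~\ref{lem-ss-con}, decreasing diagrams) is not needed for this lemma at all: once Lemma~\ref{lem-r-repr}\eqref{lem-r-repr-2} is in place, the paper's proof is a direct lifting argument. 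Given a peak $t \lud{\RR}{*}\from s \to_\RR^* u$ in $\bLvT$, one lifts it step by step onto $\lst{s}$, every intermediate term being a least representative (a variable can occur only as the final reduct, since variables are normal forms), closes the lifted peak $t' \lud{\RR}{*}\from \lst{s} \to_\RR^* u'$ inside $\vLT$ by the confluence hypothesis, and applies $\chi$ to the resulting valley. No induction and no decreasing diagrams occur inside Lemma~\ref{lem-restrict-con}; the passage from $\bLvT$ to all terms happens afterwards, in Corollary~\ref{cor-rmain} via Lemma~\ref{lem-c-preserve} and Theorem~\ref{thm-main-con}, as the last sentence of your proposal correctly anticipates.
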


For both proofs we are given a variable-restricted layer system
$\vL$ that weakly layers a TRS $\RR$.
We fix an initial term $s \in \bLvT$ and show that it is confluent.
Since $\vL \subseteq \bLv$ the confluence assumption on $\vLT$ may
not apply to $s$. To overcome this problem we use \cvar
and \cvarp to construct a term $s' \in \vLT$ and a variable substitution
$\chi$ such that $s = s'\chi$ and fix a well-order
$\gg$ on $\Var(s')$. We extend $\gg$ to terms by closing
it under contexts and transitivity.

\begin{center}
\fbox{\emph{Let $s \in \bLvT$, $s' \in \vLT$, and $\chi$ 
with $s = s'\chi$ be fixed.}}
\end{center}

\begin{definition}
A term $t' \in \vLT$ is a \emph{representative} of $t \in \bLvT$ if
$t = t'\chi$ and $\Var(t') \subseteq \Var(s')$. A representative
$t'$ of $t$ is called \emph{minimal} if it is minimal with respect to
$\gg$.
\end{definition}

Note that $s'$ is a representative of $s$.
Before proving key properties for representatives we show how they
help to avoid the situation of Example~\ref{ex-restrict-mot}.

\begin{example}[(Example~\ref{ex-restrict-mot} revisited)]
\label{ex-restrict-mot3}
Consider the variables with sorts
\begin{align*}
x_1, x_2, x_5, x_6: 2  &&
x_3, x_7: 0 &&
x_4, x_8: 1
\end{align*}
and order $x_8 \gg x_7 \gg \cdots \gg x_1$. The term
$s = \m{f}(t,t) \in \bLvT$ has the representative
$s' = \m{f}(\m{h}(x_1,x_2,x_3,x_4),\m{h}(x_5,x_6,x_7,x_8)) \in \vLT$
and the (unique) minimal representative
$\lst{s} = \m{f}(\lst{t},\lst{t}) \in \vLT$
where $\lst{t} = \m{h}(x_1,x_1,x_3,x_4)$.
The peak $\m{a} \from \m{f}(t,t) \to \m{f}(t,\m{g}(t)) \to \m{b}$ in
$\bLvT$ 
is simulated by
\[
\m{a} \from \m{f}(\lst{t},\lst{t}) \to
\m{f}(\lst{t},\m{g}(\m{h}(x_3,x_4,x_3,x_4))) \gg
\m{f}(\lst{t},\m{g}(\lst{t})) \to \m{b}
\]
in $\vLT$. Note that the $\gg$-step replaces
$\m{f}(\lst{t},\m{g}(\m{h}(x_3,x_4,x_3,x_4)))$
by the least representative $\m{f}(\lst{t},\m{g}(\lst{t}))$ of
$\m{f}(t,\m{g}(t))$.
\end{example}

The key operation on representatives and related terms is
copying variables between them, as justified by the following
lemma.

\begin{lemma}
\label{lem-repr-copy}
Let $L, N \in \vL$ be layers with $L_\square = N_\square$.
If $p \in \Pos_{\VC}(L)$ then $L[N|_p]_p \in \vL$.
\end{lemma}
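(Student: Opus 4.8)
The plan is to reduce the whole statement to a single application of \cpartial at the root, using only the variable-to-hole half of \cvarp to hollow out the contexts appropriately.

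First I would dispose of the degenerate case in which $\rt(L)$ is not a function symbol. Then $L$ is a single variable or a single hole, so $p = \epsilon$ and $L_\square = \square = N_\square$; hence $N$ is also a single variable or hole and $L[N|_p]_p = N \in \vL$ holds trivially. From now on assume $\rt(L)$ (equivalently $\rt(N)$, since $L_\square = N_\square$) is a function symbol, so that $\epsilon \in \Pos_\FF(L)$.

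Next I would introduce two auxiliary layers that agree with the common skeleton $L_\square = N_\square$ but carry holes almost everywhere. Put $L' = L[\square]_p$; applying the first implication of \cvarp (in case $L|_p$ is a variable) gives $L' \in \vL$, and $L'$ coincides with $L$ off $p$ while $L'|_p = \square$. Put $N' = N_\square[N|_p]_p$, i.e.\ the layer obtained from $N$ by replacing every variable occurring at a position other than $p$ by a hole; one application of the first implication of \cvarp per such variable yields $N' \in \vL$. Because $L_\square = N_\square$, the contexts $L'$ and $N'$ share the same function skeleton, and at every position of $\Pos_{\VC}(L)$ at least one of them is a hole -- namely $L'$ at $p$ and $N'$ at every other such position. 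Hence $L' \merge N'$ is defined, and a position-by-position comparison shows $L' \merge N' = L[N|_p]_p$: at $p$ the merge returns $N|_p$, while at every other position it returns the entry of $L$.

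Finally, since $\epsilon \in \Pos_\FF(L')$ and $L'|_\epsilon \merge N' = L' \merge N'$ is defined, condition \cpartial gives $L'[\,L'|_\epsilon \merge N'\,]_\epsilon = L' \merge N' \in \vL$, which is exactly $L[N|_p]_p$. The step I expect to require the most care is the identity $L' \merge N' = L[N|_p]_p$ together with the definedness of the merge: this is precisely where $L_\square = N_\square$ enters (to make the skeletons match) and where hollowing out all of $N$'s remaining variables pays off (to force compatibility at every $\VC$-position). I note that the infiniteness clause of \cvarp is never used, so the same argument establishes the lemma for ordinary layer systems as well.
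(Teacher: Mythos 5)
Your proof is correct and essentially identical to the paper's: both handle $p=\epsilon$ separately, form $L' = L[\square]_p$ and the hollowed-out $N' = N[\square]_{q \in \Pos_{\VC}(L)\setminus\{p\}}$ via the variable-to-hole half of \cvarp, observe $L' \merge N' = L[N|_p]_p$, and conclude by one application of \cpartial at the root. Your extra observations (why the merge is defined, and that the infiniteness clause of \cvarp is unused) are accurate elaborations of steps the paper leaves implicit.
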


\begin{proof}
If $p = \epsilon$ then the claim is trivial. Otherwise, let
$L' = L[\square]_p$ and
$N' = N[\square]_{q \in \Pos_{\VC}(L) \setminus \{ p \}}$.
We have $L', N' \in \vL$ by applications of property \cvarp and 
$L[N|_p]_p = L' \merge N'$ by assumption. Property~\cpartial
yields the desired $L[N|_p]_p \in \vL$.
\end{proof}

The next lemma establishes that the minimal representative (if it exists)
is unique, justifying the name \emph{least} representative.
The proof makes the construction in Example~\ref{ex-restrict-mot3}
explicit and is illustrated by Example~\ref{ex-repr}.

\begin{lemma}
\label{lem-repr}
If $t \in \bLvT$ has a representative then it has a least
representative.
\end{lemma}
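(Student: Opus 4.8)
The plan is to produce a canonical representative $\lst{t}$ by selecting, at each variable position, the $\gg$-smallest variable that can occur there in \emph{some} representative, and then to show that this $\lst{t}$ is itself a representative and lies below every other one. The first ingredient I would record is a \emph{shape invariant}: any two representatives $t_1'$, $t_2'$ of $t$ satisfy $(t_1')_\square = (t_2')_\square$. Since $\chi$ is a variable substitution, erasing variables commutes with applying $\chi$, so $t_1'\chi = t = t_2'\chi$ forces $(t_1')_\square = t_\square = (t_2')_\square$. In particular, every representative carries its variables at exactly the hole positions $P := \Pos_\square(t_\square)$, and for each $p \in P$ the subterm $t|_p$ is a variable.

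Fix one representative $t_0'$, which exists by hypothesis. For $p \in P$ set $V_p = \{\, t'|_p \mid t' \text{ a representative of } t \,\}$. Using Lemma~\ref{lem-repr-copy} I would show $V_p = \{\, x \mid t_0'[x]_p \text{ is a representative} \,\}$: given a representative $t'$ with $t'|_p = x$, the shape invariant supplies $(t_0')_\square = (t')_\square$, so $L = t_0'$ and $N = t'$ meet the hypotheses of Lemma~\ref{lem-repr-copy} at $p \in \Pos_{\VC}(t_0')$, whence $t_0'[x]_p \in \vL$; as $\chi(x) = t|_p$ and $x \in \Var(s')$, the term $t_0'[x]_p$ is again a representative. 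Because $\varnothing \neq V_p \subseteq \Var(s')$ and $\gg$ is a well-order on the finite set $\Var(s')$, each $V_p$ has a $\gg$-least element $\lst{x}_p$. Define $\lst{t}$ by $\lst{t}|_p = \lst{x}_p$ for $p \in P$, retaining the common shape elsewhere.

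The crux, and the step I expect to be the main obstacle, is showing $\lst{t} \in \vL$, since membership in $\vL$ couples all variable positions simultaneously rather than position by position. I would resolve this by assembling $\lst{t}$ incrementally, iterating Lemma~\ref{lem-repr-copy}: enumerate $P = \{p_1,\dots,p_m\}$, start from $t_0'$, and at stage $k$ replace position $p_{k+1}$ by $\lst{x}_{p_{k+1}}$, taking as witness $N$ a representative that realizes $\lst{x}_{p_{k+1}} \in V_{p_{k+1}}$. Every intermediate term shares the shape $t_\square$ (variable-for-variable replacements preserve it), so Lemma~\ref{lem-repr-copy} applies at each stage and keeps us inside $\vL$; moreover each intermediate term is a representative, since the inserted variable lies in $\Var(s')$ and has the correct $\chi$-image. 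As $P$ is finite, the process terminates with $\lst{t} \in \vLT$, a representative of $t$.

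Finally I would verify that $\lst{t}$ is \emph{least}. For any representative $t'$ and each $p \in P$ we have $t'|_p \in V_p$, hence $t'|_p = \lst{x}_p$ or $t'|_p \gg \lst{x}_p$ by totality of $\gg$ on $\Var(s')$. Transforming $t'$ into $\lst{t}$ by correcting the differing positions one at a time, each step is a context-closed $\gg$-decrease, so $t' = \lst{t}$ or $t' \gg \lst{t}$ by transitivity of the extended order. Thus $\lst{t}$ lies below every representative, which makes it the unique least (in particular minimal) representative.
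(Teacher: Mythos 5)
Your proof is correct and follows essentially the same route as the paper: both construct $\lst{t}$ by taking, at each variable position $p$, the $\gg$-least element of a set $V_p$ of admissible variables (your representative-based definition of $V_p$ coincides with the paper's, by exactly the copying argument via Lemma~\ref{lem-repr-copy} that you give), and both derive leastness from the shape invariant $t'_\square = t_\square$ together with the context-closed, transitive extension of $\gg$. Your stage-by-stage assembly of $\lst{t}$ by iterating Lemma~\ref{lem-repr-copy} merely spells out a step the paper compresses into an appeal to \cvarp and the definition of $V_p$.
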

\begin{proof}
We have to show the existence and uniqueness of a minimal representative
of $t$. From a representative $t'$ we obtain $t'_\square \in \vL$ using
\cvarp repeatedly. 
Consider $V_p = \{ x \in \Var(s') \mid \text{$\chi(x) = t|_p$ and 
$t'_\square[x]_p \in \vL$} \}$ for each $p \in \Pos_\VV(t')$.
Note that $t'|_p \in V_p$ because we can insert the variable $t'|_p$ into
$t'_\square$ at position $p$ by Lemma~\ref{lem-repr-copy} to obtain a
layer in $\vL$. Hence $V_p$ is non-empty. Since it is also finite it has
a minimum element $\min(V_p)$ with respect to $\gg$. Let
$\dot t = t'_\square[\min(V_p)]_{p \in \Pos_\VV(t')}$.
We have $\dot t \in \vL$ by \cvarp and the definition of $V_p$.
Clearly $\dot t \in \TT(\FF,\VV)$
and $\Var(\dot t) \subseteq \Var(s')$ because all holes are replaced
by some variable from $\Var(s')$. Moreover, $\dot t \chi = t$ by
construction, in particular the definition of $V_p$. It follows that
$\dot t$ is a representative of $t$.
Note that $\dot t$ does not depend on the choice of $t'$
because  $t'_\square = t^{}_\square$. Therefore,
$t' \gg^= \dot t$ for any representative $t'$ of $t$,
which makes $\dot t$ the least representative of $t$. 
\end{proof}

\begin{example}[(Example~\ref{ex-restrict-mot3} revisited)]
\label{ex-repr}
Consider 
$s  = \m f(\m h(z,z,z,z),\m h(z,z,z,z))$ and 
$s' = \m f(\m h(x_1,x_2,x_3,x_4),\m h(x_5,x_6,x_7,x_8))$ with
$\chi(x_i) = z$ for all $1\leq i \leq 8$.
Then 
$s'_\square = \m f(\m h(\square,\square,\square,\square),
                   \m h(\square,\square,\square,\square))$.
Since $V_{11} = V_{12} = V_{21} = V_{22} = \{x_1,\ldots,x_8\}$,
$V_{13} = V_{23} = \{x_3,x_7\}$, and 
$V_{14} = V_{24} = \{x_4,x_8\}$ we obtain
$\dot{s} = \m f(\m h(x_1,x_1,x_3,x_4),\m h(x_1,x_1,x_3,x_4))$. 
\end{example}

We denote the least representative term of a representable term
$t \in \bLvT$ by $\lst{t} \in \vLT$.
The following lemma states that a rewrite step performed on a term
in $\bLv$ can be mirrored on its least representative in $\vL$.
Recall that in Example~\ref{ex-restrict-mot3} the representative $s'$ is
a normal form but the step from $s$ can be mirrored on $\lst{s}$.

\begin{lemma}
\label{lem-r-repr}
Let $t, u \in \bLvT$ with $t \to_\RR u$  such that $\lst{t}$ exists.
\begin{enumerate}
\item
\label{lem-r-repr-1}
If\/ $\vL$ weakly layers $\RR$
then $\lst{t} \to_\RR u'$ for some representative $u'$ of $u$.
\item
\label{lem-r-repr-2}
If\/ $\vL$ layers $\RR$ then $u' = \lst{u}$ or $u' \in \VV$
in \eqref{lem-r-repr-1}.
\end{enumerate}
\end{lemma}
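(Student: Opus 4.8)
The plan is to transfer the given step along $\chi$, the only real work being the matching on $\lst t$. Write $t \to_{p,\ell\to r} u$ with matching substitution $\sigma$, so $t|_p = \ell\sigma$ and $u = t[r\sigma]_p$. Since $\chi$ is a variable substitution and $\lst t\chi = t$, the terms $\lst t$ and $t$ have the same function skeleton; in particular $\lst t$ and $t$ share the context $t_\square \in \vL$, and $p \in \Pos_\FF(\lst t)$. For \eqref{lem-r-repr-1} I would reduce everything to showing $\ell \matches \lst t|_p$. Once this holds we obtain $\lst t \to_{p,\ell\to r} u' := \lst t[r\hat\sigma]_p$ for the induced match $\hat\sigma$, and then $\ell\hat\sigma\chi = \ell\sigma$ together with $\Var(r) \subseteq \Var(\ell)$ gives $u'\chi = u$, while all variables of $u'$ come from $\lst t$ and $\hat\sigma$ and hence lie in $\Var(s')$; thus $u'$ is a representative of $u$. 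Lemma~\ref{lem-balanced-rewrite} can be used to organise this bookkeeping.

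The crux, and the step I expect to be hardest, is the matching: for any two positions $pq$, $pq'$ carrying the same variable of $\ell$ I must show $\lst t|_{pq} = \lst t|_{pq'}$. As $t|_{pq} = t|_{pq'} = \sigma(x)$, these subterms of $\lst t$ have a common skeleton and differ only in the variables placed at mirrored variable positions $\gamma = pqw$, $\gamma' = pq'w$. By the position-wise description of the least representative (Lemma~\ref{lem-repr}), $\lst t|_\gamma = \min V_\gamma$ and $\lst t|_{\gamma'} = \min V_{\gamma'}$, where $V_\gamma = \{ z \in \Var(s') \mid \chi(z) = t|_\gamma \text{ and } t_\square[z]_\gamma \in \vL \}$ and similarly for $\gamma'$; since $t|_\gamma = t|_{\gamma'}$ it suffices to prove $V_\gamma = V_{\gamma'}$. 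This is a \emph{global} symmetry of $\vL$-membership at two parallel positions, which does not follow from the layer axioms alone, and I plan to extract it from \crewrite. Given $z \in V_\gamma$, build an auxiliary term $s^\star \in \TT(\FF,\VV)$ whose skeleton is $t_\square$ and which below $p$ reads $\ell\rho$ for a substitution $\rho$ that places $z$ at the relevant argument position (so that the copies at $pq$ and $pq'$ coincide and carry $z$ at $\gamma,\gamma'$), with fresh distinct variables filling the remaining holes. Then $\ell \matches s^\star|_p$, so the rule applies to $s^\star$ at $p$; moreover the max-top $M^\star$ of $s^\star$ has skeleton $t_\square$ and satisfies $M^\star|_\eta = s^\star|_\eta$ exactly when $t_\square[s^\star|_\eta]_\eta \in \vL$, because $M^\star$ is the merge of all admissible single-variable insertions, which are layers by \cpartial. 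Applying \crewrite to $s^\star$ yields $\ell \matches M^\star|_p$ and hence $M^\star|_{pq} = M^\star|_{pq'}$; reading this equality at $\gamma$ shows that $z \in V_\gamma$ forces $z \in V_{\gamma'}$. By symmetry $V_\gamma = V_{\gamma'}$, which gives the match and completes \eqref{lem-r-repr-1}.

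For \eqref{lem-r-repr-2} I would argue in analogy with Lemma~\ref{lem-consistent}, now exploiting \cconsistent and \cstepfusion. As $\lst t \in \vLT$ is its own max-top, the step $\lst t \to_{p,\ell\to r} u'$ occurs at a function position of the max-top, so \cconsistent applies: either the max-top collapses ($\lst t \to_p \square$), in which case $u'$ equals a refined alien and is a variable, so $u' \in \VV$; or $\lst t$ rewrites to the max-top of $u'$, whence $u' \in \vLT$ and $u'$ has skeleton $u_\square$. In the latter case \cstepfusion excludes the conspiring-aliens phenomenon, so no mirrored variable position of $u'$ fuses and the skeleton together with the admissible variable sets governing $u'$ coincide with those governing $\lst u$. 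The variables of $u'$ away from $p$ are inherited unchanged from $\lst t$ and are therefore the $\gg$-minimal admissible choices, while those introduced by $r\hat\sigma$ are copied from the minimal subterms of $\lst t$; hence every variable position of $u'$ already carries its minimum, and uniqueness of least representatives (Lemma~\ref{lem-repr}) forces $u' = \lst u$. I expect the delicate point here to be checking that full layering really does force minimality at the positions touched by $r\hat\sigma$, which is precisely where \cstepfusion enters, mirroring its role in Lemma~\ref{lem-consistent}.
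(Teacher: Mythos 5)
Your part~(1) is essentially correct and runs on the same mechanism as the paper's proof, just reorganized: where the paper extracts the pattern $c$ via Lemma~\ref{lem-context-instance}, fills \emph{all} holes of $C = t_\square$ with the single candidate $y = \min\{\lst{t}|_q \mid q \in P\}$, and uses \crewrite together with \cpartial and Lemma~\ref{lem-repr-copy} to force $M|_q = y$ throughout $P$, you test each admissible variable $z$ pairwise and prove the set equality $V_\gamma = V_{\gamma'}$ outright. Your auxiliary argument is sound: since $t_\square \in \vL$ is a top of $s^\star$, the max-top satisfies $t_\square \Cleq M^\star \Cleq s^\star$, hence $M^\star_\square = t_\square$ and your characterization of $M^\star$ by admissible single-variable insertions follows from \cvarp and \cpartial; also $p \in \Pos_\FF(M^\star)$ holds because $\rt(t|_p) \in \FF$, so \crewrite indeed gives $\ell \matches M^\star|_p$ and thus $M^\star|_{pq} = M^\star|_{pq'}$. (Two small repairs: record $u' \in \vLT$ via Lemma~\ref{lem-bl-r-closed}; Lemma~\ref{lem-balanced-rewrite} is not the relevant bookkeeping tool here.)

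Part~(2), however, has a genuine gap. You apply \cconsistent to the step $\lst{t} \to_{p,\ell \to r} u'$ itself, but there it is vacuous: $\lst{t} \in \vLT$ is a term of rank one and hence its own max-top, the layer $L$ delivered by \crewrite is forced to equal $u'$ (the match of $\ell$ against a fixed term at a fixed position is unique), and $L = \square$ cannot occur when the source contains no holes. So your ``collapse'' branch ($\lst{t} \to_p \square$, ``$u'$ equals a refined alien'') never arises this way, and the other branch merely restates $u' \in \vL$, which you already know. In the paper's proof, \cconsistent is instead applied to the auxiliary step $C_y \to_{p,\ell\to r} D_y$ on the all-$y$ filled skeletons (with $C = t_\square$, $D = u_\square$, and $y = \lst{u}|_q$ the hypothetically smaller admissible variable at a position $q$ where $u'|_q = z \gg y$); there the max-top $M$ of $C_y$ is in general a \emph{proper} context, and the dichotomy has content: $M \to_{p,\ell\to r} \square$ forces $r$, and hence $u'$, to be a variable, while otherwise $M \to_{p,\ell\to r} L$ with $L$ the max-top of $D_y$, from which $L|_q = y$, $\sigma_M(x) = y$ for $x = d|_q$, and so $M|_{q'} = y$ at every $q' \in \Pos_x(c)$ — yielding the strictly smaller representative $\lst{t} \gg \lst{t}[y]_{q'}$ of $t$, the desired contradiction.

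This \emph{backward transfer of admissibility} from $u$ to $t$ is exactly what your sketch omits. Asserting that ``the admissible variable sets governing $u'$ coincide with those governing $\lst{u}$'' and that positions inherited from $\lst{t}$ ``are therefore the $\gg$-minimal admissible choices'' begs the question: minimality of $\lst{t}|_{q'}$ is relative to the sets $V_{q'}$ computed over the skeleton $t_\square$, whereas $u' = \lst{u}$ requires minimality relative to the sets computed over $u_\square$, and the rule application erases, duplicates and rearranges positions, so nothing you wrote excludes that some variable position of $u$ admits a strictly $\gg$-smaller variable than the one copied over from $\lst{t}$ or placed by $r\hat\sigma$. Finally, \cstepfusion is a red herring here: the paper's proof of part~(2) uses \cconsistent only; \cstepfusion earns its keep in Lemma~\ref{lem-consistent} (and thence Lemma~\ref{lem-c-preserve}), not in this lemma.
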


\begin{proof}
\mbox{}
\begin{enumerate}
\item
\label{lem-r-repr-a}
Assume that $\lst{t}$ is the least representative of $t$ and
let $t \to_{p,\ell \to r} u$. We obtain a context
$C \in \vL$ by replacing all variables in $t$ by $\square$.
By Lemma~\ref{lem-context-instance},
there is a term $c$ with $C = c\sigma_\square$
and $\ell \matches c|_p$.
To ensure $c \matches \lst{t}$ we need to show
$\lst{t}|_q = \lst{t}|_r$ for all $x \in \Var(c)$ and
$q, r \in \Pos_x(c)$.
To that end, fix $x$ and let $P = \Pos_x(c)$.
For each $q \in P$, $\lst{t}|_q$ is a variable.
Let $y = \min\,\{ \lst{t}|_q \mid q \in P \}$. We
will show that
$\lst{t}|_q = y$ for all $q \in P$.
Consider the max-top $M \in \vL$ of $C[y,\dots,y]$.
Note that $c \matches C[y,\dots,y]$, so that
$\ell \matches C[y,\dots,y]|_p$.
From condition \crewrite we obtain $\ell \matches M|_p$
and thus $c \matches M$ by
Lemma~\ref{lem-context-instance}\eqref{lem-context-instance-D}
since $C \Cleq M$.
By construction $\lst{t}|_q = y$ for some $q \in P$.
Since $C[y]_q$ is a layer by Lemma~\ref{lem-repr-copy},
$M \merge C[y]_q$ is a layer according to \cpartial. Because
$M$ is the max-top of $C[y,\dots,y]$, $M \merge C[y]_q = M$ and thus
$M|_q = y$.
It follows that $M|_q = y$ for all $q \in P$, since otherwise $M$ would
fail to be an instance of $c$.
Repeated applications of Lemma~\ref{lem-repr-copy} yields
$t' = \lst{t}[y]_{q \in P} \in \vL$.
We have $t' = \lst{t}$ by the choice of $y$ and the minimality of 
$\lst{t}$.
We conclude that $c \matches \lst{t}$ and hence
$\ell \matches \lst{t}|_p$, which induces
a rewrite step $\lst{t} \to_{p,\ell \to r} u'$ as claimed.
The term $u'$ is a representative of $u$ because $u'\chi = u$,
$u' \in \vL$ by Lemma~\ref{lem-bl-r-closed}, and rewriting does not
introduce variables.
\item
\label{lem-r-repr-b}
Assume that $u'$ is not a least representative of $u$. 
We have $u' \gg \lst{u}$, so there is a position $q \in \Pos_\VV(u)$
with $z = u'|_q \gg \lst{u}|_q = y$. Let $C = c\sigma_\square$ as
in the proof of part (1).
There is a rewrite step $c \to_{p,\ell \to r} d$ for some term $d$
and $C \to_{p,\ell \to r} D = d \sigma_\square$.
Let $M \in \vL$ and $L \in \vL$ be the max-tops of $C_y = C[y,\dots,y]$
and $D_y = D[y,\dots,y]$. Note that
$C_y \to_{p,\ell \to r} D_y$,
which implies $M \to_{p,\ell \to r} L$ by
\cconsistent except when $M \to_{p,\ell \to r} \square$.
In the latter case $r$ and thus also $u'$ is a variable, and we are
done. So assume $M \to_{p,\ell \to r} L$.
Consider the variable $x = d|_q$. We must have $L|_q = y$
because otherwise we could copy $\lst{u}|_q = y$ to $L$ by
Lemma~\ref{lem-repr-copy}.
The term $\lst{t}$ and the context $M$ are instances 
of $c$ and so there are substitutions $\sigma_{\lst{t}}$ and $\sigma_M$
such that $c\sigma_{\lst{t}} = \lst{t}$ and
$c \sigma_M = M$. We have $\sigma_{\lst t}(x) = u'|_q = z$ and
$\sigma_M(x) = y$
because $d\sigma_M = L$. Since $x \in \Var(d)$ and
$c \to_\RR d$, the set $\Pos_x(c)$ is non-empty. Let $q' \in \Pos_x(c)$.
The layer $C[y]_{q'} \in \vL$ can
be obtained by copying $M|_{q'} = y$ to $C$ using
Lemma~\ref{lem-repr-copy}.
Since $\lst{t}|_{q'} = \sigma_{\hat t}(x) = z$, we obtain
$\lst{t} \gg \lst{t}[y]_{q'} \in \vL$. The term
$\lst{t}[y]_{q'}$ is a representative of $t$ because $\chi(y) = \chi(z)$
(recall that $u = u'\chi = \lst{u}\chi$). Hence we obtained a
contradiction with the minimality of $\lst{t}$.
\qed
\end{enumerate}
\end{proof}

The following lemma shows that instead of adding a single
rule $\top \to \bot$, we can extend a weakly bounded duplicating TRS
with any terminating ARS, where the objects are regarded as fresh
constants, and still obtain relative termination. The
induced well-founded order will be used in the proof of
Lemma~\ref{lem-restrict-wbd}.

\begin{lemma}
\label{lem-wbd}
Let $\RR$ be a weakly bounded duplicating TRS and $\AA$ a terminating
ARS. If $\RR$ and $\AA$ share no constants then $\AA$ is terminating
relative to $\RR$.
\end{lemma}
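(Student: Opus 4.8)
The plan is to work with the standard reformulation of relative termination: $\AA/\RR$ is terminating if and only if no $(\AA \cup \RR)$-derivation performs infinitely many $\AA$-steps. Since $\AA$ is terminating, $\to_\AA$ is well founded, so I may fix an ordinal-valued $\AA$-rank $h$ on the objects of $\AA$ with $h(a) = 0$ exactly for the $\AA$-normal forms and $a \to_\AA b$ implying $h(a) > h(b)$. The first thing I would record is that along any $(\AA \cup \RR)$-derivation the maximal $\AA$-rank occurring never increases: as $\RR$ and $\AA$ share no constants, no $\RR$-step can introduce a new $\AA$-object or alter the identity of an existing one (occurrences sit inside the substitution part of a redex and can only be copied or erased), while every $\AA$-step strictly lowers the rank of the one occurrence it rewrites. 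In particular, all $\AA$-objects occurring in a derivation from a finite term $t_0$ have rank bounded by the maximal rank $\gamma_0$ present in $t_0$.

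The heart of the argument is a transfinite induction on an ordinal $\gamma$ of the statement: \emph{for every terminating ARS $\AA$ with objects disjoint from the constants of $\RR$, every $(\AA \cup \RR)$-derivation whose initial term contains only $\AA$-objects of rank at most $\gamma$ has finitely many $\AA$-steps.} For the inductive step I would use the abstraction $\overline{(\cdot)}^\gamma$ replacing each $\AA$-object $a$ by $\top$ if $h(a) = \gamma$ and by $\bot$ otherwise. Because this replacement depends only on the object (through its rank), it is applied uniformly; hence it commutes with $\RR$-steps and, crucially, preserves equality of subterms, so that even non-left-linear $\RR$-steps map to genuine $\RR$-steps. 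An $\AA$-step issuing from a rank-$\gamma$ object becomes a real $\top \to \bot$ step (its target has strictly smaller rank and maps to $\bot$), while every other $\AA$-step becomes an identity. An infinite derivation $D$ is thus sent to an infinite $(\{\top \to \bot\} \cup \RR)$-derivation whose $\top \to \bot$ steps are exactly the $\AA$-steps acting on maximal-rank objects; by Definition~\ref{def-wbd} there can be only finitely many of them.

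After the last such top-rank $\AA$-step, the tail $\tilde D$ contains no $\AA$-step from a rank-$\gamma$ object, and since targets always have rank $< \gamma$, the surviving rank-$\gamma$ objects are inert in $\tilde D$. Relabelling each of them by a fresh normal-form constant yields a derivation over a terminating ARS whose starting term has finitely many objects, all of rank $< \gamma$, hence maximal rank some $\gamma' < \gamma$; the induction hypothesis then bounds the $\AA$-steps of the tail. Together with the finite prefix this bounds all $\AA$-steps of $D$, contradicting the assumption, so the inductive statement holds for all $\gamma$ and, applied with $\gamma = \gamma_0$, proves the lemma. The step I expect to be the main obstacle is precisely the invocation of weak bounded duplication: one must see that it is exactly the possibility of $\RR$ duplicating reducible ($\top$) objects that would otherwise permit infinitely many maximal-rank $\AA$-steps, and that the two-valued abstraction is faithful enough to turn those steps — and only those — into $\top \to \bot$ steps without spuriously enabling or blocking any $\RR$-step. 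This is why the replacement must be defined per object rather than per position, and why the rank layers must be peeled off one maximal level at a time, so that every counted step is guaranteed to drop its target to $\bot$.
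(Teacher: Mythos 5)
Your proof is correct, but it takes a genuinely different route from the paper's. The paper makes no induction on ranks and never reasons about infinite derivations: for each object $\alpha \in \OO$ it defines a collapse $\pi_\alpha$ mapping $\alpha$ to $\top$, every other object to $\bot$, and acting homomorphically otherwise; it measures a term $t$ by the finite set $\#t = \{ (\alpha,\pi_\alpha(t)) \mid \alpha \in \Fun(t) \cap \OO \}$ and compares measures in the multiset extension of the lexicographic product of a well-founded order on $\OO$ containing $\AA$ with the reduction pair consisting of $\to_\RR^*$ and $\to^+_{\{\top\to\bot\}/\RR}$, whose strict part is well-founded precisely by weak bounded duplication. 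An $\AA$-step $\alpha \to \beta$ becomes a $\top \to \bot$ step under $\pi_\alpha$, strictly decreasing the $\alpha$-component, while the $\beta$-component it introduces or modifies is lexicographically smaller since $\alpha \succ \beta$; an $\RR$-step weakly decreases every component. Thus one compatible well-founded measure yields relative termination outright. Your argument relies on the same two essential devices --- the two-valued collapse that imports Definition~\ref{def-wbd}, and its uniformity per object, which is exactly what lets non-left-linear $\RR$-steps survive the abstraction --- but deploys them levelwise: a transfinite induction on the maximal ordinal rank, collapsing one rank layer at a time, bounding the top-layer $\AA$-steps via weak bounded duplication, and relabelling the inert top-rank objects as fresh normal-form constants so that the induction hypothesis (correctly stated for \emph{all} terminating ARSs sharing no constants with $\RR$, since the relabelling changes the ARS) applies to the tail. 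What the paper's route buys: no ordinal ranks, no detour through infinite derivations and contradiction, no ARS extension in the induction step, at the price of the multiset/lexicographic/reduction-pair machinery; what yours buys: more elementary order-theoretic ingredients and a proof that isolates exactly where weak bounded duplication bites (bounding the maximal-rank steps). Two minor points to tighten: the abstracted image of a derivation contains identity steps (coming from $\AA$-steps below the top rank) that must be discarded before invoking termination of $\{\top\to\bot\}/\RR$; and for the relabelling, forward preservation of equalities between subterms already suffices for non-left-linear matching, so injectivity of the relabelling, though harmless, is not actually needed.
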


\begin{proof}
We use reduction pairs for this proof, which are pairs consisting
of a quasi-order $\succeq$ and a well-founded strict order $\succ$ that
are compatible:
${\succeq} \cdot {\succ} \cdot {\succeq} \subseteq {\succ}$.
Reduction pairs give rise to a multiset extension
in a straightforward way (e.g.,
the definitions of $\succ_\text{gms}$ and $\succeq_\text{gms}$ in
\cite{TAN12}).
We denote the objects in $\AA$ by $\OO$.
Let $\FF$ be the signature of $\RR$.
From the termination of $\AA$ we obtain a well-founded order $\succ$ on
$\OO$ such that $\AA \subseteq {\succ}$.
For each $\alpha \in \OO$ define a map $\pi_\alpha$ from
$\TT(\FF \cup \OO,\VV)$ to
$\TT(\FF \cup \{ \top, \bot \},\VV)$ as follows:
\[
\pi_\alpha(t) = \begin{cases}
\top & \text{if $t = \alpha$} \\
\bot & \text{if $t \in \OO \setminus \{ \alpha \}$} \\
f(\pi_\alpha(t_1),\dots,\pi_\alpha(t_n)) & \text{if $t = f(\seq{t})$ with
$f \in \FF$} \\
t & \text{if $t \in \VV$}
\end{cases}
\]
We measure terms by the set
$\# t = \{ (\alpha, \pi_\alpha(t)) \mid \alpha \in \Fun(t) \cap \OO \}$.
The measures of two terms are compared by the multiset extension of
the lexicographic product of the precedence $\succ$
on $\OO$ and the reduction pair consisting
of the well-founded (by the weakly bounded termination assumption)
order $\to_{\{ \top \to \bot \} /\RR}^+$ and
the compatible quasi-order $\to_\RR^*$.
Each application of a rule $\alpha \to \beta$ from $\AA$ decreases
the component associated with $\alpha$ in $\# t$ and introduces
or modifies a component associated with $\beta$ in $\# t$,
giving rise to a decrease in the strict part of the multiset
extension. Moreover, if $t \to_\RR u$ then
$\pi_\alpha(t) \to_\RR \pi_\alpha(u)$, for all $\alpha \in \OO$.
Hence the terms are related by the non-strict part of the multiset
extension.
It follows that $\AA$ is terminating relative to $\RR$.
\end{proof}

\begin{proof}[of Lemma~\ref{lem-restrict-wbd}]
To show confluence of $s$ we introduce a relation $\RRV{}$ that 
allows to map an $\RR$-peak from $s$ to a $\RRV{}$-peak.
Afterwards we show confluence of $\RRV{}$ and
conclude by ${\RRV{}} \subseteq {\to^*_\RR}$.

We write $t \RRV[t_0']{} u$ if $t_0' \to_\RR^* \lst{t}$ and
$s \to_\RR^* t \to_\RR^* u$
such that $t \to_\RR^* u$ is mirrored by $\lst{t} \to_\RR^* u'$
with $u = u'\chi$.
Labels are compared using the order
${\succ} := {\to_{{\gg} / \RR}^+}$, which
is well-founded according to Lemma~\ref{lem-wbd}
applied to the ARS $(\Var(s'),{\gg})$, where we regard 
the elements of $\Var(s')$ as constants for this purpose.

First we show that a peak consisting of $\RR$-steps can be represented as
a peak of $\RRV{}$-steps. To this end we claim that
$t \RRV[\lst{t}]{} u$ whenever $s \to^*_\RR t \to_\RR u$.
To show the claim, note that $s$ has a least representative
by Lemma~\ref{lem-repr}, and that by
Lemmata~\ref{lem-r-repr}\eqref{lem-r-repr-a} and Lemma~\ref{lem-repr}
each
immediate successor
of a term with a least representative also has
a least representative. Therefore, $t$ has a least representative,
and we conclude by another application of
Lemma~\ref{lem-r-repr}\eqref{lem-r-repr-a}.
Next we establish that $\RRV{}$ is locally decreasing and hence confluent
by Theorem~\ref{thm-dd}. Consider a local peak
$u \LLV[t_0']{} t \RRV[t_1']{} v$.
By definition of $\RRV{}$ there are representatives $u'$ and $v'$ of $u$
and $v$ such that $u' \lud{\RR}{*}\from \lst{t} \to_\RR^* v'$.
We obtain $u' \to_\RR^* w' \lud{\RR}{*}\from v'$ from the confluence
assumption on $\vLT$.
Consider the sequence $u' \to_\RR^* w'$. If $u' = \lst{u}$
then $u \RRV[t_1']{} w'\chi$, noting that
$t_1' \to_\RR^* \lst{t} \to_\RR^* u'$.
Otherwise, there is a rewrite
sequence $u = u' \chi = u_1 \to_\RR \cdots \to_\RR u_n = w'\chi = w$,
such that $u' \gg \lst{u} \to_{{\gg}\cup\RR}^* \lst{u}_i$ and
thus $u' > \lst{u}_i$ for all $1 \leqslant i \leqslant n$.
Hence we obtain $u \RRV[\curlyvee t_1']{*} w$ by repeated use
of the above claim. Analogously, we obtain $v \RRV[t_0']{} w$ or
$v \RRV[\curlyvee t_0']{*} w$. The proof is concluded by the obvious
observation that ${\RRV{}} \subseteq {\to^*_\RR}$.
\end{proof}

\begin{proof}[of Lemma~\ref{lem-restrict-con}]
Consider a peak $t \lud{\RR}{*}\from s \to_\RR^* u$. Obviously $s$ has
a representative and hence also a least representative $\lst{s}$ by 
Lemma~\ref{lem-repr}. Using Lemma~\ref{lem-r-repr}
repeatedly we obtain a
peak $t' \lud{\RR}{*}\from \lst{s} \to_\RR^* u'$, noting
that all
reducts
of $\lst{s}$ are
least representatives of the corresponding reducts of $s$ or
variables, but since variables are normal forms
the latter can only happen in the last step.
From the confluence assumption on $\vLT$ we obtain
$t' \to_\RR^* v' \lud{\RR}{*}\from u'$.
Applying the variable substitution $\chi$ yields
$t = t'\chi \to_\RR^* v'\chi \lud{\RR}{*}\from u'\chi = u$ on $\bLvT$.
\end{proof}

\begin{lemma}
\label{lem-c-preserve}
If a TRS is (weakly) layered according to a variable-restricted layer
system then it is (weakly) layered according to the corresponding
(unrestricted) layer system.
\end{lemma}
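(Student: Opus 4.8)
Since the remark preceding this lemma already asserts that $\bLv$ is a layer system (so that \ctop, \cvar, and \cpartial hold and, by Lemma~\ref{lem-max-top}, max-tops with respect to $\bLv$ are unique), the plan is to transfer only the remaining conditions: \crewrite in the weakly layered case, and additionally \cconsistent and \cstepfusion in the layered case. The bridge between the two systems is the skeleton map $L \mapsto L_\square$. On the one hand $\vL \subseteq \bLv$; on the other hand every $L \in \bLv$ satisfies $L_\square \in \vL$, since if $L \in \vL$ this is \cvarp, and if $L = C[x]_p$ with $C[\square]_p \in \vL$ then $L_\square = (C[\square]_p)_\square \in \vL$ again by \cvarp. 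Condition \cstepfusion transfers immediately through this map: given $L, N \in \bLv$ with $L \Cleq N$ and $p \in \Pos_\square(L)$, we have $L_\square, N_\square \in \vL$ and $L_\square \Cleq N_\square$, so \cstepfusion for $\vL$ yields $L_\square[N_\square|_p]_p \in \vL$; as $(L[N|_p]_p)_\square = L_\square[N_\square|_p]_p$, we conclude $L[N|_p]_p \in \bLv$.

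The crux for \crewrite and \cconsistent is that the two systems assign different max-tops. I would first establish the following structural fact: for any term $s$, its max-top $M'$ with respect to $\vL$ and its max-top $M$ with respect to $\bLv$ satisfy $M_\square = M'_\square$ and $M' \Cleq M$; that is, $M$ arises from $M'$ by filling some alien slots (hole positions) with the variables that $s$ carries there. Indeed, $\vL \subseteq \bLv$ gives $M' \Cleq M$, while $M_\square \in \vL$ together with $M_\square \Cleq M \Cleq s$ makes $M_\square$ a $\vL$-top of $s$, hence $M_\square \Cleq M'$; since the skeleton map is monotone with respect to $\Cleq$, applying it to both inclusions forces $M_\square = M'_\square$. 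In particular $\Pos_\FF(M) = \Pos_\FF(M')$.

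For \crewrite, consider a step $s \to_{p,\ell \to r} t$ with $p \in \Pos_\FF(M)$. By the structural fact $p \in \Pos_\FF(M')$, so \crewrite for $\vL$ gives $M' \to_{p,\ell \to r} L'$ with $L' \in \vL$. I would lift this step to $M \to_{p,\ell \to r} L$ by extending the matcher of $\ell$ against $M'|_p$ so that the extra variables of $M|_p$ are placed into the hole-valued parts of the matcher; the resulting $L = M[r\sigma]_p$ then satisfies $L_\square = L'_\square \in \vL$, whence $L \in \bLv$. Here one must check that a single matcher still exists when $\ell$ is non-left-linear: this holds because the actual step $s \to t$ guarantees $\ell \matches s|_p$, so the subterms of $s$ at positions belonging to the same variable of $\ell$ coincide, and hence so do the variables filled into $M$. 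For \cconsistent, \cconsistent for $\vL$ tells us that $L'$ is the $\vL$-max-top of $t$ or $L' = \square$. In the latter case $L_\square = \square$, so $L$ is a hole or a single variable and $t$ is the corresponding alien, which is its own max-top. Otherwise I would argue that $L$ fills every variable-alien of $t$: such a slot is either parallel to $p$, where $L$ inherits the variable already present in $M$, or inside the contractum, where it is supplied by $r\sigma$; combined with $L_\square = L'_\square$ and the structural fact applied to $t$, this forces $L$ to be the $\bLv$-max-top of $t$.

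The main obstacle is precisely this discrepancy between the max-tops: because $\bLv$ admits variables that $\vL$ forbids at alien slots, the $\bLv$-max-top can be strictly larger than the $\vL$-max-top, so neither \crewrite nor \cconsistent transfers verbatim. The delicate steps are the lifting of the rewrite step across these extra variables, namely keeping the matcher single-valued for duplicating left-hand sides and verifying that no variable-alien of $t$ is lost, whereas \cstepfusion, which never refers to max-tops, goes through mechanically.
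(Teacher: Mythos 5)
Your proposal is correct in substance, but it takes a genuinely different route from the paper. The paper's own proof of Lemma~\ref{lem-c-preserve} is just two sentences: the weakly layered case is declared obvious, and the layered case is delegated to Lemma~\ref{lem-r-repr}, i.e., to the least-representative machinery (the fact that a step on a term in $\bLvT$ can be mirrored on its least representative, and that the mirrored reduct is again the least representative or a variable). You instead prove everything directly from the definitions, via the skeleton map $L \mapsto L_\square$ and a structural comparison of the $\vL$- and $\bLv$-max-tops of the same term. What the paper's route buys is brevity by reuse: the hard content (that no variable slot is ``lost'' after a step, which is exactly what \cconsistent demands) is already encapsulated in the proof of Lemma~\ref{lem-r-repr}\eqref{lem-r-repr-2}, which itself invokes \cconsistent and \cstepfusion for $\vL$ together with Lemma~\ref{lem-repr-copy}. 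What your route buys is self-containedness and transparency: you never need representatives, the fixed $s, s', \chi$ setup, or the order $\gg$, and you make explicit the useful picture that a $\bLv$-max-top is a $\vL$-max-top with variable slots filled in -- arguably more direct than the paper's terse appeal, which still requires some glue to pass from the fixed-term setting of Lemma~\ref{lem-r-repr} to the arbitrary terms occurring in conditions \crewrite and \cconsistent. Your treatment of \cstepfusion via the skeleton map matches what the paper implicitly treats as immediate.

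One point needs sharpening in your structural fact: you state that the $\bLv$-max-top $M$ arises from the $\vL$-max-top $M'$ by filling \emph{some} hole positions with the variables of $s$, but your later steps silently use the stronger statement that $M$ fills \emph{every} hole of $M'$ at which $s$ carries a variable. This stronger form is what makes your matcher argument for non-left-linear $\ell$ work (if a hole below one occurrence of a repeated variable of $\ell$ were filled while the corresponding hole below another occurrence were not, the two subcontexts of $M|_p$ would differ and $\ell$ would fail to match $M|_p$), and likewise the ``parallel to $p$'' case in your \cconsistent argument needs $M$ to already carry the variable there. The fix is one line: if $M$ had a hole at a position $u$ with $s|_u \in \VV$, then $M[s|_u]_u \in \bLv$ by \cvar and $M[s|_u]_u \Cleq s$, contradicting maximality of $M$. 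With that observation inserted, your lifting of the step, the identity $L_\square = L'_\square$, and the case analysis showing that $L$ leaves no hole at a variable position of $t$ (parallel slots by maximality of $M$; contractum slots because $\Var(r) \subseteq \Var(\ell)$ places each hole of the matcher at a non-variable position of $s$) together correctly force $L$ to be the $\bLv$-max-top of $t$, completing \cconsistent.
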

\begin{proof}
The result for weakly layered TRSs is obvious.
The result for layered TRSs follows from Lemma~\ref{lem-r-repr}.
\qed
\end{proof}

\begin{corollary}
\label{cor-rmain}
The statements of Theorems~\ref{thm-main-ll},~\ref{thm-main-bd},
and~\ref{thm-main-con}
remain true when based on a variable-restricted layer system.
\end{corollary}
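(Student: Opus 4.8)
The plan is to reduce each of the three variable-restricted statements to its already-proven unrestricted counterpart by routing through the corresponding (unrestricted) layer system $\bLv$. So suppose $\vL$ is a variable-restricted layer system that weakly layers (respectively layers) $\RR$, and that $\RR$ is confluent on the rank-one terms $\vLT$. First I would invoke Lemma~\ref{lem-c-preserve} to transfer the (weak) layering property from $\vL$ to $\bLv$. Since $\vL \subseteq \bLv$ and the rank of a term is computed from its max-top in the ambient layer system, the terms of rank one with respect to $\bLv$ are exactly $\bLvT$; thus ``confluence on terms of rank one'' for $\bLv$ means confluence on $\bLvT$.

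The heart of the argument is to upgrade the confluence hypothesis from $\vLT$ to $\bLvT$, which is precisely what the three preceding restriction lemmata supply. In the left-linear case I would apply Lemma~\ref{lem-restrict-ll} to obtain confluence of $\RR$ on $\bLvT$ and then conclude by the original Theorem~\ref{thm-main-ll}, applied to the weakly layering system $\bLv$. In the bounded-duplicating case I would first note by Lemma~\ref{lem-bd-wbd} that $\RR$ is weakly bounded duplicating, then use Lemma~\ref{lem-restrict-wbd} to pass confluence to $\bLvT$, and finish with Theorem~\ref{thm-main-bd}. The fully layered case proceeds identically, using Lemma~\ref{lem-restrict-con} together with Theorem~\ref{thm-main-con}.

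Because all of the genuinely difficult work is already packaged inside the three restriction lemmata---the construction of least representatives and the mirroring of rewrite steps on them (Lemmata~\ref{lem-repr} and~\ref{lem-r-repr}), and the relative-termination argument underpinning the bounded-duplicating case (Lemma~\ref{lem-wbd})---the corollary itself amounts to a short assembly of these ingredients. I expect the only point requiring a word of care to be the bounded-duplicating case: the hypothesis of Theorem~\ref{thm-main-bd} speaks of bounded duplication, whereas Lemma~\ref{lem-restrict-wbd} is phrased for the weaker notion of weakly bounded duplication. This mismatch is bridged exactly by Lemma~\ref{lem-bd-wbd}, so no additional effort is needed, and the observation ${\vL \subseteq \bLv}$ guarantees that confluence of $\RR$ established for $\bLv$ is confluence of the same relation $\RR$ we started from.
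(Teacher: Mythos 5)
Your proposal is correct and takes essentially the same route as the paper, which likewise combines Lemma~\ref{lem-c-preserve} with Lemmata~\ref{lem-restrict-ll}, \ref{lem-bd-wbd} and \ref{lem-restrict-wbd}, and \ref{lem-restrict-con} to upgrade confluence from $\vLT$ to $\bLvT$ and then invokes Theorems~\ref{thm-main-ll}, \ref{thm-main-bd}, and \ref{thm-main-con} for the unrestricted system $\bLv$. The only difference is that you make explicit some points the paper leaves implicit, such as the identification of the rank-one terms of $\bLv$ with $\bLvT$ and the bridging role of Lemma~\ref{lem-bd-wbd} between bounded and weakly bounded duplication.
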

\begin{proof}
In case of 
left-linear TRSs we conclude by Theorem~\ref{thm-main-ll} and
Lemmata~\ref{lem-restrict-ll} and \ref{lem-c-preserve}.
For bounded duplicating TRSs we use Theorem~\ref{thm-main-bd} and
Lemmata~\ref{lem-bd-wbd}, \ref{lem-restrict-wbd}, and
\ref{lem-c-preserve}. For TRSs that are layered according to a 
variable-restricted layer system we use Theorem~\ref{thm-main-con} and 
Lemmata~\ref{lem-restrict-con} and~\ref{lem-c-preserve}.
\end{proof}

\subsection{Many-sorted Persistence by Variable-restricted Layer Systems}
\label{sec-persistence-r}

We demonstrate the usefulness of variable-restricted layer systems
by the following alternative proof of Theorem~\ref{thm-persistence},
which avoids the complication of establishing confluence on $\bLT$.

\begin{proof}[of Theorem \ref{thm-persistence}]
Assume that $\RR$ is confluent on $\TT_\SS(\FF,\VV)$.
We let~$\vL$ be the smallest set such that
$\TT_\SS(\FF,\VV) \subseteq \vL$ and $\vL$ 
is closed under replacing variables by holes. So $\vL$ trivially
satisfies \cvarp.
Hence $\vLT = \TT_\SS(\FF,\VV)$ and thus $\RR$ is confluent on $\vLT$
by the assumption.
It is easy to see that $\vL$ is a variable-restricted layer system
layering $\RR$; conditions \crewrite and \cconsistent follow from the
compatibility assumption. Therefore $\RR$ is confluent by 
Corollary~\ref{cor-rmain}.
\end{proof}

\subsection{Order-sorted Persistence by Variable-restricted Layer Systems}
\label{sec-order-proof}

In this section we prove the main result on order-sorted persistence.

\begin{proof}[of Theorem~\ref{thm-order}]
Assume that $\RR$ is compatible with $\SS$.
To define layers as order-sorted terms, we add a fresh, minimum sort
$\bot$ with $\square : \bot$ and require that no variable has sort $\bot$.
The set $\vL := \TT_{\SS \cup \{ \bot \}}(\FF \cup \{ \square \}, \VV)$
is a variable-restricted layer system that satisfies \cstepfusion.

We show that $\vL$ satisfies condition \crewrite.
So let $M$ be the max-top of $s$, $p \in \Pos_\FF(M)$, and
$s \to_{p,\ell \to r} t$.
Because $\ell$ is order-sorted,
$\Pos(\ell) \subseteq \Pos(M|_p)$.
We claim that $\ell \matches M|_p$. If
$\ell|_q = \ell|_{q'} \in \VV_\alpha$ then
$M|_{pq} = M|_{pq'} \in \TT_{\alpha'}(\FF \cup \{ \square \},\VV)$ for
some $\alpha'$ with $\alpha \succeq \alpha'$, due to the fact that
$\ell$ is strictly order-sorted.
Let $\sigma$ be a substitution such that $\ell\sigma = M|_p$.
Using the compatibility condition (of
Definition~\ref{def-order-sorted}), we readily obtain
$L = M[r\sigma]_p \in \vL$.

Next we show that if $\RR$ is strongly compatible with $\SS$, then
condition \cconsistent holds. So assume that
$\RR$ is neither left-linear nor bounded duplicating
and $L \neq \square$. We show that $L$ is the max-top of $t$.
Let $L'$ be the max-top of $t$. First of all,
if $r$ is not a variable and $\ell|_q = r|_{q'} \in \VV_\alpha$ then
$L'|_{pq'} = M|_{pq} = L|_{pq'}$
because $\ell$ and $r$ are strictly order-sorted.
This implies $L = L'$.
Next suppose that $r = x \in \VV_\beta$.
Let $p'$ be the position directly above $p$ and let
$\rt(L|_{p'}) : \beta_1 \times \dots \times \beta_n \to \beta'$.
We have $p = p'i$ for some $1 \leqslant i \leqslant n$. We claim
that $\beta_i = \beta$. Let $\alpha$ be the sort of $\ell$.
We have $\alpha \succeq \beta$ and $\beta_i \succeq \alpha$.
According to the second compatibility condition, $\beta$ is maximal in
$S$ and thus $\beta = \alpha = \beta_i$.
It follows that $L'|_p = M|_{pq} = L|_p$ for any $q \in \Pos_x(\ell)$.

Note that $\vLT = \TT_\SS(\FF,\VV) = \vL \cap \TT_\SS(\FF,\VV)$.
The proof is concluded with an appeal to Corollary~\ref{cor-rmain}.
\end{proof}

\section{Related Work}
\label{sec-related}

As we already mentioned in the introduction, modularity of term rewrite
systems has been reproved several times. A number of related results
have been proved by adapting the proof of \cite{KMTV94} and there have
been several previous attempts to make the result more reusable.
\cite{O94a} casts the modularity result in terms of a
collapsing reduction $\to_c$, and shows that for
composable TRSs, confluence is modular if $\to_c$ is normalizing.
Toyama's theorem arises as a special case.
\cite{K95} proposes an abstract framework, based on
so-called \emph{pre-confluences} and \emph{context selectors}
constructed from pre-confluences. The latter can be seen as
a precursor of layer systems. In particular, the selection of
max-tops gives rise to a (proper) context selector. However, the
notion of pre-confluences is geared towards the uncurrying
application, and too restrictive to encompass modularity of
confluence \cite{K11}.
A third approach to abstraction is taken in \cite{L96}.
In this work, modularity of confluence is proved using category
theory, exploiting the fact that terms can conveniently be modeled
by a monad. Unfortunately, the development is flawed, and only
applies to TRSs over unary function symbols and constants.%
\footnote{%
\def\X{\acute\ }
The paper claims that for any TRS $\Theta$, the monad
$T_\Theta$ is \emph{strongly finitary}, which implies
that it preserves coequalizers. This is not true in general. As
an example, let $\star$ be the trivial category and consider the
coequalizer $Q : \star + \star \to \star$
of the injections $\iota_1, \iota_2 : \star \to \star + \star$.
Furthermore let $\Theta = \{ \m{f}(x,x) \to x \}$.
Then $T_{\Theta}(Q)$ equates $\m{f}(\X\iota_1\star,\X\iota_1\star)$
and $\m{f}(\X\iota_1\star,\X\iota_2\star)$, but the coequalizer of
$T_\Theta(\iota_1)$ and $T_\Theta(\iota_2)$ does not,
because $\m{f}(\X\iota_1\star,\X\iota_1\star)$ is not in the image of
either of these functors.}

In the remainder of this section we discuss specific issues,
starting with a comparison of our result on order-sorted persistence 
to~\cite{AT96} in Section~\ref{rel-order-sorted}.
In Section~\ref{rel-modularity} we reflect on the differences
between \cite{KMTV94} and \cite{vO08b}, which correspond to
changes from the earlier conference paper \cite{FZM11} to the present
article.
In Section~\ref{rel-constructivity} we elaborate upon the constructivity
claim made in Section~\ref{sec-introduction}.

\subsection{Order-sorted Persistence}
\label{rel-order-sorted}

In this section we compare our result from
Section~\ref{app-order-sorted2} to the main result of~\cite{AT96},
which can be stated as follows.

\begin{definition}
\label{def-AT96}
A sort attachment $\SS$ is \emph{\compatat} with a TRS $\RR$ if
condition ($\star$) is satisfied for each rewrite rule
$\ell \to r \in \RR$:
\begin{enumerate}
\item[($\star$)]
If $\ell \in \TT_\alpha(\FF,\VV)$ and $r \in \TT_\beta(\FF,\VV)$ then
$\alpha \succeq \beta$ and $\ell$, $r$ are strictly order-sorted.
\end{enumerate}
\end{definition}

The main claim in~\cite{AT96} is that Theorem~\ref{thm-order} holds
for \compatat systems. We show that this is incorrect.
The counterexample 
presented here is simpler than our previous example in~\cite{FZM11}.

\begin{example}
\label{ex-counterexample}
We use $\{ 0,1,2,3 \}$ as sorts where $1 \succeq 0$ and sort attachment
$\SS$
\begin{xalignat*}{5}
x &: 0 &
\m{f} &: 0 \to 2 &
\m{h} &: 1 \times 0 \to 2 &
\m{e} &: 0 \to 1 &
\m{c} &: 1
\\
y &: 2 &
\m{g} &: 2 \to 2 &
\m{i} &: 2 \times 2 \to 3&
\m{a}, \m{b} &: 3
\end{xalignat*}
Consider the TRS $\RR$ consisting of the rules
\begin{xalignat*}{5}
\m{f}(x) &\to \m{h}(\m{e}(x), x) &
\m{h}(\m{c}, x) &\to \m{g}(\m{f}(x)) &
\m{e}(x) &\to x
&
\m{i}(y,y) &\to \m{a} &
\m{i}(y,\m{g}(y)) &\to \m{b}
\end{xalignat*}
This TRS is \compatat with $\SS$. 
On order-sorted terms
it is locally confluent and terminating and thus confluent
(note that $x$ may not be instantiated
by $\m{c}$ due to the sort constraints). It is not confluent
on arbitrary terms
because
\[
\m{a} \from
\m{i}(\m{f}(\m{c}), \m{f}(\m{c})) \to^*
\m{i}( \m{f}(\m{c}),\m{g}(\m{f}(\m{c}))) \to \m{b}
\]
\end{example}

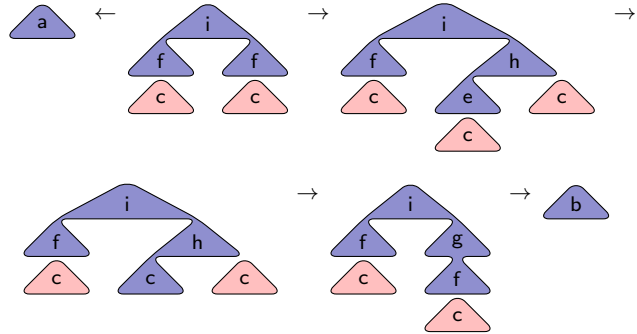
\begin{figure}[t]
\begin{gather*}
\Vtop{\begin{tikzpicture}[scale=.5]
  \coordinate (r) at (0.0,0.0);
  \coordinate (rl) at (-1.0,-1.0);
  \coordinate (rr) at (1.0,-1.0);
  \draw [rounded corners, fill=myblue] (rl) -- (rr) -- (r) -- cycle;
  \node at ($(r)-(0,.6)$) {$\m{a}$};
\end{tikzpicture}} \from
\Vtop{\begin{tikzpicture}[scale=.5]
  \coordinate (r) at (0.0,0.0);
  \coordinate (rl) at (-1.25,-1.0);
  \coordinate (rr) at (1.25,-1.0);
  \coordinate (ra) at (-1.25,-1.0);
  \coordinate (ral) at (-2.25,-2.0);
  \coordinate (rar) at (-0.25,-2.0);
  \coordinate (raa) at (-1.25,-2.0);
  \coordinate (raal) at (-2.25,-3.0);
  \coordinate (raar) at (-0.25,-3.0);
  \coordinate (rb) at (1.25,-1.0);
  \coordinate (rbl) at (0.25,-2.0);
  \coordinate (rbr) at (2.25,-2.0);
  \coordinate (rba) at (1.25,-2.0);
  \coordinate (rbal) at (0.25,-3.0);
  \coordinate (rbar) at (2.25,-3.0);
  \draw [rounded corners, fill=myblue] (ra) -- (ral) -- (rar) --
        (ra) -- (rb) -- (rbl) -- (rbr) -- (rb) -- (r) -- cycle;
  \draw [rounded corners, fill=myred] (raal) -- (raar) -- (raa) --
        cycle;
  \draw [rounded corners, fill=myred] (rbal) -- (rbar) -- (rba) --
        cycle;
  \node at ($(r)-(0,.6)$) {$\m{i}$};
  \node at ($(ra)-(0,.6)$) {$\m{f}$};
  \node at ($(raa)-(0,.6)$) {$\m{c}$};
  \node at ($(rb)-(0,.6)$) {$\m{f}$};
  \node at ($(rba)-(0,.6)$) {$\m{c}$};
\end{tikzpicture}} \to
\Vtop{\begin{tikzpicture}[scale=.5]
  \coordinate (r) at (0.0,0.0);
  \coordinate (rl) at (-1.875,-1.0);
  \coordinate (rr) at (1.875,-1.0);
  \coordinate (ra) at (-1.875,-1.0);
  \coordinate (ral) at (-2.875,-2.0);
  \coordinate (rar) at (-0.875,-2.0);
  \coordinate (raa) at (-1.875,-2.0);
  \coordinate (raal) at (-2.875,-3.0);
  \coordinate (raar) at (-0.875,-3.0);
  \coordinate (rb) at (1.875,-1.0);
  \coordinate (rbl) at (0.625,-2.0);
  \coordinate (rbr) at (3.125,-2.0);
  \coordinate (rba) at (0.625,-2.0);
  \coordinate (rbal) at (-0.375,-3.0);
  \coordinate (rbar) at (1.625,-3.0);
  \coordinate (rbaa) at (0.625,-3.0);
  \coordinate (rbaal) at (-0.375,-4.0);
  \coordinate (rbaar) at (1.625,-4.0);
  \coordinate (rbb) at (3.125,-2.0);
  \coordinate (rbbl) at (2.125,-3.0);
  \coordinate (rbbr) at (4.125,-3.0);
  \draw [rounded corners, fill=myblue] (ra) -- (ral) -- (rar) --
        (ra) -- (rb) -- (rba) -- (rbal) -- (rbar) -- (rba) -- (rbr) --
        (rb) -- (r) -- cycle;
  \draw [rounded corners, fill=myred] (raal) -- (raar) -- (raa) --
        cycle;
  \draw [rounded corners, fill=myred] (rbaal) -- (rbaar) -- (rbaa) --
        cycle;
  \draw [rounded corners, fill=myred] (rbbl) -- (rbbr) -- (rbb) --
        cycle;
  \node at ($(r)-(0,.6)$) {$\m{i}$};
  \node at ($(ra)-(0,.6)$) {$\m{f}$};
  \node at ($(raa)-(0,.6)$) {$\m{c}$};
  \node at ($(rb)-(0,.6)$) {$\m{h}$};
  \node at ($(rba)-(0,.6)$) {$\m{e}$};
  \node at ($(rbaa)-(0,.6)$) {$\m{c}$};
  \node at ($(rbb)-(0,.6)$) {$\m{c}$};
\end{tikzpicture}} \to\\
\Vtop{\begin{tikzpicture}[scale=.5]
  \coordinate (r) at (0.0,0.0);
  \coordinate (rl) at (-1.875,-1.0);
  \coordinate (rr) at (1.875,-1.0);
  \coordinate (ra) at (-1.875,-1.0);
  \coordinate (ral) at (-2.875,-2.0);
  \coordinate (rar) at (-0.875,-2.0);
  \coordinate (raa) at (-1.875,-2.0);
  \coordinate (raal) at (-2.875,-3.0);
  \coordinate (raar) at (-0.875,-3.0);
  \coordinate (rb) at (1.875,-1.0);
  \coordinate (rbl) at (0.625,-2.0);
  \coordinate (rbr) at (3.125,-2.0);
  \coordinate (rba) at (0.625,-2.0);
  \coordinate (rbal) at (-0.375,-3.0);
  \coordinate (rbar) at (1.625,-3.0);
  \coordinate (rbb) at (3.125,-2.0);
  \coordinate (rbbl) at (2.125,-3.0);
  \coordinate (rbbr) at (4.125,-3.0);
  \draw [rounded corners, fill=myblue] (ra) -- (ral) -- (rar) --
        (ra) -- (rb) -- (rba) -- (rbal) -- (rbar) -- (rba) -- (rbr) --
        (rb) -- (r) -- cycle;
  \draw [rounded corners, fill=myred] (raal) -- (raar) -- (raa) --
        cycle;
  \draw [rounded corners, fill=myred] (rbbl) -- (rbbr) -- (rbb) --
        cycle;
  \node at ($(r)-(0,.6)$) {$\m{i}$};
  \node at ($(ra)-(0,.6)$) {$\m{f}$};
  \node at ($(raa)-(0,.6)$) {$\m{c}$};
  \node at ($(rb)-(0,.6)$) {$\m{h}$};
  \node at ($(rba)-(0,.6)$) {$\m{c}$};
  \node at ($(rbb)-(0,.6)$) {$\m{c}$};
\end{tikzpicture}} \to
\Vtop{\begin{tikzpicture}[scale=.5]
  \coordinate (r) at (0.0,0.0);
  \coordinate (rl) at (-1.25,-1.0);
  \coordinate (rr) at (1.25,-1.0);
  \coordinate (ra) at (-1.25,-1.0);
  \coordinate (ral) at (-2.25,-2.0);
  \coordinate (rar) at (-0.25,-2.0);
  \coordinate (raa) at (-1.25,-2.0);
  \coordinate (raal) at (-2.25,-3.0);
  \coordinate (raar) at (-0.25,-3.0);
  \coordinate (rb) at (1.25,-1.0);
  \coordinate (rbl) at (0.25,-2.0);
  \coordinate (rbr) at (2.25,-2.0);
  \coordinate (rba) at (1.25,-2.0);
  \coordinate (rbal) at (0.25,-3.0);
  \coordinate (rbar) at (2.25,-3.0);
  \coordinate (rbaa) at (1.25,-3.0);
  \coordinate (rbaal) at (0.25,-4.0);
  \coordinate (rbaar) at (2.25,-4.0);
  \draw [rounded corners, fill=myblue] (ra) -- (ral) -- (rar) --
        (ra) -- (rb) -- (rbl) -- (rba) -- (rbal) -- (rbar) -- (rba) --
        (rbr) -- (rb) -- (r) -- cycle;
  \draw [rounded corners, fill=myred] (raal) -- (raar) -- (raa) --
        cycle;
  \draw [rounded corners, fill=myred] (rbaal) -- (rbaar) -- (rbaa) --
        cycle;
  \node at ($(r)-(0,.6)$) {$\m{i}$};
  \node at ($(ra)-(0,.6)$) {$\m{f}$};
  \node at ($(raa)-(0,.6)$) {$\m{c}$};
  \node at ($(rb)-(0,.6)$) {$\m{g}$};
  \node at ($(rba)-(0,.6)$) {$\m{f}$};
  \node at ($(rbaa)-(0,.6)$) {$\m{c}$};
\end{tikzpicture}} \to
\Vtop{\begin{tikzpicture}[scale=.5]
  \coordinate (r) at (0.0,0.0);
  \coordinate (rl) at (-1.0,-1.0);
  \coordinate (rr) at (1.0,-1.0);
  \draw [rounded corners, fill=myblue] (rl) -- (rr) -- (r) -- cycle;
  \node at ($(r)-(0,.6)$) {$\m{b}$};
\end{tikzpicture}}
\end{gather*}
\caption{Non-confluence in Example~\ref{ex-counterexample}.}
\label{fig-counterexample}
\end{figure}

Note that any \compatat TRS is strongly compatible
(cf.\ Definition~\ref{def-os-compat}), unless it is neither left-linear
nor bounded duplicating, and contains a collapsing rule.
Indeed the TRS $\RR$ of Example~\ref{ex-counterexample}
has all these features.
Ultimately, the culprit is the collapsing rule $\m{e}(x) \to x$,
causing fusion from above (cf.\ Figure~\ref{fig-counterexample}).
This case is not considered in the proof of~\cite[Proposition~3.9]{AT96}.
Definition~\ref{def-os-compat} takes care of the problem with
collapsing rules in Definition~\ref{def-AT96}. Furthermore, it puts
fewer constraints on the right-hand sides in case of left-linear
or bounded duplicating systems, which is beneficial
(cf.\ Example~\ref{ex-mot-order}).

\subsection{Modularity}
\label{rel-modularity}

We compare the proof setups of \cite{KMTV94} and \cite{vO08b}.

The first difference concerns the decomposition of terms.
Whereas Klop et al.\ split a term into its max-top and aliens,
van Oostrom splits it into a base context and a sequence of
tall aliens. This is the key for making the proof constructive:
while fusion of an alien may cause many new aliens to appear,
none of them will be tall, so they do not have to be tracked
explicitly. In contrast, Klop et al.\ start by constructing
witnesses, and thus prevent aliens from fusing while establishing
confluence.

The other ingredients of the proofs are quite similar: The proof setup
is an induction on the rank of the starting term. One distinguishes
inner ($\to_i^*$, acting on aliens) and outer ($\to_o^*$,
acting on the max-top) steps (Klop et al.), or tall ($\rrr[\iota]{}$,
acting on the tall aliens) and short ($\RRR{}$, acting on the base
context)
steps (van Oostrom). One then argues as follows:
\begin{enumerate}
\item
Outer (short) steps are confluent because one can replace the
principal (tall) subterms by suitable variables in the top (base)
context, and then invoke the induction hypothesis.
\item
Inner (tall) steps are confluent because they only act on principal
subterms (tall aliens). In joining these subterms, one can ensure that
any equalities between them are preserved (we call such sequences of
inner steps \emph{balanced}). In van Oostrom's proof, the
resulting joining sequences may involve fusion and therefore short steps,
but by ranking short steps below tall steps,
a locally decreasing diagram is obtained.
\item
Balanced inner steps (tall steps) and outer steps (short steps)
commute (can be joined decreasingly). The idea is to replace the
principal subterms (tall aliens) of source and target of the inner steps
by the same variables, so that the outer steps can be simulated on the
result. In van Oostrom's proof, the target term has to be balanced (with
respect to the source) first.
\end{enumerate}
When specialized to modularity, the same differences and similarities
can be encountered when comparing \cite{FZM11} to the present work.
Short steps differ in two ways from \cite{vO08b}.
The imbalance is defined differently and the underlying rewrite
sequences are less restricted here. Nevertheless, they define the
same relation on native terms. This covers Theorem~\ref{thm-main-con}.
For Theorems~\ref{thm-main-ll} and~\ref{thm-main-bd},
our proof deals with a new effect, namely fusion from above. This
makes confluence of short short steps ($\RRR{}$) a non-trivial
matter.

We remark that layer systems according to Definition~\ref{def-laysys}
differ from those in~\cite{FZM11}. The latter are closer to
variable-restricted layer systems (Definition~\ref{def-restrict}). Since
the weakened
condition \cvarp is only needed for the order-sorted setting, we decided
to base the theory on the
easier condition \cvar instead and then derive the main results for 
variable-restricted layer systems separately
(cf.\ Section~\ref{sec-restricted-ls}).
Furthermore, we remark that the notions of 
weakly layered and layered
(which are related to weakly consistent and consistent
in~\cite{FZM11}) have changed in an incomparable way, even
for variable-restricted
layer systems.
This is due to the new condition \cstepfusion,
which is required for our constructive proof, as shown
in Example~\ref{ex-stepfusion-essential}.

\subsection{Constructivity}
\label{rel-constructivity}

We say that a TRS is \emph{constructively confluent} if there is a
procedure that, given a peak $t \lud{}{*}\from s \to^* u$, constructs
a valley $t \to^* v \lud{}{*}\from u$. In \cite{vO08b} constructive
confluence is proved to be a modular property for disjoint TRSs.

Most previous proofs of modularity and related results rely on
the reduction of terms until they allow no further fusion,
which requires checking whether the top layer of a term may collapse,
a property which is undecidable.
This includes \cite{T87,KMTV94,O94a,K95,AT96,AT97,JT08,JL12}.
Interestingly, \cite{L96} is constructive, but not applicable
in general as observed at the beginning of this section.

The key observation for obtaining a constructive result is that our
main tool for establishing confluence,
the decreasing diagrams technique, is constructive: If any given
\emph{local} peak can be joined decreasingly in a constructive way,
then any conversion becomes joinable by exhaustively replacing
local peaks by smaller conversions until none are left.

For our proofs to be constructive, the TRS needs to be constructively
confluent on terms of rank one. Furthermore, the proofs rely on the
decomposition of arbitrary terms into their max-top and aliens.
Consequently, we must be able to decide whether a given context
$C \Cleq t$ is a max-top of $t$. 
In the applications from Section~\ref{sec-applications}, this
is indeed the case.

If these two assumptions are satisfied, then our proofs are constructive
and we obtain the following corollary.

\begin{corollary}
\label{cor-constructive}
Let~$\RR$ be a TRS.
Assume that $\RR$ is left-linear and weakly layered,
or bounded duplicating and weakly
layered, or layered. If $\RR$ is constructively confluent on terms of
rank one and for any context $C$ and term $t$ it is
decidable if $C$ is a max-top of $t$, then $\RR$ is
constructively confluent.
\end{corollary}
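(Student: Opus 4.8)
The plan is to verify that, under the two stated assumptions, the entire development of Sections~\ref{sec-setup}--\ref{sec-dd-proof} is \emph{effective}, so that the existing proofs of Theorems~\ref{thm-main-ll}, \ref{thm-main-bd}, and~\ref{thm-main-con} already describe a procedure turning a peak into a valley. Those proofs establish confluence by induction on the rank of the source, reducing confluence on native terms (rank $\leq r+1$) to local decreasingness of the auxiliary relation ${\rrr{}} \cup {\RRR{}}$ via the Main Lemma~\ref{lem-main}, and from there to the decreasing diagrams technique (Theorem~\ref{thm-dd}). First I would strengthen the induction hypothesis to \emph{constructive} confluence of terms of rank at most $r$, and then check that each layer of the argument preserves effectivity.

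The argument rests on two pillars. First, as observed in Section~\ref{rel-constructivity}, decreasing diagrams is constructive: once every \emph{local} peak is joined decreasingly by an effective procedure, an arbitrary conversion is joined by repeatedly replacing local peaks with their smaller decreasing diagrams, a process that terminates because the well-founded measure on conversions strictly decreases at each replacement. It therefore suffices to make the \emph{local} joining of ${\rrr{}} \cup {\RRR{}}$ effective. Second, the decidability assumption on the max-top predicate makes all term decompositions effective: since only finitely many contexts $C$ satisfy $C \Cleq t$, the max-top of $t$ can be found by a finite search (it is unique by Lemma~\ref{lem-max-top}), whence aliens, ranks, base contexts, base sequences, and the tall/short classification of aliens are all computable.

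With these in hand, I would trace the local-joining lemmata one by one. Peak analysis (Lemma~\ref{lem-peak-analysis}) produces the foreign representatives $s'$, $t'$, $u'$ and their common reduct $v'$ using only effective decompositions together with confluence of the foreign term $s'$; by the strengthened induction hypothesis this confluence is constructive, as $s'$ has rank at most $r$. The same holds for Lemma~\ref{lem-balance} (which equalizes foreign sequences by repeatedly invoking foreign confluence) and for the tall--short factorization (Lemma~\ref{lem-tall-short}). Consequently Lemmata~\ref{lem-peak-tt} and~\ref{lem-peak-ts}, together with whichever of Lemmata~\ref{lem-ss-ll}, \ref{lem-ss-bd}, and~\ref{lem-ss-con} applies, each yield an \emph{effectively computable} decreasing diagram for every local peak. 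Feeding these into constructive decreasing diagrams gives constructive confluence of ${\rrr{}} \cup {\RRR{}}$; since every $\RR$-step is effectively classified as short (Lemma~\ref{lem-short-step}) or tall by inspecting whether its position lies in the base context, and since ${\rrr{}} \cup {\RRR{}} \subseteq {\to_\RR^*}$, Lemma~\ref{lem-main} upgrades this to a constructive joining procedure on native terms of rank $r+1$. The base case, rank one, is constructive by hypothesis, and the overall recursion bottoms out there because every appeal to the induction hypothesis is on a foreign term of strictly smaller rank.

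The main obstacle is bookkeeping rather than a genuine difficulty: I must confirm that the local joinings never covertly rely on confluence of the full-rank term itself but only on \emph{foreign} confluence, so that the recursion on rank is well founded, and that the well-founded label orders---notably the relative-termination order of Lemma~\ref{lem-ss-bd}, well founded by bounded duplication---enter only to guarantee termination of the diagram completion, not as an oracle inside the construction of the joining sequences. Provided each constructed valley and its labels are produced explicitly from the computable data (max-tops, ranks, imbalances, and the chosen predecessor labels), the replacement process of constructive decreasing diagrams can be driven mechanically, and the corollary follows.
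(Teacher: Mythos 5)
Your proposal is correct and follows essentially the same route as the paper: Section~\ref{rel-constructivity} likewise rests on the constructivity of decreasing diagrams and on the decidability of the max-top predicate (making decompositions into max-top, aliens, ranks, and base contexts computable), and then observes that the proofs of Theorems~\ref{thm-main-ll}, \ref{thm-main-bd}, and~\ref{thm-main-con} are already effective given constructive confluence at rank one. Your lemma-by-lemma check of effectivity, including the remark that the well-founded label orders serve only to terminate the diagram completion, is just a more explicit rendering of the paper's argument.
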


We remark that the above corollary extends to variable-restricted layer
systems, and thus to the order-sorted application in
Section~\ref{app-order-sorted2}.

\section{Conclusion}
\label{sec-conclusion}

In this article we have presented an abstract layer framework that
covers several known results about the modularity and persistence of
confluence. The framework enabled us to correct the result claimed in
\cite{AT96} on order-sorted persistence, and, by placing
weaker conditions on
left-linear or bounded duplicating systems, to increase its applicability. 
We have incorporated a decomposition technique
based on order-sorted persistence (Theorem~\ref{thm-order}) into
\CSI~\cite{ZFM11b}, our confluence
prover. In the implementation we approximate bounded duplication by
non-duplication.
We also showed how Kahrs' confluence result for curried systems is
obtained as an instance of our layer framework.

As future work, we plan to investigate how to apply layer systems to
other properties of TRSs, like termination or having
unique normal forms. 
Finally, we worked out the technical details of our main results to
prepare for future certification efforts in a theorem prover like
Isabelle or Coq. For the latter it is essential that here
(compared to our previous work~\cite{FZM11}) we based our
setting on the constructive modularity proof in~\cite{vO08b}.
The underlying proof technique, decreasing diagrams, has already been
formalized in Isabelle \cite{Z13,Z13b}.

\begin{acks}
The detailed comments of the anonymous reviewers helped to improve the
presentation.
\end{acks}


\bibliographystyle{acmtrans}
\bibliography{paper}

\begin{received}
Received April 2014; revised October 2014; accepted December 2014
\end{received}
\end{document}